\numberwithin{equation}{section}
\title[H\"older Continuity]{H\"older Continuity of The Integrated Causal Lagrangian in Minkowski Space}
\author[M.\ Oppio]{Marco Oppio \\ \\ September 2021}
\address{Fakult\"at f\"ur Mathematik \\ Universit\"at Regensburg \\ D-93040 Regensburg \\ Germany}
\email{marco.oppio.r@gmail.com}
\newtheorem{Def}{Definition}[section]
\newtheorem{Thm}[Def]{Theorem}
\newtheorem{Prp}[Def]{Proposition}
\newtheorem{Lemma}[Def]{Lemma}
\newtheorem{Remark}[Def]{Remark}
\newcommand{\beq}{\begin{equation}}
\newcommand{\eeq}{\end{equation}}
\newcommand{\Proof}{\begin{proof}}
	\newcommand{\QED}{\end{proof} \noindent}
\newcommand{\la}{\langle}
\newcommand{\ra}{\rangle}
\newcommand{\Sl}{\mbox{$\prec \!\!$ \nolinebreak}}
\newcommand{\Sr}{\mbox{\nolinebreak $\succ$}}
\newcommand{\C}{\mathbb{C}}
\newcommand{\R}{\mathbb{R}}
\newcommand{\N}{\mathbb{N}}
\renewcommand{\L}{{\mathcal{L}}}
\newcommand{\reg}{{\,\mathrm{reg}}}
\DeclareMathOperator{\supp}{supp}
\renewcommand{\H}{\mathscr{H}}
\newcommand{\Lin}{\text{\rm{L}}}
\newcommand{\F}{{\mathscr{F}}}
\newcommand{\K}{{\mathscr{K}}}
\DeclareMathOperator{\im}{Im}
\newcommand{\scrM}{\mycal M}
\newcommand{\x}{{\textit{\myfont x}}}
\newcommand{\y}{{\textit{\myfont y}}}
\newcommand{\z}{{\textit{\myfont z}}}
\newcommand{\bitem}{\begin{itemize}[leftmargin=2.5em]}
	\newcommand{\eitem}{\end{itemize}}
\newcommand*{\myfont}{\fontfamily{ppl}\selectfont}		
\DeclareFontFamily{OT1}{rsfso}{}
\DeclareFontShape{OT1}{rsfso}{m}{n}{ <-7> rsfso5 <7-10> rsfso7 <10-> rsfso10}{}
\DeclareMathAlphabet{\mycal}{OT1}{rsfso}{m}{n}
\def\scF{\mathscr{F}}
\def\scL{L}
\def\scH{\mathscr{H}}
\def\gR{\mathfrak{R}}
\def\bI{\mathbb{I}}
\def\gR{\mathfrak{R}}
\newcommand{\V}[1]{{\bf{#1}}}
\begin{document}

\maketitle

\begin{abstract}
It is proven that the kernel of the fermionic projector of regularized Dirac sea vacua in Minkowski Space is $L^4$-integrable. The proof is carried out in the specific setting of a continuous exponentially-decaying cutoff in momentum space. As a direct consequence, the corresponding causal Lagrangian is shown to be $L^1$-integrable. Some topological features of the integrated causal Lagrangian are analyzed. In particular, local H\"older-like estimates are proved for continuous regular variations of spacetime, of which a few examples are discussed. Particular emphasis is placed on first-order perturbations of Dirac sea vacua induced by external electromagnetic fields.
\end{abstract}

\tableofcontents

\section{Introduction}

In the theory of causal fermion systems spacetime and all the objects therein are described by a Borel measure on a distinguished family of Hilbert-space operators of rank at most $2n$, where $n$ is the characteristic \textit{spin dimension}. An important role is played by those operators which have maximal rank, the so-called \textit{regular points}.  In a recent work \cite{banach}, extending methods and results from \cite{gauge} to the infinite-dimensional setting, the authors were able to demonstrate that such a distinguished family has the structure of an infinite-dimensional Banach manifold. Some fundamental properties, such as the chain rule, are then  analyzed in the case of H\"older-continuous functions. In particular, the examples of the causal Lagrangian and its integral over spacetime are discussed and proven to  satisfy local H\"older-type inequalities.  

The main objective of this paper is to study local H\"older-continuity of the integrated Lagrangian  in the concrete example of Dirac sea vacua in Minkowski Space. With this in mind, the preliminary and uppermost task is to prove that the causal Lagrangian is in fact integrable on Minkowski Space: such a proof boils down to proving the integrability of the fourth power of the corresponding kernel of the fermionic projector. This is the content of  Propositions \ref{integrabilityP} and \ref{propinteglagrangian} and it is the main conclusion of Sections \ref{sectionfermionicprojex} and \ref{sectionlagrangian}, where also explicit realizations  in terms of Bessel functions are carried out. The proofs of the aforementioned propositions are quite laborious and for this reason postponed to Appendix \ref{appendixproofL4}. 
In Section \ref{sectionCFS} the basic preliminaries on causal fermion systems are provided, placing particular  emphasis on their primary topological features, such as the continuity of the eigenvalues and of the Lagrangian and the existence of continuous families of pseudo-orthonormal bases of the spin spaces. To this aim, some general results on the dependence of the eigenvalues on the corresponding operators are recalled in Appendix \ref{appendixconteigen}. In Section \ref{sectionholder}, we then provide and further elaborate on the necessary preliminaries on local H\"older continuity from \cite{banach}.  In particular, we introduce the notion of an \textit{admissible point} and prove local Lipschitz continuity on regular points of the generalized inverse function $\mathrm{g}$, which appears in the fundamental condition \eqref{boundintP} of Theorem \ref{teoremaholder}.
Such notions are concretely implemented later on in Section \ref{sectionholdervariation}, for the causal fermion systems describing Dirac sea vacua in Minkowski space (whose construction and some useful properties are compactly recalled  in Section \ref{sectionDE}). More precisely, focussing on perturbations of spacetime which are realized as  regular variations of the local correlation operators, we are able to prove  H\"older-type estimates of the integrated Lagrangian which are quantitative, in the sense that they are expressed in terms of the $L^1$ norm of the wave functions forming the spin spaces. As a conclusion of Section \ref{sectionholdervariation}, we provide and discuss a few examples, showing how the aforementioned variations can be concretely realized in practice. In this respect, more emphasis is placed on the general method of  \textit{varying the regularization operators} (see Example 4), which is later analyzed to first order in the concrete example of variations of Minkowski Dirac sea vacua induced by electromagnetic fields (Example 5).
In Appendix \ref{appendixbessel} some necessary properties of the Bessel functions of the second kind are recalled. Finally, for the sake of readibility,   several proofs of the paper are postponed to Appendix \ref{appendixproofs}.

\section{Basics on Causal Fermion Systems}\label{sectionCFS}

\subsection{The General Setting }\label{sectiongeneralsetting}
We start with a brief summary of the basic mathematical objects in the theory of causal fermion systems. We only recall those structures which will be needed in this paper
(for a more complete account see~\cite[Section~1.1]{cfs}). 

\begin{Def}\label{defCFS} Given a separable complex Hilbert space~$(\mathscr{H}, \la .|. \ra)$, we let~$\mathscr{F} \subset \Lin(\H)$ be the set of all self-adjoint operators of finite rank which (counting multiplicities) have at most $n$ positive and $n$ negative eigenvalues.
\end{Def} 
Note that the set~$\F$ is not a linear space, because the sum of two operators in~$\F$ will in general have rank larger than four. In fact, $\F$ has the structure of a {\em{closed double cone}}, meaning
that~$\F$ is closed in the $\sup$-norm topology and that for every~$A \in \F$, the ray~$\R A$ is also
contained in~$\F$ (see \cite[Theorem 4.2]{oppio}). 

Next, we let~$\rho$ be a Borel measure on~$\F$, where by a Borel measure we always mean a measure on the Borel
algebra on~$\F$ (with respect to the $\sup$-norm topology).

\begin{Def}\label{definitioncfs}
	The triple~$(\mathscr{H}, \F, \varrho)$ is referred to as a {\bf causal fermion system}. 
\end{Def}

The parameter $n$ is called the \textbf{spin dimension}. A causal fermion system describes spacetime together with all structures and objects therein.
{\bf Spacetime}, denoted by~$M$, is defined as the support of~$\rho$,
\[ M := \text{supp}\, \rho \subset \F \:. \]
Equipped with the $\sup$-norm topology, $M$ is a topological space.
The fact that spacetime points are operators gives rise to additional structures.
For every~$\x \in \F$ we define
\begin{equation*}
\text{{\bf Spin space} at $\x$:}\quad S_\x:= x(\H)
\end{equation*}
It is a subspace of~$\H$ of dimension at most $2n$. It is endowed with the {\bf{spin scalar product}}  (at $\x$),
\beq \label{ssp}
\Sl \cdot | \cdot \Sr_\x := -\la \,\cdot \,|\, \x \,\cdot\, \ra \::\: S_\x \times S_\x \rightarrow \C \:,
\eeq
which is an indefinite inner product of signature~$(p,q)$ with~$p,q \leq n$.

Each spin space can be decomposed into the orthogonal direct sum  (with respect to the Hilbert scalar product) of the positive and negative spectral subspaces of $\x$ (as an operator on $\scH$). More precisely, 
\begin{equation}\label{decompsign}
S_\x=S_\x^-\oplus S_\x^+\quad\mbox{with}\quad S_\x^+:=\x_+(\H)\ \text{and}\  S_\x^-=\x_-(\H),
\end{equation}
where $\x_\pm$ are the positive and negative components of the operator $\x$, i.e.
\begin{equation}\label{posneg}
\x=\x_++\x_-,\quad \x_+:=\frac{\x+|\x|}{2}\quad\mbox{and}\quad \x_-:=\frac{\x-|\x|}{2}.
\end{equation} 
We denote their dimensions by
\begin{equation}\label{npm}
	n_\pm(\x):=\dim S_\x^\pm\le n.
\end{equation} 
By construction, on the subspaces $S_\x^+,S_x^-$ the operator $\x$ is positive and negative defined, respectively. More precisely, 
$$
\x|_{S_\x^+}=|\x||_{S_\x^+}\quad\mbox{and}\quad \x|_{S_\x^-}=-|\x||_{S_\x^-}.
$$
As a consequence, $S_\x^+,S_x^-$ define negative and positive definite subspaces of \eqref{ssp}, respectively. Moreover, they are orthogonal to each other also with respect to the spin scalar product. 
In other words, the subspaces $S_\x^\pm$ define a canonical \textit{fundamental decomposition} of the indefinite inner space $(S_\x,\Sl\,\cdot\,,\,\cdot\,\Sr_\x)$.

\subsection{Continuity of Eigenvalues and the Lagrangian}\label{sectioncontinuitylag}
Many fundamental structures in the theory of causal fermion systems, such as the Lagrangian or the corresponding causal action, are defined explicitly in terms of the eigenvalues of (products of) operators in $\F$. In order to study the continuity of such functions, it is then primary to elaborate on the topological interplay between operators in $\F$ and their eigenvalues. In this section we review some of the most elementary results in this regard.

Let us first study the individual operators $\x\in\F$, we will then discuss products of the form $\x\y$. Referring to Appendix \ref{appendixconteigen}, we collect the corresponding  positive and negative eigenvalues (repeated according to their multiplicities) into sequences
$$
\{\lambda^\pm_k(\x)\}_{k\in\N}\subset\R_\pm,
$$
 where we adopted the convention that $\lambda^\pm_k(\x)=0$ for every $k>n_\pm(\x)$ and where the eigenvalues are enumerated by non-increasing absolute value. 
 
 Since every spin space is at most $2n$-dimensional, it is convenient to consider only the first $2n$ elements of the sequences above and reorder them as follows:
\begin{equation*}\label{eigenx}
\lambda_{k}^\x:=\begin{cases}
\lambda_k^-(\x) & \mbox{ if $1\le k\le n$}\\[0.3em]
\lambda_{2n-k+1}^+({\x}) & \mbox{ if $n< k\le 2n$}
\end{cases}.
\end{equation*}
With this definition, the $\lambda_k^\x$ cover all of the spectrum of $\x$ (possibly up to the eigenvalue zero in the case $\dim S_\x=2n$), with its elements repeated according to their multiplicities. Applying Proposition \ref{propcontiself}, we then have the following result.
\begin{Prp}\label{teoremacont}
	Let $\x,\y\in\F$. Then, for every $k\in \{1,\dots,2n\}$,
	\begin{equation*}
	|\lambda_k^\x-\lambda_k^\y|\le \|\x-\y\|
	\end{equation*}
\end{Prp}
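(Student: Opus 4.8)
The plan is to deduce Proposition \ref{teoremacont} from the already-cited Proposition \ref{propcontiself} on self-adjoint operators, after untangling the reordering conventions. The point is that Proposition \ref{propcontiself} presumably gives a Lipschitz bound for the individually labeled eigenvalues of two self-adjoint finite-rank operators, and what remains is to check that the particular enumeration $\lambda_k^\x$ defined above is compatible with that bound. First I would recall that for $\x \in \F$, the operator $\x$ has at most $n$ positive and $n$ negative eigenvalues, so all nonzero spectrum is captured among the $2n$ values $\lambda_1^\x,\dots,\lambda_{2n}^\x$, padded with zeros in positions where $n_\pm(\x) < n$. The crucial observation is that the map $\x \mapsto (\lambda_1^\x,\dots,\lambda_{2n}^\x)$ separately sorts the negative part (indices $1,\dots,n$, arranged by non-increasing absolute value) and the positive part (indices $n+1,\dots,2n$, arranged by \emph{non-decreasing} absolute value, i.e. $\lambda_{2n-k+1}^+$).

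The key step is to handle the negative and positive blocks independently. For each fixed sign $\epsilon \in \{+,-\}$, consider the self-adjoint operators $\x_\epsilon$ and $\y_\epsilon$ (the spectral components from \eqref{posneg}); their nonzero eigenvalues are exactly the sign-$\epsilon$ eigenvalues of $\x$ and $\y$. I would then invoke the standard perturbation estimate (this is what Proposition \ref{propcontiself} supplies) in the form: if one lists \emph{all} eigenvalues of a self-adjoint finite-rank operator $A$ in non-increasing order as $\mu_1(A) \ge \mu_2(A) \ge \cdots$ (with zeros filling the tail), then $|\mu_k(A) - \mu_k(B)| \le \|A - B\|$ for every $k$ — the classical Weyl-type monotonicity/Lipschitz bound. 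Applying this to $A = \x$, $B = \y$ themselves already controls the positive block in natural order; applying it to $A = -\x$, $B = -\y$ controls the negative block. One then checks that the reordering $\lambda_k^\x$ is nothing but: take the $n$ largest eigenvalues of $\x$ (its positive-or-zero part, sorted so that the largest sits at $k=2n$), and the $n$ smallest eigenvalues of $\x$ (its negative-or-zero part, sorted so that the most negative sits at $k=1$). Since both of these sortings are order-preserving rearrangements of monotone lists, each coordinate $\lambda_k^\x$ coincides with either $\mu_j(\x)$ for a fixed $j$ (positive block, $j = 2n-k+1$ counting from the top) or with $-\mu_j(-\x)$ for a fixed $j$ (negative block), and the perturbation bound transfers coordinatewise, giving $|\lambda_k^\x - \lambda_k^\y| \le \|\x - \y\|$.

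The main obstacle — really the only subtlety — is bookkeeping: making sure the convention "$\lambda_k^\pm(\x) = 0$ for $k > n_\pm(\x)$" together with the "non-increasing absolute value" ordering within the $\pm$ sequences genuinely matches the global monotone ordering of the eigenvalues of $\pm\x$, including the zero eigenvalue in the degenerate case $\dim S_\x = 2n$ where position $k = n$ (for the negative block) or $k = n+1$ (for the positive block) may legitimately be a true zero eigenvalue rather than padding. One has to verify that even in that boundary case the identification with a fixed $\mu_j$ index is correct, so that no index "jumps." A clean way to avoid any case distinction is to observe that for any self-adjoint $A$ of rank $\le 2n$, the $n$-th largest eigenvalue $\mu_n(A)$ is $\ge 0$ and the $(n{+}1)$-th largest is $\le 0$ (because there are at most $n$ of each sign), so the top $n$ slots always receive the "positive-or-zero" eigenvalues and the bottom $n$ always receive the "negative-or-zero" ones; this makes the correspondence with $\lambda_k^\x$ unambiguous and the conclusion follows immediately from the coordinatewise Lipschitz bound.
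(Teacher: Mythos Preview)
Your proposal is correct and relies on the same tool as the paper (Proposition~\ref{propcontiself}), but you take a longer route than necessary. The paper's argument is a one-line application: since by definition $\lambda_k^\x$ equals $\lambda_k^-(\x)$ for $1\le k\le n$ and $\lambda_{2n-k+1}^+(\x)$ for $n<k\le 2n$, and since Proposition~\ref{propcontiself} already gives $|\lambda_j^\pm(\x)-\lambda_j^\pm(\y)|\le\|\x-\y\|$ for each sign separately and each index $j$, the conclusion is immediate---no need to pass through a global monotone ordering $\mu_j(\x)$, apply Weyl to $\x$ and $-\x$, and then match indices back. In particular, all of your bookkeeping concerns about zero-padding and the boundary case $\dim S_\x=2n$ evaporate, because the index $j$ in $\lambda_j^\pm$ is fixed once $k$ is fixed, independently of how many eigenvalues of each sign $\x$ or $\y$ actually has.
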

This result shows that, as long as the right ordering is chosen, the eigenvalues can be arranged to be \textit{Lipschitz continuous} on $\F$. It is important to stress that it is in general not possible to carry out a similar arrangement for the eigenvectors, so that they depend continuously on the operators, as a famous example by Rellich shows (see Example 5.3 in \cite[Section II.5]{Kato}). Nevertheless, continuous families of pseudo-orthonormal bases can always constructed in neighborhoods of regular points (see Section \ref{subsectionsign}).

As a next step, we study products of the form $\x\y$. Note that such operators are in general not self-adjoint, nor normal, for the two factors may not commute. 

The product operator has $n({\x\y})\le 2n$ non-zero eigenvalues. Let us introduce
\begin{equation}\label{eigenvaluesxy}
\lambda_1^{\x\y},\dots,\lambda_{2n}^{\x\y}\in\C\ \mbox{ with }\ |\lambda_{1}^{\x\y}|\ge \dots\ge |\lambda_{2n}^{\x\y}|,
\end{equation}
where, repeating according to the algebraic multiplicity,
\vspace{0.2cm}
\begin{itemize}[leftmargin=2em]
	\item[\rm{(1)}] $\lambda_k^{\x\y}\ $ for  $ 1\le k\le n(\x\y)$ are the non-zero eigenvalues of $\x\y$\\[-0.5em]
	\item[\rm{(2)}] $\lambda^{\x\y}_k=0\ $ for all  $\ n(\x\y)<k\le 2n\ $ (whenever $n(\x\y)< 2n$).
\end{itemize}
\vspace{0.2cm}
If the operator $\x\y$ is normal, then (see Appendix \ref{appendixconteigen} for the notion of a \textit{singular value})
$$
|\lambda_k^{\x\y}|=s_k(\x\y)\quad\mbox{for every }k=1,\dots,2n.
$$
However, as already mentioned above, the normality of the product $xy$ is in general not satisfied for arbitrary $\x,\y\in\F$. 

At this point, using Theorem \ref{continuityenum} and the fact that the function \eqref{Lagrangian} below is independent of the enumeration of the eigenvalues, we have the following result.

\begin{Prp}\label{proplagrangian}
	The \textbf{Lagrangian} defined by
\begin{equation}\label{Lagrangian}
\L:\F\times\F\ni (\x,\y)\mapsto \frac{1}{4n}\sum_{i,j=1}^{2n}\big(|\lambda_i^{\x\y}|-|\lambda_j^{\x\y}|\big)^2\in \R_+,
\end{equation}
is symmetric, non-negative and continuous.
\end{Prp}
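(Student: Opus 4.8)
The plan is to settle symmetry and non-negativity by inspection and then reduce continuity to the continuity of the (generally non-normal) spectrum of the product operator, which is the content of Theorem~\ref{continuityenum}.

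\emph{Non-negativity} is immediate, since the right-hand side of \eqref{Lagrangian} is a sum of squares of real numbers. For \emph{symmetry}, I would use the classical fact that $\x\y$ and $\y\x$ have the same non-zero eigenvalues with coinciding algebraic multiplicities — valid also for finite-rank operators on a Hilbert space, as one sees e.g.\ from $\det(\1 - t\,\x\y) = \det(\1 - t\,\y\x)$. In particular $n(\x\y) = n(\y\x)$, so after padding with zeros up to length $2n$ the families $\{\lambda_i^{\x\y}\}_{i=1}^{2n}$ and $\{\lambda_i^{\y\x}\}_{i=1}^{2n}$ agree as multisets; since \eqref{Lagrangian} runs over all ordered pairs of indices and is invariant under permutations of the eigenvalues, $\L(\x,\y) = \L(\y,\x)$ follows at once.

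The substance of the proposition is \emph{continuity}. First, the multiplication map $\F \times \F \ni (\x,\y) \mapsto \x\y \in \Lin(\H)$ is jointly continuous in operator norm, by the elementary estimate $\|\x\y - \x'\y'\| \le \|\x\|\,\|\y - \y'\| + \|\x - \x'\|\,\|\y'\|$ together with the local boundedness of the norm on $\F$. Next, each product $\x\y$ has rank at most $2n$ (its range lies in the finite-dimensional subspace $S_\x$), so its non-zero spectrum is controlled by the characteristic polynomial of a matrix of size at most $2n$, whose coefficients — equivalently, the coefficients of the Fredholm determinant $t \mapsto \det(\1 - t\,\x\y)$, taken over operators of uniformly bounded rank — depend continuously on $\x\y$. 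This is exactly the setting of Theorem~\ref{continuityenum}: about any $(\x_0,\y_0) \in \F\times\F$ one obtains, on a suitable neighborhood, continuous functions $(\x,\y) \mapsto \lambda_k^{\x\y}$ for $k = 1,\dots,2n$, enumerating the eigenvalues of $\x\y$ repeated with algebraic multiplicity and padded by zeros. Composing with $z \mapsto |z|$ and the polynomial $(t_1,\dots,t_{2n}) \mapsto \tfrac{1}{4n}\sum_{i,j}(t_i - t_j)^2$ then shows that \eqref{Lagrangian} is continuous on that neighborhood, and the permutation-invariance of the polynomial guarantees that the value obtained is independent of the chosen enumeration and hence genuinely equals $\L(\x,\y)$. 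Since $(\x_0,\y_0)$ was arbitrary, $\L$ is continuous on all of $\F\times\F$.

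The one genuine difficulty is the step just circumvented: $\x\y$ need not be normal, since the two factors need not commute, so there is no continuous choice of eigenvectors and the spectral-projection argument behind Proposition~\ref{teoremacont} is unavailable. What saves the day is that only a symmetric function of the \emph{unordered} spectrum is needed, and the operators have uniformly bounded finite rank, which reduces matters to the classical continuity of polynomial roots in terms of their coefficients — precisely what Theorem~\ref{continuityenum} encapsulates. Everything else is routine.
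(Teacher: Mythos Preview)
Your approach is essentially the paper's: invoke Theorem~\ref{continuityenum} together with the permutation-invariance of \eqref{Lagrangian} for continuity, and handle symmetry and non-negativity by inspection. One small overclaim to tighten: Theorem~\ref{continuityenum} as stated is a \emph{sequential} result --- given $T_m\to T$ it produces enumerations $\nu(T_m)$ (depending on the sequence) with $\nu_n(T_m)\to\nu_n(T)$ --- and does not assert the existence of continuous eigenvalue selections on a neighborhood; the correct phrasing is to take an arbitrary sequence $(\x_m,\y_m)\to(\x_0,\y_0)$, use joint continuity of multiplication to get $\x_m\y_m\to\x_0\y_0$, apply Theorem~\ref{continuityenum}, and then use permutation-invariance to conclude $\L(\x_m,\y_m)\to\L(\x_0,\y_0)$, which suffices since $\F\times\F$ is metrizable.
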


The mutual relations between the eigenvalues define a \textbf{causal structure} in spacetime. This notion will be briefly addressed  in Section \ref{sectionlagrangian} in the example of  Dirac sea vacua in Minkowski space.
Two spacetime points $\x,\y\in M$ are said to be \\[-0.8em]
	\begin{itemize}[leftmargin=2em]
		\item[{\rm (1)}] \textbf{Spacelike separated} \textit{if all the $\lambda^{\x\y}_i$ have the same absolute value}\\[-0.3em]
		\item[{\rm (2)}] \textbf{Timelike separated} \textit{if the $\lambda_i^{\x\y}$ are  real and do not have the same absolute value}\\[-0.3em]
		\item[{\rm (3)}] \textbf{Lightlike separated}  \textit{otherwise}
	\end{itemize}
\vspace{0.15cm}

Given this definition, we see that \eqref{Lagrangian} vanishes for spacelike separated points. In this sense, the Lagrangian is said to be \textbf{causal}.

To conclude this section, we recall other two fundamental objects of the theory. For any~$\x,\y \in M$ we define the
{\bf{kernel of the fermionic projector}} by
\beq \label{Pxydef}
\mathrm{P}(\x,\y) = \pi_\x \,\y|_{S_\y} \::\: S_\y \rightarrow S_\x \:.
\eeq
This is a mapping from one spin space to another, thereby
inducing relations between different spacetime points. A connected concept is the \textbf{closed chain}, defined as
\begin{equation}\label{closedchain}
\mathrm{A}_{\x\y}:= \mathrm{P}(\x,\y)\mathrm{P}(\y,\x): S_\x\rightarrow S_\x.
\end{equation} 
Note that both mappings can be understood as finite-rank operators on $\H$: one simply needs to extend them by zero on the orthogonal of the spin spaces. The spectrum of the closed chain coincides with the non-zero spectrum of $\x\y$ (see \cite[Section 1.1.3]{cfs}). Therefore, \textit{the kernel of the fermionic projector encodes the causal structure of spacetime.}

It follows directly from the definition that, for any fixex $\x\in\F$, the mappings 
$$
\mathrm{A}_{\x\,\cdot}\quad\mbox{and}\quad \mathrm{P}(\x,\,\cdot\,)
$$
are continuous as functions from $\F$ to $\mathfrak{B}(\H)$, with respect to both the operator and the trace norms (the latter case can be proved similarly as in Proposition \ref{proplagrangian}). Although this is all we need in this paper,  it should be stressed that continuity in the first variable may in general fail, as the next simple example shows:
$$
\mbox{Let }e\in\mathbb{S}_\H.\ \mbox{ Then}\quad \x(t)=t \langle e,\,\cdot\,\rangle e \rightarrow 0\quad\mbox{but}\quad \pi_{\x(t)}=\langle e,\,\cdot\,\rangle e\not\to 0=\pi_0\quad\mbox{as}\quad t\searrow 0.
$$
As we will see in the next section, such counterexamples are ruled out once we focus our attention on  maximal rank operators.

\subsection{Local Signature and Regular Systems }\label{subsectionsign}
The situation in which the spin spaces have maximal rank $2n$ is of great importance: this turns out to be the case for Dirac sea vacua in Minkowski space and also in presence of particles and anti-particles (see \cite[Section 5]{oppio}).

Referring to \eqref{npm}, we define the \textbf{local signature} at a point  $\x\in\F$ as
$$
\mathrm{sign}(\x):=(n_-(\x),n_+(\x))
$$
In particular, $\mathrm{dim}\, S_\x=n_-(\x)+n_+(\x)$.
\begin{Def}\label{Defregular}
	A point $\x\in \F$ is said to be \textbf{regular} if $\mathrm{sign}(\x)=(n,n)$. A Borel measure on $\F$ is said to be regular if every point of its support is regular.
\end{Def}
This condition is clearly equivalent to the requirement $\dim S_\x=2n$. It is convenient to give the set of these operators its own symbol:
$$
\F^\reg:=\{x\in\F\:|\:\mathrm{sign}(\x)=(n,n)\}.
$$
Note that this set fulfills $\R_\pm\F^\reg\subset \F^\reg$, but it lacks closedness.

\begin{Prp}\label{openreg}
	The following statements hold true.\\[-1em]
	\begin{itemize}[leftmargin = 2em]
		\item[{\rm (i)}] For all $\x\in\F$ there is an $r>0$ such that
		$$
		n_\pm(\x)\le n_\pm(y)\quad\mbox{for all }\y\in B_r(\x).
		$$
		\item[{\rm (ii)}] The set $\F^{\mathrm{reg}}$ is an open and dense subset of $\F$.
	\end{itemize}
\end{Prp}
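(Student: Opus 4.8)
The plan is to prove the two statements in sequence, since (ii) follows from (i) together with the Lipschitz continuity of the eigenvalues established in Proposition \ref{teoremacont}.

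For part (i), the key observation is that the number $n_\pm(\x)$ counts the positive (respectively negative) eigenvalues of $\x$, and the relevant quantity controlling when such eigenvalues can appear or disappear is the gap between the nonzero eigenvalues and $0$. Concretely, since $\x$ has finite rank, it has a smallest nonzero eigenvalue in absolute value; call it $\delta := \min\{|\lambda_k^\x| : \lambda_k^\x \neq 0\}$ (if $\x = 0$ there is nothing to prove, and we may take $r$ arbitrary). Now suppose $\|\x - \y\| < \delta/2$. Fix an index $k$ with $1 \le k \le n_-(\x)$, so $\lambda_k^\x < 0$ and in fact $\lambda_k^\x \le -\delta$ by the ordering by non-increasing absolute value. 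By Proposition \ref{teoremacont}, $|\lambda_k^\y - \lambda_k^\x| \le \|\x - \y\| < \delta/2$, hence $\lambda_k^\y \le \lambda_k^\x + \delta/2 \le -\delta/2 < 0$. Since this holds for all $k \le n_-(\x)$ and the enumeration $\lambda_1^\y, \dots, \lambda_n^\y$ lists the negative eigenvalues of $\y$ in non-increasing absolute value (padded with zeros), we conclude that $\y$ has at least $n_-(\x)$ strictly negative eigenvalues, i.e. $n_-(\y) \ge n_-(\x)$. The argument for $n_+$ is identical, using the indices $n < k \le 2n$. Taking $r := \delta/2$ gives the claim. The one subtlety to be careful about is matching the local enumeration $\lambda_k^\x$ of Proposition \ref{teoremacont} (which reorders the $\lambda_k^\pm$) with the sign bookkeeping; this is a routine unpacking of the definitions preceding that proposition.

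For part (ii), openness is immediate from (i): if $\x \in \F^\reg$ then $n_\pm(\x) = n$, so by (i) there is $r > 0$ with $n \le n_\pm(\y) \le n$ for all $\y \in B_r(\x)$, forcing $\mathrm{sign}(\y) = (n,n)$, i.e. $B_r(\x) \subset \F^\reg$. For density, given an arbitrary $\x \in \F$ with $\mathrm{sign}(\x) = (n_-, n_+)$, $n_\pm \le n$, I would perturb $\x$ by adding rank-one positive and negative operators along directions orthogonal to $S_\x$. Concretely, choose orthonormal vectors $e_1^+, \dots, e_{n-n_+}^+, e_1^-, \dots, e_{n-n_-}^-$ in the orthogonal complement $(S_\x)^\perp$ (which is infinite-dimensional, or at least large enough, since $\H$ is separable and infinite-dimensional and $S_\x$ is finite-dimensional), and set
\[
\x_\varepsilon := \x + \varepsilon \sum_{i=1}^{n-n_+} \langle e_i^+ | \cdot \rangle e_i^+ - \varepsilon \sum_{j=1}^{n-n_-} \langle e_j^- | \cdot \rangle e_j^- .
\]
Because the added operators act on a subspace orthogonal to $S_\x = \ran \x$ and are themselves block-diagonal with respect to the splitting $S_\x \oplus (S_\x)^\perp$, the operator $\x_\varepsilon$ has precisely $n_+ + (n - n_+) = n$ positive and $n_- + (n-n_-) = n$ negative eigenvalues, so $\x_\varepsilon \in \F^\reg$; and $\|\x_\varepsilon - \x\| = \varepsilon \to 0$. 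Hence every neighborhood of $\x$ meets $\F^\reg$.

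The main obstacle, such as it is, lies in part (ii) density: one must ensure the perturbation genuinely raises the signature without disturbing the existing eigenvalues, which is why the choice of perturbation directions orthogonal to $S_\x$ — making $\x_\varepsilon$ block-diagonal — is essential, and one must also check that $\x_\varepsilon$ still lies in $\F$, i.e. still has at most $n$ positive and $n$ negative eigenvalues, which holds by the exact count above. Everything else is a direct consequence of Proposition \ref{teoremacont}. I would also remark at the end, as the text preceding the proposition anticipates, that on $\F^\reg$ the pathological discontinuity of $\pi_\x$ exhibited in the earlier counterexample cannot occur, since there the rank drops in the limit — but a full proof of that continuity statement belongs to Section \ref{subsectionsign} proper rather than to this proposition.
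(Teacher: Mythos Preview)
Your proof is correct. Part (ii) is essentially identical to the paper's argument: both prove openness directly from (i) and obtain density by the same orthogonal rank-one perturbation $\x_\varepsilon = \x + \varepsilon\sum s_i\langle e_i,\cdot\rangle e_i$ with $e_i\in S_\x^\perp$.

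For part (i), however, you take a genuinely different route. You invoke Proposition~\ref{teoremacont} directly: since the ordered eigenvalues $\lambda_k^\x$ are $1$-Lipschitz in the operator norm, a ball of radius $\delta/2$ (with $\delta$ the smallest nonzero $|\lambda_k^\x|$) cannot kill the sign of any nonzero eigenvalue, and the count $n_\pm$ can only go up. The paper instead avoids quoting Proposition~\ref{teoremacont} and builds, for each eigenvector $u_i^\pm$ of $\x_0$, the continuous maps $f_\mu(\x):=\tfrac12(\x\pm|\x|)u_i^\pm\in S_\x^\pm$, using the continuity of $\x\mapsto|\x|$; these vectors remain linearly independent in a small ball, forcing $n_\pm(\x)\ge n_\pm(\x_0)$. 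Your argument is shorter and more transparent; the paper's construction, on the other hand, produces explicit continuous sections of the positive and negative spin spaces, which is exactly the raw material needed (after Gram--Schmidt) for the local spin frames of Proposition~\ref{existenceframe}. So the paper's proof does double duty, while yours is the cleaner stand-alone argument.
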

The set $\F^\reg$ of regular points owns much richer structures than what we are going to need and discuss in this paper. Of all the features, it is absolutely worth mentioning that $\F^\reg$ can be equipped with an \textit{infinite-dimensional smooth Banach manifold structure} (see \cite[Section 3.2]{banach}).

For regular systems, it is convenient to consider families of pseudo-orthonormal bases which also respect the signature. Let us introduce the \textit{signature vector}: 
\begin{equation}\label{signaturematrix}
s:=(\underbrace{1,\dots,1}_{n\mathrm{-times}},\underbrace{-1,\dots,-1}_{n\mathrm{-times}}).
\end{equation}
For any regular point $\x\in\F^\reg$, we say that a  linear basis $\{e_j\}_j$ of $S_\x$ is  \textbf{faithful} if it is \textit{pseudo-orthonormal} with respect to the spin scalar product and if it respects the decomposition into positive and negative spaces of $\x$, i.e. if it satisfies:
	\begin{equation*}\label{GS}
	\mbox{For all $i,j=1,\dots,2n$}\quad \Sl e_i|e_j\Sr_{\x}=s_{i}\delta_{ij}\quad\mbox{and}\quad e_j\in\begin{cases}
	S_{\x}^- &\mbox{if }j\le n\\[0.2em]
	S_{\x}^+ &\mbox{if }j>n
	\end{cases}.
	\end{equation*}

Any faithful linear basis of $S_\x$ can be turned into an orthonormal basis with respect to the Hilbert scalar product.
Taking into account the definition of spin scalar product \eqref{ssp} and the fact that the definite spin spaces $S_\x^\pm$ are invariant under the action of any bounded measurable function of $|\x|$, it is not difficult to see that
\begin{equation}\label{defbasis}
\hat{e}_j:=\sqrt{|\x|}\,e_j,\quad j\in\{1,\dots,2n\}
\end{equation}
is a Hilbert basis of $S_\x$ with respect to the Hilbert scalar product, which also preserves the sign decomposition \eqref{decompsign}.
Such bases can be chosen to continuously depend on the spacetime points, at least locally. More generally, we have the following result, which can be proved by combining the proof of Proposition \ref{openreg}-(i) with a Gram-Schmidt orthogonalization argument. Also, bear in mind that the function $$\F\ni \x\mapsto \sqrt{|\x|}\in\mathfrak{B}(\H)$$ is continuous in the $\sup$-norm topology (see for example (1.5.16) in \cite{cfs})
\begin{Prp}\label{existenceframe}
	Let $\Omega$ be a topological space and $\varphi:\Omega\rightarrow\F^{\reg}$ be continuous. Then, for every $\omega_0\in\Omega$ there exist $2n$-continuous functions
	$$
	e_j:\Omega_{\omega_0}\rightarrow \scH\quad\mbox{on an open neighborhood $\Omega_{\omega_0}$ of $\omega_0$}
	$$
	such that, for every $\omega\in \Omega_{\omega_0}$ 
	the set
	\begin{equation*}
	\{e_j(\omega)\:|\: j=1,\dots,2n\}\quad\mbox{is a a faithful basis of $S_{\varphi(\omega)}$}.
	\end{equation*}
	This is called a \textbf{local spin frame} of $\varphi$. 
The corresponding family of Hilbert bases
\begin{equation*}
\hat{e}_j:=\sqrt{|\varphi(\,\cdot\,)|}\,e_j:\Omega_{\omega_0}\rightarrow \scH
\end{equation*}
is referred to as a \textbf{local Hilbert frame} of $\varphi$.
\end{Prp}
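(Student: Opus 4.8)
The plan is to pull back a Hilbert-orthonormal frame of the positive and negative spectral subspaces at $\omega_0$ to nearby parameters by means of the corresponding spectral projections, and then to pseudo-orthonormalize fibrewise by a Gram--Schmidt procedure depending continuously on the parameter. The single analytic ingredient is the following refinement of the proof of Proposition \ref{openreg}-(i): on a suitable neighborhood of $\omega_0$ the orthogonal projections $\pi^-_{\varphi(\omega)}$ and $\pi^+_{\varphi(\omega)}$ onto $S^-_{\varphi(\omega)}$ and $S^+_{\varphi(\omega)}$ depend continuously on $\omega$ in the $\sup$-norm. Indeed, $\varphi(\omega_0)\in\F^\reg$ has exactly $2n$ non-zero eigenvalues, all of absolute value $\ge\delta$ for some $\delta>0$, and $\|\varphi(\omega)\|\le R$ for $\omega$ near $\omega_0$; by Proposition \ref{teoremacont} the eigenvalues of $\varphi(\omega)$ move by at most $\|\varphi(\omega)-\varphi(\omega_0)\|$, so after shrinking the neighborhood the spectrum of $\varphi(\omega)$ is contained in $\{0\}\cup[\delta/2,R]\cup[-R,-\delta/2]$, with exactly $n$ eigenvalues (counted with multiplicity) in $[\delta/2,R]$ and $n$ in $[-R,-\delta/2]$. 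Hence $\pi^\pm_{\varphi(\omega)}$ is the spectral projection of $\varphi(\omega)$ for the interval $\pm[\delta/2,R]$, and this spectral projection depends $\sup$-norm-continuously on the operator --- by continuity of the continuous functional calculus, or by writing it as a Riesz integral of the resolvent over a contour around $\pm[\delta/2,R]$ that stays at positive distance from the whole spectrum --- so composing with $\varphi$ gives the claim. Also $S_{\varphi(\omega)}=S^-_{\varphi(\omega)}\oplus S^+_{\varphi(\omega)}$, with both summands exactly $n$-dimensional, since $\varphi(\omega)$ is regular.

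Next I would fix Hilbert-orthonormal bases $u^-_1,\dots,u^-_n$ of $S^-_{\varphi(\omega_0)}$ and $u^+_1,\dots,u^+_n$ of $S^+_{\varphi(\omega_0)}$ and set, on that neighborhood,
$$
v_j(\omega):=\pi^-_{\varphi(\omega)}\,u^-_j\quad(1\le j\le n),\qquad v_{n+j}(\omega):=\pi^+_{\varphi(\omega)}\,u^+_j\quad(1\le j\le n).
$$
These are continuous $\scH$-valued functions with $v_j(\omega_0)=u^-_j$, $v_{n+j}(\omega_0)=u^+_j$, and $v_j(\omega)\in S^-_{\varphi(\omega)}$ for $j\le n$, $v_j(\omega)\in S^+_{\varphi(\omega)}$ for $j>n$. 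The Hilbert Gram matrix $\big(\la v_i(\omega)|v_j(\omega)\ra\big)_{i,j=1}^{2n}$ is continuous in $\omega$ and equals the identity at $\omega_0$, so it stays invertible on a (possibly smaller) neighborhood; there the $v_j(\omega)$ are linearly independent, whence $\{v_1(\omega),\dots,v_n(\omega)\}$ and $\{v_{n+1}(\omega),\dots,v_{2n}(\omega)\}$ are bases of the $n$-dimensional spaces $S^-_{\varphi(\omega)}$ and $S^+_{\varphi(\omega)}$ respectively.

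Finally I would run Gram--Schmidt on each fibre. On $S^-_{\varphi(\omega)}$ the spin scalar product $\Sl\cdot|\cdot\Sr_{\varphi(\omega)}$ is positive definite, so orthonormalizing $v_1(\omega),\dots,v_n(\omega)$ with respect to it gives $e_1(\omega),\dots,e_n(\omega)$ with $\Sl e_i(\omega)|e_j(\omega)\Sr_{\varphi(\omega)}=\delta_{ij}$; on $S^+_{\varphi(\omega)}$ it is negative definite, so orthonormalizing $v_{n+1}(\omega),\dots,v_{2n}(\omega)$ with respect to $-\Sl\cdot|\cdot\Sr_{\varphi(\omega)}$ gives $e_{n+1}(\omega),\dots,e_{2n}(\omega)$ with $\Sl e_i(\omega)|e_j(\omega)\Sr_{\varphi(\omega)}=-\delta_{ij}$. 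Since $S^-_{\varphi(\omega)}$ and $S^+_{\varphi(\omega)}$ are orthogonal in the spin scalar product, $\Sl e_i(\omega)|e_j(\omega)\Sr_{\varphi(\omega)}=0$ for $i\le n<j$, so altogether $\Sl e_i(\omega)|e_j(\omega)\Sr_{\varphi(\omega)}=s_i\delta_{ij}$ and $e_j(\omega)$ lies in the correct summand of the sign decomposition; thus $\{e_j(\omega)\}_{j=1}^{2n}$ is a faithful basis of $S_{\varphi(\omega)}$. Each $e_k(\omega)$ is obtained from $v_1(\omega),\dots,v_k(\omega)$ by finitely many algebraic operations involving the quantities $\Sl v_i(\omega)|v_j(\omega)\Sr_{\varphi(\omega)}=-\la v_i(\omega)|\varphi(\omega)\,v_j(\omega)\ra$ and square roots of the resulting partial ``norms'', all continuous in $\omega$ and strictly positive on a neighborhood of $\omega_0$; hence the $e_j$ are continuous there, and we take that neighborhood as $\Omega_{\omega_0}$. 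Continuity of the associated Hilbert frame $\hat e_j=\sqrt{|\varphi(\,\cdot\,)|}\,e_j$ then follows from the continuity of $\x\mapsto\sqrt{|\x|}$.

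The only genuine obstacle is the first step: one needs that the positive and negative spectral subspaces of $\varphi(\omega)$ --- not merely the eigenvalues --- vary continuously. This fails at non-regular points, as the rank-drop example at the end of Section \ref{sectioncontinuitylag} shows, and it holds here precisely because regularity keeps all $2n$ eigenvalues bounded away from $0$, so that $0$ remains an isolated, stable part of the spectrum; everything after that is the routine fact that Gram--Schmidt depends continuously on its inputs.
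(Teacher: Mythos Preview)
Your proof is correct and follows essentially the same route the paper indicates: the paper only gives a hint, namely to combine the proof of Proposition~\ref{openreg}-(i) with a Gram--Schmidt argument, and that is precisely what you do. The one cosmetic difference is that the paper's hint produces continuous vectors in $S^\pm_{\varphi(\omega)}$ via the positive/negative parts $\varphi(\omega)_\pm=(\varphi(\omega)\pm|\varphi(\omega)|)/2$ applied to fixed eigenvectors (using continuity of $\x\mapsto|\x|$), whereas you use the spectral projections $\pi^\pm_{\varphi(\omega)}$ directly (via Riesz integrals or functional calculus); both are standard and lead to the same Gram--Schmidt step.
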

As a special example, one considers $\Omega\subset\F^\reg$ and as $\varphi$ the identity map. In this case, one simply talks of \textit{local spin frame} and \textit{local Hilbert frame}.
Note that, by construction, any local Hilbert frame respects the decomposition \eqref{decompsign} into positive and negative subspaces of the spin spaces.
Whether such frames can be chosen globally on $\Omega$ depends of course on the topological properties of the mapping $\varphi$. This is in fact true in the example of Minkowski vacuum, as we will see later  in Proposition \ref{propbasismink}. 

To conclude this section, as already mentioned at the end of Section \ref{sectioncontinuitylag}, we stress that the kernel of the fermionic projector and the closed chain are in fact \textit{continuous in both variables}, when evaluated on $\F^\reg$. This can be seen, for example, exploiting Proposition \ref{existenceframe}, which yields, with obvious notation,
\begin{equation*}\label{orthogprojecont}
\pi_{\varphi(\omega)}^-=\sum_{j=1}^n \langle\hat{e}_j(\omega)|\,\cdot\,\rangle\, \hat{e}_j(\omega)\quad\mbox{and}\quad\pi_{\varphi(\omega)}^+=\sum_{j=n+1}^{2n} \langle\hat{e}_j(\omega)|\,\cdot\,\rangle\, \hat{e}_j(\omega).
\end{equation*}
The continuity of the local frames imply the continuity of the corresponding projectors.

\section{H\"older Continuity of the Integrated Lagrangian}\label{sectionholder}

In \cite[Section 5]{banach} the continuity properties of the causal Lagrangian \eqref{Lagrangian} are  analyzed in detail. In the present work, we are mostly interested in Remark 5.4-(2) therein, whose statement is summarized in the next theorem for convenience. By 
$
\pi_U
$
we denote the orthogonal projector on a closed subspace $U$ of $\H$. 
\begin{Thm}\label{holderlagrangian}
	Each $\x\in\F\setminus\{0\}$ has a neighborhood $U\subset\F$  such that the inequality
	\begin{equation}\label{estimatefirst}
	|\L(\x,\y)-\L(\z,\y)|\le c\,\|\x\|^{2-\alpha}\,\|\pi_J\,\y\,\pi_J\|^2\,\|\x-\z\|^{\alpha}
	\end{equation}
	holds for all $\z\in U$ and for all $\y\in\F$, where $J:=\mathrm{span}\{S_\x,S_\z\}$ and the constants $c,\alpha$ depend only on the spin dimension.
\end{Thm}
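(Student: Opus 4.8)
The plan is to reduce the difference $|\L(\x,\y)-\L(\z,\y)|$ to a difference of eigenvalues of the closed chains $\mathrm{A}_{\x\y}$ and $\mathrm{A}_{\z\y}$, and then to control that eigenvalue difference by the operator norm $\|\mathrm{A}_{\x\y}-\mathrm{A}_{\z\y}\|$ using the Lipschitz-type estimate from Proposition \ref{teoremacont} (or rather its analogue for products, via Theorem \ref{continuityenum}). Recall that the non-zero spectrum of $\x\y$ coincides with the spectrum of $\mathrm{A}_{\x\y}=\mathrm{P}(\x,\y)\mathrm{P}(\y,\x)$, and that all of these operators can be viewed as finite-rank operators on $\H$ supported on the relevant spin spaces. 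Since $J=\mathrm{span}\{S_\x,S_\z\}$ contains the ranges of both $\x$ and $\z$, we have $\x=\pi_J\,\x\,\pi_J$ and $\z=\pi_J\,\z\,\pi_J$, so both closed chains live inside $\pi_J\H$, and the relevant factors factor through $\pi_J\,\y\,\pi_J$.

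First I would write, using the definition \eqref{Lagrangian},
\begin{equation*}
\L(\x,\y)=\frac{1}{4n}\sum_{i,j=1}^{2n}\big(|\lambda_i^{\x\y}|-|\lambda_j^{\x\y}|\big)^2
=\frac{1}{2}\sum_{i=1}^{2n}|\lambda_i^{\x\y}|^2-\frac{1}{2n}\Big(\sum_{i=1}^{2n}|\lambda_i^{\x\y}|\Big)^2,
\end{equation*}
and similarly for $\z$; then the difference is a telescoping sum controlled by $\sum_i\big||\lambda_i^{\x\y}|-|\lambda_i^{\z\y}|\big|$ times a factor bounded by $\max_i(|\lambda_i^{\x\y}|+|\lambda_i^{\z\y}|)$, after fixing a common enumeration as in Theorem \ref{continuityenum}. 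The eigenvalue magnitudes are bounded by the singular values, and $\||\lambda_i^{\x\y}|-|\lambda_i^{\z\y}|\big|$ is bounded (uniformly in $i$, after the right ordering) by $\|\mathrm{A}_{\x\y}-\mathrm{A}_{\z\y}\|$. Next I would estimate
\begin{equation*}
\|\mathrm{A}_{\x\y}-\mathrm{A}_{\z\y}\|\le \|\mathrm{P}(\x,\y)-\mathrm{P}(\z,\y)\|\,\|\mathrm{P}(\y,\x)\|+\|\mathrm{P}(\z,\y)\|\,\|\mathrm{P}(\y,\x)-\mathrm{P}(\y,\z)\|,
\end{equation*}
and bound each kernel of the fermionic projector: $\|\mathrm{P}(\x,\y)\|=\|\pi_\x\,\y\,\pi_\y\|\le\|\pi_J\,\y\,\pi_J\|$, while for the differences one writes $\mathrm{P}(\x,\y)-\mathrm{P}(\z,\y)=(\pi_\x-\pi_\z)\,\y\,\pi_\y$ so that $\|\mathrm{P}(\x,\y)-\mathrm{P}(\z,\y)\|\le\|\pi_\x-\pi_\z\|\,\|\pi_J\,\y\,\pi_J\|$, and similarly on the other side. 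This produces the bound $\|\mathrm{A}_{\x\y}-\mathrm{A}_{\z\y}\|\le 2\,\|\pi_\x-\pi_\z\|\,\|\pi_J\,\y\,\pi_J\|^2$, and combined with $\max_i(|\lambda_i^{\x\y}|+|\lambda_i^{\z\y}|)\le 2\,\|\pi_J\,\y\,\pi_J\|\cdot(\text{something bounded})$ one already gets a Lipschitz estimate in $\|\pi_\x-\pi_\z\|$ with the correct power $\|\pi_J\,\y\,\pi_J\|^2$ on the $\y$-side.

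The main obstacle — and the reason a neighborhood $U$ of $\x\neq 0$ must be chosen and why the exponent is $\alpha$ rather than $1$ — is that $\|\pi_\x-\pi_\z\|$ is \emph{not} Lipschitz (nor even continuous) in $\x$ in general, as the counterexample $\x(t)=t\langle e,\cdot\rangle e$ in Section \ref{sectioncontinuitylag} shows: the projector onto the range jumps at the singular locus. The resolution is to stay away from the eigenvalue-zero crossing: on a small enough ball around $\x\neq 0$ the $n_\pm(\z)$ cannot drop below $n_\pm(\x)$ (Proposition \ref{openreg}-(i)), so the nonzero eigenvalues of $\z$ are bounded below, and one estimates $\|\pi_\x-\pi_\z\|$ in terms of $\|\x-\z\|$ via a resolvent/holomorphic-functional-calculus argument (Riesz projections), but at the price of a Hölder exponent $\alpha<1$ coming from the fact that the spectral gap of $\z$ degenerates as $\z$ approaches the boundary of $\F^\reg$ or as $\|\z\|$ varies; the factor $\|\x\|^{2-\alpha}$ is precisely the homogeneity-correcting weight, visible from the scaling $\x\mapsto\lambda\x$ under which $\L(\x,\y)$ scales like $\lambda^2$ and $\|\pi_\x-\pi_\z\|$ is scale-invariant. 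I expect the technical heart of the argument to be this quantitative bound $\|\pi_\x-\pi_\z\|\le c\,\|\x\|^{-\alpha}\,\|\x-\z\|^{\alpha}$ on a suitable neighborhood, after which assembling \eqref{estimatefirst} is routine bookkeeping; since the statement cites \cite[Remark~5.4-(2)]{banach}, I would ultimately invoke that reference for this estimate rather than reprove it, and only carry out the reduction sketched above in detail.
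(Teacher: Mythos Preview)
The paper does not prove this theorem at all: it is quoted verbatim from \cite[Remark~5.4-(2)]{banach}, and the text says explicitly that ``the explicit form of the parameters $c,\alpha$ can be found in the aforementioned reference.'' So there is no proof in the paper to compare against beyond the citation.

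That said, your sketch contains a genuine misidentification of where the H\"older exponent comes from, and this makes the outlined argument fail. You attribute the exponent $\alpha<1$ to a bound of the form $\|\pi_\x-\pi_\z\|\le c\,\|\x\|^{-\alpha}\|\x-\z\|^{\alpha}$ on a neighborhood of~$\x$. But the theorem is stated for \emph{every} $\x\in\F\setminus\{0\}$, including non-regular points, and at such points $\pi_\x$ is not even continuous (exactly the counterexample $\x(t)=t\langle e,\cdot\rangle e$ you quote): by Proposition~\ref{openreg}-(i) the ranks $n_\pm(\z)$ can \emph{increase} near~$\x$, so $\|\pi_\x-\pi_\z\|$ stays of order~$1$ no matter how small $\|\x-\z\|$ is. Hence no such projector estimate is available, and the reduction to $\|\pi_\x-\pi_\z\|$ collapses.

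A second gap is the step ``$\big||\lambda_i^{\x\y}|-|\lambda_i^{\z\y}|\big|$ is bounded by $\|\mathrm{A}_{\x\y}-\mathrm{A}_{\z\y}\|$.'' Proposition~\ref{teoremacont} is for \emph{self-adjoint} operators; the closed chain (equivalently $\x\y$) is in general neither self-adjoint nor normal, and for non-normal $d\times d$ matrices eigenvalues are only $1/d$-H\"older in the operator norm (Ostrowski/Elsner). This is in fact the true source of the exponent~$\alpha$: working on the finite-dimensional space $J=\mathrm{span}\{S_\x,S_\z\}$ (of dimension at most $4n$), one compares the non-normal operators $\x\,(\pi_J\y\pi_J)$ and $\z\,(\pi_J\y\pi_J)$ directly, applies the matrix eigenvalue perturbation bound, and reads off $\alpha$ depending only on~$n$; the factor $\|\x\|^{2-\alpha}\|\pi_J\y\pi_J\|^2$ then drops out of the homogeneity exactly as you observed. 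No control of $\pi_\x-\pi_\z$ is needed.
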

This estimate provides a \textit{H\"older-like inequality} for the Lagrangian. The explicit form of the parameters $c,\alpha$ can be found in the aforementioned reference and it is not relevant for the purposes of this paper. We want to exploit the inequality \eqref{estimatefirst} to analyze the H\"older continuity of another function, namely the \textbf{integrated Lagrangian}.  First, we need to distinguish the spacetime points for which such a function is well-defined.
\begin{Def}
	Given a Borel measure $\rho$ on $\F$, a point $\x\in\F$ is said to be \textbf{admissibile} for  $\rho$ if $\L(\x,\,\cdot\,)\in L^1(M,d\rho)$. We denote the set of admissible points by $\mbox{\rm Adm}(\rho)$. For such points, the following function is well-defined:
	\begin{equation}\label{ell}
	\ell :\mbox{\rm Adm}(\rho)\ni \x\mapsto \int_M \L(\x,\y)\,d\rho(\y)\in\R_+\,.
	\end{equation}
\end{Def}
The function $\ell$ plays a crucial role in the analysis of the causal action principle, for the corresponding \textit{Euler-Lagrange equations} demand that $\ell$ is constant on the support of minimizers (see for instance \cite[Section 2.2]{hamilt}). In particular, for such minimizing measures, one has that $M\subset \mathrm{Adm}(\rho)$. 

We now give a  sufficient condition for a spacetime point $\x\in M$ to be admissible,  which involves the integrability of the kernel of the fermionic projector. To this aim, we first introduce the \textbf{generalized inverse} operators \footnote{In \cite{banach} the generalized inverse $\mathrm{g}(\y)$ of $\y\in\F$ is denoted by $Y^{-1}$}:
\begin{equation*}\label{geninverse}
	\mbox{For all $\x\in\F\ $ let }\  \mathrm{g}(\x):=
	\begin{cases}
		0 & \mbox{if $\x=0$} \\
		(\x|_{S_\x})^{-1}\oplus 0 & \mbox{if $\x\neq 0$}  
	\end{cases}
\end{equation*}
It readily follows that 
$$
\mathrm{g}:\F\rightarrow\F,\quad \mathrm{g}(\F^\reg)\subset\F^\reg\quad \mbox{and}\quad  \mathrm{g}(\x)\,\x=\x \,\mathrm{g}(\x)=\pi_\x.
$$ Moreover, $\mathrm{g}$ is continuous on $\F^\reg$, as made clear by the following statement, whose proof can be found in Appendix \ref{appendixproofs}.
\begin{Lemma}\label{lemmaregcont}
	For every $\x\in \F^\reg$ there exists $B_r(\x)\subset\F^\reg$ such that
	\begin{equation*}\label{lipischitz}
		\|\mathrm{g}(\y)-\mathrm{g}(\x)\|\le 6\|\mathrm{g}(\x)\|^2\|\y-\x\|\quad\mbox{for all $\y\in B_r(\x)$}.
	\end{equation*}
	Thus, $\mathrm{g}$ is continuous on $\F^\reg$. Moreover, $\mathrm{g}$ is everywhere discontinuous on $\F\setminus\F^\reg$.
\end{Lemma}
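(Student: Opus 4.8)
The plan is to prove the Lipschitz estimate on a suitable ball, deduce continuity on $\F^\reg$ as an immediate corollary, and then separately handle the discontinuity on $\F\setminus\F^\reg$ by exhibiting, near any such point, a sequence of regular (or at least higher-rank) operators along which $\mathrm{g}$ blows up. First I would fix $\x\in\F^\reg$ and use Proposition \ref{openreg}-(i), applied at the regular point $\x$ where $n_\pm(\x)=n$ is already maximal: this gives $r_0>0$ with $B_{r_0}(\x)\subset\F^\reg$, so that for $\y$ in this ball both $\x$ and $\y$ have $2n$-dimensional spin spaces and their restrictions $\x|_{S_\x}$, $\y|_{S_\y}$ are genuinely invertible. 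The key analytic input is a quantitative version of the statement ``the restriction of an invertible operator to its range varies continuously''. Concretely, set $\delta:=\|\x-\y\|$ and note $\|\pi_\x-\pi_\y\|$ is controlled by $\|\mathrm{g}(\x)\|\,\delta$ when $\delta$ is small: one can see this from $\pi_\x-\pi_\y$ being expressible via a contour integral of the resolvent around the nonzero spectrum, or more elementarily from the bound on eigenvalues in Proposition \ref{teoremacont} together with the gap between the nonzero eigenvalues and $0$, which is exactly $\|\mathrm{g}(\x)\|^{-1}$. Shrinking $r_0$ to some $r\le \big(2\|\mathrm{g}(\x)\|\big)^{-1}$ ensures $\|\pi_\x-\pi_\y\|<1$ and also keeps the smallest nonzero eigenvalue of $\y$ bounded below, say by $\tfrac12\|\mathrm{g}(\x)\|^{-1}$, which yields the a priori bound $\|\mathrm{g}(\y)\|\le 2\|\mathrm{g}(\x)\|$ on the ball.

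With these preliminaries, the heart of the estimate is the resolvent-type identity
\[
\mathrm{g}(\y)-\mathrm{g}(\x)=\mathrm{g}(\y)\,\big(\x-\y\big)\,\mathrm{g}(\x)+\mathrm{g}(\y)\,\pi_\y\,\pi_\x-\pi_\y\,\pi_\x\,\mathrm{g}(\x),
\]
obtained by writing $\mathrm{g}(\y)-\mathrm{g}(\x)=\mathrm{g}(\y)(\x\,\mathrm{g}(\x))-(\mathrm{g}(\y)\,\y)\mathrm{g}(\x)$ and inserting $\x=\pi_\x+\ldots$ carefully using $\mathrm{g}(\x)\x=\pi_\x$, $\y\,\mathrm{g}(\y)=\pi_\y$ — here one must be attentive that $\mathrm{g}(\x)$ and $\mathrm{g}(\y)$ vanish on the orthogonal complements of $S_\x$, $S_\y$, so the bookkeeping is about the difference of the two projections rather than a naive $AB^{-1}-A'B'^{-1}$ formula. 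Taking norms and using $\|\mathrm{g}(\y)\|\le 2\|\mathrm{g}(\x)\|$, $\|\pi_\y-\pi_\x\|\le 2\|\mathrm{g}(\x)\|\,\delta$, and $\|\mathrm{g}(\x)\|=\|\mathrm{g}(\x)\pi_\x\|$ collapses everything into a bound of the form $(2+2+2)\|\mathrm{g}(\x)\|^2\delta=6\|\mathrm{g}(\x)\|^2\|\x-\y\|$, matching the constant $6$ claimed. Continuity on $\F^\reg$ is then immediate, and since $\mathrm{g}(\F^\reg)\subset\F^\reg$ the map is also continuous as a map into $\F^\reg$.

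For the everywhere-discontinuity on $\F\setminus\F^\reg$, I would fix $\x$ with $\mathrm{sign}(\x)\ne(n,n)$, hence $\dim S_\x<2n$, and exploit density of $\F^\reg$ (Proposition \ref{openreg}-(ii)) to pick $\y_k\to\x$ with $\y_k\in\F^\reg$. The point is that $\dim S_{\y_k}=2n>\dim S_\x$ forces at least one eigenvalue of $\y_k$ to approach $0$: indeed if all $2n$ nonzero eigenvalues of $\y_k$ stayed bounded away from $0$, then by Proposition \ref{teoremacont} the operator $\x$ would have $2n$ nonzero eigenvalues too, contradicting $\dim S_\x<2n$. Hence the smallest nonzero eigenvalue of $\y_k$ tends to $0$, so $\|\mathrm{g}(\y_k)\|\to\infty$, while $\mathrm{g}(\x)$ is a fixed bounded operator; thus $\mathrm{g}(\y_k)\not\to\mathrm{g}(\x)$ and $\mathrm{g}$ is discontinuous at $\x$. (A harmless technical point: one should restrict attention to $\x\ne0$; at $\x=0$ the same sequence argument with $\|\mathrm g(\y_k)\|\to\infty$ still shows discontinuity, since $\y_k\to0$ with nonzero eigenvalues going to $0$, but this requires choosing the approximating sequence so that the spectral gap shrinks — e.g. $\y_k=t_k A$ for fixed $A\in\F^\reg$ and $t_k\downarrow0$.)

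The main obstacle I anticipate is the clean derivation of the resolvent identity and, in particular, getting the constant exactly $6$ rather than some larger absolute constant: the subtlety is that $\mathrm{g}(\x),\mathrm{g}(\y)$ are generalized inverses living on different subspaces, so the standard ``$B^{-1}-A^{-1}=B^{-1}(A-B)A^{-1}$'' trick does not apply verbatim and must be corrected by terms involving $\pi_\x-\pi_\y$, each of which has to be estimated by the already-established $\|\pi_\x-\pi_\y\|\le2\|\mathrm g(\x)\|\,\delta$ with no slack to spare. Everything else — the a priori norm bound on the ball, the eigenvalue-gap argument, and the discontinuity sequence — is routine given Propositions \ref{openreg} and \ref{teoremacont}.
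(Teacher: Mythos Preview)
Your approach has the right shape, but two issues keep it from closing. First, the resolvent-type identity you write down is incorrect: since $\mathrm g(\y)\pi_\y=\mathrm g(\y)$ and $\pi_\x\mathrm g(\x)=\mathrm g(\x)$, your right-hand side collapses to $2(\mathrm g(\y)\pi_\x-\pi_\y\mathrm g(\x))$, which does not equal $\mathrm g(\y)-\mathrm g(\x)$ in general (take $\x,\y$ to be rank-one projections onto orthogonal lines). The correct decomposition, which the paper imports from Wedin's pseudo-inverse perturbation theory, is
\[
\mathrm g(\x)-\mathrm g(\y)=\mathrm g(\x)(\y-\x)\mathrm g(\y)+\mathrm g(\x)\pi_\x(\bI-\pi_\y)-(\bI-\pi_\x)\pi_\y\,\mathrm g(\y),
\]
together with the auxiliary identities $(\bI-\pi_\x)\pi_\y=(\bI-\pi_\x)(\y-\x)\mathrm g(\y)$ and $\pi_\x(\bI-\pi_\y)=\mathrm g(\x)(\x-\y)(\bI-\pi_\y)$. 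Second---and this is the step that actually produces the constant $6$---the paper does not estimate $\|\pi_\x-\pi_\y\|$ at all. Instead it invokes the equal-dimension identity $\|(\bI-\pi_\x)\pi_\y\|=\|(\bI-\pi_\y)\pi_\x\|$ (valid because both spin spaces have dimension $2n$), so that each of the three terms above is bounded by $\|\mathrm g(\x)\|\|\mathrm g(\y)\|\|\x-\y\|$, yielding the \emph{symmetric} estimate $\|\mathrm g(\x)-\mathrm g(\y)\|\le 3\|\mathrm g(\x)\|\|\mathrm g(\y)\|\|\x-\y\|$. The paper then bootstraps: choosing $r$ with $3\|\mathrm g(\x)\|r<1/2$ and applying the reverse triangle inequality to this very estimate gives $\|\mathrm g(\y)\|\le 2\|\mathrm g(\x)\|$ on $B_r(\x)$, and plugging back yields $6\|\mathrm g(\x)\|^2$. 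Your route---first bound $\|\mathrm g(\y)\|\le 2\|\mathrm g(\x)\|$ via Proposition~\ref{teoremacont}, then bound $\|\pi_\x-\pi_\y\|\le 2\|\mathrm g(\x)\|\delta$---would establish continuity, but your accounting for the constant does not close: a correction term of the form $\mathrm g(\y)(\pi_\x-\pi_\y)$ would then cost $2\|\mathrm g(\x)\|\cdot 2\|\mathrm g(\x)\|\delta=4\|\mathrm g(\x)\|^2\delta$, not $2$.

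Your treatment of the discontinuity on $\F\setminus\F^\reg$ is fine. The paper simply writes down the explicit rank-one perturbation $\x(\varepsilon)=\x+s_1\varepsilon\langle e_1,\cdot\rangle e_1$ with $e_1\perp S_\x$, for which $\|\mathrm g(\x(\varepsilon))\|=1/\varepsilon\to\infty$; your eigenvalue argument via Proposition~\ref{teoremacont} reaches the same conclusion.
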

We now consider the following two inequalities, where the norms are as always the operator norms. The first estimate follows directly from the general properties of operators and eigenvalues (recall that the spectrum of $\mathrm{A}_{\x\y}$ coincide with the non-zero spectrum of $\x\y$), while the second one follows from the properties of the generalized inverse.
\begin{equation*}
\begin{split}
		\mathrm{i)}&\ |\lambda^{\x\y}_i|\le \|\mathrm{A}_{\x\y}\|\le \|\mathrm{P}(\x,\y)\|\|\mathrm{P}(\y,\x)\|\quad\mbox{for all }i=1,\dots,2n\,,\\[0.2em]
\mathrm{ii)}&\ \|\mathrm{P}(\y,\x)\|=\|\pi_\y\,\x\|=\|\mathrm{g}(\y)\,\y\,\pi_\x\,\x\|\le \|\mathrm{g}(\y)\|\|\x\|\|\pi_\x\,\y\|=\|\x\|\|\mathrm{g}(\y)\|\|\mathrm{P}(\x,\y)\|.
\end{split}
\end{equation*}
\noindent The next result is a direct consequence of these estimates and the definition of the Lagrangian \eqref{Lagrangian}. For technical simplicity, we here make the the additional \textit{assumption   that the measure $\rho$ is regular}, i.e. 
$$
M\subset\F^\reg.
$$
In this case, the integral below in \eqref{intPpower4} is well-defined, for the integrand is continuous for any \textit{fixed} $\x\in \F$ (see the discussion at the end of Section \ref{sectioncontinuitylag} and Lemma \ref{lemmaregcont}). It should be stressed that this condition may in principle be relaxed, because Proposition \ref{prp4admiss} applies in fact to every Borel measure on whose support the generalized inverse is measurable. Note, though, that regularity is indeed realized in the important example of Dirac sea vacua in Minkowski Space. The last statement of Proposition \ref{prp4admiss} follows trivially from $\mathrm{P}(0,\y)=0$.
\begin{Prp}\label{prp4admiss}
	Let $\rho$ be regular. Then, any  $\x\in\F$ that satisfies the following condition is admissible:
	\begin{equation}\label{intPpower4}
		\int_M\|\mathrm{P}(\x,\y)\|^4\,\|\mathrm{g}(\y)\|^2\,d\rho(\y)<\infty.
	\end{equation}
	In particular, the trivial point $\x=0$ is always admissible.
\end{Prp}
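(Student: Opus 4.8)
The plan is to bound $\L(\x,\cdot)$ pointwise on $M$ by a constant multiple of the integrand in \eqref{intPpower4} and then integrate. As a preliminary remark, since $\rho$ is regular we have $M\subset\F^\reg$, so Lemma \ref{lemmaregcont} ensures that $\y\mapsto\mathrm{g}(\y)$ is continuous on $M$; together with the continuity of $\y\mapsto\mathrm{P}(\x,\y)$ recalled at the end of Section \ref{sectioncontinuitylag}, this makes the integrand in \eqref{intPpower4} well-defined and Borel measurable (this is precisely where regularity enters). Likewise $\y\mapsto\L(\x,\y)$ is continuous by Proposition \ref{proplagrangian}, hence measurable.

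For the pointwise estimate, I would start from \eqref{Lagrangian} and use $(|\lambda_i^{\x\y}|-|\lambda_j^{\x\y}|)^2\le 4\max_k|\lambda_k^{\x\y}|^2$ in each of the $(2n)^2$ summands, which gives $\L(\x,\y)\le 4n\,\max_k|\lambda_k^{\x\y}|^2$. Chaining inequalities (i) and (ii) stated above then yields
\begin{equation*}
\max_k|\lambda_k^{\x\y}|^2\le\|\mathrm{P}(\x,\y)\|^2\,\|\mathrm{P}(\y,\x)\|^2\le\|\x\|^2\,\|\mathrm{g}(\y)\|^2\,\|\mathrm{P}(\x,\y)\|^4 ,
\end{equation*}
so that $\L(\x,\y)\le 4n\,\|\x\|^2\,\|\mathrm{P}(\x,\y)\|^4\,\|\mathrm{g}(\y)\|^2$ for every $\y\in M$.

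Integrating this inequality over $M$ and invoking hypothesis \eqref{intPpower4} gives $\int_M\L(\x,\y)\,d\rho(\y)\le 4n\,\|\x\|^2\int_M\|\mathrm{P}(\x,\y)\|^4\,\|\mathrm{g}(\y)\|^2\,d\rho(\y)<\infty$, i.e.\ $\L(\x,\cdot)\in L^1(M,d\rho)$ and thus $\x\in\mathrm{Adm}(\rho)$. For the final assertion, $\mathrm{P}(0,\y)=\pi_0\,\y|_{S_\y}=0$ for all $\y$, hence $\mathrm{A}_{0\y}=0$, all $\lambda_i^{0\y}$ vanish, and $\L(0,\cdot)\equiv0$ is trivially integrable. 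I do not expect any genuine obstacle here: the argument is a direct chain of elementary operator-norm and eigenvalue bounds, and the only point deserving attention is the measurability of the integrand, which relies on the regularity hypothesis through Lemma \ref{lemmaregcont}.
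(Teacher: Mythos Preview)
Your proof is correct and follows essentially the same approach as the paper: the paper states that the proposition is ``a direct consequence of these estimates and the definition of the Lagrangian,'' referring precisely to inequalities (i) and (ii), and that the case $\x=0$ ``follows trivially from $\mathrm{P}(0,\y)=0$.'' Your write-up simply makes these steps explicit, including the measurability remark via Lemma~\ref{lemmaregcont}, which the paper also notes in the paragraph preceding the proposition.
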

Although being well-defined, the function $\ell$ may in general be discontinuous on the set of admissible points and additional assumptions on the measure must be given in order to achieve  regularity of any kind.
Here, we state a strengthened version of condition \eqref{intPpower4} which is sufficient to imply local H\"older continuity.

Consider an admissible point $\x\in\F\setminus\{0\}$ and let $U$ be a neighborhood of $\x$ as in Theorem \ref{holderlagrangian}. Let now 
\begin{equation*}\label{nonemptinessintersection}
 \z\in \mathrm{Adm}(\rho)\cap U\ \  \mbox{($\neq \emptyset$, as it contains $\x$) }.
\end{equation*}
From \eqref{estimatefirst}, one then immediately infers that
	\begin{equation*}
	\begin{split}
	|\ell(\x)-\ell(\z)|&\le \int_M |\L(\x,\y)-\L(\z,\y)|\,d\rho(\y)\le\\ 
	&\le c\,\|\x\|^{2-\alpha}\,\|\x-\z\|^\alpha\,\int_M\|\pi_J\, \y\,\pi_J\|^2\,d\rho(\y),
	\end{split}
	\end{equation*}
Following the argument as in \cite[Theorem 5.9]{banach}, the integral on the right can be estimates in terms of the kernel of the fermionic projector, namely,
%
	\begin{equation*}
	\begin{split}
	 \int_M\|\pi_J\, \y\,\pi_J\|^2\,d\rho(\y)\le 8\int_M\|\mathrm{P}(\x,\y)\|^4\|\mathrm{g}(\y)\|^2\,d\rho(\y)+8\int_M\|\mathrm{P}(\z,\y)\|^4\|\mathrm{g}(\y)\|^2\,d\rho(\y).
	\end{split}
	\end{equation*}
As above, the integrands are continuous and hence the integrals are well-defined.
	Having control on the $L^4$-norm of the kernel of the fermionic projector, the above estimates provide H\"older continuity conditions for the function $\ell$. In particular, the integrated Lagrangian is continuous. 
	More generally, one has the following slight generalization of \cite[Theorem 5.9]{banach}. This result follows directly from the estimates just proven and Proposition \ref{prp4admiss}. 
	\begin{Thm}\label{teoremaholder}
		Let $\rho$ be regular. Let $\Omega$ be a topological space and let $G\in C^0(\Omega,\F)$ satisfy
		\begin{equation}\label{boundintP}
		\sup_{\omega\in \Omega}\: \int_M\|P(G(\omega),\y)\|^4\|\mathrm{g}(\y)\|^2\,d\rho(\y)<\infty.
		\end{equation}\\[-0.2em]
		Then, the following properties hold.\\[-1em]
		\begin{itemize}[leftmargin=2.5em]
		\item[\rm{(i)}] $G(\Omega)\subset \mathrm{Adm}(\rho)$,\\[-0.9em]
		\item[\rm{(ii)}]Let $\omega_0\in\Omega$ satisfy $G(\omega_0)\neq 0$. Then, there is a neighborhood $\Omega_0\subset\Omega$ of $\omega_0$ and  a constant $K>0$ such that, for all  $\omega\in\Omega_0$,
		$$
		|\ell(G(\omega))-\ell(G(\omega_0))|\le K\,\|G(\omega_0)\|^{2-\alpha}\,\|G(\omega)-G(\omega_0)\|^\alpha. 
		$$
		In particular, $\ell\circ G$ is continuous at $\omega_0$.
		\end{itemize}

	\end{Thm}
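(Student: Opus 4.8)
The plan is to assemble the theorem directly from ingredients that are already in hand by the point of the statement, so that the proof is essentially a matter of bookkeeping. The three tools needed are: Theorem \ref{holderlagrangian} (the pointwise H\"older-like estimate for $\L$), Proposition \ref{prp4admiss} (the $L^4$-integrability criterion for admissibility), and the two displayed chains of estimates immediately preceding the statement — the integration of \eqref{estimatefirst} against $d\rho$, and the bound of $\int_M\|\pi_J\,\y\,\pi_J\|^2\,d\rho(\y)$ by a sum of two $L^4$-type integrals with the factor $\|\mathrm{g}(\y)\|^2$. The role of the hypothesis \eqref{boundintP} is to provide a uniform bound on all these integrals along the image of $G$.

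For part (i): fix $\omega\in\Omega$ and set $\x := G(\omega)$. If $\x=0$, then $\x\in\mathrm{Adm}(\rho)$ by the last statement of Proposition \ref{prp4admiss}. If $\x\neq 0$, then condition \eqref{intPpower4} for this $\x$ is precisely the $\omega$-th term in the supremum \eqref{boundintP}, which is finite by hypothesis; hence Proposition \ref{prp4admiss} applies and $\x\in\mathrm{Adm}(\rho)$. Either way $G(\omega)\in\mathrm{Adm}(\rho)$, so $G(\Omega)\subset\mathrm{Adm}(\rho)$ and in particular $\ell\circ G$ is well-defined on all of $\Omega$.

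For part (ii): write $\x_0 := G(\omega_0)\neq 0$, let $U$ be the neighborhood of $\x_0$ furnished by Theorem \ref{holderlagrangian}, and set $\Omega_0 := G^{-1}(U)$, which is open in $\Omega$ by continuity of $G$ and contains $\omega_0$. For $\omega\in\Omega_0$ put $\z := G(\omega)\in U$ and $J := \mathrm{span}\{S_{\x_0},S_{\z}\}$. Integrating \eqref{estimatefirst} over $M$ against $d\rho$ (legitimate because both $\x_0$ and $\z$ are admissible, by part (i)) gives
\begin{equation*}
|\ell(\x_0)-\ell(\z)|\le c\,\|\x_0\|^{2-\alpha}\,\|\x_0-\z\|^{\alpha}\int_M\|\pi_J\,\y\,\pi_J\|^2\,d\rho(\y),
\end{equation*}
and then the second displayed inequality preceding the theorem bounds the remaining integral by
\begin{equation*}
8\int_M\|\mathrm{P}(\x_0,\y)\|^4\|\mathrm{g}(\y)\|^2\,d\rho(\y)+8\int_M\|\mathrm{P}(\z,\y)\|^4\|\mathrm{g}(\y)\|^2\,d\rho(\y)\le 16\,S,
\end{equation*}
where $S$ denotes the finite supremum in \eqref{boundintP}; here both integrals are bounded by $S$ because $\x_0 = G(\omega_0)$ and $\z = G(\omega)$ lie in $G(\Omega)$. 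Combining, and rewriting in terms of $G$,
\begin{equation*}
|\ell(G(\omega))-\ell(G(\omega_0))|\le 16\,c\,S\,\|G(\omega_0)\|^{2-\alpha}\,\|G(\omega)-G(\omega_0)\|^{\alpha},
\end{equation*}
so $K := 16\,c\,S$ works. Continuity of $\ell\circ G$ at $\omega_0$ follows since $\omega\mapsto\|G(\omega)-G(\omega_0)\|$ is continuous and vanishes at $\omega_0$, and $\alpha>0$.

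The only genuinely delicate point — and it is already dispatched in the remarks preceding the statement — is making sure every integral written down is literally well-defined, i.e. that the integrands are $\rho$-measurable. This is where regularity of $\rho$ ($M\subset\F^{\reg}$) enters: on $\F^{\reg}$ the maps $\y\mapsto\mathrm{P}(\x,\y)$, $\y\mapsto\mathrm{g}(\y)$ and $\y\mapsto\pi_J\,\y\,\pi_J$ are continuous (by Lemma \ref{lemmaregcont}, Proposition \ref{existenceframe}, and the discussion at the end of Section \ref{sectioncontinuitylag}), hence Borel, for each fixed $\x$ — in particular for $\x = \x_0$ and $\x = \z$. So there is no real obstacle; the proof is a short assembly, and I would present it in exactly the three-step form above, citing Theorem \ref{holderlagrangian}, Proposition \ref{prp4admiss}, and the two intermediate displays by the labels they already carry.
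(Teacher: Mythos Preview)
Your proposal is correct and follows essentially the same approach as the paper, which simply remarks that the result ``follows directly from the estimates just proven and Proposition \ref{prp4admiss}.'' You have faithfully unpacked this: Proposition \ref{prp4admiss} for part (i), and for part (ii) the pullback of the neighborhood $U$ from Theorem \ref{holderlagrangian} via $G$, followed by the two displayed estimates and the uniform bound \eqref{boundintP}.
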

	In the special case of translation invariant systems, such as Dirac sea vacua in Minkowski space, the spacetime operators are unitary equivalent to each other,  
	$
	\|\mathrm{g}(\y)\|
	$
	is then constant on the support of the measure
	 (see (iii) in Section \ref{subsectionCFSM}) and can therefore be neglected from condition \eqref{boundintP}. 
	 
	In this paper we will see how the strategies explained in this section can be implemented in the specific example of causal fermion systems in Minkowski space. More precisely, we will consider perturbations of regularized Dirac sea vacua, realizing the function $G$ above as a continuous transformation of the local correlation function (see Section \ref{sectionholdervariation}).

\section{The Dirac Equation in Minkowski space}\label{sectionDE}

In the present work we mainly focus on causal fermion systems obtained by regularizing
the vacuum Dirac sea in Minkowski space~$\scrM$ as analyzed in detail in~\cite{cfs}, \cite{neumann} and \cite{oppio}, where we address the interested reader for more details.
For notational simplicity, we work in a fixed reference frame and identify Minkowski
space with~$\R^{1,3}$. This is endowed with the standard Minkowski inner product of signature
convention $(+,-,-,-)$, denoted here by~$u \!\cdot\! v$. 
We denote spacetime indices by~$i,j \in \{0, \ldots, 3\}$ and spatial indices by~$\alpha, \beta \in \{1,2,3\}$.
We use natural units $\hbar = c = 1$.
The Minkowski metric gives rise to a {\em{light cone structure}}:
The sets 
$$
L_0=
\{ \xi \in \scrM \,|\, \xi \cdot \xi =0\},\ \ I_0 = \{ \xi \in \scrM \,|\, \xi \cdot \xi > 0 \},\ \ J_0 = \{ \xi \in \scrM \,|\, \xi \cdot \xi \geq 0 \}
$$ 
are referred to as the
\textit{null cone, interior light cone} and \textit{closed light cone}, respectively.
By translation, we obtain corresponding cones centered at
any point $x\in\R^{1,3}$. They will be denoted by $L_x,I_x$ and $J_x$, respectively. Similarly, one denotes by $J(K)$ the closed light cone generated by a compact set $K$. An index $\wedge$ or $\vee$ will indicate the lower and upper half of the cone, respectively.

\subsection{The Equation and its Solution Space}\label{sectiondiracequation}

The starting point is the free Dirac equation. Let us introduce the differential operator
\begin{equation}\label{Diracequation}
\mathrm{D}:=i\gamma^j\partial_j-m: C^\infty(\R^4,\C^4)\rightarrow C^\infty(\R^4,\C^4).
\end{equation}
From the theory of symmetric hyperbolic systems, it follows that, for every $t\in\R$, the Cauchy problem
\begin{equation}\label{CP}
	\left\{
\begin{split}
\mathrm{D}f&=0\\
f(t,\cdot)&=\varphi\in C^\infty(\R^3,\C^4)
\end{split}
\right.
\end{equation}
 admits a unique smooth solution. Moreover, \textit{finite propagation speed} ensures that solutions with compactly supported initial data are \textit{spatially compact}, i.e. they belong to the space
$$
C_{\mathrm{sc}}^\infty(\R^4,\C^4):=\{f\in C^\infty(\R^4,\C^4)\:|\: f(t,\cdot)\in C_0^\infty(\R^3,\C^4)\ \mbox{for all }t\in\R \}.
$$
This  determines the following class of linear isomorphisms: For $t\in\R$,
\begin{equation}\label{mappingE}
\begin{split}
\mathrm{E}_t:C_0^\infty(\R^3,\C^4)&\rightarrow\scH_m^{\rm{sc}}:=\ker\mathrm{D}\cap C^\infty_{\mathrm{sc}}(\R^{4},\C^4),\\
&\mathrm{E}_t(\varphi)(t,\cdot)=\varphi,
\end{split}
\end{equation}
which propagates the compactly supported initial data at time $t$ to all of spacetime.

The linear space $\scH_m^{\rm{sc}}$ is independent of the chosen time $t$ in \eqref{mappingE} and can be given a pre-Hilbert space structure by equipping it with the $L^2$-scalar product on the initial data,
\begin{equation}\label{innerproduct}
(  f,g ) := \int_{\R^3} f(0, \V{x})^\dagger g(0, \V{x})\: d^3\V{x}\qquad \mbox{for all }f,g\in\scH_m^{\rm{sc}} \:,
\end{equation}
where the dagger denotes complex conjugation and transposition. The integration over any other Cauchy surface $\{ t=\mbox{const}\}$ would give the same result, due to \textit{current conservation}.

The inner product \eqref{innerproduct} makes the mappings~$\mathrm{E}_t$ linear isometries once we endow $C_0^\infty(\R^3,\C^4)$ with the standard $L^2$-product.
The \textit{one-particle Hilbert space}~
is defined
as the Hilbert space completion of $\scH_m^{\rm{sc}}$. It coincides with the topological completion of $\scH_m^{\rm{sc}}$ within the space of locally square integrable functions,
$$
\scH_{m}:=\overline{\H_m^{\rm{sc}}}\subset L^2_{\rm{loc}}(\R^{4},\C^4).
$$ 
The corresponding scalar product will be denoted by $\langle\,\cdot\,|\,\cdot\,\rangle_m$. As a consequence, the isomorphisms  $\mathrm{E}_t$ extend continuously to a unitary operators on the space $\scL^2(\R^3,\C^4)$, which will be again denoted by $\mathrm{E}_t$. It is important to stress these operators gives back the unique smooth solution of \eqref{CP} also when they acts on smooth functions of $L^2(\R^3,\C^4)$ \textit{without} compact support.
This is a direct consequence of the energy inequalities and uniqueness of weak solutions for symmetric hyperbolic systems.

The operators $\mathrm{E}_t$ induce a \textit{time evolution} on the initial-data space $L^2(\R^3,\C^4)$,
\begin{equation*}\label{unitev}
U_{t,t_0}:\R\ni t\mapsto \mathrm{E}_t^{-1}\,\mathrm{E}_{t_0}\in \mathfrak{B}(L^2(\R^3,\C^4)).
\end{equation*}
This mapping turns out to be a strongly-continuous one-parameter group of unitary operators. Its self-adjoint generator is the \textit{Dirac Hamiltonian}:
	$$
	H:=-i\gamma^0\gamma^\alpha\partial_\alpha+m\gamma_0,\quad \mathfrak{D}(H):=W^{1,2}(\R^3,\C^4)\quad \mbox{\big(with  $U_{t,t_0}=e^{-i(t-t_0)H}$\big)}.
	$$
The spectrum of $H$ is purely continuous and  given by $\{\omega\:|\: |\omega|\ge m\}$. This form of the spectrum corresponds to an orthogonal splitting of $L^2(\R^3,\C^4)$ into a positive and a negative spectral subspace of $H$. Through the isometries $\mathrm{E}_t$ this orthogonal decomposition can be lifted to a corresponding splitting  (which is independent of $t$)
\begin{equation*}
	\H_m=\H_m^+\oplus \H_m^-.
\end{equation*}

To conclude, we note that the action of the operators $\mathrm{E}_t$ can be merged into the following function
\begin{equation}\label{functionE}
	\begin{split}
\mathrm{E}:&\  C^\infty(\R^3,\C^4)\cap L^2(\R^3,\C^4)\rightarrow C^0(\R^4\times \R,\C^4),\\ 
&\qquad\quad\mathrm{E}[\varphi](t,\V{x};s):=\mathrm{E}_s(\varphi)(t,\V{x}).
\end{split}
\end{equation}
In the next section we will see how these mappings can in fact be represented as distributions in spacetime.

\subsection{The Causal Propagator and Frequency Splitting}\label{sectioncausalprop}
	
In this section we recall the construction of the causal propagator of the Dirac equation and its corresponding frequency (or energy) splitting.	 It should be stressed that this is a general method, which applies also in presence of \textit{static} electromagnetic fields.

In the remainder of the paper we will use the following notation\footnote{Similar conventions apply to the spaces of compactly supported and tempered distributions.}:
$$
\mathcal{D}(\R^n,\C^m):=C_0^\infty(\R^n,\C^m),\quad \mathcal{D}'(\R^n,\C^m):=\mathfrak{B}(\mathcal{D}(\R^n,\C),\C^m),
$$ 
where by $\mathfrak{B}(X,Y)$ we denote the space of linear and continuous function between the topological vector spaces $X,Y$. All spaces above are equipped with the standard topologies of distribution theory. Notice that every $T\in \mathcal{D}'(\R^n,\C^m)$ is equivalent to the assignment of a family $(T_\sigma)_{\sigma=1,\dots,m}\subset\mathcal{D}'(\R^n,\C)$ determined by 
$
T_\sigma(\varphi)=T(\varphi)_\sigma.
$
For simplicity of notation, we omit the symbol of the target space when $m=1$. 
 
\subsubsection{The Causal Fundamental Solution}
As a first step, we rewrite \eqref{functionE} in the following way:
\begin{equation}\label{Edistri}
\mathrm{E}:\mathcal{D}(\R^3,\C^4)\ni \varphi\mapsto \mathrm{E}[\varphi]\in \mathcal{D}'(\R^4\times\R,\C^4 ),
\end{equation}
where the action of $\mathrm{E}[\varphi]_\sigma$ is given  in terms of the standard $L^2$ product. Exploiting current conservation and finite propagation speed, the following result follows from a compontentwise application of the Schwartz Kernel Theorem to \eqref{Edistri}. 

By $s$ we denote again the signature vector of the spin scalar product (see \eqref{signaturematrix}), which in the current case is given by
$$
s_\mu=(\gamma^0)^{\mu\mu}\quad\mu=1,2,3,4.
$$ 
\begin{Thm}\label{theoremcausalpropag}
	There exists a unique $k\in \mathcal{D}'(\R^4\times\R^4,\mathrm{Mat}(4,\C))$ such that, \\[-0.4em]
	$$
	(f,
	\mathrm{E}[h\mathfrak{e}_\nu]_\mu)_{L^2(\R^5)}=s_{\nu} \,k_{\mu\nu}( \overline{f}\otimes h)\quad\mbox{for all $h\in \mathcal{D}(\R^3)$, $f\in \mathcal{D}(\R^5)$},
	$$ \\[-0.6em]
	and all indices $\mu,\nu=1,2,3,4$. 
	This is called the \textbf{causal fundamental solution}. \\[0.1em]
	Formally, for all $\varphi\in \mathcal{D}(\R^3,\C^4)$,
	$$
	\mathrm{E}[\varphi](t,\V{x};s)=\int_{\R^3} k(t,\V{x};s,\V{y})\,\gamma^0\,\varphi(\V{y})\,d^3\V{y}.
	$$
	Moreover, in the sense of distributions,
\begin{equation*}
[i\slashed{\partial}_1-m]k=0.
\end{equation*}
Let $u\in \mathcal{D}(\R^4,\C^4)$. Then, the following function $k(\,\cdot\,,u)$ belongs to $\H_m^{\mathrm{sc}}$:
\begin{equation*}
\begin{split}
 k(x,u):=\int_\R \mathrm{E}[u(s,\cdot)](x,s)\,ds
\end{split}
\end{equation*}
Formally, 
$$
k(x,u)=\int_{\R^4} k(x,y)\,\gamma^0\,u(y)\,d^4y.
$$
Finally, 
$
\supp k(\,\cdot\,,u)\subset J(\supp u).
$
\end{Thm}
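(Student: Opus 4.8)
The plan is to obtain $k$ from the Schwartz Kernel Theorem and then to read off each listed property — the distributional Dirac equation, the two formal integral representations, the membership $k(\,\cdot\,,u)\in\H_m^{\mathrm{sc}}$ and the support inclusion — from the corresponding property of the solution operator $\mathrm{E}$. \emph{Construction of $k$ and uniqueness.} Fix indices $\mu,\nu\in\{1,\dots,4\}$ and consider the bilinear form
$$
\mathcal{D}(\R^5)\times\mathcal{D}(\R^3)\ni(f,h)\longmapsto\big(f,\mathrm{E}[h\mathfrak{e}_\nu]_\mu\big)_{L^2(\R^5)}\in\C .
$$
For fixed $h$ the object $\mathrm{E}[h\mathfrak{e}_\nu]$ is, by \eqref{functionE}, the continuous function $(t,\V{x};s)\mapsto\mathrm{E}_s(h\mathfrak{e}_\nu)(t,\V{x})$, so the form is well defined; moreover the energy inequalities for the Cauchy problem \eqref{CP} show that $h\mapsto\mathrm{E}[h\mathfrak{e}_\nu]$ is continuous from $\mathcal{D}(\R^3)$ into $C^0(\R^5)\hookrightarrow\mathcal{D}'(\R^5)$ — a sequence converging in $\mathcal{D}(\R^3)$ has supports in a common compact set and converges in $L^2$, whence the corresponding solutions converge, uniformly on compact sets of $s$, in $L^2_{\mathrm{loc}}(\R^4)$. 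Since $\mathcal{D}(\R^3)$ is nuclear, the (componentwise) Schwartz Kernel Theorem yields a unique $K_{\mu\nu}\in\mathcal{D}'(\R^5\times\R^3)$ representing this form; identifying $\R^5\times\R^3\cong\R^4\times\R^4$ via $(t,\V{x};s,\V{y})\leftrightarrow(x,y)$ and rescaling by $s_\nu$ (which absorbs the $\gamma^0$ of the formal representation, $\gamma^0$ being diagonal with entries $s_\mu$) produces $k_{\mu\nu}\in\mathcal{D}'(\R^4\times\R^4)$; assembling them gives $k\in\mathcal{D}'(\R^4\times\R^4,\mathrm{Mat}(4,\C))$. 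Uniqueness is then immediate, since the tensor products $\overline f\otimes h$ span a dense subspace of $\mathcal{D}(\R^8)$.

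\emph{The Dirac equation and the formal representations.} Since $\mathrm{E}_s(\varphi)$ solves \eqref{CP} for every $s$, one has $(i\gamma^j\partial_j-m)\mathrm{E}[\varphi](\cdot\,;s)=0$ pointwise in $s$; transporting this through the defining pairing and integrating by parts in the first four variables (the only real care being the bookkeeping of the $\gamma$-matrices, their transposes and the sign vector $s$) gives $[i\slashed{\partial}_1-m]k=0$ in $\mathcal{D}'(\R^4\times\R^4)$. The identity $\mathrm{E}[\varphi](t,\V{x};s)=\int_{\R^3}k(t,\V{x};s,\V{y})\,\gamma^0\varphi(\V{y})\,d^3\V{y}$ is precisely the defining pairing rewritten as an integral — literal when $k$ is represented by a locally integrable function, understood distributionally otherwise — and the second representation $k(x,u)=\int_{\R^4}k(x,y)\,\gamma^0 u(y)\,d^4y$ follows by inserting $\varphi=u(s,\cdot)$ and integrating over $s$, in agreement with the definition $k(x,u):=\int_\R\mathrm{E}[u(s,\cdot)](x,s)\,ds$.

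\emph{Membership in $\H_m^{\mathrm{sc}}$ and the support inclusion.} Let $u\in\mathcal{D}(\R^4,\C^4)$. As $\supp u$ is compact, $u(s,\cdot)\equiv 0$ for $s$ outside a compact interval $[a,b]$, so $k(\,\cdot\,,u)=\int_a^b\mathrm{E}_s(u(s,\cdot))\,ds$ is a superposition, over a compact parameter set, of smooth Dirac solutions depending smoothly on $s$; differentiating under the integral sign gives $k(\,\cdot\,,u)\in C^\infty(\R^4,\C^4)$, and since $\mathrm{D}$ commutes with $\int_a^b ds$ we get $\mathrm{D}\,k(\,\cdot\,,u)=0$. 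Finite propagation speed (at unit speed) supplies the remaining claims: if $\mathrm{E}_s(u(s,\cdot))(t,\V{x})\neq 0$ for some $s\in[a,b]$, there is $\V{y}$ with $u(s,\V{y})\neq 0$ and $|t-s|\ge|\V{x}-\V{y}|$, i.e. $y:=(s,\V{y})\in\supp u$ and $x:=(t,\V{x})\in J_y\subset J(\supp u)$, so $\supp k(\,\cdot\,,u)\subset J(\supp u)$; moreover, for fixed $t$ the set of such $\V{x}$ lies in a ball whose radius is controlled by $b-a$ and $\mathrm{diam}(\supp u)$, whence $k(t,\cdot,u)$ has compact spatial support for every $t$, i.e. $k(\,\cdot\,,u)\in C^\infty_{\mathrm{sc}}(\R^4,\C^4)$ and therefore $k(\,\cdot\,,u)\in\H_m^{\mathrm{sc}}$.

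The main obstacle is the first step: turning the assertion that $h\mapsto\mathrm{E}[h\mathfrak{e}_\nu]$ is continuous into $\mathcal{D}'(\R^5)$ into a rigorous statement requires the quantitative energy estimates for \eqref{CP} together with some attention to the LF-space topologies entering the Schwartz Kernel Theorem; by contrast, once $k$ exists, the distributional Dirac equation and the support property are soft consequences of linearity and of finite propagation speed.
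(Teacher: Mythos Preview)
Your proposal is correct and follows essentially the same route as the paper: the text only sketches the argument, saying the result ``follows from a componentwise application of the Schwartz Kernel Theorem'' together with ``current conservation and finite propagation speed,'' which is precisely what you carry out in detail. The only cosmetic difference is that where you invoke the energy inequalities to obtain the continuity of $h\mapsto\mathrm{E}[h\mathfrak{e}_\nu]$, the paper phrases this as current conservation (i.e.\ unitarity of $\mathrm{E}_t$); these are two faces of the same estimate.
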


We now consider the projection of this distribution onto the positive and negative energy subspaces of the Dirac Hamiltonian. The negative component plays a distinguished role in the theory of causal fermion system.

\subsubsection{Frequency Splitting} Let $\mathbb{I}^\pm\in\mathfrak{B}(L^2(\R^3,\C^4))$ denote  the orthogonal projectors onto the positive and negative spectral subspaces of the Dirac Hamiltonian $H$. It can be shown that these operators maps compactly supported functions into smooth (Schwartz) functions (see \cite[Section 2.2]{oppio}). 
Thus, similarly as in the previous section, we can define
\begin{equation}\label{Edistripm}
\mathrm{E}^\pm: \mathcal{D}(\R^3,\C^4)\ni \varphi\mapsto \mathrm{E}[\bI^\pm\varphi]\in  \mathcal{D}'(\R^4\times\R,\C^4 ) .
\end{equation}
The following result can be proved with a similar argument as in Theorem \ref{theoremcausalpropag}.
\begin{Thm}\label{theoremcausalpropagpm}
	There exists unique $P^\pm\in \mathcal{D}'(\R^4\times\R^4,\mathrm{Mat}(4,\C))$ such that\\[-0.4em] 
	$$
	(f,
	\mathrm{E}^{\pm}[h\mathfrak{e}_\nu]_\mu)_{L^2(\R^5)}=s_\nu\,P_{\mu\nu}^\pm( \overline{f}\otimes \varphi)\quad\mbox{for all }h\in \mathcal{D}(\R^3),\ f\in \mathcal{D}(\R^5),
	$$ \\[-0.6em]
 	and all indices $\mu,\nu=1,2,3,4$. $P^-$ is called the \textbf{kernel of the fermionic projector}.  \\[0.1em]
 	Formally, for all $\varphi\in \mathcal{D}(\R^3,\C^4)$,
	$$
	\mathrm{E}^\pm[\varphi](t,\V{x};s)=\int_{\R^3} P^\pm(t,\V{x};s,\V{y})\,\gamma^0\,\varphi(\V{y})\,d^3\V{y}.
	$$
		Moreover, in the sense of distributions,
	\begin{equation*}\label{diracequationP}
	[i\slashed{\partial}_1-m]P^\pm=0.
	\end{equation*}
	Let $u\in \mathcal{D}(\R^4,\C^4)$. Then, the following function $P^\pm(\,\cdot\,,u)$ belongs to $\H^\pm_m$:
	\begin{equation*}
	\begin{split}
	P^\pm(x,u):=\int_\R \mathrm{E}^\pm[u(s,\cdot)](x,s)\,ds
	\end{split}
	\end{equation*}
 	Formally, one writes
	$$
	P^\pm(x,h)=\int_{\R^4} P^\pm(x,y)\,\gamma^0\,h(y)\,d^4y.
	$$
\end{Thm}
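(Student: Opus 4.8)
The plan is to rerun the proof of Theorem~\ref{theoremcausalpropag} with $\mathrm{E}$ replaced by $\mathrm{E}^\pm=\mathrm{E}\circ\bI^\pm$, isolating the few points where the spectral projector $\bI^\pm$ affects the argument. First, \eqref{Edistripm} is well posed: by \cite[Section~2.2]{oppio} the operator $\bI^\pm$ maps $\mathcal{D}(\R^3,\C^4)$ continuously into $\mathcal{S}(\R^3,\C^4)\subset C^\infty(\R^3,\C^4)\cap L^2(\R^3,\C^4)$, and by the remark following \eqref{mappingE} the operators $\mathrm{E}_s$ act on such functions, yielding the unique smooth solution of \eqref{CP}. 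Hence, for fixed indices $\mu,\nu$,
\[
(f,h)\ \longmapsto\ \big(f,\mathrm{E}^\pm[h\mathfrak{e}_\nu]_\mu\big)_{L^2(\R^5)}=\big(f,\mathrm{E}\big[\bI^\pm(h\mathfrak{e}_\nu)\big]_\mu\big)_{L^2(\R^5)}
\]
is a separately continuous bilinear form on $\mathcal{D}(\R^5)\times\mathcal{D}(\R^3)$: since $\mathrm{E}_t$ is an isometry for every $t$ and $\|\bI^\pm\|\le 1$, the very estimate used for $\mathrm{E}$ in Theorem~\ref{theoremcausalpropag} applies unchanged, the projector only contributing a harmless factor. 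The Schwartz Kernel Theorem then produces, componentwise, unique distributions $P^\pm_{\mu\nu}\in\mathcal{D}'(\R^5\times\R^3)$, which under the relabeling $(t,\V{x},s,\V{y})\leftrightarrow\big((t,\V{x}),(s,\V{y})\big)$ we view in $\mathcal{D}'(\R^4\times\R^4)$, satisfying the stated defining identity; assembling them into a matrix gives $P^\pm\in\mathcal{D}'(\R^4\times\R^4,\mathrm{Mat}(4,\C))$, and uniqueness is part of the Kernel Theorem. The factor $\gamma^0$ and the signs $s_\nu$ in the formal representation arise exactly as for $k$ from $s_\mu=(\gamma^0)^{\mu\mu}$.

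The Dirac equation $[i\slashed{\partial}_1-m]P^\pm=0$ is obtained verbatim as for $k$: for every $s$ the function $\mathrm{E}^\pm[\varphi](\,\cdot\,;s)=\mathrm{E}_s(\bI^\pm\varphi)$ lies in $\ker\mathrm{D}$, so testing $P^\pm$ against the formal adjoint of $i\slashed{\partial}_1-m$ applied to a test function and integrating by parts in the first four variables returns zero; only the fact that $\bI^\pm$ maps $\mathcal{D}(\R^3,\C^4)$ into smooth $L^2$ initial data enters here.

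The genuinely new assertion is that $P^\pm(\,\cdot\,,u)\in\H_m^\pm$ for $u\in\mathcal{D}(\R^4,\C^4)$. I would read
\[
P^\pm(x,u):=\int_\R\mathrm{E}^\pm[u(s,\cdot)](x,s)\,ds=\int_\R\mathrm{E}_s\big(\bI^\pm u(s,\cdot)\big)(x)\,ds
\]
as an $\H_m$-valued Bochner integral. The integrand $s\mapsto\mathrm{E}_s\big(\bI^\pm u(s,\cdot)\big)$ is supported in the compact projection of $\supp u$ onto the $s$-axis, has $\H_m$-norm bounded by $\|u(s,\cdot)\|_{L^2(\R^3)}$, and is $\H_m$-continuous in $s$: writing
\[
\mathrm{E}_s\big(\bI^\pm u(s,\cdot)\big)-\mathrm{E}_{s'}\big(\bI^\pm u(s',\cdot)\big)=\mathrm{E}_s\bI^\pm\big(u(s,\cdot)-u(s',\cdot)\big)+(\mathrm{E}_s-\mathrm{E}_{s'})\bI^\pm u(s',\cdot),
\]
the first term is controlled by $\|u(s,\cdot)-u(s',\cdot)\|_{L^2(\R^3)}\to 0$ and the second by strong continuity of $s\mapsto\mathrm{E}_s$ (equivalently of $U_{t,t_0}=e^{-i(t-t_0)H}$). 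Hence the integral converges in $\H_m$; since $\bI^\pm u(s,\cdot)$ lies in the $\pm$-spectral subspace of $H$ and $\mathrm{E}_s$ maps that subspace into the closed subspace $\H_m^\pm$, each integrand — and therefore $P^\pm(x,u)$ — lies in $\H_m^\pm$. Finally, identifying this element of $\H_m\subset L^2_{\mathrm{loc}}(\R^4,\C^4)$ with the evaluation of the kernel, i.e.\ establishing $P^\pm(x,h)=\int_{\R^4}P^\pm(x,y)\,\gamma^0\,h(y)\,d^4y$, is a Fubini-type interchange of the $s$-integration with the bilinear pairings defining $P^\pm$.

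I expect this interchange to be the main obstacle: one must commute an $\H_m$-valued Bochner integral with the distributional pairings that define $P^\pm$, i.e.\ prove a Fubini statement combining Bochner integration and distribution theory, leaning on the uniform estimates above; everything else is inherited from the corresponding properties of $k$. Note that, in contrast with $k(\,\cdot\,,u)$, no support statement is asserted for $P^\pm(\,\cdot\,,u)$, consistently with the fact that $\bI^\pm$ does not preserve compact support.
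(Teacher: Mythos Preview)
Your proposal is correct and follows exactly the approach indicated in the paper, which simply states that the result ``can be proved with a similar argument as in Theorem~\ref{theoremcausalpropag}'' without spelling out details. You have in fact supplied considerably more than the paper does: the Bochner-integral argument for $P^\pm(\,\cdot\,,u)\in\H_m^\pm$, the strong-continuity estimate for $s\mapsto\mathrm{E}_s$, and the closing observation about the absence of a support statement (consistent with the paper's subsequent remark on Hegerfeldt's theorem) are all implicit in the paper but made explicit by you.
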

It should be stressed that the construction in Theorem \ref{theoremcausalpropagpm} can be carried out also in presence of a \textit{static electromagnetic field}, although the function \eqref{functionE} may in general be more complicated to handle.
In our specific setting of a free particle, the kernel of the fermionic projector has the following explicit representation
\begin{equation}\label{bidistribution}
\begin{split}
P^\pm(x,y)&:=\int_{\R^4}\frac{d^4 k}{(2\pi)^4} \:(\slashed{k}+m) \:\delta(k^2-m^2)\:\Theta(\pm k_0)\: e^{-i k\cdot (x-y)}.
\end{split}
\end{equation}

It is important to note that the projection onto the positive or negative energy spaces destroys localization. More specifically, solution with fixed sign of the energy \textit{cannot} have spatially compact support. This is the content of \textit{Hegerfeldt's theorem} (see \cite{hegerfeldt1974remark}; for the connections with causal fermion systems in Minkowski space see \cite{neumann}). 
 In other words, the distributions $P^\pm$ are not supported on the light cone. In particular, one sees that the kernel of the fermionic projector is \textit{not causal}, for it does not vanish for spacelike separated points.


\subsection{Regularization by Smooth Cutoff in Momentum Space}\label{sectionreg}
In the context of causal fermion systems, in order to take into account the presence of a minimal length scale, an ultraviolet regularization is introduced. This can be done in several different ways. For technical simplicity, in this work we focus on simple cutoffs in momentum space. The starting point is the following observation, whose proof can be found in the appendix.

\begin{Lemma}\label{regularization}
	Let $\psi\in L^2(\R^3,\C^4)$ and $\varepsilon>0$. Then,
	$
	e^{-\varepsilon|H|}\,\psi\in C^\infty(\R^3,\C^4).
	$
\end{Lemma}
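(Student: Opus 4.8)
The plan is to exploit the spectral calculus for the self-adjoint Hamiltonian $H$ together with elliptic regularity for the associated constant-coefficient operator. Since $H$ is self-adjoint on $W^{1,2}(\R^3,\C^4)$ with $\sigma(H)=\{\omega : |\omega|\ge m\}$, the functional calculus gives a well-defined bounded operator $e^{-\varepsilon|H|}$, and one should first record that for every $\psi\in L^2$ the vector $\phi:=e^{-\varepsilon|H|}\psi$ lies in $\bigcap_{k\ge 0}\mathfrak D(H^k)$, i.e. in $W^{k,2}(\R^3,\C^4)$ for every $k$. This is because the function $\omega\mapsto \omega^k e^{-\varepsilon|\omega|}$ is bounded on $\sigma(H)$ for each $k$, so $H^k\phi = H^k e^{-\varepsilon|H|}\psi$ is again in $L^2$ by the spectral theorem.

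Next I would pass from membership in all $W^{k,2}$ to smoothness. The cleanest route is to note that $H$ is a first-order elliptic differential operator with constant coefficients: writing $H=-i\gamma^0\gamma^\alpha\partial_\alpha + m\gamma^0$, its principal symbol $\gamma^0\gamma^\alpha \xi_\alpha$ is invertible for $\xi\neq 0$ (indeed $(\gamma^0\gamma^\alpha\xi_\alpha)^2 = |\xi|^2 \1$). Hence $\phi\in\mathfrak D(H^k)\subset W^{k,2}_{\mathrm{loc}}$, and since this holds for all $k$, the Sobolev embedding theorem $W^{k,2}(\R^3)\hookrightarrow C^{k-2}(\R^3)$ yields $\phi\in C^\infty(\R^3,\C^4)$. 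Alternatively, and perhaps more transparently for this paper, one may argue on the Fourier side: $e^{-\varepsilon|H|}$ acts in momentum space as multiplication by a matrix-valued symbol built from $e^{-\varepsilon\sqrt{|\V p|^2+m^2}}$ times the spectral projectors $\1/2 \pm (\gamma^0\gamma^\alpha p_\alpha + m\gamma^0)/(2\sqrt{|\V p|^2+m^2})$; this symbol and all its derivatives decay exponentially, so $\widehat\phi = (\text{that symbol})\cdot\widehat\psi$ has the property that $\V p\mapsto |\V p|^N \widehat\phi(\V p)$ is in $L^2$ for every $N$, which forces $\phi$ to be smooth (in fact every derivative is continuous and bounded).

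Concretely I would structure the argument as: (1) via the spectral theorem, show $\phi\in\mathfrak D(H^k)$ for all $k\in\N$, using $\sup_{|\omega|\ge m}|\omega|^k e^{-\varepsilon|\omega|}<\infty$; (2) identify $\mathfrak D(H^k)$ with $W^{k,2}(\R^3,\C^4)$, which follows from ellipticity of $H$ and the fact that $H^2 = -\Delta\1 + m^2\1$ (a straightforward computation using $\{\gamma^i,\gamma^j\}=2\eta^{ij}$), so that iterating gives $\mathfrak D(H^{2j})=W^{2j,2}$ and interpolation or direct inspection handles odd orders; (3) conclude by Sobolev embedding that $\phi\in\bigcap_k C^k = C^\infty$.

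The main obstacle — really the only non-formal point — is step (2), the identification $\mathfrak D(H^k)=W^{k,2}$. The inclusion $W^{k,2}\subset\mathfrak D(H^k)$ is immediate since $H$ has constant coefficients; the reverse inclusion is where elliptic regularity genuinely enters, and the slick way to see it is precisely the identity $H^2=(-\Delta+m^2)\1$, which reduces everything to the classical fact that $\mathfrak D\big((-\Delta+m^2)^j\big)=W^{2j,2}(\R^3)$. Once that identity is in hand, the rest is bookkeeping with Sobolev embeddings, and no delicate estimate is required.
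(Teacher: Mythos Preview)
Your proposal is correct and follows essentially the same route as the paper: both arguments use the spectral/semigroup calculus to place $e^{-\varepsilon|H|}\psi$ in $\mathfrak{D}(H^k)$ for all $k$, then invoke the identity $H^2=(-\Delta+m^2)\1$ to identify these domains with Sobolev spaces, and conclude via Sobolev embedding. The paper's version is slightly terser (it works only with even powers via $\mathfrak{D}(|H|^2)=\mathfrak{D}(\overline{\Delta})$), but the substance is the same.
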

Because smooth initial data evolve into smooth solutions, the composition of the propagating operator $\mathrm{E}_0$ with the contraction operator in Lemma \ref{regularization}  provides a good candidate for a \textbf{regularization operator}:
	\begin{equation}\label{regulop}
	\gR_{\varepsilon} : \scH_m\ni \mathrm{E}_0(\psi) \mapsto \mathrm{E}_0(e^{-\varepsilon|H|}\psi)\in \scH_m.
	\end{equation}
	This mapping fulfills the following properties, where $\bI_m^\pm$ denote the orthogonal projectors in $\H_m$ corresponding to \eqref{functionE}:
	\begin{itemize}[leftmargin=2.5em]
		\vspace{0.4em}
		\item[{\rm{(i)}}] $\gR_\varepsilon(\mathbb{I}_m^\pm(\scH_m))\subset\mathbb{I}_m^\pm(\scH_m)\cap C^\infty(\R^{1,3},\C^4).$\\[-0.5em]
		\item[{\rm{(ii)}}] $\gR_{\varepsilon}$ is bounded, symmetric, injective and fulfills $\|\gR_{\varepsilon}\|\le 1$\\[-0.5em]
		\item[{\rm{(iii)}}] $\gR_{\varepsilon}u\to u$  as $\varepsilon\searrow 0$ for every $u\in\scH_m$.\\[-0.5em]
		\item[{\rm{(iv)}}] There is $C> 0$\footnote{$C$ may depend on the point $x$ for regularizations of more general type (see \cite[Definition 1.2.3]{cfs}). } such that
		$
		|\gR_{\varepsilon}u (x)|\le C\|u\|_m
		$
		for all $u\in\scH_m$ and $x\in\R^{1,3}$.\\[-0.5em]
		\item[{\rm{(v)}}] For all $x\in\R^{1,3}$ and $\delta>0$ there is $r>0$ such that
		$$
		|\gR_{\varepsilon} u(x)-\gR_{\varepsilon}u (y)|\le \delta \|u\|_m\quad\mbox{for all } u\in\H_m,\ y\in B_r(x).
		$$ 
	\end{itemize}
For a proof of points (i)-(iii) see for example \cite[Proposition 2.5]{neumann}. Points (iv)-(v) can be proved similarly working in momentum representation, where the regularization operator \eqref{regulop} corresponds to a multiplication operator given in terms of a exponentially decaying Schwartz function (see for example \eqref{3D}).

The following definition will  be exploited in Section \ref{sectionexamples}, when the H\"older continuity of the integrated Lagrangian is studied under variations of the regularization operator.
\begin{Def}\label{defevaluation}
	Let $\Omega$ be a topological space and $\H\subset\H_m$ a closed subspace. The space of (spinorial) \textbf{evaluation operators} of $\H$ on $\Omega$ is defined by
	\begin{equation*}
	\begin{split}
	\mathcal{E}(\Omega, \H,\C^4):=\left\{ A\in C^0(\Omega,  \mathfrak{B}(\H,\C^4))\:\bigg|\: \sup_{x\in\Omega}\|A(x)\|_{\mathfrak{B}(\H,\C^4)}<\infty\right\},
	\end{split}
	\end{equation*}
where $\C^4$ is equipped with the standard Euclidean scalar product.
\end{Def} 
By means of such operators, one can represent the vectors in $\H$ in terms of arbitrary continuous spinor-valued functions on $\Omega$:
$$
(Au)(x):=A(x)u\in\C^4,\ \mbox{ for}\ u\in\H,\ x\in\Omega.
$$
The boundedness and continuity conditions will prove useful in Section \ref{sectionexamples}.

Note in particular that $A(x)\in\mathfrak{B}(\H,\C^4)$ for every $x\in\Omega$. Therefore, equipping the spinor space with the canonical 
\begin{equation}\label{spinscalarprodC4}
\textit{Spin scalar product}\quad\Sl a,b\Sr:=a^\dagger\gamma^0 b\quad \mbox{for all $a,b\in\C^4$},
\end{equation}
one can take the adjoint of $A(x)$ with respect to the Hilbert scalar product and the spin scalar product \eqref{spinscalarprodC4}\footnote{The ordinary adjoint with respect to Euclidean scalar product of $\C^4$ is $A(x)^\dagger = A(x)^*\gamma^0$ \label{footnoteadjoint}},
\begin{equation}\label{adjointspin}
	A(x)^*\in\mathfrak{B}(\C^4,\H),\quad \Sl a, A(x)u\Sr = \langle A(x)^*a|u\rangle_m\quad\mbox{for all $a\in\C^4$ and $u\in\H$},
\end{equation}
where again the symbol $\mathfrak{B}(\C^4,\H)$ refers to the  Euclidean scalar product of $\C^4$. 

The regularization operator \eqref{regulop} already provided us with a canonical realization in terms of continuous wave functions. 
In fact, using points (i), (iv) and (v) above, it is not difficult to prove that $\gR_\varepsilon$ fulfills the conditions of Definition \ref{defevaluation}. More precisely, we have the following result.
\begin{Prp}\label{prpregbound}
Let $\gR_{\varepsilon}(x):=(\gR_{\varepsilon}\, \cdot\, )(x)$ for $x\in\R^{1,3}$. Then, $\gR_{\varepsilon}\in \mathcal{E}(\R^4, \H_m,\C^4)$.
\end{Prp}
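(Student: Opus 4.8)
The plan is to verify directly that the family of operators $\gR_\varepsilon(x) = (\gR_\varepsilon\,\cdot\,)(x) : \H_m \to \C^4$ meets the two defining conditions of Definition \ref{defevaluation}: that each $\gR_\varepsilon(x)$ is a bounded linear map, that $x \mapsto \gR_\varepsilon(x)$ is continuous as a map $\R^{1,3} \to \mathfrak{B}(\H_m,\C^4)$, and that $\sup_{x} \|\gR_\varepsilon(x)\|_{\mathfrak{B}(\H_m,\C^4)} < \infty$. Each of these is supplied more or less verbatim by properties (i), (iv) and (v) in Section \ref{sectionreg}, so the proof is essentially a bookkeeping exercise translating those pointwise estimates into statements about the operator-norm-valued map $x \mapsto \gR_\varepsilon(x)$.

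First I would check boundedness and the uniform bound simultaneously: linearity of $\gR_\varepsilon(x)$ in $u$ is immediate from linearity of $\gR_\varepsilon$ and of evaluation at $x$, and property (iv) gives $|\gR_\varepsilon(x) u| = |\gR_\varepsilon u(x)| \le C\|u\|_m$ for every $u \in \H_m$ and every $x$, with $C$ independent of $x$ (this is where translation invariance of the vacuum is implicitly used — the constant in (iv) does not depend on $x$ for this simple regularization, as the footnote to (iv) notes). Hence $\|\gR_\varepsilon(x)\|_{\mathfrak{B}(\H_m,\C^4)} \le C$ for all $x$, which establishes both that each $\gR_\varepsilon(x)$ lies in $\mathfrak{B}(\H_m,\C^4)$ and that the supremum over $x$ is finite.

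Next I would address continuity. Fix $x \in \R^{1,3}$ and $\delta > 0$. Property (v) yields an $r > 0$ such that $|\gR_\varepsilon u(x) - \gR_\varepsilon u(y)| \le \delta\|u\|_m$ for all $u \in \H_m$ and all $y \in B_r(x)$. But $\gR_\varepsilon u(x) - \gR_\varepsilon u(y) = (\gR_\varepsilon(x) - \gR_\varepsilon(y))u$, so taking the supremum over $u$ in the unit ball of $\H_m$ gives $\|\gR_\varepsilon(x) - \gR_\varepsilon(y)\|_{\mathfrak{B}(\H_m,\C^4)} \le \delta$ for all $y \in B_r(x)$. Since $\delta$ was arbitrary, $x \mapsto \gR_\varepsilon(x)$ is continuous at $x$, and since $x$ was arbitrary it is continuous on all of $\R^{1,3}$. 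Finally, smoothness of $\gR_\varepsilon u$ for each $u$ (property (i)) guarantees in particular that $\gR_\varepsilon(x)$ is well-defined pointwise, so all the ingredients of Definition \ref{defevaluation} are in place and $\gR_\varepsilon \in \mathcal{E}(\R^4,\H_m,\C^4)$.

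There is no real obstacle here — the proposition is a direct packaging of already-established facts — but the one point deserving a line of care is the uniformity of the constant $C$ in property (iv) over $x \in \R^{1,3}$, which is what makes the supremum finite and is special to translation-invariant (and in this case exponentially regularized) systems; for more general regularizations one would only get local boundedness. I would state this explicitly so the reader sees why the hypothesis that we are working with the cutoff $\gR_\varepsilon$ of \eqref{regulop}, rather than an abstract regularization, is being used.
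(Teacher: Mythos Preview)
Your proposal is correct and follows exactly the approach the paper itself indicates: immediately before the proposition the paper states that ``using points (i), (iv) and (v) above, it is not difficult to prove that $\gR_\varepsilon$ fulfills the conditions of Definition~\ref{defevaluation},'' and your argument is precisely the unpacking of that sentence. The paper gives no further detail, so there is nothing to compare beyond noting that your write-up makes explicit the passage from the pointwise estimates (iv) and (v) to the operator-norm statements, which is the only content here.
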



Such a regularization operator admits a corresponding \textit{regularized distributional kernel}. To obtain this, we modify  \eqref{Edistripm} to
\begin{equation*}\label{regularevolution}
\mathrm{E}^{\pm}_\varepsilon: \mathcal{D}(\R^3,\C^4)\ni \varphi\mapsto \mathrm{E}[e^{-\varepsilon|H|}(\bI^\pm\varphi)]\in  \mathcal{D}'(\R^4\times\R,\C^4).
\end{equation*}
Using again the Schwarz Kernel Theorem, one obtains a regular distribution on $\R^4\times\R^4$, called the \textbf{regularized kernel of the fermionic projector}, which can be explicity represented as follows (cf. \eqref{bidistribution}),
\begin{equation}\label{Pregvacuum}
P_\pm^\varepsilon(x,y):=\int_{\R^4}\frac{d^4 k}{(2\pi)^4}\,\delta(k^2-m^2)\,\Theta(\pm k_0)\,(\slashed{k}+m)\, e^{-\varepsilon |k_0|}\,e^{-ik\cdot (x-y)}.
\end{equation}
Such distributions fulfill the following properties (see \cite{neumann} or \cite{oppio}).\\[-0.9em]
	\begin{itemize}[leftmargin = 2em]
		\item[\rm{(i)}] $P^{\varepsilon}_\pm\in C^\infty(\R^4\times\R^4,\mathrm{Mat}(4,\C))$ and $\|D^\alpha P^\varepsilon_{\pm}\|_\infty<\infty$ for any multiindex $\alpha$.\\[-0.5em]
		\item[\rm{(ii)}] $P^{\varepsilon}_\pm(\,\cdot\,,x) a\in \H_m^\pm$ for all $a\in\C^4$\\[-0.5em]
		\item[\rm{(iii)}] Let $\psi\in L^2(\R^3,\C^4)\cap L^1(\R^3,\C^4)$. Then, 
		$$
		\gR_\varepsilon\big( \mathrm{E}^\pm_s(\psi)\big)(x)=\int_{\mathbb{R}^3}P^\varepsilon_\pm(x,s,\V{y})\,\gamma^0\,\psi(\V{y})\,d^3\V{y}.
		$$
	\end{itemize}
The integral in (iii) is to be understood in the standard sense of Lebesgue's integration. By definition of $P^\varepsilon$, if the function $\psi$ is smooth and has compact support, the same identity can also be understood in the distributional sense.

   In the next section we resume the construction of a causal fermion system associated with the Dirac equation and the regularization operator.

\subsection{Causal Fermion Systems in Minkowski Space}\label{subsectionCFSM}
From now on, we always restrict attention to the negative energy subspace 
$$
\scH_m^{-}:=\mathbb{I}_m^-(\scH_m)
$$ 
Given the regularization $\gR_\varepsilon$, Proposition \ref{regularization}-(iv) and Fr\'{e}chet-Riesz Representation Theorem ensure the existence of a unique operator-valued function
$$
\mathrm{F}^\varepsilon:\R^{1,3}\rightarrow\mathfrak{B}(\scH_m^-)
$$
which encodes information on the local behavior of the wave functions in~$\scH_m^-$ at any point $x\in\R^{1,3}$
via the identity
\begin{equation}\label{defF}
	\langle u| \mathrm{F}^\varepsilon(x)v\rangle=- \Sl \gR_{\varepsilon}u(x) \:|\:\mathfrak{R}_\varepsilon v(x) \Sr\quad\mbox{for any }u,v\in\scH_m^- \:.
\end{equation}

The function~$\mathrm{F}^\varepsilon$ is referred to as the \textbf{local correlation map}.
The construction and a few properties are summarized in the following (see \cite{oppio} or \cite{neumann})
	\vspace{0.3em}
	\begin{itemize}[leftmargin=2.5em]
		\item[\rm{(i)}] $\mathrm{F}^\varepsilon(\R^{4})$ is a closed subset of $\F^\reg$. \\[-0.3cm]
		\item[\rm{(ii)}] $\mathrm{F}^\varepsilon$ is a homeomorphism onto its image.\\[-0.3cm]
		\item[\rm{(iii)}] $\mathrm{F}^\varepsilon(x)$ and $\mathrm{F}^\varepsilon(y)$ are unitarily equivalent for all $x,y\in\R^{1,3}$.\\[-0.3cm]
		\item[\rm{(iv)}] 
		$
		\mathrm{F}^\varepsilon(x)\,u=2\pi\,P^{\varepsilon}(\,\cdot\,,x)\,\big(\gR_{\varepsilon} u\big)(x)\ \mbox{ for all }u\in\H_m^-
		$\\[-0.3cm]
		\item[\rm{(v)}] $\sigma_p(\mathrm{F}^\varepsilon(x))=\{0, \nu^-(\varepsilon), \nu^+(\varepsilon) \}$ with $\nu^+(\varepsilon)>0$ and $\nu^-(\varepsilon)<0$\\[-0.3cm]
		\item[\rm{(vi)}]
		The vectors $e^\varepsilon_{\mu}(x):=P^{\varepsilon}(\,\cdot\,,x)\,\mathfrak{e}_\mu$ with $\mu\in\{1,2,3,4\}$ fulfill
		\begin{align*} 
			\mathrm{F}^\varepsilon(x)\,e^\varepsilon_{\mu}(x)=
			\begin{cases}
				\nu^-(\varepsilon)\, e^\varepsilon_{\mu}(x)& \mu=1,2\\ 
				\nu^+(\varepsilon)\,e^\varepsilon_{\mu}(x) & \mu=3,4,
			\end{cases}
		\end{align*}
		where $\{\mathfrak{e}_\mu\:|\: \mu\in \{1,2,3,4\}\}$ denotes the canonical basis of $\C^4$.\\[-0.4em]
		\item[\rm{(vii)}] $2\pi\,\langle P^\varepsilon(\,\cdot\,,x)a\,|\; P^\varepsilon(\,\cdot\,,yx)b\rangle = -\Sl a\,|\,P^{2\varepsilon}(x,y)b\,\Sr$ for all $a,b\in\C^4$.\\[-0.8em]
	\end{itemize}

The explicit form of the eigenvalues $\nu^\pm(\varepsilon)$ in points (v)-(vi) can be found in \cite{neumann} (where it differs by a factor $2\pi$) and is not important for the purposes of this paper. The only crucial observation is that they do not depend on the point $x$, which is a consequence of translation invariance. Point (iii) is another manifestation of this fact.
Exploiting the rotation-invariance of the converging factor $\exp{(-\varepsilon\omega(\V{k}))}$, the kernel of the fermionic projector becomes (see for instance \cite[Remark 2.10]{neumann})
\begin{equation*}\label{matrixP}
P^{2\varepsilon}(x,x)=
\frac{1}{2\pi}\left(
\begin{matrix}
\nu^-(\varepsilon)\,\bI_2 & 0\\
0 & \nu^+(\varepsilon)\,\bI_2
\end{matrix}
\right).
\end{equation*}

From (vi) we see that the vectors $e_1^\varepsilon(x),e_2^\varepsilon(x)$ belong to the negative spectral subspace of $\mathrm{F}^\varepsilon(x)$, while $e_3^\varepsilon(x),e_4^\varepsilon(x)$ belong to the positive one (compare with Section \ref{sectiongeneralsetting}).   Moreover, it follows from (vii)   that these four vectors are orthogonal. Therefore, after a suitable renormalization, we get the following result. The continuity of \eqref{frameminkowski} can be proved working in momentum space, using \eqref{3D}, \eqref{innerproduct} and Lebesgue's dominated convergence theorem.
\begin{Prp}\label{propbasismink}
The family $\{\hat{e}_1,\hat{e}_2,\hat{e}_3,\hat{e}_4\}$ defined by
\begin{equation}\label{frameminkowski}
\hat{e}^{\,\varepsilon}_\mu:\R^{4}\ni x\mapsto \frac{2\pi }{\sqrt{|\nu_\mu(\varepsilon)|}}\,P^\varepsilon(\,\cdot\,,x)\mathfrak{e}_\mu\in\H_m^-
\end{equation}
is a global Hilbert frame on $\R^{4}$ in the sense of Proposition \ref{existenceframe}. The corresponding global spin frame is
$$
\R^{4}\ni x\mapsto \frac{2\pi }{|\nu_\mu(\varepsilon)|}\,P^\varepsilon(\,\cdot\,,x)\mathfrak{e}_\mu\in\H_m^-.
$$
\end{Prp}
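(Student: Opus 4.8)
The plan is to verify the three defining conditions of a local (here global) Hilbert frame from Proposition \ref{existenceframe}: that the $\hat e^{\,\varepsilon}_\mu(x)$ form, for each $x$, a Hilbert-orthonormal basis of $S_{\mathrm{F}^\varepsilon(x)}\subset\scH_m^-$ respecting the sign decomposition \eqref{decompsign}, that the underlying unnormalized vectors constitute a faithful basis of the spin space, and that the maps $x\mapsto\hat e^{\,\varepsilon}_\mu(x)$ are continuous on $\R^4$. First I would record, from property (vi) in Section \ref{subsectionCFSM}, that the vectors $e^\varepsilon_\mu(x)=P^\varepsilon(\,\cdot\,,x)\mathfrak{e}_\mu$ are eigenvectors of $\mathrm{F}^\varepsilon(x)$ with eigenvalue $\nu^-(\varepsilon)<0$ for $\mu=1,2$ and $\nu^+(\varepsilon)>0$ for $\mu=3,4$; since by property (v) the nonzero spectrum of $\mathrm{F}^\varepsilon(x)$ consists exactly of $\nu^\pm(\varepsilon)$ and $\mathrm{F}^\varepsilon(x)\in\F^\reg$ has $\dim S_{\mathrm{F}^\varepsilon(x)}=4$, these four eigenvectors are linearly independent and span $S_{\mathrm{F}^\varepsilon(x)}$, with $e^\varepsilon_1,e^\varepsilon_2$ spanning $S_{\mathrm{F}^\varepsilon(x)}^-$ (the negative spectral subspace) and $e^\varepsilon_3,e^\varepsilon_4$ spanning $S_{\mathrm{F}^\varepsilon(x)}^+$. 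This gives the sign-decomposition part of the claim.

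Next I would compute the Hilbert inner products. Using property (vii) at coincident points, $2\pi\langle P^\varepsilon(\,\cdot\,,x)\mathfrak{e}_\mu\,|\,P^\varepsilon(\,\cdot\,,x)\mathfrak{e}_\nu\rangle_m=-\Sl\mathfrak{e}_\mu\,|\,P^{2\varepsilon}(x,x)\mathfrak{e}_\nu\Sr=-\mathfrak{e}_\mu^\dagger\gamma^0 P^{2\varepsilon}(x,x)\mathfrak{e}_\nu$, and by the displayed diagonal form of $P^{2\varepsilon}(x,x)$ together with $\gamma^0=\mathrm{diag}(\bI_2,-\bI_2)$ this equals $-\tfrac{1}{2\pi}\,s_\mu\,\nu_\mu(\varepsilon)\,\delta_{\mu\nu}$ where I set $\nu_\mu:=\nu^-$ for $\mu\le 2$ and $\nu_\mu:=\nu^+$ for $\mu\ge3$; since $s_\mu\nu_\mu<0$ in all four cases this reads $\langle e^\varepsilon_\mu(x)\,|\,e^\varepsilon_\nu(x)\rangle_m=\tfrac{|\nu_\mu(\varepsilon)|}{(2\pi)^2}\,\delta_{\mu\nu}$. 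Hence the renormalization in \eqref{frameminkowski}, $\hat e^{\,\varepsilon}_\mu=\tfrac{2\pi}{\sqrt{|\nu_\mu(\varepsilon)|}}\,e^\varepsilon_\mu$, makes $\{\hat e^{\,\varepsilon}_\mu(x)\}$ Hilbert-orthonormal. For the faithfulness of the associated spin frame $\tilde e_\mu:=\tfrac{2\pi}{|\nu_\mu(\varepsilon)|}e^\varepsilon_\mu=\sqrt{|\mathrm{F}^\varepsilon(x)|}^{-1}\hat e^{\,\varepsilon}_\mu$ (using $\mathrm{F}^\varepsilon(x)e^\varepsilon_\mu=\nu_\mu e^\varepsilon_\mu$ so $|\mathrm{F}^\varepsilon(x)|e^\varepsilon_\mu=|\nu_\mu|e^\varepsilon_\mu$), I would compute the spin scalar product directly: $\Sl\tilde e_\mu(x)|\tilde e_\nu(x)\Sr_{\mathrm{F}^\varepsilon(x)}=-\langle\tilde e_\mu(x)|\mathrm{F}^\varepsilon(x)\tilde e_\nu(x)\rangle_m=-\nu_\nu(\varepsilon)\tfrac{(2\pi)^2}{|\nu_\mu||\nu_\nu|}\langle e^\varepsilon_\mu(x)|e^\varepsilon_\nu(x)\rangle_m=-\tfrac{\nu_\mu(\varepsilon)}{|\nu_\mu(\varepsilon)|}\delta_{\mu\nu}=s_\mu\delta_{\mu\nu}$, which is exactly the faithfulness condition together with the membership $\tilde e_\mu(x)\in S^\mp_{\mathrm{F}^\varepsilon(x)}$ already established; equivalently $\hat e^{\,\varepsilon}_\mu=\sqrt{|\mathrm{F}^\varepsilon(x)|}\,\tilde e_\mu$ matches the normalization \eqref{defbasis}.

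Finally, for continuity I would work in momentum representation. From \eqref{Pregvacuum}, carrying out the $k_0$-integration against $\delta(k^2-m^2)\Theta(\pm k_0)$, one obtains $P^\varepsilon(\,\cdot\,,x)\mathfrak{e}_\mu$ as an explicit Fourier integral over $\V k\in\R^3$ of a function that depends on $x$ only through the phase $e^{ik\cdot x}$ and carries the exponentially decaying factor $e^{-\varepsilon\omega(\V k)}$ with $\omega(\V k)=\sqrt{|\V k|^2+m^2}$; by \eqref{innerproduct} the $\scH_m$-norm of the difference $P^\varepsilon(\,\cdot\,,x)\mathfrak{e}_\mu-P^\varepsilon(\,\cdot\,,x')\mathfrak{e}_\mu$ is an $L^2$-norm over $\R^3$ of such a function with $|e^{ik\cdot x}-e^{ik\cdot x'}|\le\min(2,|k|\,|x-x'|)$, and the exponential cutoff provides an $L^2$ dominating function uniformly in $x$; Lebesgue's dominated convergence theorem then yields $P^\varepsilon(\,\cdot\,,x')\mathfrak{e}_\mu\to P^\varepsilon(\,\cdot\,,x)\mathfrak{e}_\mu$ in $\scH_m$ as $x'\to x$, hence continuity of each $\hat e^{\,\varepsilon}_\mu$. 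Because these continuous orthonormal/faithful families are defined on all of $\R^4$, they constitute a global Hilbert frame and global spin frame in the sense of Proposition \ref{existenceframe}, as claimed. The only slightly delicate point is the continuity argument, i.e. producing the uniform $L^2$ bound from \eqref{Pregvacuum}; the algebraic verification of orthonormality and faithfulness is an immediate consequence of properties (v)--(vii), and I expect that to be routine.
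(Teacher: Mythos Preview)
Your proposal is correct and follows essentially the same approach as the paper: the paper invokes property (vi) for the sign decomposition, property (vii) for orthogonality (followed by renormalization), and then states that continuity of \eqref{frameminkowski} is obtained by working in momentum space via \eqref{3D}, \eqref{innerproduct} and Lebesgue's dominated convergence theorem. You have simply written out these steps in more detail, including the explicit verification of faithfulness via the spin scalar product, which the paper leaves implicit.
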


The wave functions spanning the spin spaces have the following three-dimensional representation, which can be obtained by carrying out the $k^0$ integration in \eqref{Pregvacuum},
\begin{equation}\label{3D}
P^\varepsilon(z,x)=-\int_{\R^3}\frac{d^3\V{k}}{(2\pi)^4}\,p_-(\V{k})\,\gamma^0\,e^{-\varepsilon\omega(\V{k})}\,e^{i(\omega(\V{k})(t_z-t_x)+\V{k}\cdot(\V{z}-\V{x}))},
\end{equation}
where
$$
\omega(\V{k}):=\sqrt{\V{k}^2+m^2}\quad\mbox{and}\quad p_-(\V{k}):=\frac{\slashed{k}+m}{2\,k^0}\gamma^0\big|_{k^0=-\omega(\V{k})}\,.
$$
These wave functions are peaked around the light-cone centered at $x$ and decay faster than any polynomial at spatial infinity. More precisely, due to the exponential converging factor in \eqref{3D} and the invariance of the Schwartz space under Fourier transform it follows that
\begin{equation}\label{L1initialdata}
P^\varepsilon(t,\cdot,x)a\in \mathcal{S}(\R^3,\C^4)\subset L^1(\R^3,\C^4)\quad\mbox{for all $t\in\R$}.
\end{equation}
The $\varepsilon$-scaling of these distinguished quantum states can be exploited to recover the causal structure of Minkowski space, for details see \cite{neumann}.

Exploiting the continuity of the function $\mathrm{F}^\varepsilon$,
it is possible to take the push-forward of the Lebesgue measure of $\R^4$ into $\F$.
	The causal fermion system on $\H_m^-$ induced by  $$\rho_{\rm{vac}}:=(\mathrm{F}^\varepsilon)_*(d^4 x)$$ is referred to as a
	{\bf{regularized Dirac sea vacuum}}.
Points (i) and (ii) below \eqref{defF} imply a one-to-one correspondence between points in Minkowski space
and points in the support of the measure~$\rho_{\rm{vac}}$. More precisely,
\begin{equation*} \label{id1}
M_{\rm{vac}}:=\supp\rho_{\rm{vac}}=\mathrm{F}^\varepsilon(\R^{1,3})\subset\F^\reg \:.
\end{equation*}
Moreover, for every $x\in\R^{1,3}$ there is a canonical identification of the space of Dirac spinors
with the spin space at the corresponding point of~$\mathrm{F}^\varepsilon(x) \in M_{\rm{vac}}$
\begin{equation} \label{id2}
\Phi_x:= \gR_{\varepsilon}(x)|_{S_{\mathrm{F}^\varepsilon(x)}}: S_{\mathrm{F}^\varepsilon(x)}=\mathrm{Im}\  \mathrm{F}^\varepsilon(x) \ni u\mapsto \gR_{\varepsilon} u(x)\in  \C^4
\end{equation}
(for details see~\cite[Proposition~1.2.6]{cfs} or~\cite[Theorem~4.16]{oppio}).
This identification is unitary, i.e. it is surjective and it preserves the spin scalar products.
As a final remark, we point out that the identification~\eqref{id2} allows for an explicit realization of the abstract kernel of the fermionic projector defined in~\eqref{Pxydef} in terms of the regularized distribution introduced in~\eqref{Pregvacuum} (for details see for example~\cite[Theorem 5.19]{oppio}),
\begin{equation}\label{id3}
\begin{split}
	\Phi_x\, \mathrm{P}\big(\mathrm{F}^\varepsilon(x), \mathrm{F}^\varepsilon(y) \big)\,\Phi_y^{-1}&=2\pi\,P^{2\varepsilon}(x,y)
	\end{split}
\end{equation}
The identification~\eqref{id3} also provides us with a corresponding realization of the
closed chain \eqref{closedchain}
in terms of the product of the two regularized distributions,
\begin{equation}\label{id4}
	\begin{split}
		A_{xy}^\varepsilon&:=(2\pi)^{-2}\:\Phi_x\: \mathrm{A}_{\mathrm{F}^\varepsilon(x)\mathrm{F}^\varepsilon(y)}\,\Phi_x^{-1}= P^{2\varepsilon}(x,y)P^{2\varepsilon}(y,x).
	\end{split}
\end{equation}
The eigenvalues of this matrix will be briefly discussed in Section \ref{sectionlagrangian}.

\section{The Fermionic Projector of the Dirac Sea in Minkowski Space}\label{sectionfermionicprojex}

In this section we analyze the explicit form of the fermionic projector of  Dirac sea vacua in Minkowski space and study its $L^4$-integrability. As a corollary, we will obtain integrability of the Lagrangian.

\subsection{An Explicit Form in Terms of Bessel Functions}\label{sectionexplicityproj}

The starting point is identity \eqref{3D}, which gives
\begin{equation*}\label{espressioneP}
\begin{split}
P^{\varepsilon}(x,y)=\sum_{j=0}^3v_j(x-y)\: \gamma^j+\beta(x-y) \quad\mbox{for all }x,y\in\R^{1,3},
\end{split}
\end{equation*}
for smooth functions (with raised indices)
\vspace{0.05cm}
\begin{equation}\label{functionsvbeta}
\begin{split}
v^0(\xi)&=-\frac{1}{2}\int_{\R^3}\frac{d^3\V{k}}{(2\pi)^4}\,e^{-\varepsilon\omega(\V{k})}\, e^{-i(-\omega(\V{k})\xi^0-\V{k}\cdot\boldsymbol{\xi})}\\
v^\alpha(\xi)&= \frac{1}{2}\int_{\R^3}\frac{d^3\V{k}}{(2\pi)^4}\frac{k^\alpha}{\omega(\V{k})}\,e^{-\varepsilon\omega(\V{k})}\, e^{-i(-\omega(\V{k})\xi^0-\V{k}\cdot\boldsymbol{\xi})}\\
\beta(\xi)&=\frac{1}{2}\int_{\R^3}\frac{d^3\V{k}}{(2\pi)^4}\frac{m}{\omega(\V{k})}\,e^{-\varepsilon\omega(\V{k})}\, e^{-i(-\omega(\V{k})\xi^0-\V{k}\cdot\boldsymbol{\xi})} \:.
\end{split}
\end{equation}
Note that
\begin{equation}\label{derivatV}
v^j =\frac{i}{m}\,\partial^j \beta,\quad j=0,1,2,3.
\end{equation}
Using the results of Appendix \ref{appendixbessel}, it is possible to compute these expressions explicitly in terms of the Bessel functions of the second type. \\[-0.5em]
\begin{itemize}[leftmargin=2em]
	\item[(i)]{\em The contribution $\beta$: } Because this function is spherically symmetric in the spatial variable, it is more convenient to evaluate it at $\xi=(\xi^0,se_3)$, for $s>0$. In polar coordinates $(r,\theta,\phi)$:
	\begin{equation*}
	\begin{split}
	\quad\beta(\xi)&=\frac{m}{2(2\pi)^4}\,2\pi\int_0^\infty dr \frac{r^2}{\sqrt{m^2+r^2}}\,e^{-(\varepsilon-i\xi^0)\sqrt{r^2+m^2}}\underbrace{\int_0^\pi d\theta\,\sin\theta\,e^{irs\cos\theta}}_{2\frac{sin(sr)}{sr}}\\
	&=\frac{m}{(2\pi)^3}\frac{1}{s}\int_0^\infty dr\frac{r}{\sqrt{r^2+m^2}}\,e^{-(\varepsilon-i\xi^0)\sqrt{r^2+m^2}}\sin(sr)
	\end{split}
	\end{equation*}\\[0.1em]
	Introducing the notation
	$$
	\xi_\varepsilon:=(\xi^0+i\varepsilon, \boldsymbol{\xi}),
	$$
	the following result is a direct consequence of Lemma \ref{lemmaintK}.
	\begin{Lemma}
		The contribution $\beta$ in \eqref{functionsvbeta} has the form
		\begin{equation*}\label{beta}
		\beta(\xi)=\frac{m^2}{(2\pi)^3}\,\frac{K_1\big(m\sqrt{-\xi_\varepsilon^2}\big)}{\sqrt{-\xi_\varepsilon^2}}\quad\mbox{for all }\xi\in\R^{1,3}.
		\end{equation*}
	\end{Lemma}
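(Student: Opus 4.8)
The plan is to reduce the stated identity to a single radial integral of the type handled in Appendix~\ref{appendixbessel}, and then invoke Lemma~\ref{lemmaintK}. Starting from the last displayed expression for $\beta$ before the statement,
\[
\beta(\xi)=\frac{m}{(2\pi)^3}\,\frac{1}{s}\int_0^\infty dr\,\frac{r}{\sqrt{r^2+m^2}}\,e^{-(\varepsilon-i\xi^0)\sqrt{r^2+m^2}}\,\sin(sr),
\]
valid for $\xi=(\xi^0,se_3)$ with $s=|\boldsymbol{\xi}|>0$, I would first observe that the combination $\varepsilon-i\xi^0$ is precisely $-i(\xi^0+i\varepsilon)$, so that the exponent is $i\,\xi^0_\varepsilon\sqrt{r^2+m^2}$ in the notation $\xi_\varepsilon=(\xi^0+i\varepsilon,\boldsymbol{\xi})$; crucially $\re(\varepsilon-i\xi^0)=\varepsilon>0$ guarantees absolute convergence of the integral and holomorphy in the complexified time variable. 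One then recognizes the integral as the kernel computed in Lemma~\ref{lemmaintK} (the lemma in Appendix~\ref{appendixbessel} expressing $\int_0^\infty \frac{r}{\sqrt{r^2+m^2}}e^{-a\sqrt{r^2+m^2}}\sin(sr)\,dr$ in terms of a Bessel $K$-function), with the parameters $a=\varepsilon-i\xi^0$ and $s$ as above; the output of that lemma is, up to the explicit prefactor, $s\,K_1\big(m\sqrt{s^2-a^2}\big)/\sqrt{s^2-a^2}$. Substituting and simplifying the constant, the factor $1/s$ cancels and one is left with
\[
\beta(\xi)=\frac{m^2}{(2\pi)^3}\,\frac{K_1\big(m\sqrt{-\xi_\varepsilon^2}\big)}{\sqrt{-\xi_\varepsilon^2}},
\]
since $s^2-a^2 = |\boldsymbol{\xi}|^2-(\varepsilon-i\xi^0)^2 = |\boldsymbol{\xi}|^2+( \xi^0+i\varepsilon)^2\cdot(-1)\cdot(-1)$ — more carefully, $-\xi_\varepsilon^2 = -\big((\xi^0+i\varepsilon)^2-|\boldsymbol{\xi}|^2\big)=|\boldsymbol{\xi}|^2-(\xi^0+i\varepsilon)^2 = s^2-a^2$, using $a = \varepsilon-i\xi^0 = -i(\xi^0+i\varepsilon)$ so $a^2 = -(\xi^0+i\varepsilon)^2$.

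Having established the formula on the slice $\xi=(\xi^0,se_3)$, I would then extend it to all $\xi\in\R^{1,3}$ by the spherical symmetry of $\beta$ in the spatial variable: the defining integral in \eqref{functionsvbeta} for $\beta$ depends on $\boldsymbol{\xi}$ only through $|\boldsymbol{\xi}|$ (after the change of variables $\V{k}\mapsto R\V{k}$ for any rotation $R$), and the right-hand side $m^2(2\pi)^{-3}K_1(m\sqrt{-\xi_\varepsilon^2})/\sqrt{-\xi_\varepsilon^2}$ likewise depends only on $\xi^0$ and $|\boldsymbol{\xi}|$; so equality on one representative of each spatial sphere forces equality everywhere with $|\boldsymbol{\xi}|>0$. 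The case $\boldsymbol{\xi}=0$ follows by continuity of both sides in $\xi$ (the left side is smooth by the remark following \eqref{functionsvbeta}, and the right side extends continuously across $\boldsymbol{\xi}=0$ because $\re(-\xi_\varepsilon^2)=\varepsilon^2+|\boldsymbol{\xi}|^2-(\xi^0)^2$ stays bounded away from the branch locus thanks to $\varepsilon>0$, so $\sqrt{-\xi_\varepsilon^2}$ and $K_1$ of it are analytic there).

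The only genuine obstacle is the bookkeeping in matching the hypotheses of Lemma~\ref{lemmaintK}: one must check that the lemma is stated for complex parameter $a$ with $\re a>0$ (or analytically continue it to that regime), pin down the correct branch of $\sqrt{\,\cdot\,}$ and of $K_1$, and track the numerical constants through the polar-coordinate computation so that the stray powers of $2\pi$ and $m$ and the factor $1/s$ combine exactly into $m^2/(2\pi)^3$ and $1/\sqrt{-\xi_\varepsilon^2}$. All of this is routine given the appendix; there is no analytic difficulty beyond ensuring the $\varepsilon>0$ regularization keeps every integral and every square root well inside its domain of convergence and holomorphy.
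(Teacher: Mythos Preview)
Your proposal is correct and follows exactly the paper's route: the lemma is stated there as a direct consequence of Lemma~\ref{lemmaintK} applied to the displayed radial integral, together with the spherical symmetry of $\beta$ in $\boldsymbol{\xi}$ already noted. There is a minor bookkeeping slip---Lemma~\ref{lemmaintK} produces the combination $\beta^2+b^2=a^2+s^2$ (not $s^2-a^2$), and with $a=\varepsilon-i\xi^0$ one has $a^2+s^2=-\xi_\varepsilon^2$ directly---but this does not affect the argument.
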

	\noindent Before moving on, we stress that the following function is analytic:
	\begin{equation}\label{G}
	G:\Omega_\pi\ni z\mapsto \frac{m^2}{(2\pi)^3}\frac{K_1\big(m\sqrt{z}\big)}{\sqrt{z}}\in\C
	\end{equation}
	(see Appendix \ref{appendixbessel} for details and notation). Moreover, notice that,  for all $\xi\in\R^{1,3},$
	\begin{equation}\label{squarereg}
	-\xi_\varepsilon^2=(-\xi_0^2+\boldsymbol{\xi}^2+\varepsilon^2)-2i\varepsilon\xi_0\in \Omega_\pi.
	\end{equation}
	With this notation, the contribution $\beta$ can be rewritten more compactly as
	\begin{equation}\label{Gbeta}
	\beta(\xi)=G(-\xi_\varepsilon^2)
	\end{equation}
	Differentiating $G$, and exploiting the two identities in Lemma \ref{lemmaderivK}, gives
	\begin{equation}\label{derivG}
	\begin{split}
	G'(z)&= -\frac{m^2}{2(2\pi)^3}\left(\frac{m}{2}\frac{K_0\big(m\sqrt{z}\big)+K_2\big(m\sqrt{z}\big)}{z}+\frac{K_1\big(m\sqrt{z}\big)}{(\sqrt{z})^3}\right)\\
	&=-\frac{m^3}{2(2\pi)^3}\frac{K_2\big(m\sqrt{z}\big)}{z}.
	\end{split}
	\end{equation}
	\item[(ii)]{\em The contribution $v^j$: } 
	Using \eqref{derivatV}, \eqref{Gbeta} and \eqref{derivG} we can now compute the integral expression defining the functions $v^j$:
	\begin{Lemma}
		The contribution $v^j$ in \eqref{functionsvbeta} has the form
		\begin{equation*}\label{vj}
		v^j(\xi)=i\frac{m^2}{(2\pi)^3}\,\frac{K_2\big(m\sqrt{-\xi_\varepsilon^2}\big)}{-\xi_\varepsilon^2}\,\xi_\varepsilon^j\quad\mbox{for all $\xi\in\R^{1,3}$}\:.
		\end{equation*}
	\end{Lemma}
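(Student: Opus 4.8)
The plan is to derive the formula for $v^j$ directly from the identity \eqref{derivatV}, which expresses $v^j$ as $\frac{i}{m}\partial^j\beta$, together with the explicit form \eqref{Gbeta} of $\beta$ in terms of the analytic function $G$ of \eqref{G}. The key observation is that $\beta(\xi) = G(-\xi_\varepsilon^2)$, so differentiating with respect to $\xi^j$ is a matter of the chain rule: one needs $\partial^j(-\xi_\varepsilon^2)$.

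First I would compute $\partial^j(-\xi_\varepsilon^2)$. Writing $\xi_\varepsilon = (\xi^0+i\varepsilon,\boldsymbol{\xi})$, we have $-\xi_\varepsilon^2 = -(\xi^0+i\varepsilon)^2 + \boldsymbol{\xi}^2$, which depends on $\xi$ through the ordinary (real) spacetime coordinates only via the shift in the time component. A short computation with the Minkowski metric gives $\partial^j(-\xi_\varepsilon^2) = -2\,\xi_\varepsilon^j$, where on the right $\xi_\varepsilon^j$ denotes the $j$-th contravariant component of the shifted vector (so $\xi_\varepsilon^0 = \xi^0+i\varepsilon$ and $\xi_\varepsilon^\alpha = \xi^\alpha$). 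Here one has to be a little careful with index placement: $\partial^j = \eta^{jk}\partial_k$, and $-\xi_\varepsilon^2 = -\eta_{kl}\xi_\varepsilon^k\xi_\varepsilon^l$ viewed as a function of the real coordinates, so $\partial_k(-\xi_\varepsilon^2) = -2\eta_{kl}\xi_\varepsilon^l = -2(\xi_\varepsilon)_k$ and raising the index yields $-2\xi_\varepsilon^j$. The analyticity of $G$ on $\Omega_\pi$, together with \eqref{squarereg} which guarantees $-\xi_\varepsilon^2\in\Omega_\pi$, legitimizes applying the chain rule with the complex derivative $G'$.

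Combining, $\partial^j\beta(\xi) = G'(-\xi_\varepsilon^2)\cdot(-2\xi_\varepsilon^j)$, and then \eqref{derivatV} gives
$$
v^j(\xi) = \frac{i}{m}\,\partial^j\beta(\xi) = \frac{i}{m}\,(-2\xi_\varepsilon^j)\,G'(-\xi_\varepsilon^2).
$$
Substituting the explicit form of $G'$ from \eqref{derivG}, namely $G'(z) = -\frac{m^3}{2(2\pi)^3}\,\frac{K_2(m\sqrt z)}{z}$ with $z = -\xi_\varepsilon^2$, the factors of $m$, of $2$, and the signs combine to yield exactly
$$
v^j(\xi) = i\,\frac{m^2}{(2\pi)^3}\,\frac{K_2\big(m\sqrt{-\xi_\varepsilon^2}\big)}{-\xi_\varepsilon^2}\,\xi_\varepsilon^j,
$$
as claimed.

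The only genuinely delicate point, and the one I would treat most carefully, is justifying that the $\xi$-derivative of the integral representation \eqref{functionsvbeta} of $\beta$ may be passed under the integral sign and that it indeed coincides with the naive chain-rule differentiation of the closed form $G(-\xi_\varepsilon^2)$ — i.e. that differentiation commutes with the identifications made via Lemma \ref{lemmaintK} and the analyticity statement for $G$. Since the integrand in \eqref{functionsvbeta} carries the exponential damping factor $e^{-\varepsilon\omega(\V k)}$, differentiation under the integral in $\xi$ is dominated-convergence-legal (each $\partial^j$ brings down a factor polynomial in $\V k$ or a bounded factor $\omega(\V k)$, still beaten by $e^{-\varepsilon\omega(\V k)}$), so the real content is bookkeeping: matching the momentum-space derivative of $\beta$ against $G'(-\xi_\varepsilon^2)\,\partial^j(-\xi_\varepsilon^2)$ via the already-established formula \eqref{Gbeta}. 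Once \eqref{Gbeta}, the analyticity of $G$, and $-\xi_\varepsilon^2\in\Omega_\pi$ are in hand, this reduces to the elementary chain-rule computation above, so I expect no serious obstacle — just attention to index conventions and the placement of the $i\varepsilon$ shift.
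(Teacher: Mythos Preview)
Your proof is correct and follows essentially the same route as the paper: the paper's argument is precisely to combine \eqref{derivatV}, \eqref{Gbeta} and \eqref{derivG} via the chain rule, exactly as you do. Your additional care about differentiating under the integral and the index bookkeeping for $\partial^j(-\xi_\varepsilon^2)=-2\xi_\varepsilon^j$ is more explicit than what the paper spells out, but the underlying computation is identical.
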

	
	\noindent Similar considerations as for $\beta$ can be done for the functions $v^j$: these can be rewritten more compactly as
	\begin{equation}\label{F}
	v^j(\xi)=F(-\xi_\varepsilon^2)\,\xi_\varepsilon^j,\quad j =0,1,2,3,
	\end{equation}
	where $F$ is the analytic function
	\begin{equation}\label{F2}
	F:=\frac{2}{im}\, G':\Omega_\pi\rightarrow\C.
	\end{equation}
\end{itemize}
\vspace{0.2cm}
Putting all together, the following result follows.
\begin{Prp}\label{propoexplicitP}
	The kernel of the fermionic projector has the following form:\\[-0.5em]
	\begin{equation}\label{expansionP}
	\begin{split}
	P^\varepsilon(x,y)&=F(-\xi_\varepsilon^2)\,\slashed{\xi}_\varepsilon+G(-\xi_\varepsilon^2)\\[0.1em]
	&=i\frac{m^2}{(2\pi)^3}\,\frac{K_2\big(m\sqrt{-\xi_\varepsilon^2}\big)}{-\xi_\varepsilon^2}\,\slashed{\xi}_\varepsilon+\frac{m^2}{(2\pi)^3}\,\frac{K_1\big(m\sqrt{-\xi_\varepsilon^2}\big)}{\sqrt{-\xi_\varepsilon^2}},
	\end{split}
	\end{equation}
	where $F$ and $G$ are defined in \eqref{G} and \eqref{F2}.
\end{Prp}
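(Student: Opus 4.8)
The statement is essentially a bookkeeping assembly of the two lemmas already proven in parts (i) and (ii), together with the decomposition of $P^\varepsilon$ into its Clifford components. The plan is to start from the identity
\[
P^\varepsilon(x,y)=\sum_{j=0}^3 v_j(x-y)\,\gamma^j+\beta(x-y),
\]
established right after \eqref{3D}, and simply substitute the closed forms for $\beta$ and the $v^j$ that were derived above. Writing $\xi:=x-y$, the Lemma on $\beta$ gives
\[
\beta(\xi)=\frac{m^2}{(2\pi)^3}\,\frac{K_1\!\big(m\sqrt{-\xi_\varepsilon^2}\big)}{\sqrt{-\xi_\varepsilon^2}}=G(-\xi_\varepsilon^2),
\]
which is the scalar part of \eqref{expansionP}. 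For the vector part, the Lemma on $v^j$ gives
\[
v^j(\xi)=i\,\frac{m^2}{(2\pi)^3}\,\frac{K_2\!\big(m\sqrt{-\xi_\varepsilon^2}\big)}{-\xi_\varepsilon^2}\,\xi_\varepsilon^j=F(-\xi_\varepsilon^2)\,\xi_\varepsilon^j,
\]
using \eqref{F} and the definition \eqref{F2} of $F$; contracting with $\gamma_j$ turns $\sum_j v_j\gamma^j$ into $F(-\xi_\varepsilon^2)\,\slashed{\xi}_\varepsilon$, since $\slashed{\xi}_\varepsilon=\xi_{\varepsilon,j}\gamma^j$ by definition of the Feynman slash. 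Adding the two contributions yields the first line of \eqref{expansionP}, and inserting the explicit Bessel expressions for $F$ and $G$ yields the second line.

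The only point that requires genuine (rather than purely formal) care is the consistency of the analytic continuation: the functions $\beta$ and $v^j$ were originally defined by the oscillatory integrals \eqref{functionsvbeta}, and the identifications $\beta(\xi)=G(-\xi_\varepsilon^2)$, $v^j(\xi)=F(-\xi_\varepsilon^2)\xi_\varepsilon^j$ hold because, by \eqref{squarereg}, $-\xi_\varepsilon^2$ lies in the domain $\Omega_\pi$ on which $G$ (hence $F=\tfrac{2}{im}G'$) is analytic. So before assembling I would record that $-\xi_\varepsilon^2\in\Omega_\pi$ for every $\xi\in\R^{1,3}$ and every $\varepsilon>0$, which is exactly \eqref{squarereg}; this guarantees every term in \eqref{expansionP} is well-defined and that the Bessel-function representations of $F$ and $G$ coming from Lemmas \ref{lemmaderivK} and \ref{lemmaintK} are valid there. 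I would also double-check the relation \eqref{derivatV}, $v^j=\tfrac{i}{m}\partial^j\beta$, is compatible with \eqref{Gbeta} and the chain rule $\partial^j G(-\xi_\varepsilon^2)=G'(-\xi_\varepsilon^2)\cdot(-2\xi_\varepsilon^j)$, which reproduces $v^j(\xi)=\tfrac{i}{m}\cdot(-2\xi_\varepsilon^j)\,G'(-\xi_\varepsilon^2)=F(-\xi_\varepsilon^2)\xi_\varepsilon^j$ exactly — a useful internal cross-check that the signs and factors of $2$ in \eqref{derivG} and \eqref{F2} are correct.

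There is no real obstacle here: the substantive work is entirely contained in the two preceding lemmas (the evaluation of the momentum integrals in terms of $K_1$ and $K_2$ via Appendix \ref{appendixbessel}) and in the Clifford decomposition of $P^\varepsilon$. The proof of the proposition is therefore a two-line assembly: "Combining the Clifford decomposition of $P^\varepsilon$ with the two lemmas above, and using \eqref{F}, \eqref{F2}, \eqref{Gbeta}, we obtain \eqref{expansionP}; all terms are well-defined since $-\xi_\varepsilon^2\in\Omega_\pi$ by \eqref{squarereg}." If anything needs a sentence of justification it is the passage $\sum_j v_j(\xi)\gamma^j = F(-\xi_\varepsilon^2)\,\slashed{\xi}_\varepsilon$, which is immediate once one notes $F(-\xi_\varepsilon^2)$ is a scalar (spinor-space) factor that pulls out of the sum.
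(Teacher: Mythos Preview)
Your proposal is correct and follows exactly the paper's approach: the paper literally writes ``Putting all together, the following result follows'' immediately before stating the proposition, so the proof is nothing more than the assembly of the Clifford decomposition $P^\varepsilon(x,y)=\sum_j v_j(\xi)\gamma^j+\beta(\xi)$ with the two preceding lemmas on $\beta$ and $v^j$. Your additional remarks on the well-definedness via $-\xi_\varepsilon^2\in\Omega_\pi$ and the chain-rule cross-check are sound but more than the paper itself supplies.
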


Now, we want to apply the asymptotics of the Bessel functions from Appendix \ref{appendixbessel} in order to prove $L^4$-integrability of the kernel of the fermionic projector: this is the content of Proposition \ref{integrabilityP}.
To this aim, let us first introduce the following notation:
$$
\mbox{For all $M\in\mathrm{Mat}(4,\C)$ we define }\ |M|_2:= \sup\{|Ma|\:|\: a\in \C^4,\ |a|=1\}
$$
The proof of finiteness of the right integral in \eqref{finitenessP4} is postponed to Appendix \ref{appendixproofL4}, while the first inequality follows immediately from \eqref{id4} and the following identities (see \eqref{Pregvacuum}):
\begin{equation}\label{commutxy}
|\gamma^0|_2=1\quad\mbox{and}\quad P^\varepsilon(y,x)=\gamma^0\,P^\varepsilon(x,y)^\dagger\,\gamma^0.
\end{equation}
The independence of the integrals from the point $x$ is again a manifestation of translation invariance of the Dirac sea. 
\begin{Prp}\label{integrabilityP}
	Let $x\in\R^{1,3}$. Then,
	\begin{equation}\label{finitenessP4}
\int_{\R^4}\big|A^\varepsilon_{xy}\big|_2^2\ d^4y\le \int_{\R^4} |P^{2\varepsilon}(x,y)|_2^4\ d^4y <\infty
	\end{equation}
	Moreover, both the right and the left integrals are independent from $x$.
\end{Prp}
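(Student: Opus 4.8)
The plan is to treat separately the three assertions packed into \eqref{finitenessP4}: the pointwise inequality between the two integrands, the independence of both integrals from $x$, and the finiteness of the right-hand integral. The first two are immediate from the explicit formulas already established; the last one requires the Bessel asymptotics of Appendix \ref{appendixbessel} and is the only substantial point.

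\emph{The first inequality and the $x$-independence.} By \eqref{id4}, $A^\varepsilon_{xy}=P^{2\varepsilon}(x,y)\,P^{2\varepsilon}(y,x)$, so submultiplicativity of the operator norm $|\cdot|_2$ on $\mathrm{Mat}(4,\C)$ gives $|A^\varepsilon_{xy}|_2\le |P^{2\varepsilon}(x,y)|_2\,|P^{2\varepsilon}(y,x)|_2$. By the second identity in \eqref{commutxy}, $P^{2\varepsilon}(y,x)=\gamma^0\,P^{2\varepsilon}(x,y)^\dagger\,\gamma^0$; since $\gamma^0$ is Hermitian with $(\gamma^0)^2=\1$ it is unitary for the Euclidean product (so $|\gamma^0|_2=1$, the first identity in \eqref{commutxy}) and the Euclidean adjoint preserves $|\cdot|_2$, whence $|P^{2\varepsilon}(y,x)|_2=|P^{2\varepsilon}(x,y)|_2$. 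Thus $|A^\varepsilon_{xy}|_2\le |P^{2\varepsilon}(x,y)|_2^2$, and squaring and integrating in $y$ yields the left inequality in \eqref{finitenessP4}. For the $x$-independence, formula \eqref{expansionP} shows that $P^{2\varepsilon}(x,y)$, hence also $A^\varepsilon_{xy}$, depends on $(x,y)$ only through $\xi:=x-y$; the change of variables $y\mapsto\xi=x-y$ (Jacobian $1$) then identifies both integrals with integrals over $\R^4$ of functions of $\xi$ alone.

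\emph{Finiteness.} It remains to show $\int_{\R^4}|P^{2\varepsilon}(\xi)|_2^4\,d^4\xi<\infty$, where I write $P^{2\varepsilon}(\xi)$ for the kernel regarded as a function of the difference. Since $P^{2\varepsilon}$ is a bounded smooth function (property (i) below \eqref{Pregvacuum}), the integral over any ball is finite, so everything is a matter of the tail $|\xi|\to\infty$. There I would insert the large-argument expansion $K_\nu(z)=\sqrt{\pi/(2z)}\,e^{-z}\,(1+O(|z|^{-1}))$ from Appendix \ref{appendixbessel}, legitimate because by \eqref{squarereg} the argument $-\xi^2_{2\varepsilon}$ lies in $\Omega_\pi$, so $\sqrt{-\xi^2_{2\varepsilon}}$ has argument in $(-\tfrac\pi2,\tfrac\pi2)$. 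Combined with \eqref{expansionP} (and $F=\tfrac{2}{im}G'$, $G'(z)=-\tfrac{m^3}{2(2\pi)^3}K_2(m\sqrt z)/z$) this produces, for $|\xi|$ large, a bound of the schematic form
$$
|P^{2\varepsilon}(\xi)|_2\;\le\; C\left(\frac{|\slashed{\xi}_{2\varepsilon}|_2}{|{-\xi^2_{2\varepsilon}}|^{5/4}}+\frac{1}{|{-\xi^2_{2\varepsilon}}|^{3/4}}\right)e^{-2m\,\re\sqrt{-\xi^2_{2\varepsilon}}}.
$$
The job is then to control the phase $\re\sqrt{-\xi^2_{2\varepsilon}}$ and the polynomial prefactor simultaneously, using the exact identity $\re\sqrt{-\xi^2_{2\varepsilon}}=\big(\tfrac12\big(|{-\xi^2_{2\varepsilon}}|-\xi_0^2+\boldsymbol\xi^2+4\varepsilon^2\big)\big)^{1/2}$ read off from \eqref{squarereg}. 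I would split $\R^4$ into a spacelike region (where $-\xi_0^2+\boldsymbol\xi^2\gtrsim|\xi|$ forces genuine exponential decay $e^{-c|\boldsymbol\xi|}$, trivially $L^4$), a tube around the light cone (where the gain degenerates to a stretched exponential $e^{-c\sqrt{\varepsilon|\xi_0|}}$, still enough once multiplied by powers of $|\xi_0|$), and the timelike interior (where along the time axis $\re\sqrt{-\xi^2_{2\varepsilon}}=2\varepsilon$ is merely constant and one must rely on $|\slashed\xi_{2\varepsilon}|_2/|{-\xi^2_{2\varepsilon}}|\lesssim 1$ and $|{-\xi^2_{2\varepsilon}}|\gtrsim|\xi_0|^2$, which makes the integrand decay like $|\xi_0|^{-6}$, just integrable against the $|\xi_0|^3$ spatial volume). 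Summing the three contributions gives finiteness; this is the content of Appendix \ref{appendixproofL4}.

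\emph{Main obstacle.} The delicate point is the timelike tail, where the exponential factor gives little or nothing and one must squeeze integrability out of the two resolvent-like denominators of $P^{2\varepsilon}$ together with the fourth power in the integrand; indeed the analysis above shows the exponent $4$ is essentially sharp, since with $|P^{2\varepsilon}|_2^2$ the near-axis contribution $|\xi_0|^{-3}\cdot|\xi_0|^3$ would already diverge. Organizing the estimate so that the polynomial bookkeeping closes uniformly across the spacelike, light-cone and timelike regimes is therefore the technical heart of the argument.
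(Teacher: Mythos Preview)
Your treatment of the first inequality and of the $x$-independence is correct and coincides with the paper's argument. The gap is in the finiteness part: the three-region split you describe does not close, and this is precisely the difficulty the paper's Appendix \ref{appendixproofL4} is designed to overcome.

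The two key claims you make about the ``tube'' and the ``timelike interior'' require incompatible definitions of those regions. Your assertion $|{-\xi^2_{2\varepsilon}}|\gtrsim|\xi_0|^2$ in the timelike interior holds only inside a \emph{strict cone} $r\le c\,t$ with $c<1$ fixed (there $t^2-r^2\ge(1-c^2)t^2$). Your assertion of a stretched exponential $e^{-c\sqrt{\varepsilon|\xi_0|}}$ in the tube holds only very close to the light cone (essentially where $t^2-r^2\lesssim \varepsilon t$): along any ray $r=\mu t$ with $\mu<1$ one computes from \eqref{exponent} that $\re\sqrt{-\xi^2_{2\varepsilon}}\to \dfrac{2\varepsilon}{\sqrt{1-\mu^2}}$ as $t\to\infty$, a \emph{constant}, so no exponential gain survives on the inner boundary of a conical tube. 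Conversely, if you take a fixed-width tube $|r-t|\le R$ and call the timelike interior $r\le t-R$, then on its boundary $t^2-r^2\sim 2Rt$, so $|{-\xi^2_{2\varepsilon}}|\sim t$, not $t^2$, and the $|\xi_0|^{-6}$ decay you invoke fails. In fact, integrating the $K_1$-contribution $|{-\xi^2_{2\varepsilon}}|^{-3}$ over the full timelike cone $r\le t$ with no exponential help already diverges logarithmically.

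The paper resolves this by choosing the boundary between the deep-timelike region and the near-cone region not as a cone and not as a fixed-width tube, but as the power-law curve $r=\sqrt{t^2-t^{2\lambda}}$ with a parameter $\lambda\in(4/5,1)$. Inside this curve one has $t^2-r^2\ge t^{2\lambda}$, which is weaker than $t^2$ but still enough to make the denominator estimates converge (this is where $\lambda>4/5$ enters, cf.\ the $K_2$ term). Between this curve and the spacelike cone one obtains $\re\sqrt{-\xi^2_{2\varepsilon}}\gtrsim t^{1-\lambda}$, a genuine (stretched) exponential decay in $t$ that kills the remaining polynomial growth (Lemma \ref{propoexpdecay}). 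Your diagnosis of \emph{where} the difficulty lies is right; what is missing is this interpolating curve, without which the two regions leave an uncovered wedge in which neither the exponential nor the denominator suffices.
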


The proof of this proposition is based on several estimates which follow from the asymptotics of the Bessel functions. 
Away from the light cone, the regularized kernel \eqref{expansionP} converges locally uniformly to a smooth function in the limit $\varepsilon\searrow 0$, which is obtained by simply setting $\varepsilon$ equal zero (see \cite[Section 1.2.5]{cfs} or \cite[Section 2.4]{neumann}). This limit function clearly provides a smooth representation of \eqref{bidistribution} in this region.  Therefore, away from the lightcone, the same asymptotics of the regularized kernel apply to \eqref{bidistribution} and similar estimates can be carried out.

\section{The Lagrangian of the Dirac Sea in Minkowski Space}\label{sectionlagrangian}

\subsection{An Explicit Form of the Lagrangian on Minkowski Space}

We want to obtain an explicit expression for the Lagrangian of the Dirac sea. To this aim, we first need to compute the spectrum of the closed chain \eqref{id4}
\begin{equation}\label{closedchain2}
A_{xy}^\varepsilon:=P^{2\varepsilon}(x,y)P^{2\varepsilon}(y,x).
\end{equation}
The explicit form of the eigenvalues of this matrix can be found in \cite[Lemma 2.6.1]{cfs}. Here, we simply state the result.
\begin{Prp}
	The spectrum of \eqref{closedchain2} is formed by two eigenvalues $\lambda_\pm^{xy}$, each with algebraic multiplicity two. Explicitly,
	\begin{equation*}\label{eigenvalues}
	\lambda_\pm^{xy} := v\overline{v}+\beta\overline{\beta}\pm\sqrt{(v\overline{v})^2-v^2\overline{v}^2+(v\overline{\beta}+\beta\overline{v})^2}\,\bigg|_{\xi:=x-y}.
	\end{equation*}
	where $v,\beta$ are the smooth functions introduced in \eqref{functionsvbeta}, with $\varepsilon$ replaced by $2\varepsilon$. 
\end{Prp}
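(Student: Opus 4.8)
The plan is to exploit the Dirac form \eqref{expansionP} of the regularized kernel and reduce the computation of the spectrum to elementary Clifford-algebra identities; the result itself is classical in the causal fermion systems literature (cf.\ \cite[Lemma 2.6.1]{cfs}). First I would use Proposition \ref{propoexplicitP}, with $\varepsilon$ replaced by $2\varepsilon$, to write $P^{2\varepsilon}(x,y)=\slashed{v}+\beta$, where $v=v(\xi)\in\C^4$ and $\beta=\beta(\xi)\in\C$ denote the functions in \eqref{functionsvbeta} (with $2\varepsilon$) evaluated at $\xi:=x-y$ and $\slashed{v}:=v_j\gamma^j$. Combining the adjoint identity \eqref{commutxy} with the standard relation $\gamma^0(\gamma^j)^\dagger\gamma^0=\gamma^j$ gives $P^{2\varepsilon}(y,x)=\gamma^0\,P^{2\varepsilon}(x,y)^\dagger\,\gamma^0=\slashed{\overline{v}}+\overline{\beta}$, so that the closed chain \eqref{closedchain2} becomes
\begin{equation*}
	A^\varepsilon_{xy}=(\slashed{v}+\beta)(\slashed{\overline{v}}+\overline{\beta})=\slashed{v}\,\slashed{\overline{v}}+\bigl(\beta\,\slashed{\overline{v}}+\overline{\beta}\,\slashed{v}\bigr)+\beta\overline{\beta}\,\1=\mu\,\1+D,
\end{equation*}
where I put the scalar $\mu:=v\overline{v}+\beta\overline{\beta}$, the matrix $C:=\slashed{v}\,\slashed{\overline{v}}-(v\overline{v})\,\1=\tfrac12[\slashed{v},\slashed{\overline{v}}]$, the four-vector $w:=\beta\,\overline{v}+\overline{\beta}\,v$ (so that $\slashed{w}=\beta\slashed{\overline{v}}+\overline{\beta}\slashed{v}$), and $D:=C+\slashed{w}$. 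It then suffices to determine the spectrum of $D$.

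The main step is to show that $D^2$ is a scalar multiple of the identity. Using only $\{\gamma^i,\gamma^j\}=2\eta^{ij}\1$, I would record the identities $\slashed{a}\,\slashed{a}=(a\cdot a)\,\1$ and $\slashed{a}\,\slashed{b}\,\slashed{a}=2(a\cdot b)\slashed{a}-(a\cdot a)\slashed{b}$, valid for arbitrary complex four-vectors $a,b$. From the second one I get $C^2=\bigl[(v\overline{v})^2-v^2\overline{v}^2\bigr]\1$ and, using $\{\slashed{v}\slashed{\overline{v}},\slashed{v}\}=2(v\overline{v})\slashed{v}$ and $\{\slashed{v}\slashed{\overline{v}},\slashed{\overline{v}}\}=2(v\overline{v})\slashed{\overline{v}}$, the vanishing of the anticommutator $C\slashed{w}+\slashed{w}C=0$; together with $\slashed{w}^2=(w\cdot w)\,\1=(v\overline{\beta}+\beta\overline{v})^2\,\1$ this yields
\begin{equation*}
	D^2=C^2+\slashed{w}^2=\Bigl[(v\overline{v})^2-v^2\overline{v}^2+(v\overline{\beta}+\beta\overline{v})^2\Bigr]\1=:\Delta\,\1.
\end{equation*}

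To finish, I would note that $C$ and $\slashed{w}$ are both traceless (since $\tr\gamma^j=0$ and $\tr(\gamma^i\gamma^j)=4\eta^{ij}$), hence $\tr D=0$. If $\Delta\neq0$, the relation $D^2=\Delta\1$ forces $D$ to be diagonalizable with eigenvalues in $\{+\sqrt{\Delta},\,-\sqrt{\Delta}\}$, and $\tr D=0$ forces each of them to occur with algebraic multiplicity two; if $\Delta=0$ then $D$ is nilpotent with spectrum $\{0\}$ of algebraic multiplicity four, which is again consistent with the statement in the degenerate form $\lambda_+^{xy}=\lambda_-^{xy}=\mu$. Adding $\mu\1$ back, the spectrum of $A^\varepsilon_{xy}$ consists of $\lambda_\pm^{xy}=\mu\pm\sqrt{\Delta}$, each with algebraic multiplicity two, which is exactly the asserted formula evaluated at $\xi=x-y$. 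I expect no real obstacle here: the only points requiring care are the correct identification $P^{2\varepsilon}(y,x)=\slashed{\overline{v}}+\overline{\beta}$ via the Dirac adjoint, the Clifford bookkeeping in the middle step — in particular the vanishing of $C\slashed{w}+\slashed{w}C$, which is precisely what makes $D^2$ scalar — and the explicit treatment of the case $\Delta=0$ so that the multiplicity claim is read correctly there.
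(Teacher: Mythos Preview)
Your argument is correct: the decomposition $A^\varepsilon_{xy}=\mu\,\1+D$ with $D=C+\slashed{w}$, the Clifford identities yielding $C^2=\bigl[(v\overline{v})^2-v^2\overline{v}^2\bigr]\1$, $\{C,\slashed{w}\}=0$ and $\slashed{w}^2=(v\overline{\beta}+\beta\overline{v})^2\,\1$, and the trace argument for the multiplicities are all sound, including the degenerate case $\Delta=0$. Note that the paper does not actually prove this proposition; it simply states the result and refers to \cite[Lemma~2.6.1]{cfs}, so you have supplied a self-contained proof where the paper defers to the literature --- your approach is in fact the standard one used in that reference.
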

The argument of the square root is a real number, but the sign can vary. For negative arguments the eigenvalues form a complex conjugate pair.
For simplicity we henceforth use the notation $\lambda_\pm=a\pm\sqrt{b}$, where a $a,b$ are real-valued.

 By means of the homeomorphism 
 \begin{equation}\label{localcorrelationfunction}
 \mathrm{F}^\varepsilon:\R^{1,3}\rightarrow M_{\mathrm{vac}},
 \end{equation}
 we can pull  the abstract structures on $M_{\mathrm{vac}}$ back to Minkowski space $\R^{1,3}$, as was done for example with the kernel of the fermionic projector in \eqref{id3} and the closed chain in \eqref{id4}. In particular, referring to \eqref{eigenvaluesxy}, we have 
 $$
 \lambda_+^{xy}=\lambda_1^{\mathrm{F}^\varepsilon(x)\,\mathrm{F}^\varepsilon(y)}=\lambda_2^{\mathrm{F}^\varepsilon(x)\,\mathrm{F}^\varepsilon(y)}\ \mbox{ and }\  \lambda_-^{xy}=\lambda_3^{\mathrm{F}^\varepsilon(x)\,\mathrm{F}^\varepsilon(y)}=\lambda_4^{\mathrm{F}^\varepsilon(x)\,\mathrm{F}^\varepsilon(y)}.
 $$
 In the same way, we can pull  the abstract \textbf{causal structure} introduced after Proposition \ref{proplagrangian} back to Minkowski Space and of course compare it with the preexisting one. The corresponding causal relations between different points are then determined by the sign of $b$. We can state it as a definition.

\begin{Def}
	Two Minkowski Space points $x,y\in\R^{1,3}$ such that
	$$
	b(x,y)=0,\ b(x,y)> 0\ \mbox{ or }\ b(x,y)< 0,
	$$
	are said to be \textbf{lightlike, timelike} or \textbf{spacelike separated} (in the regularized sense).
\end{Def}
Accordingly, one can introduce corresponding (regularized)  \textbf{causal cones}
\begin{equation*}
\begin{split}
L_x^\varepsilon:&=\{y\in\R^{1,3}\:|\: b(x,y)=0 \}\\
I_x^\varepsilon:&=\{y\in\R^{1,3}\:|\: b(x,y)>0 \}\\
J_x^\varepsilon:&=\{y\in\R^{1,3}\:|\: b(x,y)\ge 0 \}
\end{split}
\end{equation*}

The regularized causal structure is connected to the values attained by the Lagrangian. 
Note that the abstract Lagrangian as in Proposition \ref{proplagrangian} is defined on all of $\F\times\F$. Restricting for the moment to spacetime points which correspond to points in Minkowski Space through the local correlation function \eqref{localcorrelationfunction}, the Lagrangian reduces to the following function on $\R^{1,3}\times\R^{1,3}$:
\begin{equation}\label{L2eigen}
\begin{split}
\L^\varepsilon(x,y):=\mathcal{L}(\mathrm{F}^\varepsilon(x),\mathrm{F}^\varepsilon(y))&=\big(|\lambda_+^{xy}|-|\lambda_-^{xy}|\big)^2=2\big(a^2+|b|-|a^2-b|\big),
\end{split}
\end{equation}
where in the last step we used the notation $a\pm \sqrt{b}$ for the eigenvalues.
This expression can be simplified even further. Indeed, note that
\begin{equation*}
\begin{split}
&(v^2-\beta^2)(\overline{\beta}^2-\overline{v}^2)=-|v^2-\beta^2|\le 0,\quad\mbox{hence}\\
&v^2\overline{\beta}^2-v^2\overline{v}^2+\beta^2\overline{v}^2\le \beta^2\overline{\beta}^2,\quad\mbox{hence}\\
&b=(v\overline{v})^2+v^2\overline{\beta}^2-v^2\overline{v}^2+\beta^2\overline{v}^2+2\beta\overline{\beta}v\overline{v}\le \beta^2\overline{\beta}^2 + (v\overline{v})^2+2\beta\overline{\beta}v\overline{v}=a^2.
\end{split}
\end{equation*}
Therefore, $a^2\ge b$ and hence $|a^2-b|=a^2-b$. As a consequence, \eqref{L2eigen} reduces to\footnote{By $|q|_+:=(q+|q|)/2$ we denote the \textit{positive part} of the real number $q$.}
	\begin{equation}\label{lagrangianlemma}
	\L^\varepsilon(x,y)=
	2(|b(x,y)|+b(x,y))=4|b(x,y)|_+\,,
	\end{equation}

An explicit expression of \eqref{lagrangianlemma} can be given using the Bessel functions as in Proposition \ref{propoexplicitP}. Let us analyze the quantity $b(x,y)$ in more detail.\\[-0.8em]
\begin{itemize}[leftmargin=2em]
	\item[{\rm (i)}] First, we show that $(v\overline{v})^2-v^2\overline{v}^2=-4|F|^4\,\varepsilon^2\,|\boldsymbol{\xi}|^2$, with $F$ as in \eqref{F2}. 
	This follows from \eqref{F}, which gives, with $\xi:=y-x$,
	\begin{equation*}
	\begin{split}
	\quad(v\overline{v})^2-v^2\overline{v}^2&=|F|^4\left[\big(\xi_{\varepsilon j}\,\overline{\xi_\varepsilon^j}\big)^2-\big(\xi_{\varepsilon i}\,\xi_\varepsilon^i\big)\overline{\big(\xi_{\varepsilon h}\,\xi_\varepsilon^h\big)}\right]\\
	&=|F|^4\left[\big(|\xi_0+i\varepsilon|^2-|\boldsymbol{\xi}|^2\big)^2-\big|(\xi_0+i\varepsilon)^2-|\boldsymbol{\xi}|^2\big|^2\right]\\
	&=|F|^4\left[\big(\xi^2+\varepsilon^2\big)^2-\big|\xi^2-\varepsilon^2+2i\varepsilon\xi_0|^2\right]\\
	&=|F|^4\left[\big((\xi^2)^2+\varepsilon^4+2\varepsilon^2\xi^2\big)-\big((\xi^2)^2+\varepsilon^4-2\varepsilon^2\xi^2+4\varepsilon^2\xi_0^2\big)\right]\\
	&=|F|^4\left[4\varepsilon^2\xi^2-4\varepsilon^2\xi_0^2\right]=-4|F|^4\varepsilon^2|\boldsymbol{\xi}|^2
	\end{split}
	\end{equation*}
\item[{\rm (ii)}] The remaining contribution can be rewritten as follows, with $G$ as in \eqref{G},
\begin{align*}
(v\overline{\beta}+\beta\overline{v})^2&=2\,\mathrm{Re}\big((F\,\overline{G})^2\,\xi_\varepsilon^2\big)+2|F|^2|G|^2\ \xi_\varepsilon\cdot\overline{\xi_\varepsilon}
\end{align*}
\end{itemize}
Summarizing, we have the following result.
\begin{Lemma}\label{bbessel}
	The Langrangian of the Dirac sea vacuum in Minkowski Space  has the following explicit expression on $\R^{1,3}$:\\[-0.2em]
	\begin{equation*}
	\begin{split}
	\L(\mathrm{F}^\varepsilon(x),\mathrm{F}^\varepsilon(y))&=4|b(x,y)|_+\\[0.2em] b(x,y):&=2\,\mathrm{Re}\big((F\,\overline{G})^2\,\xi_\varepsilon^2\big)+2|F|^2|G|^2\ \xi_\varepsilon\cdot\overline{\xi_\varepsilon}-4|F|^4\varepsilon^2|\boldsymbol{\xi}|^2,
	\end{split}
	\end{equation*}
	where $F$ and $G$ are defined in \eqref{F2}, \eqref{G} and are evaluated at $-\xi_\varepsilon^2$.
\end{Lemma}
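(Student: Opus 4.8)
The statement is the synthesis of the reduction \eqref{lagrangianlemma} with the two identities (i) and (ii) displayed above, so the plan is to assemble these and supply the one computation that has not yet been carried out. Recall that, pulling the abstract Lagrangian back through $\mathrm{F}^\varepsilon$ as in \eqref{L2eigen} and using $a^2\ge b$, we already obtained $\L(\mathrm{F}^\varepsilon(x),\mathrm{F}^\varepsilon(y))=4\,|b(x,y)|_+$, where $b=(v\overline{v})^2-v^2\overline{v}^2+(v\overline{\beta}+\beta\overline{v})^2$ is the discriminant under the square root in the eigenvalues of the closed chain, with $v^j,\beta$ the smooth functions of \eqref{functionsvbeta} entering those eigenvalues. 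Hence it remains only to rewrite these two summands through the Bessel representation $v^j=F(-\xi_\varepsilon^2)\,\xi_\varepsilon^j$ and $\beta=G(-\xi_\varepsilon^2)$ of \eqref{F}--\eqref{Gbeta}, with $F,G$ analytic by \eqref{G}, \eqref{F2} and $-\xi_\varepsilon^2\in\Omega_\pi$ by \eqref{squarereg}.

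For the first summand I would simply reproduce item (i): inserting $v^j=F\,\xi_\varepsilon^j$ and contracting with the Minkowski metric gives $(v\overline{v})^2-v^2\overline{v}^2=|F|^4\big[(\xi_\varepsilon\cdot\overline{\xi_\varepsilon})^2-|\xi_\varepsilon^2|^2\big]$, and the elementary identities $\xi_\varepsilon\cdot\overline{\xi_\varepsilon}=\xi^2+\varepsilon^2$ and $\xi_\varepsilon^2=\xi^2-\varepsilon^2+2i\varepsilon\xi_0$ collapse the bracket to $-4\varepsilon^2|\boldsymbol{\xi}|^2$, giving $(v\overline{v})^2-v^2\overline{v}^2=-4|F|^4\varepsilon^2|\boldsymbol{\xi}|^2$.

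For the second summand --- item (ii), which is stated above without proof --- the key step is the purely algebraic expansion
\[
(v\overline{\beta}+\beta\overline{v})^2=\overline{\beta}^2\,v^2+2\,\beta\overline{\beta}\,(v\cdot\overline{v})+\beta^2\,\overline{v}^2,
\]
obtained by squaring the four-vector $\overline{\beta}\,v^j+\beta\,\overline{v}^j$ and using that the Minkowski product is symmetric, so that $v^j\overline{v}_j=\overline{v}^jv_j=v\cdot\overline{v}$. Substituting $v^2=F^2\,\xi_\varepsilon^2$, $\overline{v}^2=\overline{F}^2\,\overline{\xi_\varepsilon}^2$, $v\cdot\overline{v}=|F|^2\,\xi_\varepsilon\cdot\overline{\xi_\varepsilon}$ and $\beta=G$, the middle term becomes $2|F|^2|G|^2\,\xi_\varepsilon\cdot\overline{\xi_\varepsilon}$, while the two outer terms are $(F\overline{G})^2\xi_\varepsilon^2$ and its complex conjugate $(\overline{F}G)^2\overline{\xi_\varepsilon}^2$, which add up to $2\,\mathrm{Re}\big((F\overline{G})^2\xi_\varepsilon^2\big)$. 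This is the asserted form of $(v\overline{\beta}+\beta\overline{v})^2$. Adding the two contributions yields the claimed expression for $b(x,y)$, and combining it with $\L=4|b|_+$ finishes the proof; as a sanity check, $\xi_\varepsilon\cdot\overline{\xi_\varepsilon}=\xi^2+\varepsilon^2$ is real, so $b$ is manifestly real, as it must be.

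There is no real obstacle here; the only points deserving a little care are the Minkowski index bookkeeping --- in particular keeping track of which contractions ($v\cdot\overline{v}$ and $\xi_\varepsilon\cdot\overline{\xi_\varepsilon}$) are real, so that splitting off a $2\,\mathrm{Re}(\cdot)$ term is legitimate --- and fixing once and for all the convention $\xi:=x-y$: under $\xi\mapsto-\xi$ the argument $-\xi_\varepsilon^2$ is merely complex-conjugated (hence remains in $\Omega_\pi$), and by the reality of $F,G$ on the positive axis every quantity entering $b$ (only $|F|$, $|G|$, $\xi_\varepsilon\cdot\overline{\xi_\varepsilon}$, $|\boldsymbol{\xi}|$ and $\mathrm{Re}((F\overline{G})^2\xi_\varepsilon^2)$ occur) is invariant, so the expression is unambiguous and symmetric in $x,y$.
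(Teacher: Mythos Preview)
Your proposal is correct and follows exactly the paper's approach: the lemma is the synthesis of the reduction $\L=4|b|_+$ from \eqref{lagrangianlemma} with the two rewritings (i) and (ii) displayed just before the statement, and you supply the short algebraic expansion for (ii) that the paper omits. One small slip in your closing sanity check: $F=\tfrac{2}{im}G'$ is purely imaginary, not real, on the positive axis, so Schwarz reflection gives $F(\overline{z})=-\overline{F(z)}$; the extra sign still cancels in $|F|$ and in $(F\overline{G})^2\xi_\varepsilon^2$, so your symmetry conclusion survives unchanged.
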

As we already know from Section \ref{sectioncontinuitylag}, the Lagrangian vanishes for spacelike separated points (in the regularized sense), and in this sense it is \textit{causal}.

The features of the regularized light cone could in principle be analyzed using the explicit expression for $b$ as in  Lemma \ref{bbessel}. In particular,
\begin{equation*}\label{conditionlightcone}
\begin{cases}
\ y\in L_x^\varepsilon\quad \mbox{if and only if}\quad\xi:=y-x\quad\mbox{satisfies}\\[0.5em]
\  4|F|^4\varepsilon^2|\boldsymbol{\xi}|^2-2\,\mathrm{Re}\big((F\,\overline{G})^2\,\xi_\varepsilon^2\big)-2|F|^2|G|^2\ \xi_\varepsilon\cdot\overline{\xi_\varepsilon}=0
\end{cases}
\end{equation*}
This, however, is not straightforward and only the behavior for small and large vectors can be analyzed, exploiting the asymptotics of the Bessel functions. Nevertheless, this goes beyonds the scope of the present paper and will not be discussed here.

\subsection{Integrability of the Lagrangian on Minkowski Space}

The integrability of the Lagrangian in Lemma \ref{bbessel} is a direct consequence of, \eqref{L2eigen}, Lemma \ref{integrabilityP} and the fact that
$$
|\lambda_\pm^{xy}|\le \big|A^\varepsilon(x,y)\big|_2\le |P^\varepsilon(x,y)|_2^2,
$$
which follows from  \eqref{commutxy} and general properties of matrices and eigenvalues.
\begin{Prp}\label{propinteglagrangian}
	For all $x\in\R^4$, 
		$$
		\int_{\R^4}\big|\lambda_\pm^{0y}\big|^2\,d^4y=\int_{\R^4}\big|\lambda_\pm^{xy}\big|^2\,d^4y<\infty.
		$$
As a consequence, 
	$$
	\int_{\R^4}	\L(\mathrm{F}^\varepsilon(x),\mathrm{F}^\varepsilon(y))\,d^4y = \int_{\R^4}	\L(\mathrm{F}^\varepsilon(0),\mathrm{F}^\varepsilon(y))\,d^4y <\infty.
	$$
	In particular, $M_{\mathrm{vac}}\subset \mathrm{Adm}(\rho_{\mathrm{vac}})$ and $\ell\circ\mathrm{F}^\varepsilon$ is constant.
\end{Prp}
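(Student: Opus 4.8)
The plan is to derive everything from Proposition~\ref{integrabilityP} together with the elementary spectral-radius estimate already announced before that proposition. First I would note that, by the proposition describing the spectrum of \eqref{closedchain2}, the numbers $\lambda_\pm^{xy}$ are precisely the eigenvalues of the matrix $A_{xy}^\varepsilon=P^{2\varepsilon}(x,y)\,P^{2\varepsilon}(y,x)$, so that $|\lambda_\pm^{xy}|\le|A_{xy}^\varepsilon|_2$. Submultiplicativity of $|\cdot|_2$ gives $|A_{xy}^\varepsilon|_2\le|P^{2\varepsilon}(x,y)|_2\,|P^{2\varepsilon}(y,x)|_2$, and invoking the identities \eqref{commutxy} (applied with $2\varepsilon$ in place of $\varepsilon$) — namely $|\gamma^0|_2=1$ and $P^{2\varepsilon}(y,x)=\gamma^0P^{2\varepsilon}(x,y)^\dagger\gamma^0$, together with the invariance of the spectral norm under taking adjoints and under conjugation by the unitary $\gamma^0$ — the right-hand side collapses to $|P^{2\varepsilon}(x,y)|_2^2$. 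Squaring and integrating,
\[
\int_{\R^4}|\lambda_\pm^{xy}|^2\,d^4y\le\int_{\R^4}|A_{xy}^\varepsilon|_2^2\,d^4y\le\int_{\R^4}|P^{2\varepsilon}(x,y)|_2^4\,d^4y<\infty
\]
by Proposition~\ref{integrabilityP} (in fact the first of the two inequalities in \eqref{finitenessP4} already suffices).

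For the independence from $x$ I would use translation invariance: the explicit formula \eqref{expansionP} shows that $P^{2\varepsilon}(x,y)$, and hence $A_{xy}^\varepsilon$ and the $\lambda_\pm^{xy}$, depend on $(x,y)$ only through $\xi=y-x$; writing $|\lambda_\pm^{xy}|=\Lambda(y-x)$ with $\Lambda(z)=|\lambda_\pm^{0z}|$, the substitution $z=y-x$ gives $\int_{\R^4}|\lambda_\pm^{xy}|^2\,d^4y=\int_{\R^4}|\lambda_\pm^{0z}|^2\,d^4z$, independently of $x$. This is the same mechanism already used for the independence claim in Proposition~\ref{integrabilityP}.

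The statement about the Lagrangian is then immediate. By \eqref{L2eigen} one has $\L(\mathrm{F}^\varepsilon(x),\mathrm{F}^\varepsilon(y))=\bigl(|\lambda_+^{xy}|-|\lambda_-^{xy}|\bigr)^2\le 2|\lambda_+^{xy}|^2+2|\lambda_-^{xy}|^2$, so integrating over $y$ and using the bounds above yields finiteness, while the translation-invariance argument again yields independence from $x$. To conclude $M_{\mathrm{vac}}\subset\mathrm{Adm}(\rho_{\mathrm{vac}})$ and constancy of $\ell\circ\mathrm{F}^\varepsilon$, I would apply the change-of-variables formula for the pushforward $\rho_{\mathrm{vac}}=(\mathrm{F}^\varepsilon)_*(d^4x)$: since $M_{\mathrm{vac}}=\mathrm{supp}\,\rho_{\mathrm{vac}}=\mathrm{F}^\varepsilon(\R^{1,3})$, for each $x$
\[
\int_{M_{\mathrm{vac}}}\L(\mathrm{F}^\varepsilon(x),\y)\,d\rho_{\mathrm{vac}}(\y)=\int_{\R^4}\L(\mathrm{F}^\varepsilon(x),\mathrm{F}^\varepsilon(y))\,d^4y,
\]
which is finite — so $\mathrm{F}^\varepsilon(x)\in\mathrm{Adm}(\rho_{\mathrm{vac}})$ — and independent of $x$, so $\ell\circ\mathrm{F}^\varepsilon$ is constant, equal to $\int_{\R^4}\L(\mathrm{F}^\varepsilon(0),\mathrm{F}^\varepsilon(y))\,d^4y$.

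As for the main difficulty: essentially all the analytic content has been externalized to Proposition~\ref{integrabilityP}, whose proof via the Bessel asymptotics is carried out in Appendix~\ref{appendixproofL4}, so the present argument is only bookkeeping. The two points that require a little care are keeping consistent track of the doubled regularization parameter $2\varepsilon$ throughout the chain of inequalities, and making sure the unitary identification $S_{\mathrm{F}^\varepsilon(x)}\cong\C^4$ from \eqref{id2}–\eqref{id3} is used correctly so that the abstract Lagrangian of Proposition~\ref{proplagrangian} genuinely coincides with the concrete expression \eqref{L2eigen} in terms of $P^{2\varepsilon}$ — this is what legitimizes the reduction to Proposition~\ref{integrabilityP}.
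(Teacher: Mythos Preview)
Your proposal is correct and follows essentially the same approach as the paper: the spectral-radius bound $|\lambda_\pm^{xy}|\le|A_{xy}^\varepsilon|_2\le|P^{2\varepsilon}(x,y)|_2^2$ via \eqref{commutxy}, followed by Proposition~\ref{integrabilityP}, then translation invariance for the $x$-independence, and finally \eqref{L2eigen} for the Lagrangian. If anything, you are slightly more explicit than the paper in spelling out the pushforward identity for $\rho_{\mathrm{vac}}$ and in keeping track of the $2\varepsilon$ versus $\varepsilon$ bookkeeping.
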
 
As in Proposition \ref{integrabilityP}, the independence of the integrals above from the point $x$  is a direct manifestation of the translation invariance of the Dirac sea.

\vspace{0.5em}

\section{H\"older Continuity in Minkowski Space}\label{sectionholdervariation}

Up to now, we have evaluated the abstract Lagrangian \eqref{Lagrangian} only on spacetime points of the form $\x=\mathrm{F}^\varepsilon(x)$. However, the Lagrangian contributes to the causal action also through points which lie outside the support of the measure $\rho_{\rm vac}$. Studying which other points in $\F$ are actually \textit{admissible} for this measure is our next goal. The determination of \textit{all of these points} is  out of reach, and hence one needs to tackle the problem in a different way. We proceed in connection with Section \ref{sectionholder} by studying the behavior of the integrated Lagrangian \eqref{ell} under \textit{perturbations} of spacetime points.

\subsection{Regular Variations of Spacetime}

The goal of this section is to study the admissibility of points of $\F$ which lie in a neighborhood of $M_{\mathrm{vac}}$. This is achieved by implementing \textit{variations} of the local correlation functions. Such variations can be obtained, for example, by changing the regularization parameter $\varepsilon>0$. This and other few examples will be discussed later in Section \ref{sectionexamples}. In this paper we will be concerned with the following type of variations.

\begin{Def}\label{regularvariation}
	A \textbf{regular variation} of the local correlation function is a mapping
	$$
	\mathrm{F}\in C^0(I\times\R^{4},\F^\reg)\quad\mbox{with $I=(-\delta,\delta)\subset\R$ and }\ \mathrm{F}(0,\,\cdot\,)=\mathrm{F}^\varepsilon.
	$$
\end{Def}

Such a definition is quite general as we will see later in Section \ref{sectionexamples}. In principle, however, one could also include the following additional conditions: For all $\lambda\in I$
\begin{itemize}[leftmargin=2.5em]
	\item[\rm{(i)}] $\im \mathrm{F}(\lambda,\,\cdot\,)\subset\F^\reg$ is closed,\\[-0.8em]
	\item[\rm{(ii)}] $\mathrm{F}(\lambda,\,\cdot\,)$ is a homeomorphism onto its image,
\end{itemize} 
which hold true in the case of a Dirac sea vacuum (see (i)-(ii) in Section \ref{subsectionCFSM}).
 Under these assumptions, one would obtain that
\begin{equation*}\label{rholambda}
\begin{split}
M_\lambda :=\supp \rho_\lambda=\mathrm{F}(\lambda,\R^4),\quad\mbox{where  }\ \rho_\lambda:=\mathrm{F}(\lambda,\,\cdot\,)_*(d^4x).
\end{split}
\end{equation*}
Moreover, the composition
\begin{equation}\label{homeomorphism}
\mathrm{F}(\lambda,\,\cdot\,)\circ \mathrm{F}(0,\,\cdot\,)^{-1}: M_{\mathrm{vac}}\rightarrow M_\lambda,
\end{equation}
would be a homeomorphism between Dirac sea vacuum spacetime and its perturbation.
Nevertheless, for the sake of generality, we will not assume such conditions at this stage. The drawback is that, in this case 
$$
M_\lambda:=\supp \mathrm{F}(\lambda,\,\cdot\,)_*(d^4 x)=\overline{\mathrm{F}(\lambda,\R^{4})},
$$
and the function \ref{homeomorphism} between spacetimes may not be invertible.

To every regular variation $\mathrm{F}$, a new class of spin spaces is assigned, which depend continuously on the parameter $\lambda\in I$:
$$
S_{\mathrm{F}(\lambda,x)}:= \mathrm{F}(\lambda, x)(\H_m^-),\quad S_{\mathrm{F}(0,x)}=S_{\mathrm{F}^\varepsilon(x)}.
$$

From Proposition \ref{existenceframe}, we know that it is possible to find a local Hilbert frame around any point $(\lambda,x)$. However, this is not needed for what comes next and the following weaker notion will suffice for the goals of this section.
\begin{Def}
	Let $X$ be a set and $f:X\rightarrow \F^\reg$. A family $\underline{e}:=\{e_1,e_2,e_3,e_{4}\}$ of functions $e_\mu:X\rightarrow\H$ such that
	\begin{equation}
		\{ e_\mu(x),\, \mu=1,2,3,4\}\quad\mbox{is a Hilbert basis of $S_{f(x)}$ for all $x\in X$},
	\end{equation}
is referred to as a \textbf{bare Hilbert frame} of $f$.
\end{Def}

Clearly, every Hilbert frame as in Proposition \ref{existenceframe} is in particular a bare Hilbert frame. The term \textit{bare} was chosen with the intention of emphasizing that, differently from \eqref{existenceframe}, no topological assumptions are made on $X$, $f$ and $e_\mu$.
\begin{Lemma}\label{lemmafirstestimate}
For any regular variation $F$ and  any bare Hilbert frame $\underline{e}$ of $F$,
	\begin{equation}\label{estimate1}
		\int_{\R^4}\left\|\mathrm{P}(\mathrm{F} (\lambda, x),\mathrm{F}^\varepsilon(y))\right\|^4\,d^4y\le C|P^{2\varepsilon}(0,0)|^2_2\,\sum_{\mu=1}^{4}\|\gR_{\varepsilon}\, e_\mu(\lambda,x)\|^4_{L^4},
	\end{equation}
where the constant $C>0$  is independent of both $F$ and $\underline{e}$.
\end{Lemma}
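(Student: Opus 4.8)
The plan is to control the integrand pointwise in $y$ by the Hilbert--Schmidt norm $\|\cdot\|_{\mathrm{HS}}$, to convert that norm—via the explicit action of the local correlation operator—into the four pointwise values $\big(\gR_\varepsilon e_\mu(\lambda,x)\big)(y)$ of the frame vectors, and only then to integrate in $y$.

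Fix $(\lambda,x)\in I\times\R^4$ and abbreviate $\x:=\mathrm{F}(\lambda,x)\in\F^\reg$ and, for $y\in\R^4$, $\y:=\mathrm{F}^\varepsilon(y)$. As an operator on $\H_m^-$, \eqref{Pxydef} reads $\mathrm{P}(\x,\y)=\pi_\x\,\mathrm{F}^\varepsilon(y)\,\pi_\y$, and since $\mathrm{F}^\varepsilon(y)$, $\pi_\x$ and $\pi_\y$ are self-adjoint, $\mathrm{P}(\x,\y)^*=\pi_\y\,\mathrm{F}^\varepsilon(y)\,\pi_\x$. Because $\mathrm{P}(\x,\y)$ has range contained in the four-dimensional subspace $S_\x$, which the bare Hilbert frame $\underline e$ equips with the orthonormal basis $e_1(\lambda,x),\dots,e_4(\lambda,x)$, the adjoint is supported in $S_\x$, so that
$$
\big\|\mathrm{P}(\x,\y)\big\|^2\le\big\|\mathrm{P}(\x,\y)\big\|_{\mathrm{HS}}^2=\sum_{\mu=1}^4\big\|\,\pi_\y\,\mathrm{F}^\varepsilon(y)\,e_\mu(\lambda,x)\,\big\|_m^2 .
$$
Since $\mathrm{F}^\varepsilon(y)$ maps into its own range $S_\y$, the projector $\pi_\y$ is redundant, and property~(iv) of Section~\ref{subsectionCFSM} rewrites the remaining term as
$$
\mathrm{F}^\varepsilon(y)\,e_\mu(\lambda,x)=2\pi\,P^\varepsilon(\,\cdot\,,y)\,\big(\gR_\varepsilon e_\mu(\lambda,x)\big)(y)\;\in\;\H_m^- .
$$
This is the decisive move: the abstract operator $\mathrm{F}^\varepsilon(y)$ is traded for the concrete regularized kernel applied to the pointwise value of $\gR_\varepsilon e_\mu(\lambda,x)$.

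It then remains to bound the linear map $P^\varepsilon(\,\cdot\,,y)\colon\C^4\to\H_m^-$, $a\mapsto P^\varepsilon(\,\cdot\,,y)a$, uniformly in $y$. Property~(vii) of Section~\ref{subsectionCFSM} with coinciding spacetime arguments gives, for all $a\in\C^4$,
$$
2\pi\,\big\|P^\varepsilon(\,\cdot\,,y)\,a\big\|_m^2=-\Sl a\,|\,P^{2\varepsilon}(y,y)\,a\Sr\le\big|P^{2\varepsilon}(y,y)\big|_2\,|a|^2 ,
$$
the last step using $|\gamma^0|_2=1$ (cf.~\eqref{commutxy}) and Cauchy--Schwarz on $\C^4$. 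By translation invariance $P^{2\varepsilon}(y,y)=P^{2\varepsilon}(0,0)$, hence $\|P^\varepsilon(\,\cdot\,,y)\|_{\C^4\to\H_m^-}^2\le(2\pi)^{-1}\big|P^{2\varepsilon}(0,0)\big|_2$. Substituting this into the previous displays, each summand is at most $(2\pi)^2(2\pi)^{-1}\big|P^{2\varepsilon}(0,0)\big|_2\,\big|\big(\gR_\varepsilon e_\mu(\lambda,x)\big)(y)\big|^2$, so that
$$
\big\|\mathrm{P}(\x,\y)\big\|_{\mathrm{HS}}^2\le 2\pi\,\big|P^{2\varepsilon}(0,0)\big|_2\sum_{\mu=1}^4\big|\big(\gR_\varepsilon e_\mu(\lambda,x)\big)(y)\big|^2 .
$$

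To conclude I would square once more and use $\big(\sum_{\mu=1}^4 c_\mu\big)^2\le 4\sum_{\mu=1}^4 c_\mu^2$ to get the pointwise estimate
$$
\big\|\mathrm{P}\big(\mathrm{F}(\lambda,x),\mathrm{F}^\varepsilon(y)\big)\big\|^4\le 16\pi^2\,\big|P^{2\varepsilon}(0,0)\big|_2^2\sum_{\mu=1}^4\big|\big(\gR_\varepsilon e_\mu(\lambda,x)\big)(y)\big|^4 ,
$$
and then integrate in $y$ over $\R^4$; both sides being nonnegative and measurable, the inequality is preserved (and trivially true if the right-hand side integrates to $+\infty$), yielding \eqref{estimate1} with the explicit value $C=16\pi^2$, which is manifestly independent of $\mathrm{F}$ and of $\underline e$. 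The only step that needs insight rather than computation is recognizing that property~(iv) is exactly the bridge replacing $\mathrm{F}^\varepsilon(y)e_\mu(\lambda,x)$ by the regularized wave function evaluated along the frame, together with the extraction of the sharp factor $|P^{2\varepsilon}(0,0)|_2$ from property~(vii); everything else is routine finite-rank bookkeeping. I note finally that \eqref{estimate1} carries content only once one knows that $\gR_\varepsilon e_\mu(\lambda,x)\in L^4(\R^4,\C^4)$—establishing that finiteness in concrete realizations is precisely the task of the examples below and lies outside this lemma.
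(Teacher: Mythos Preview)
Your proof is correct and follows essentially the same route as the paper's: bound the operator norm by the Hilbert--Schmidt norm, expand the latter over the bare Hilbert frame of $S_{\mathrm{F}(\lambda,x)}$, invoke property~(iv) to replace $\mathrm{F}^\varepsilon(y)e_\mu$ by $2\pi\,P^\varepsilon(\,\cdot\,,y)(\gR_\varepsilon e_\mu)(y)$, and use property~(vii) together with translation invariance to extract $|P^{2\varepsilon}(0,0)|_2$. Your treatment is in fact slightly tighter, since you use $\|\cdot\|\le\|\cdot\|_{\mathrm{HS}}$ with constant $1$ (the paper introduces an unnecessary spin-dimension-dependent constant at that step) and track the explicit value $C=16\pi^2$.
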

\begin{proof}
	Let us analyze the integrand. Let $\|\cdot\|_{\mathrm{HS}}$ denote the Hilbert-Schmidt norm. Then, there is $C>0$ which depends only the spin dimension such that, for all $\lambda,x,y$,\\[-1em]
	\begin{equation*}
		\begin{split}
			\left\|\mathrm{P}(\mathrm{F}(\lambda, x),\mathrm{F}^\varepsilon(y))\right\|^2&=\|\pi_{\mathrm{F}(\lambda,x)}\mathrm{F}^\varepsilon(y)\|^2\le C\|\pi_{\mathrm{F}(\lambda, x)}\mathrm{F}^\varepsilon(y)\|^2_{HS}=C\|\mathrm{F}^\varepsilon(y)\pi_{\mathrm{F}(\lambda, x)}\|^2_{HS}.
		\end{split}
	\end{equation*}	
At this point, using the fact that $\{e_\mu(x),\, \mu=1,2,3,4\}$ is by definition a Hilbert basis of $S_{\mathrm{F}(\lambda,x)}$ and using the definition of Hilbert-Schmidt norm, we obtain:
\begin{equation*}
	\begin{split}
			\|\mathrm{F}^\varepsilon(y)\pi_{\mathrm{F}(\lambda, x)}\|^2_{HS}&= \sum_{\mu=1}^{4}\la 	e_\mu(\lambda,x)|\pi_{\mathrm{F}^\varepsilon(\lambda, x)}\mathrm{F}^\varepsilon(y)\mathrm{F}^\varepsilon(y)\pi_{F^\varepsilon(\lambda, x)}e_\mu(\lambda,x)\ra_m\\
			&=(2\pi)^2\,\sum_{\mu=1}^{4}\la 	P^\varepsilon(\,\cdot\,,y)\gR_{\varepsilon} e_\mu(\lambda,x)(y)|P^\varepsilon(\,\cdot\,,y)\gR_{\varepsilon} e_\mu(\lambda,x)(y)\ra_m\\
			&=2\pi\,\sum_{\mu=1}^{4}|\Sl \gR_{\varepsilon}\, e_\mu(\lambda,x) (y)| P^{2\varepsilon}(y,y)\,\gR_{\varepsilon}\, e_\mu(\lambda,x)(y)\Sr|\\
			&\le 2\pi\,\,|P^{2\varepsilon}(0,0)|_2\,\sum_{\mu=1}^{4}|\gR_{\varepsilon}\,e_\mu(\lambda,x) (y)|^2,
		\end{split}
	\end{equation*}
where we used the properties of $\mathrm{F}^\varepsilon$ in Section \ref{subsectionCFSM} and the fact that $P^\varepsilon(y,y)$ is constant in $y$, due to translation invariance. The claim follows from  $(a+b)^2\le 2(a^2+b^2)$.
\end{proof}

We now improve these estimates, by showing that the $L^4(\R^4,\C^4)$ norm of the wave functions in \eqref{estimate1} can be replaced by the $L^1(\R^3,\C^4)$ norm of the corresponding initial data, giving in this way better control on the estimates. To this aim, let us introduce the notation (see the beginning of Section \ref{sectiondiracequation})
\begin{equation*}\label{form22}
u_\psi:=\mathrm{E}_0(\psi)\in\H_m,\quad\psi\in L^2(\R^3,\C^4).
\end{equation*}
We now focus our analysis on the space of solutions with integrable initial data. 
\begin{Lemma}
The space of \textit{negative energy solutions with $L^1$ initial data}
 $$
 \H(L^1):=\{u_\psi\in\H_m^-\:|\:\psi\in  L^1(\R^3,\C^4)\}
 $$
 is a dense subspace of $\H_m^-$. 
Moreover, for all $u_\psi\in \H(L^1)$,
\begin{equation}\label{integralexpress}
\gR_{\varepsilon} u_\psi(x)=\int_{\R^3} P^\varepsilon(x,0,\V{z})\,\gamma^0\,\psi(\V{z})\,d^3\V{z}\quad \mbox{  for all $x\in\R^4$}
\end{equation}
\end{Lemma}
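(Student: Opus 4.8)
The plan is to treat the two statements in turn, using throughout the unitary isomorphism $\mathrm{E}_0:L^2(\R^3,\C^4)\to\H_m$ and the fact that the negative-energy projector on $\H_m$ is the lift $\bI_m^-=\mathrm{E}_0\,\bI^-\,\mathrm{E}_0^{-1}$ of the negative spectral projector $\bI^-$ of the Dirac Hamiltonian. First I would note that $\H(L^1)$ is a linear subspace, which is immediate from $a\,u_\psi+b\,u_{\psi'}=u_{a\psi+b\psi'}$, the defining conditions ($\psi\in L^1\cap L^2$ and $u_\psi\in\H_m^-$, equivalently $\bI^-\psi=\psi$) being stable under linear combinations. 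For density, I would take an arbitrary $u\in\H_m^-$ and set $\phi:=\mathrm{E}_0^{-1}(u)$, so that $\bI^-\phi=\phi$, and approximate $\phi$ in $L^2$ by a sequence $\varphi_n\in\mathcal{D}(\R^3,\C^4)$. Invoking the fact recalled in Section~\ref{sectioncausalprop} that $\bI^\pm$ maps compactly supported smooth functions into Schwartz functions, one gets $\bI^-\varphi_n\in\mathcal{S}(\R^3,\C^4)\subset L^1(\R^3,\C^4)\cap L^2(\R^3,\C^4)$ with $\bI^-(\bI^-\varphi_n)=\bI^-\varphi_n$, hence $u_{\bI^-\varphi_n}\in\H(L^1)$; and since $\mathrm{E}_0$ is unitary, $\|\bI^-\|\le1$ and $\bI^-\phi=\phi$,
\[
\|u-u_{\bI^-\varphi_n}\|_m=\|\phi-\bI^-\varphi_n\|_{L^2}=\|\bI^-(\phi-\varphi_n)\|_{L^2}\le\|\phi-\varphi_n\|_{L^2}\longrightarrow 0\quad(n\to\infty),
\]
which proves that $\H(L^1)$ is dense in $\H_m^-$.

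For the integral representation, I would first observe that any $u_\psi\in\H(L^1)$ necessarily has $\psi\in L^2(\R^3,\C^4)$ (so that $u_\psi=\mathrm{E}_0(\psi)$ is well-defined) and, since $u_\psi\in\H_m^-$ and $\mathrm{E}_0$ is injective, $\bI^-\psi=\psi$. Therefore $u_\psi=\mathrm{E}_0(\bI^-\psi)=\mathrm{E}^-_0(\psi)$, where $\mathrm{E}^-_s:=\mathrm{E}_s\circ\bI^-$ in accordance with the definition of $\mathrm{E}^-$ in~\eqref{Edistripm}. Since moreover $\psi\in L^1(\R^3,\C^4)$, property~(iii) of the regularized kernel in the list following~\eqref{Pregvacuum}, applied at time $s=0$, yields
\[
\gR_{\varepsilon}u_\psi(x)=\gR_{\varepsilon}\big(\mathrm{E}^-_0(\psi)\big)(x)=\int_{\R^3}P^\varepsilon(x,0,\V{z})\,\gamma^0\,\psi(\V{z})\,d^3\V{z}\qquad\text{for all }x\in\R^4,
\]
the right-hand side being an absolutely convergent Lebesgue integral since $P^\varepsilon(x,0,\cdot)$ is bounded (property~(i) in the same list) and $\psi\in L^1$.

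I do not expect a substantial obstacle here: the argument is essentially a matter of assembling earlier facts. The only points needing a little care are the correct use of the regularizing property $\bI^\pm:\mathcal{D}\to\mathcal{S}$ in the density step, and, in the second step, the identification $\mathrm{E}^\pm_s=\mathrm{E}_s\circ\bI^\pm$ so that property~(iii) after~\eqref{Pregvacuum} applies verbatim; one should also record the harmless remark that membership in $\H(L^1)$ forces $\psi\in L^1\cap L^2$, which is exactly what makes that property applicable.
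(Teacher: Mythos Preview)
Your proof is correct and follows essentially the same approach as the paper. For density, the paper invokes $\bI^-(\mathcal{S}(\R^3,\C^4))\subset\mathcal{S}(\R^3,\C^4)\subset L^1(\R^3,\C^4)$ (approximating by Schwartz functions rather than by $\mathcal{D}$, but the mechanism is identical), and for the integral formula the paper simply points to property~(iii) after~\eqref{Pregvacuum}, exactly as you do.
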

\begin{proof}
	The denseness follows from the fact that  	(see for example \cite[Lemma 2.17]{oppio}).
	$$
	\bI^-(S(\R^3,\C^4))\subset S(\R^3,\C^4)\subset L^1(\R^3,\C^4).
	$$
	The second statement appears already at the end of  Section \ref{sectionreg}.
\end{proof}
As a particular example, we see from \eqref{L1initialdata} that, for all $x\in\R^4$ and $a\in\C^4$,
\begin{equation}\label{PareL1}
	P^\varepsilon(\,\cdot\,,x)a\in \H(L^1).
\end{equation}
\begin{Remark}
	Note that the specific choice $\{t=0\}$ in the definition of $\H(L^1)$ has no relevance in what follows and was made out of mere simplicity. The forthcoming analysis applies identically to any other choice of Cauchy surface $\{t=\mathrm{const}\}$.
	It is important to stress, however, that the definition of  $\H(L^1)$ depends on the choice made, in that the condition of being $L^1$ is in general not preserved in time. As a consequence, different Cauchy surfaces may correspond to different spaces $\H(L^1)$. This ambiguity could be removed by making additional assumptions: For example, notice that a solution is of Schwartz type at a given time if and only if it is of  Schwartz type at every time (see the general form in \cite[Proposition 2.19]{oppio}).  Nevertheless, the existence of a single instant of time with the above property suffices for our purposes and the mentioned ambiguity will therefore no longer addressed in this paper.
\end{Remark}
\begin{Lemma}\label{L4intR}
 For any $u_\psi\in \H(L^1)$ the following inequality holds,
	$$
	\|\gR_{\varepsilon} u_\psi\|_{L^4}\le \|P^\varepsilon(0,\,\cdot\,)\|_{L^4}\,\|\psi\|_{L^1}<\infty.
	$$
	In particular,
	 $
	 \gR_{\varepsilon}(\H(L^1))\subset L^4(\R^4,\C^4).
	 $
\end{Lemma}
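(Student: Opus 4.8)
The plan is to estimate the $L^4$ norm of $\gR_\varepsilon u_\psi$ by viewing the integral representation \eqref{integralexpress} as a convolution-type operator in the spatial variable and applying Minkowski's integral inequality. Recall from \eqref{integralexpress} that for $u_\psi\in\H(L^1)$ one has, for every $x=(t,\V{x})\in\R^4$,
$$
\gR_\varepsilon u_\psi(x)=\int_{\R^3}P^\varepsilon(x,0,\V{z})\,\gamma^0\,\psi(\V{z})\,d^3\V{z}=\int_{\R^3}P^\varepsilon(t,\V{x},0,\V{z})\,\gamma^0\,\psi(\V{z})\,d^3\V{z}.
$$
By translation invariance of the Dirac sea (see \eqref{3D}, where $P^\varepsilon$ depends only on the difference of the arguments), the kernel satisfies $P^\varepsilon(t,\V{x},0,\V{z})=P^\varepsilon(t,\V{x}-\V{z},0,\V{0})$, so the right-hand side is a spatial convolution of the fixed kernel $z\mapsto P^\varepsilon(\,\cdot\,,z)\gamma^0$ (evaluated suitably) with $\psi$.

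First I would apply Minkowski's integral inequality for the $L^4(\R^4,d^4x)$ norm: writing $\gR_\varepsilon u_\psi(x)=\int_{\R^3}P^\varepsilon(x,0,\V{z})\,\gamma^0\,\psi(\V{z})\,d^3\V{z}$, one gets
$$
\|\gR_\varepsilon u_\psi\|_{L^4(\R^4)}\le\int_{\R^3}\big\|P^\varepsilon(\,\cdot\,,0,\V{z})\,\gamma^0\,\psi(\V{z})\big\|_{L^4(\R^4,d^4x)}\,d^3\V{z}\le\int_{\R^3}\big\|P^\varepsilon(\,\cdot\,,0,\V{z})\big\|_{L^4(\R^4)}\,|\psi(\V{z})|\,d^3\V{z},
$$
where in the last step one uses $|\gamma^0|_2=1$ (see \eqref{commutxy}) to bound $|P^\varepsilon(x,0,\V{z})\gamma^0\psi(\V{z})|\le|P^\varepsilon(x,0,\V{z})|_2\,|\psi(\V{z})|$. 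Then, again by translation invariance, $\|P^\varepsilon(\,\cdot\,,0,\V{z})\|_{L^4(\R^4)}=\|P^\varepsilon(\,\cdot\,,0,\V{0})\|_{L^4(\R^4)}=\|P^\varepsilon(0,\,\cdot\,)\|_{L^4}$ is a constant independent of $\V{z}$, which factors out and leaves $\int_{\R^3}|\psi(\V{z})|\,d^3\V{z}=\|\psi\|_{L^1}$. This yields exactly the claimed bound $\|\gR_\varepsilon u_\psi\|_{L^4}\le\|P^\varepsilon(0,\,\cdot\,)\|_{L^4}\,\|\psi\|_{L^1}$.

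The one nontrivial ingredient — and the main obstacle — is showing $\|P^\varepsilon(0,\,\cdot\,)\|_{L^4}<\infty$, i.e.\ the $L^4$-integrability of the (unregularized-in-one-slot, $\varepsilon$-regularized) kernel over all of $\R^4$. But this is precisely the content of Proposition \ref{integrabilityP}: the finiteness of $\int_{\R^4}|P^{2\varepsilon}(x,y)|_2^4\,d^4y$, together with the independence of this integral from $x$ and a trivial rescaling of $\varepsilon$, gives $\|P^\varepsilon(0,\,\cdot\,)\|_{L^4}<\infty$. (One should only note that the $\varepsilon$ appearing here versus the $2\varepsilon$ in Proposition \ref{integrabilityP} is immaterial since $\varepsilon>0$ is an arbitrary fixed parameter.) Granting that, all remaining steps are the standard Minkowski-inequality manipulation above, and the final assertion $\gR_\varepsilon(\H(L^1))\subset L^4(\R^4,\C^4)$ follows immediately since every element of $\H(L^1)$ has $\|\psi\|_{L^1}<\infty$ by definition.
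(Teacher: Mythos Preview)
Your proof is correct and follows essentially the same strategy as the paper's: the paper expands $|\gR_\varepsilon u_\psi(x)|^4$ as a fourfold integral in $\V{z}_1,\dots,\V{z}_4$, integrates in $x$, and applies Fubini together with the generalized H\"older inequality and translation invariance---which is precisely Minkowski's integral inequality for $p=4$ written out by hand. Your invocation of Minkowski's inequality packages the same argument more succinctly, and the finiteness of $\|P^\varepsilon(0,\,\cdot\,)\|_{L^4}$ via Proposition~\ref{integrabilityP} (with the harmless $\varepsilon$ versus $2\varepsilon$ rescaling you note) is exactly what the paper uses.
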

\begin{proof} From \eqref{integralexpress} it follows that, for every $u_\psi\in \H(L^1)$, 
\begin{equation*}
	\begin{split}
	|\gR_{\varepsilon} u_\psi(x) |^4&\le \int_{\R^{3}}\!d^3\V{z}_1\!\int_{\R^{3}}\!d^3\V{z}_2\!\int_{\R^{3}}\!d^3\!\V{z}_3\int_{\R^{3}}\!d^3\V{z}_4|\psi(\V{z}_1)||\psi(\V{z}_2)||\psi(\V{z}_3)||\psi(\V{z}_4)|\times\\[0.1em]
			&\qquad \qquad\qquad\times |P^\varepsilon(x,0,\V{z}_1)|_2|P^\varepsilon(x,0,\V{z}_2)|_2|P^\varepsilon(x,0,\V{z}_3)|_2|P^\varepsilon(x,0,\V{z}_4)|_2\\
\end{split}
\end{equation*}
Thus, from Fubini Theorem, the generalized H\"older inequality and the translational invariance of $P^\varepsilon$, 
\begin{equation*}
\begin{split}
			\int_{\R^4}|\gR_{\varepsilon} u_\psi(x)|^4\,d^4x&\le \left(\int_{\R^{3}}|\psi(\V{z})|\,d^3\V{z}\right)^4\,\int_{\R^4}|P^\varepsilon(x,0)|_2^4\,d^4x,
	\end{split}
\end{equation*}
 The last integral is finite, thanks to Proposition~\ref{integrabilityP}. 
\end{proof}
With this in mind, we now focus on trasformations of the local correlation functions whose corresponding spin spaces are generated by $L^1$ initial data.
\begin{Def}
	A regular variation $\mathrm{F}$ is  of \textbf{$\boldsymbol{L^1}$-type} if for all $\lambda\in I$ and $x\in \R^4$,
	$$
	S_{\mathrm{F}(\lambda,x)}=\mathrm{Im}\, \mathrm{F}(\lambda,x)\subset \H(L^1).
	$$
\end{Def}
 For such transformations, one can combine Lemma \ref{lemmafirstestimate} with Lemma \ref{L4intR} and improve estimates \eqref{estimate1}, getting better control on the $L^4$ norm. 
\begin{Prp}\label{prplastestimate}
Let $\mathrm{F}$ be of $L^1$-type and $\underline{e}$ a bare Hilbert frame of $F$ with
 \begin{equation}\label{form}
 e_\mu(\lambda,x)=u_{\psi_\mu(\lambda,x)}\quad\mbox{where  }\ \psi_\mu(\lambda,x)\in L^1(\R^3).
 \end{equation}
  Then, for all $\lambda\in I$ and $x\in\R^4$,
\begin{equation*}\label{estimate2}
\int_{\R^4}\left\|\mathrm{P}(\mathrm{F}(\lambda,x),\mathrm{F}^\varepsilon(y))\right\|^4\,d^4y\le K\,\sum_{\mu=1}^{4}\|\psi_\mu(\lambda,x)\|_{L^1}^4
\end{equation*}
where the constant $K$ can be chosen as $C\,|P^{2\varepsilon}(0,0)|_2^2\,\|P^\varepsilon(0,\,\cdot\,)\|^{4}_{L^4}$.
\end{Prp}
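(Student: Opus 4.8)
The plan is to combine the two lemmas just proven in a completely mechanical way, so this should be a short proof with no real obstacle. First I would invoke Lemma \ref{lemmafirstestimate}, which applies since $\mathrm{F}$ is in particular a regular variation and $\underline{e}$ is a bare Hilbert frame of $\mathrm{F}$: this gives
\[
\int_{\R^4}\left\|\mathrm{P}(\mathrm{F}(\lambda,x),\mathrm{F}^\varepsilon(y))\right\|^4\,d^4y\le C\,|P^{2\varepsilon}(0,0)|^2_2\,\sum_{\mu=1}^{4}\|\gR_\varepsilon\,e_\mu(\lambda,x)\|^4_{L^4}.
\]

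Next, I would use the hypothesis that $\mathrm{F}$ is of $L^1$-type together with the assumed form \eqref{form} of the frame, namely $e_\mu(\lambda,x)=u_{\psi_\mu(\lambda,x)}$ with $\psi_\mu(\lambda,x)\in L^1(\R^3,\C^4)$. Since $e_\mu(\lambda,x)\in S_{\mathrm{F}(\lambda,x)}\subset\H(L^1)$, Lemma \ref{L4intR} applies to each $u_{\psi_\mu(\lambda,x)}$ and yields
\[
\|\gR_\varepsilon\,e_\mu(\lambda,x)\|_{L^4}=\|\gR_\varepsilon\,u_{\psi_\mu(\lambda,x)}\|_{L^4}\le \|P^\varepsilon(0,\,\cdot\,)\|_{L^4}\,\|\psi_\mu(\lambda,x)\|_{L^1},
\]
which in particular is finite. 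Raising this to the fourth power and summing over $\mu$ gives $\sum_\mu \|\gR_\varepsilon e_\mu(\lambda,x)\|^4_{L^4}\le \|P^\varepsilon(0,\,\cdot\,)\|^4_{L^4}\sum_\mu\|\psi_\mu(\lambda,x)\|^4_{L^1}$.

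Substituting this bound into the inequality from Lemma \ref{lemmafirstestimate} produces exactly the claimed estimate with $K=C\,|P^{2\varepsilon}(0,0)|_2^2\,\|P^\varepsilon(0,\,\cdot\,)\|^4_{L^4}$, where $C$ is the spin-dimension-dependent constant of Lemma \ref{lemmafirstestimate}. I would note that $\|P^\varepsilon(0,\,\cdot\,)\|_{L^4}<\infty$ is guaranteed by Proposition \ref{integrabilityP} (via Lemma \ref{L4intR}) and that $|P^{2\varepsilon}(0,0)|_2$ is manifestly finite, so $K$ is a genuine finite constant independent of $\lambda$ and $x$. There is no real obstacle here: the entire content is already contained in the two preceding lemmas, and this proposition merely records their composition under the $L^1$-type hypothesis. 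The only point requiring a line of care is checking that the hypotheses of both lemmas are met — that $\underline{e}$ is simultaneously a bare Hilbert frame (needed for Lemma \ref{lemmafirstestimate}) and consists of solutions with $L^1$ initial data lying in $\H(L^1)$ (needed for Lemma \ref{L4intR}) — which is exactly what the definition of $L^1$-type together with \eqref{form} provides.
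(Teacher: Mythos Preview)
Your proof is correct and matches the paper's own approach exactly: the paper states just before this proposition that one simply combines Lemma \ref{lemmafirstestimate} with Lemma \ref{L4intR}, which is precisely what you do. The constant $K$ you obtain is the one asserted in the statement, and your verification that the hypotheses of both lemmas are met is appropriate.
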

This estimate can now be exploited in order to prove local H\"older continuity of the Lagrangian along the continuous paths in $\F^\reg$ generated by the variation $\mathrm{F}$.
We first recall that, due to translation invariance, every
$\mathrm{F}^\varepsilon(x)$ is unitarily equivalent to $\mathrm{F}^\varepsilon(0)$.
The factor $\|\mathrm{g}(\mathrm{F}^\varepsilon(x))\|$ in \eqref{boundintP} is then constant and Theorem \ref{teoremaholder} applies. This, together with Proposition \ref{prp4admiss}, gives the next conclusive result.
\begin{Thm}\label{teoremvariationmionk}
	Let $\mathrm{F}$ of $L^1$-type. Then, the following statements hold.\\[-0.8em]
	\begin{itemize}[leftmargin = 2em]
		\item[{\rm (i)}] $\mathrm{F}(\lambda,\R^4)\subset \mathrm{Adm}(\rho_{\rm vac})$ for all $\lambda\in I$\\[-0.5em]
		\item[{\rm (ii)}] Let $\underline{e}$ be a bare Hilbert frame of $F$ 
		on an open set $W\subset I\times \R^4
		$ such that, referring to notation \eqref{form},
		\begin{equation}\label{boundpsi}
		\sup\big\{\|\psi_\mu(\lambda,x)\|_{L^1}\:\big|\:  (\lambda,x)\in W,\ \mu=1\dots 4\,\big\}< \infty.
		\end{equation}
		Then, every $(\lambda_0,x_0)\in W$ has an open neighborhood $W_1\subset W$ and  $C>0$, such that for all $(\lambda,x)\in W_1$,
		$$
		|\ell(\mathrm{F}(\lambda,x))-\ell(\mathrm{F}(\lambda_0,x_0))|\le C\|\mathrm{F}(\lambda_0,x_0)\|^{2-\alpha}\,\|\mathrm{F}(\lambda,x)-\mathrm{F}(\lambda_0,x_0)\|^\alpha
		$$
		In particular, the function $\ell\circ \mathrm{F}$ is continuous on $W$.
	\end{itemize}
\end{Thm}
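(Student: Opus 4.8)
The plan is to obtain both assertions as corollaries of Theorem \ref{teoremaholder} and Proposition \ref{prp4admiss} (both of which apply, since $\rho_{\rm vac}$ is regular: $M_{\rm vac}=\mathrm{F}^\varepsilon(\R^{1,3})\subset\F^\reg$), using Proposition \ref{prplastestimate} to verify the integrability hypotheses \eqref{intPpower4} and \eqref{boundintP} and translation invariance to dispose of the weight $\|\mathrm{g}(\,\cdot\,)\|$. The key preliminary observation, valid for an arbitrary $\z\in\F$, is the following identity. Since $\rho_{\rm vac}=(\mathrm{F}^\varepsilon)_*(d^4y)$ and, by point (iii) of Section \ref{subsectionCFSM}, all operators $\mathrm{F}^\varepsilon(y)$ are mutually unitarily equivalent, the generalized inverses $\mathrm{g}(\mathrm{F}^\varepsilon(y))$ are unitarily equivalent as well, so $\|\mathrm{g}(\mathrm{F}^\varepsilon(y))\|\equiv g_0:=\|\mathrm{g}(\mathrm{F}^\varepsilon(0))\|$ is constant on $M_{\rm vac}$ (cf.\ the discussion preceding the statement). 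Changing variables through the push-forward therefore gives
\begin{equation*}
\int_{M_{\rm vac}}\|\mathrm{P}(\z,\y)\|^4\,\|\mathrm{g}(\y)\|^2\,d\rho_{\rm vac}(\y)=g_0^2\int_{\R^4}\|\mathrm{P}(\z,\mathrm{F}^\varepsilon(y))\|^4\,d^4y,
\end{equation*}
where for fixed $\z$ the integrand on the left is continuous on $\F^\reg\supset M_{\rm vac}$, so all integrals are well defined.

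For part (i), fix $\lambda\in I$ and $x\in\R^4$. Because $\mathrm{F}$ is of $L^1$-type we have $S_{\mathrm{F}(\lambda,x)}\subset\H(L^1)$, so any Hilbert basis of $S_{\mathrm{F}(\lambda,x)}$ can be written as $\{u_{\psi_1(\lambda,x)},\dots,u_{\psi_4(\lambda,x)}\}$ with $\psi_\mu(\lambda,x)\in L^1(\R^3,\C^4)$; this defines a bare Hilbert frame of $\mathrm{F}(\lambda,\,\cdot\,)$, no continuity being required. Specialising the identity above to $\z=\mathrm{F}(\lambda,x)$ and applying Proposition \ref{prplastestimate} bounds the left-hand side by $g_0^2\,K\sum_{\mu=1}^4\|\psi_\mu(\lambda,x)\|_{L^1}^4<\infty$, so Proposition \ref{prp4admiss} yields $\mathrm{F}(\lambda,x)\in\mathrm{Adm}(\rho_{\rm vac})$; as $(\lambda,x)$ was arbitrary, (i) follows.

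For part (ii), I would invoke Theorem \ref{teoremaholder} with $\Omega:=W$, an open subspace of the topological space $I\times\R^4$, and $G:=\mathrm{F}|_W\in C^0(W,\F^\reg)\subset C^0(W,\F)$. Applying the identity above together with Proposition \ref{prplastestimate} to the prescribed bare Hilbert frame $\underline{e}$, the assumption \eqref{boundpsi} and the elementary inequality $\sum_{\mu=1}^4 t_\mu^4\le 4\,(\max_\mu t_\mu)^4$ give
\begin{equation*}
\sup_{(\lambda,x)\in W}\int_{M_{\rm vac}}\|\mathrm{P}(\mathrm{F}(\lambda,x),\y)\|^4\,\|\mathrm{g}(\y)\|^2\,d\rho_{\rm vac}(\y)\le 4\,g_0^2\,K\Big(\sup\big\{\|\psi_\mu(\lambda,x)\|_{L^1}\big\}\Big)^4<\infty,
\end{equation*}
which is precisely hypothesis \eqref{boundintP}. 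Moreover $G(\lambda_0,x_0)=\mathrm{F}(\lambda_0,x_0)\in\F^\reg$ is nonzero, the zero operator not being regular, so the hypothesis $G(\omega_0)\neq 0$ of Theorem \ref{teoremaholder}(ii) holds automatically. That theorem then supplies, for each $(\lambda_0,x_0)\in W$, an open neighborhood $W_1\subset W$ and a constant $C>0$ with the claimed H\"older estimate, together with continuity of $\ell\circ\mathrm{F}$ at $(\lambda_0,x_0)$; since $(\lambda_0,x_0)\in W$ was arbitrary, $\ell\circ\mathrm{F}$ is continuous on $W$.

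I do not expect a real obstacle here: the proof is essentially a bookkeeping assembly of results already in place. The only point requiring some care is ensuring that Proposition \ref{prplastestimate} is applied to a genuine bare Hilbert frame made of $L^1$-initial-data solutions — which is immediate from the definition of an $L^1$-type variation — and keeping clear that for (i) this frame is chosen fibrewise (so only the pointwise finiteness of the bound in Proposition \ref{prplastestimate} is used), whereas for (ii) it is the uniform bound \eqref{boundpsi} on the given frame that upgrades the pointwise estimate to the supremum bound \eqref{boundintP}.
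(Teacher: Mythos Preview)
Your proof is correct and follows essentially the same route as the paper: translation invariance makes $\|\mathrm{g}(\cdot)\|$ constant on $M_{\rm vac}$, Proposition~\ref{prplastestimate} then verifies the integrability condition \eqref{intPpower4} pointwise for (i) and the uniform bound \eqref{boundintP} for (ii), and the conclusions follow from Proposition~\ref{prp4admiss} and Theorem~\ref{teoremaholder} respectively. You have also made explicit two small points the paper leaves implicit---that a bare Hilbert frame with $L^1$ initial data always exists by the $L^1$-type assumption, and that $\mathrm{F}(\lambda_0,x_0)\in\F^\reg$ forces $G(\omega_0)\neq 0$---both of which are handled correctly.
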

Condition \eqref{boundpsi} can of course be taken as the starting point in the construction of a transformation of the local correlation function, provided that the \textit{defined} varied wave functions $\psi_\mu(\lambda,x)$ do generate Hilbert bases of the spin spaces. This will be discussed in more detail in the next section.

\subsection{A Few Examples}\label{sectionexamples}
In this section we analyze in a few examples how  variations of the local correlation function can be implemented concretely and how the $L^1$-condition can be arranged.
\subsubsection*{Example 1:} \textbf{Rescaling of the Regularization Operator}. Let us rescale the local correlation operators as follows:
	$$
	\mathrm{F}(\lambda,x):=\mathrm{F}^{\varepsilon+\lambda}(x)\quad\mbox{with }\lambda\in (-\varepsilon,\varepsilon).
	$$
	The results in Section \ref{subsectionCFSM} do not depend on the parameter $\varepsilon>0$. In particular, the conditions (i) and (ii) after Definition \ref{regularvariation} are fulfilled. From Proposition \ref{propbasismink} we know that
	\begin{equation*}
	\begin{split}
	e_\mu(\lambda,x):&=\hat{e}^{\,\varepsilon+\lambda}_\mu(x)\\ &=\frac{2\pi}{\sqrt{|\nu_\mu(\varepsilon+\lambda)|}}\, P^{\varepsilon+\lambda}(\,\cdot\,,x)\mathfrak{e}_\mu,\quad\mbox{with } \mu\in\{1,\dots,4\}
	\end{split}
	\end{equation*}
	depend continuously on $\lambda$ and define pointwise Hilbert bases of the corresponding spin spaces. Finally, note that, thanks to \eqref{PareL1},
	$$
	e_\mu(\lambda,x)\in \H(L_1).
	$$
	In conclusion, $\mathrm{F}$ is a regular variation of $L^1$ type.

\subsubsection*{Example 2:} 
	 \textbf{One-Parameter (Semi-)Groups of Operators.} Consider a densely defined self-adjoint operator $A$ on $\H_m^-$ and define a corresponding variation of the local correlation function as
	$$
	\mathrm{F}(\lambda,x):= e^{i\lambda A}\,\mathrm{F}^\varepsilon(x)\,e^{-it\lambda A}.
	$$
	The conditions (i) and (ii) after Definition \ref{regularvariation} are trivially fulfilled.
	However, $\mathrm{F}$ is not of $L^1$-type in general. To arrange this, one needs to 
	add additional assumptions like
	$$
	e^{i\lambda A}(\H(L^1))\subset \H(L^1).
	$$
	Operators of this kind are, for example, the translation generators  $P_\alpha$ in $\R^3$ and the rotation generators $J_\alpha$ in $\R^3$. It should be noted, however, that these two examples only provide internal transformations of spacetime that do not affect the support of the measure.
	
	Similar transformations can be realized by replacing the operator $A$ above with $iA$, under the additional assumption that $A\ge 0$. In this case, the family $e^{-\lambda A}$ defines a \textit{contraction semi-group}. Note that both variations are trivial at a point $x$ whenever the corresponding operator $\mathrm{F}^\varepsilon(x)$ commute with $A$.

\subsubsection*{Example 3:} \textbf{Variations of the Initial Data.} 
	Let us  project the spin spaces to the initial-data space
	$$
	\mathfrak{S}_\x:=\mathrm{E}_0^{-1}(S_\x)\subset L^2(\R^3,\C^4)\quad\mbox{for all }\x\in\F.
	$$
	Let $\{\hat{e}^{\,\varepsilon}_\mu\}_\mu$ be the global Hilbert frame as in \eqref{frameminkowski}. Then, the functions
	$$
	\mathfrak{S}_{F^\varepsilon(x)}\ni \psi_\mu(x):=\mathrm{E}_0^{-1}(\hat{e}^{\,\varepsilon}_\mu)\in S(\R^3,\C^4)\subset L^1(\R^3,\C^4),
	$$ 
	define an orthonormal basis of $\mathfrak{S}_{F^\varepsilon(x)}$.
	One can modify these functions into new wave functions $\psi_\mu(\lambda, x)$, so that they depend continuously on $(\lambda,x)$,
	remain orthogonal to each other and, most importantly, stay in $L^1(\R^3,\C^4)$. This can always be done, at least locally. As an example, consider four arbitrary functions
	$$
	f_\mu\in \mathrm{E}_0^{-1}(\H_m^-)\cap L^1(\R^3,\C^4)\ \mbox{ with}\ \|f_\mu\|_{L^2}=1,
	$$ 
	and define, for any $x\in\R^4$,
	$$
	\psi_\mu(\lambda,x):= \psi_\mu(x)+\lambda f_\mu\in \mathrm{E}_0^{-1}(\H_m^-)\cap L^1(\R^3,\C^4)
	$$
	If $\lambda$ is chosen sufficiently small, the above functions are linearly independent (see for example \cite[Lemma 5.2]{oppio}). By means of a Gram-Schmidt process, the above functions can be turned into an orthonormal set which depend continuously on both $x$ and $\lambda$.
	
	Going back to the general case, given the functions $\psi_\mu(x,\lambda)$, the corresponding transformed local correlation function can then be defined as
	$$
	\mathrm{F}(\lambda,x):= \sum_{\mu=1}^4 \nu_\mu(\lambda,x) \langle \mathrm{E}_0(\psi_\mu(\lambda,x)),\,\cdot\,\rangle_m\,\mathrm{E}_0(\psi_\mu(\lambda,x))
	$$
	where $\nu_\mu$ are arbitrary continuous functions which preserves the signature and reduce to $\nu_\mu(\varepsilon)$ in the limit $\lambda\to 0$.
\subsubsection*{Example 4:} \textbf{A General Method.} We now discuss a method which generalizes the examples discussed above. 
 Thanks to Proposition \ref{prpregbound}, we can reinterpret the regularization operator (restricted to $\H_m^-$) as an \textit{evaluation operator}
 $$
 \gR_{\varepsilon}\in \mathcal{E}(\R^4,\H_m^-, \C^4),\quad\mbox{with}\quad \gR_{\varepsilon}(x)\in\mathfrak{B}(\H_m^-,\C^4).
 $$
 In view of what comes next, it is also important to note that
 \begin{equation}\label{surjectivity}
 \gR(x):\H_m^-\rightarrow \C^4\quad\mbox{is surjective for all $x\in\R^{1,3}$},
 \end{equation}
 which is a consequence of regularity (see end of Section \ref{subsectionCFSM}, in particular \eqref{id2}).
 Referring to the spin scalar product in $\C^4$ (see \eqref{adjointspin}), we have the following characterization, which is a direct consequence of \eqref{defF}, \eqref{surjectivity} and (iv) in Section \ref{subsectionCFSM}.
 \begin{Prp}\label{relationRF}
 For all $x\in\R^{1,3}$ and all $a\in\C^4$,\\[-1.1em]
 	\begin{itemize}[leftmargin=2.5em]
 		\item[{\rm (i)}] $\mathrm{F}^\varepsilon(x)=-\gR_{\varepsilon}(x)^*\,\gR_{\varepsilon}(x).$\\[-0.7em]
 		\item[{\rm (ii)}] $\gR_{\varepsilon}(x)^*a=-2\pi\, P^\varepsilon(\,\cdot\,,x)a$
 	\end{itemize}
 \end{Prp}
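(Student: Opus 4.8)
The plan is to verify the two identities by testing them against vectors in $\H_m^-$ and against spinors in $\C^4$, using the defining relation \eqref{defF} of the local correlation operator, the surjectivity statement \eqref{surjectivity}, and the explicit action \eqref{id2}/(iv) of Section \ref{subsectionCFSM}. The two claims are in fact equivalent to each other once one knows that $\gR_\varepsilon(x)$ is surjective onto $\C^4$, so the core of the argument is really a single computation; I would nonetheless present (ii) first, since it is the more concrete statement, and then deduce (i).

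For (ii): fix $x\in\R^{1,3}$ and $a\in\C^4$. By the definition of the spin-adjoint in \eqref{adjointspin}, the vector $\gR_\varepsilon(x)^*a\in\H_m^-$ is characterized by the identity $\Sl a, \gR_\varepsilon(x)u\Sr = \langle \gR_\varepsilon(x)^*a\,|\,u\rangle_m$ for all $u\in\H_m^-$. On the other hand, by (iii) at the end of Section \ref{sectionreg} (or equivalently property (iv) of Section \ref{subsectionCFSM}), one has $\gR_\varepsilon(x)u = (\gR_\varepsilon u)(x)$ and this equals $2\pi\,P^\varepsilon(\,\cdot\,,x)$ evaluated against $u$ in the appropriate sense; more precisely, using property (vii) of Section \ref{subsectionCFSM} together with \eqref{id3}, one computes
\begin{equation*}
\Sl a, \gR_\varepsilon(x)u\Sr = \Sl a, (\gR_\varepsilon u)(x)\Sr = -2\pi\,\langle P^\varepsilon(\,\cdot\,,x)a\,|\,u\rangle_m,
\end{equation*}
where in the last step I use that $P^\varepsilon(\,\cdot\,,x)a\in\H_m^-$ (property (ii) below \eqref{Pregvacuum}) and the symmetry relation \eqref{commutxy} relating $P^\varepsilon(x,y)$ and $P^\varepsilon(y,x)$, which converts the spin pairing at the point $x$ into the Hilbert pairing $\langle\,\cdot\,|\,\cdot\,\rangle_m$ over all of spacetime. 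Since this holds for every $u\in\H_m^-$, uniqueness of the representing vector forces $\gR_\varepsilon(x)^*a = -2\pi\,P^\varepsilon(\,\cdot\,,x)a$, which is (ii).

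For (i): by \eqref{defF}, $\langle u\,|\,\mathrm{F}^\varepsilon(x)v\rangle_m = -\Sl \gR_\varepsilon(x)u\,|\,\gR_\varepsilon(x)v\Sr$ for all $u,v\in\H_m^-$. Using the definition of $\gR_\varepsilon(x)^*$ in \eqref{adjointspin} on the right-hand side, the spin pairing $\Sl \gR_\varepsilon(x)u\,|\,\gR_\varepsilon(x)v\Sr$ equals $\langle u\,|\,\gR_\varepsilon(x)^*\gR_\varepsilon(x)v\rangle_m$, so $\langle u\,|\,\mathrm{F}^\varepsilon(x)v\rangle_m = \langle u\,|\,(-\gR_\varepsilon(x)^*\gR_\varepsilon(x))v\rangle_m$ for all $u,v$, whence $\mathrm{F}^\varepsilon(x) = -\gR_\varepsilon(x)^*\gR_\varepsilon(x)$. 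The surjectivity \eqref{surjectivity} is what guarantees that $\gR_\varepsilon(x)^*$ is injective, so this factorization is non-degenerate and consistent with $\mathrm{F}^\varepsilon(x)$ having rank $4$ and the correct signature; it is not strictly needed for the bare algebraic identity but should be invoked to tie the statement to regularity. I do not anticipate a genuine obstacle here — the only delicate point is keeping careful track of the two different pairings (Euclidean versus spin scalar product on $\C^4$, and which adjoint is meant) and of the factor $2\pi$, and making sure the evaluation-at-$x$ pairing in \eqref{defF} is correctly identified with the spacetime Hilbert pairing via \eqref{commutxy} and property (vii).
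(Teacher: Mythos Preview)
Your argument for (i) is correct and matches the paper's intended route: it is immediate from \eqref{defF} together with the definition \eqref{adjointspin} of the spin-adjoint.

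For (ii), your approach diverges from the paper's. The paper does not prove (ii) by a direct pairing computation; instead it combines (i) with property (iv) of Section~\ref{subsectionCFSM}, namely $\mathrm{F}^\varepsilon(x)u=2\pi\,P^\varepsilon(\,\cdot\,,x)\,\gR_\varepsilon(x)u$, to obtain
\[
\gR_\varepsilon(x)^*\big(\gR_\varepsilon(x)u\big)=-2\pi\,P^\varepsilon(\,\cdot\,,x)\big(\gR_\varepsilon(x)u\big),
\]
and then invokes the surjectivity \eqref{surjectivity} so that $\gR_\varepsilon(x)u$ ranges over all of $\C^4$. This is a two-line algebraic argument once (i) is in place.

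Your direct route via property (vii) and \eqref{commutxy} is viable but, as written, incomplete: identity (vii) only gives the pairing $\langle P^\varepsilon(\,\cdot\,,x)a\,|\,u\rangle_m$ for vectors $u$ of the special form $P^\varepsilon(\,\cdot\,,y)b$, not for arbitrary $u\in\H_m^-$. To finish your argument you would need either a density statement (the linear span of the $P^\varepsilon(\,\cdot\,,y)b$ is dense in $\H_m^-$) or a direct integral computation using \eqref{integralexpress} and \eqref{commutxy} for $u\in\H(L^1)$ followed by extension by continuity. Both are doable, but the paper's route through (iv) sidesteps this entirely and is what the ingredients \eqref{defF}, \eqref{surjectivity} and (iv) listed before the proposition are pointing you toward.
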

Now, the idea is to modify the action of $\gR_\varepsilon(x)$ at every point $x\in\R^{1,3}$ and  take point (i) in Proposition \eqref{relationRF} as the \textit{definition} of a new local correlation operator.
 This approach is very general, as the following result shows.
 \begin{Thm}\label{localrepr}
 	Let $\mathrm{F}:\R^4\rightarrow \F^\reg$ be continuous.
 	Then, for every $x_0\in\R^4$ there exists an open neighborhood $\Omega_{x_0}\subset \R^4$ and an evaluation operator
 	\begin{equation}\label{Phi}
 	\Psi\in \mathcal{E}(\Omega_{x_0},\H_m^-,\C^4)
 	\end{equation}
 	such that,   for every $x\in \Omega_{x_0}$,
 	$
 	\Psi(x)
 	$
 	is surjective and  satisfies
 	$$
 	\mathrm{F}(x)=-\Psi(x)^*\Psi(x).
 	$$
 
 	\noindent Assume further that $\mathrm{F}$ is bounded in the operator norm, it admits a global Hilbert frame (as in Proposition \ref{existenceframe}) and it satisfies
 	\begin{equation}\label{assumptionbounded}
 	\mathrm{tr}\,|\mathrm{F}(x)|\ge   k>0\quad\mbox{for all $x\in\R^4$}.
 	\end{equation}
 	Then, the mapping $\Psi$ in \eqref{Phi} can be chosen globally on $\R^4.$
 \end{Thm}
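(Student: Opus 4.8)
The plan is to realise $\mathrm{F}(x)$ as $-\Psi(x)^*\Psi(x)$ by representing the compression of $\mathrm{F}(x)$ to its spin space in a continuous Hilbert frame and extracting a matrix square root compatible with the indefinite inner product \eqref{spinscalarprodC4}. The guiding model is Proposition \ref{relationRF}-(i), $\mathrm{F}^\varepsilon(x)=-\gR_\varepsilon(x)^*\gR_\varepsilon(x)$, so one thinks of $\Psi(x)$ as a ``generalized regularization operator at $x$''. For the local statement, Proposition \ref{existenceframe} applied to the continuous map $\mathrm{F}:\R^4\to\F^\reg$ yields, on some open ball $\Omega_{x_0}$ with $\overline{\Omega_{x_0}}$ contained in the frame's domain, a continuous local Hilbert frame $\hat e_1,\dots,\hat e_4:\Omega_{x_0}\to\H_m^-$; recall that each $\{\hat e_\mu(x)\}$ is an orthonormal basis of $S_{\mathrm{F}(x)}$ respecting the sign decomposition \eqref{decompsign}, with $\hat e_1(x),\hat e_2(x)$ spanning $S_{\mathrm{F}(x)}^-$ and $\hat e_3(x),\hat e_4(x)$ spanning $S_{\mathrm{F}(x)}^+$.

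Next I would form the continuous Hermitian matrix $M(x):=\big(\la\hat e_\mu(x)\,|\,\mathrm{F}(x)\,\hat e_\nu(x)\ra_m\big)_{\mu,\nu=1}^{4}$. Since $\mathrm{F}(x)$ commutes with $|\mathrm{F}(x)|$, it leaves the spectral subspaces $S_{\mathrm{F}(x)}^\pm$ invariant and annihilates the orthogonal complement of $S_{\mathrm{F}(x)}$; hence $M(x)$ is block diagonal, $M(x)=\mathrm{diag}\big(M^-(x),M^+(x)\big)$, with $M^-(x)$ negative definite and $M^+(x)$ positive definite $2\times 2$ matrices, because $\mathrm{F}(x)$ is negative definite on $S_{\mathrm{F}(x)}^-$ and positive definite on $S_{\mathrm{F}(x)}^+$. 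Setting $N(x):=\mathrm{diag}\big((-M^-(x))^{1/2},(M^+(x))^{1/2}\big)$ with the positive matrix square roots (continuous in $x$ on the cone of positive definite matrices), I define
\[
\Psi(x)\,u:=\sum_{\mu,\nu=1}^{4}N_{\mu\nu}(x)\,\la\hat e_\nu(x)\,|\,u\ra_m\,\mathfrak{e}_\mu\in\C^4,\qquad u\in\H_m^-.
\]
Writing $\V{c}_w:=\big(\la\hat e_\mu(x)|w\ra_m\big)_{\mu=1}^{4}\in\C^4$, so that $\Psi(x)w=N(x)\V{c}_w$ in the basis $\{\mathfrak{e}_\mu\}$, $\|\V{c}_w\|\le\|w\|_m$, and $\mathrm{F}(x)v=\sum_{\mu=1}^{4}(M(x)\V{c}_v)_\mu\,\hat e_\mu(x)$, the three required properties follow: (a) a routine estimate shows $\Psi$ is continuous in the operator norm, and $\|\Psi(x)\|=\|N(x)\|=\|\mathrm{F}(x)\|^{1/2}$, which is finite on $\Omega_{x_0}$ since $\overline{\Omega_{x_0}}$ is compact, so $\Psi\in\mathcal{E}(\Omega_{x_0},\H_m^-,\C^4)$; (b) $N(x)$ is invertible and $w\mapsto\V{c}_w$ is onto $\C^4$, so $\Psi(x)$ is surjective; (c) since $N(x)$ is Hermitian with $N(x)^\dagger\gamma^0N(x)=-M(x)$, the defining relation \eqref{adjointspin} of the adjoint gives, for all $u,v\in\H_m^-$,
\[
\la u\,|\,\Psi(x)^*\Psi(x)\,v\ra_m=\Sl\Psi(x)u\,|\,\Psi(x)v\Sr=\V{c}_u^{\dagger}\,N(x)^{\dagger}\gamma^0N(x)\,\V{c}_v=-\,\V{c}_u^{\dagger}M(x)\V{c}_v=-\la u\,|\,\mathrm{F}(x)\,v\ra_m,
\]
that is $\mathrm{F}(x)=-\Psi(x)^*\Psi(x)$ on $\Omega_{x_0}$.

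For the global statement one runs the identical construction with the hypothesised \emph{global} Hilbert frame $\hat e_\mu\in C^0(\R^4,\H_m^-)$ in place of the local one; continuity, surjectivity and the factorization are checked verbatim on all of $\R^4$. The operator-norm boundedness of $\mathrm{F}$ turns the identity $\|\Psi(x)\|=\|\mathrm{F}(x)\|^{1/2}$ into a uniform bound, so $\Psi\in\mathcal{E}(\R^4,\H_m^-,\C^4)$. The remaining hypothesis $\mathrm{tr}\,|\mathrm{F}(x)|\ge k>0$ is used only to prevent this global representation from degenerating: it yields $\|\mathrm{F}(x)\|\ge\tfrac14\,\mathrm{tr}\,|\mathrm{F}(x)|\ge k/4$, hence the uniform lower bound $\|\Psi(x)\|\ge\tfrac12\sqrt{k}$.

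I do not anticipate a serious obstacle. The genuinely nontrivial ingredient is already packaged in Proposition \ref{existenceframe} (continuous local Hilbert frames respecting the sign decomposition) for the local part, and in the standing assumption of a global Hilbert frame for the global part; everything else is finite-dimensional linear algebra, the continuity of the matrix square root on positive definite matrices, and the elementary membership estimates for $\mathcal{E}$. The one point that needs care is keeping the two adjoints straight: one must insert $\gamma^0$ consistently so that $N(x)^\dagger\gamma^0N(x)$ reproduces \emph{exactly} $-M(x)$, hence $-\mathrm{F}(x)$ --- this factorization being unique only up to a spin isometry of $\C^4$, the harmless ambiguity one already sees when comparing $\Psi$ with the concrete operator $\gR_\varepsilon(x)$.
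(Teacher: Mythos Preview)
Your argument is correct and is, at bottom, the same construction as the paper's, just written in the Hilbert frame rather than the spin frame. The paper takes the local spin frame $(f_\mu)$ from Proposition~\ref{existenceframe}, defines the spin isometry $V_x:S_{\mathrm{F}(x)}\to\C^4$ by $f_\mu(x)\mapsto\mathfrak{e}_\mu$, and sets $\Psi(x)=V_x\circ\pi_{\mathrm{F}(x)}$; unwinding this in the associated Hilbert frame $\hat e_\mu=\sqrt{|\mathrm{F}(x)|}\,f_\mu$ one finds $\Psi(x)u=|M(x)|^{1/2}\mathbf{c}_u$ with $|M(x)|=\mathrm{diag}(-M^-(x),M^+(x))$, which is exactly your $N(x)\mathbf{c}_u$. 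So the two $\Psi$'s coincide. Your presentation has one small payoff: since you read off $\|\Psi(x)\|=\|N(x)\|=\|\mathrm{F}(x)\|^{1/2}$ directly, the global bound in $\mathcal{E}(\R^4,\H_m^-,\C^4)$ follows from the boundedness of $\mathrm{F}$ alone, whereas the paper bounds $|\Psi(x)u|$ via $\|f_\mu(x)\|\le 1/\sqrt{k}$ and hence invokes \eqref{assumptionbounded}. Your remark that \eqref{assumptionbounded} yields only a uniform lower bound on $\|\Psi(x)\|$ (not required by the statement) is accurate in your formulation; in the paper's formulation that hypothesis is genuinely used to control the spin-frame norms.
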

In a few words, any continuous  realization of Minkowski space in $\F^\reg$ is locally equivalent to a pointwise variation of the regularized vectors of $\H_m^-$.   Whether the evaluation operator $\Phi$ can be chosen globally or not, depends on the global topological and spectral properties of $\mathrm{F}$. Note that the condition of boundedness from below of the trace norm \eqref{assumptionbounded} is fulfilled in examples like Minkowski Dirac sea vacua.  Therefore, for the applications in mind, it seems sufficient to restrict attention to \textit{global} transformation of this kind.  
 \begin{Def}\label{defvarreg}
 	A \textbf{variation of the regularization operator} is a family of evaluation operators of the form
 	$$
 	\Psi\in \mathcal{E}(I\times\R^4,\H_m^-,\C^4)\quad\mbox{with}\quad \Psi(0,\,\cdot\,)=\gR_{\varepsilon},
 	$$
 	such that $\Psi(\lambda,x)$ is surjective for every $\lambda\in I$ and $x\in\R^4$.
 \end{Def}
The last assumption in the definition above is needed in order to ensure regularity. More precisely, we have the following result.
\begin{Prp}\label{propF}
	Let $\Psi$ be a variation of the regularization operator. Then,
	\begin{equation}\label{Fpsi}
	\mathrm{F}(\lambda,x):=-\Psi(\lambda,x)^*\,\Psi(\lambda, x)\in\scF^\reg\quad\mbox{for every }x\in \R^{1,3}.
	\end{equation}
	The function $\mathrm{F}:I\times\R^4\rightarrow \mathfrak{B}(\H_m^-)$ is continuous and bounded. 
\end{Prp}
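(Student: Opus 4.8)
The plan is to verify the three assertions — pointwise membership in $\F^\reg$, joint continuity, and boundedness — essentially separately, the only genuinely delicate point being the signature count. First I would record the relation between the adjoint taken with respect to the spin scalar product on $\C^4$ and the ordinary Hilbert-space adjoint: by \eqref{adjointspin} and the associated footnote one has $\Psi(\lambda,x)^*=\Psi(\lambda,x)^\dagger\gamma^0$ as an element of $\mathfrak{B}(\C^4,\H_m^-)$ (recall $(\gamma^0)^{-1}=\gamma^0$), whence
\[
\mathrm{F}(\lambda,x)=-\Psi(\lambda,x)^*\Psi(\lambda,x)=-\Psi(\lambda,x)^\dagger\gamma^0\Psi(\lambda,x).
\]
Self-adjointness of $\mathrm{F}(\lambda,x)$ on $\H_m^-$ then follows from the Hermitian symmetry of the spin scalar product, since $\langle u|\mathrm{F}(\lambda,x)v\rangle_m=-\Sl\Psi(\lambda,x)u,\Psi(\lambda,x)v\Sr=\overline{\langle v|\mathrm{F}(\lambda,x)u\rangle_m}$ for all $u,v\in\H_m^-$; and $\mathrm{F}(\lambda,x)$ has finite rank at most $4$, because $\Psi(\lambda,x)$ takes values in the four-dimensional space $\C^4$. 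So $\mathrm{F}(\lambda,x)$ is at any rate a self-adjoint finite-rank operator.

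The main step is to show that $\mathrm{F}(\lambda,x)$ has exactly two positive and two negative eigenvalues, i.e. $\mathrm{sign}(\mathrm{F}(\lambda,x))=(n,n)$ with $n=2$. The idea is to pass to the reversed product on the spinor side: the nonzero spectrum (counted with multiplicities) of $\Psi(\lambda,x)^*\Psi(\lambda,x)$ on $\H_m^-$ coincides with that of $\Psi(\lambda,x)\Psi(\lambda,x)^*$ on $\C^4$, and by the previous paragraph this latter operator is $Q\gamma^0$ with $Q:=\Psi(\lambda,x)\Psi(\lambda,x)^\dagger$. Since $\Psi(\lambda,x)$ is surjective (Definition \ref{defvarreg}), $\Psi(\lambda,x)^\dagger$ is injective and $Q$ is a strictly positive $4\times4$ matrix; writing $Q=Q^{1/2}Q^{1/2}$ shows $Q\gamma^0$ is similar to the Hermitian matrix $Q^{1/2}\gamma^0Q^{1/2}$, which is congruent to $\gamma^0$ and hence, by Sylvester's law of inertia, has exactly two positive, two negative, and no zero eigenvalues. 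Therefore $\Psi(\lambda,x)^*\Psi(\lambda,x)$ has exactly two positive and two negative nonzero eigenvalues, so $\mathrm{F}(\lambda,x)=-\Psi(\lambda,x)^*\Psi(\lambda,x)$ has rank $4$ with two positive and two negative eigenvalues; in particular it lies in $\F$ and has signature $(2,2)=(n,n)$, i.e. $\mathrm{F}(\lambda,x)\in\F^\reg$ for every $(\lambda,x)\in I\times\R^4$.

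Finally, continuity and boundedness are soft. The map $A\mapsto A^*=A^\dagger\gamma^0$ is norm-continuous from $\mathfrak{B}(\H_m^-,\C^4)$ to $\mathfrak{B}(\C^4,\H_m^-)$ (the Hilbert-space adjoint is an isometry and post-composition with the fixed bounded operator $\gamma^0$ is continuous), and operator composition is jointly norm-continuous; since $\Psi\in\mathcal{E}(I\times\R^4,\H_m^-,\C^4)\subset C^0(I\times\R^4,\mathfrak{B}(\H_m^-,\C^4))$, it follows that $(\lambda,x)\mapsto\mathrm{F}(\lambda,x)=-\Psi(\lambda,x)^*\Psi(\lambda,x)$ is continuous into $\mathfrak{B}(\H_m^-)$. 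For boundedness, $\|\mathrm{F}(\lambda,x)\|\le\|\Psi(\lambda,x)^*\|\,\|\Psi(\lambda,x)\|\le\|\gamma^0\|\,\|\Psi(\lambda,x)\|^2\le\big(\sup_{(\lambda,x)\in I\times\R^4}\|\Psi(\lambda,x)\|_{\mathfrak{B}(\H_m^-,\C^4)}\big)^2<\infty$, the supremum being finite by Definition \ref{defevaluation}. The one step that requires real care is the signature computation, since it forces one to work through the indefinite inner product on $\C^4$; once the identity $\Psi^*=\Psi^\dagger\gamma^0$ and the coincidence of the nonzero spectra of $\Psi^*\Psi$ and $\Psi\Psi^*$ are in hand, Sylvester's law finishes the argument.
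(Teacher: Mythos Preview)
Your proof is correct. The point of genuine difference from the paper is the signature computation: the paper argues by contradiction --- assuming three strictly positive eigenvalues for $\mathrm{F}(\lambda,x)$, it exhibits three eigenvectors $u_i$ whose images $\Psi(\lambda,x)u_i$ would span a three-dimensional negative-definite subspace of $(\C^4,\Sl\cdot|\cdot\Sr)$, contradicting its signature $(2,2)$. Your route is direct: you pass to the finite-dimensional side via the standard fact that $\Psi^*\Psi$ and $\Psi\Psi^*$ share their nonzero spectrum, identify $\Psi\Psi^*=Q\gamma^0$ with $Q>0$, and then invoke Sylvester's law of inertia on the similar Hermitian matrix $Q^{1/2}\gamma^0Q^{1/2}$. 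This is arguably cleaner, since it yields the full signature $(2,2)$ in one stroke rather than ruling out each bad case; the paper's argument, on the other hand, stays entirely on the Hilbert-space side and avoids naming the matrix $\gamma^0$ explicitly, which makes it marginally more portable to abstract spin dimensions. For continuity the paper estimates $|\langle u|(\mathrm{F}(s)-\mathrm{F}(t))u\rangle|$ directly and uses self-adjointness to recover the operator norm, while you appeal to norm-continuity of the adjoint and of composition; both are routine and equivalent in effort.
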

In this case, conditions (i) and (ii) after Definition \ref{regularvariation} are not automatically satisfied, and their validity needs to be checked case by case. Also the $L^1$-condition depends on how we modify the regularization operator. From \eqref{PareL1} and point (ii) in Definition \ref{relationRF} we see that
$$
\gR_\varepsilon(x)^*(\C^4)\subset \H(L^1).
$$
In modifying $\gR_{\varepsilon}(x)$ one then needs to make sure that the adjoint of the evaluation operator maps again into $\H(L^1)$. This needs to be checked case by case. 
\subsubsection*{Example 5:} \textbf{Perturbation Theory.} 
Although the general method presented in Example $4$ could in principle be applied to most of the interesting scenarios, a perturbative approach may sometimes be preferrable: This applies especially when only the first low-order terms of a variation are known.

To understand this, let us assume that $\Psi(\lambda,x)$ and $\mathrm{F}(\lambda,x)$ from Proposition \ref{propF} can be expanded  in Taylor form around $\lambda=0$ up to order $n\in\N$, i.e.
\begin{equation}\label{pertexpansionpsi}
	\begin{cases}\ 
\Psi(\lambda,x)=\displaystyle \sum_{p=0}^n \lambda^p\,\Psi^{(p)}(x)+ R_{n+1}(\lambda,x),& \Psi^{(0)}=\gR_{\varepsilon},\\[0.6em] \ \mathrm{F}(\lambda,x)=\displaystyle\sum_{p=0}^{n}\lambda^p\,\mathrm{F}^{(p)}(x)+Q_{n+1}(\lambda,x),&\, \mathrm{F}^{(0)}=\mathrm{F}^\varepsilon,
\end{cases}
\end{equation}
where $R_{n+1},Q_{n+1}$ are remainder terms. The quantity $X^{(p)}$ is called the \textit{p-th order perturbation term} of $X$. Such an expansion can always be arranged, assuming that the functions $\Psi$ and $F$ are sufficiently regular. The convergence for $n\to\infty$ is instead a trickier matter and may in general not hold.

An \textit{$n$-th order perturbation  of Minkowski vacuum} will then consist in neglecting the reminder term in \eqref{pertexpansionpsi} and focussing the analysis on the truncated sum. It is important to stress, though, that such an approximation of $\mathrm{F}(\lambda,x)$ will in general \textit{no longer belong to $\F$}. 
One can nonetheless study to which extent such a truncation fulfills the assumptions of a regular variation of $L^1$ type. 

Were we for example only interested in the first-order effects of a  variation of  spacetime, we would need to truncate \eqref{pertexpansionpsi} to order $p=1$, obtaining (cf. \eqref{Fpsi})
\begin{equation}\label{firstorderF}
	\begin{split}
\mathrm{F}(\lambda,x)&\stackrel{1\mathrm{st}}{=} \mathrm{F}^{(0)}(x)+\lambda\,\mathrm{F}^{(1)}(x) \\
&=\mathrm{F}^\varepsilon(x)+\lambda\big(-\gR_{\varepsilon}(x)^*\,\Psi^{(1)}(x)-\Psi^{(1)}(x)^*\,\gR_{\varepsilon}(x)\big),
\end{split}
\end{equation}
and the first-order perturbative analysis will then boil down to studying the term between brackets in \eqref{firstorderF}. 

We will now see this in action in the concrete example of variations of  Minkowski Dirac sea vacua induced by electromagnetic fields. We will show that  \eqref{firstorderF} is continuous and bounded for such perturbations and ranges within $\H(L^1)$.

\vspace{0.3cm}

\noindent \textbf{Electromagnetic Fields in Minkowski Space.}
 In presence of an electromagnetic field the Dirac operator \eqref{Diracequation} is modified via the minimal coupling $\partial_j\mapsto \partial_j-iA_j$, where $A\in C^\infty(\R^4,\R^4)$ is a corresponding \textit{electromagnetic potential}, i.e.
 $$
 \mathrm{D}\ \longmapsto\ \mathrm{D}+\slashed{A} :C^\infty(\R^4,\C^4)\to C^\infty(\R^4,\C^4).
 $$ 
 The Cauchy problem with  smooth initial data at time $t\in\R$ \eqref{CP}  becomes
\begin{equation}\label{diracpotCP}
	\left\{
	\begin{split}
		(\mathrm{D}+\slashed{A})f&=0\\
		f(t,\cdot)&=\varphi\in C^\infty(\R^3,\C^4)
	\end{split}\; .
\right.
\end{equation}
As in the free case, it follows from the theory of symmetric hyperbolic systems that the Cauchy problem \eqref{diracpotCP} admits a unique global smooth solution.

Identifying the initial data at time $t$, it is possible to define a one-to-one linear correspondence between smooth solutions of the free Dirac equations and smooth solutions in presence of an external field:
\begin{equation}\label{identification}
\mathcal{P}_A:\ker\mathrm{D}\ni h\to \mathcal{P}_A[h]\in \ker(\mathrm{D}+\slashed{A})\quad\mbox{with }\	 \mathcal{P}_A[h](t,\cdot)=h(t,\cdot\,)\:.
\end{equation}

This identification can be studied order by order in perturbation theory by means of the following iterative method. Let us focus on the space $\H^-_m\cap C^\infty(\R^4,\C^4)$ of smooth solutions with $L^2$ initial data and negative energy. Assume for simplicity that the electromagnetic potential has compact support and choose $t$ sufficiently small so that the potential lies in the future of $t$, i.e.
\begin{equation}\label{condiA}
A\in C_0^\infty(\R^4,\R^4)\quad\mbox{and}\quad \supp A \subset\{x\in\R^4\:|\: x^0>t \}.
\end{equation}
Starting from the unperturbed solution   (see Section \ref{sectiondiracequation})
$$
u_\varphi:=\mathrm{E}_t(\varphi)\in\H_m^-\cap C^\infty(\R^4,\C^4),\quad \varphi\in L^2(\R^3,\C^4)\cap C^\infty(\R^3,\C^4),
$$ 
one defines $\mathcal{P}_A^{(0)}[u_\varphi]:=u_\varphi$ and solves for $p\ge 1$ the inductive Cauchy problems
\begin{equation}\label{inductiveCP}
	\left\{\, 
	\begin{split}
		\mathrm{D}\,\mathcal{P}_A^{(p)}[u_\varphi]&=-\slashed{A}\,\mathcal{P}_A^{(p-1)}[u_\varphi]\\[0.3em]
		\mathcal{P}_A^{(p)}[u_\varphi](t,\cdot)&=0
	\end{split}\; ,
\right.
\end{equation}
which again admit unique global smooth solutions. This gives rise to the functions
\begin{equation}\label{contriP}
\mathcal{P}_A^{(p)}:\H_{m}^-\cap C^\infty(\R^4,\C^4)\ni u_\varphi\mapsto \mathcal{P}_A^{(p)}[u_\varphi] \in C^\infty(\R^4,\C^4),\quad p\in\N_0.
\end{equation}

A formal computation shows that the sum of all the contributions \eqref{contriP} does indeed solve \eqref{diracpotCP}. Without entering the mathematical details concerning the convergence of such a series, we conclude from the uniqueness of the solution of the Cauchy Problem that \eqref{identification} admits (at least formally) the following \textit{perturbation expansion}\footnote{This method is independent of the choice of $t$, as long as it lies in the past of the support of the electromagnetic potential.}:
\begin{equation}\label{identificationPpertub}
\mathcal{P}_A:\H_m^-\cap C^\infty(\R^4,\C^4)\ni u_\varphi\mapsto \mathcal{P}_A[u_\varphi]=\sum_{p=0}^\infty \mathcal{P}_A^{(p)}[u_\varphi].
\end{equation}

In order to make a clearer connection with the perturbation expansion \eqref{pertexpansionpsi} we now multiply the electromagnetic potential by a coupling parameter:
$$
A\mapsto \lambda A,\quad \lambda\in (-\delta,\delta).
$$
By means of the identification \eqref{identification} and using that $\gR_\varepsilon u\in \H_m^-\cap C^\infty(\R^4,\C^4)$ we can define a corresponding variation of the regularization operator by setting
\begin{equation}\label{newpsi}
	\Psi(\lambda,x):\H_m^-\ni u\mapsto \mathcal{P}_{\lambda A}(\gR_{\varepsilon}u)(x)\in\C^4
\end{equation}
(note that for $\lambda=0$ the variation \eqref{newpsi} does indeed give back the original regularization operator). 
On the other hand, the uniqueness of the solution of the Cauchy problem ensures that
\begin{equation*}\label{homogP}
\mathcal{P}_{\lambda A}^{(p)}=\lambda^p\,\mathcal{P}_A^{(p)}\quad\mbox{for all $p\in\N_0$}.
\end{equation*}
Putting all together, we conclude that the \textit{p-th order perturbation term} of \eqref{newpsi} is
	\begin{equation}\label{nthorder}
	\Psi^{(p)}(x):\H_m^-\ni u\mapsto \mathcal{P}_A^{(p)}[\gR_\varepsilon u](x)\in\C^4\, ,
	\end{equation}

Some of the arguments carried out so far (as for example the convergence of \eqref{identificationPpertub}) are just formal and would need a more solid mathematical treatment. Moreover, whether \eqref{newpsi} fulfills the conditions in Definition \ref{defvarreg} is not straightforward and will not be discussed here in full generality.  We will focus on the first-order perturbation term, of which a simple explicit representation exists. 
\vspace{0.3cm}

\noindent \textbf{A First-Order Perturbation Analysis.}
The starting point are the so-called \textit{retarded} and \textit{advanced Green's functions} of the Dirac equation,
$$
s^\wedge_m,\ s^\vee_m\in \mathcal{D}'(\R^4,\mathrm{Mat}(4,\C)),
$$
which are defined as as the unique distributional solutions of
\begin{equation*}\label{retadvgreen}
	\begin{split}
\mathrm{D}\,s_m=\delta^{(4)}\bI_4,\quad\mbox{with}\quad \supp s_m^\wedge\subset J_0^\vee\quad\mbox{and}\quad \supp s_m^\vee\subset J_0^\wedge,
\end{split}
\end{equation*}
respectively (they are in fact tempered distribution: for an explicit form of the kernel of these distribution see for example \cite[Section 2.1.3]{cfs}). 
One can check by direct computation that, in the distributional sense, the unique solution of the Cauchy problem \eqref{inductiveCP} for $p=1$ is given by the convolution
\begin{equation}\label{P1}
	\mathcal{P}_A^{(1)}[u_\varphi]=-s^\wedge_m*(\slashed{A}u_\varphi)
\end{equation}
Note that this function is indeed smooth, being the convolution of a distribution against a compactly supported smooth function. We then have the first statement of the following result, where a formal integral representation is used. The proof of points (i)-(v) is postponed to Appendix \ref{appendixproofs}.
\begin{Prp}\label{propfirstderiv}
	Let $A$ be as in \eqref{condiA}. Then, referring to \eqref{nthorder} and \eqref{P1} the first-order perturbation term is given by
	\begin{equation}\label{firstorder}
		\begin{split}
\Psi^{(1)}(x)u&=-\int_{\R^4} s_m^\wedge(x-y)\,\slashed{A}(y)\,\gR_\varepsilon u(y)\,d^4y\,,\quad \mbox{for $x\in\R^{1,3}$ and $u\in\H_m^-$}.
		\end{split}
	\end{equation}
Moreover, the following properties hold (cf. \eqref{firstorderF}).\\[-1em]
\begin{itemize}[leftmargin=2.5em]
	\item[\rm{(i)}] $\Psi^{(1)}\in \mathcal{E}(\R^4,\H_m^-,\C^4)$,\\[-0.7em]
	\item[\rm{(ii)}] $\mathrm{F}^{(1)}(x):= -\gR_{\varepsilon}(x)^*\,\Psi^{(1)}(x)-\Psi^{(1)}(x)^*\,\gR_{\varepsilon}(x)$ is bounded and self-adjoint,\\[-0.7em]
	\item[\rm{(iii)}]  $\mathrm{F}^{(1)}$ is continuous and bounded in the $\sup$-norm topology,\\[-0.7em]
	\item[\rm{(iv)}] $\mathrm{F}^{(1)}(x)(\H_m^-)\subset\H(L^1)$. More precisely, for every $t\in\R$ and $u\in\H_m^-$
	$$
	(\mathrm{F}^{(1)}(x)u)(t,\,\cdot\,)\in S(\R^3,\C^4),
	$$
	\item[\rm{(v)}] $\mathrm{F}^{(1)}(x)=0$ for every $x\not\in J^\vee(\supp A)$.
\end{itemize}
\end{Prp}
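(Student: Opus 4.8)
The plan is to reduce every assertion to a statement about the single retarded solution $w_u:=\mathcal P_A^{(1)}[\gR_\varepsilon u]$, and to extract all the estimates from three ingredients: the energy (Duhamel) identity for the symmetric hyperbolic operator $\mathrm D$, the uniform momentum‑space bounds on $\gR_\varepsilon$ together with \emph{all} of its derivatives (which refine properties (iv)--(v) of Section \ref{sectionreg}), and the explicit form \eqref{3D} of the wave functions $P^\varepsilon(\,\cdot\,,y)$. The formula \eqref{firstorder} is essentially a matter of definitions: by \eqref{nthorder}, $\Psi^{(1)}(x)u=\mathcal P_A^{(1)}[\gR_\varepsilon u](x)$; since $\gR_\varepsilon u\in\H_m^-\cap C^\infty(\R^4,\C^4)$ by property (i) of Section \ref{sectionreg}, it is of the form $u_\varphi$ with $\varphi=(\gR_\varepsilon u)(t,\,\cdot\,)$, so \eqref{P1} gives $\mathcal P_A^{(1)}[\gR_\varepsilon u]=-s^\wedge_m*(\slashed A\,\gR_\varepsilon u)$, the convolution of the distribution $s^\wedge_m$ with the compactly supported smooth density $\slashed A\,\gR_\varepsilon u$, hence a smooth function whose value at $x$ is the displayed integral read through the distributional kernel of $s^\wedge_m$. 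Throughout, set $g_u:=-\slashed A\,\gR_\varepsilon u$, so $\mathrm D w_u=g_u$ and $w_u(t,\,\cdot\,)=0$; $g_u$ is smooth, supported in the fixed compact set $\supp A$, and satisfies $\sup_{\R^4}|\partial^\alpha g_u|\le C_\alpha\|u\|_m$ for every multi-index $\alpha$, by the smoothness and compact support of $A$ and the uniform bounds $\sup_{\R^4}|\partial^\alpha\gR_\varepsilon u|\le C'_\alpha\|u\|_m$ (proved like (iv)--(v) of $\gR_\varepsilon$ in momentum space, each derivative producing only a polynomial factor absorbed by the exponential cutoff); in particular $\sup_{s\in\R}\|\partial^\alpha g_u(s,\,\cdot\,)\|_{L^2(\R^3)}\le C''_\alpha\|u\|_m$.

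For (i) and the boundedness in (iii): commuting $\partial^\alpha$ through $\mathrm D$, the function $\partial^\alpha w_u$ solves $\mathrm D(\partial^\alpha w_u)=\partial^\alpha g_u$ with zero data at $t$; rewriting this as $i\partial_s(\partial^\alpha w_u)=H(\partial^\alpha w_u)+\gamma^0\partial^\alpha g_u$ and using Duhamel's formula with the unitarity of $e^{-isH}$ gives $\sup_{s\in\R}\|\partial^\alpha w_u(s,\,\cdot\,)\|_{L^2(\R^3)}\le\int_\R\|\partial^\alpha g_u(\tau,\,\cdot\,)\|_{L^2(\R^3)}\,d\tau\le C'''_\alpha\|u\|_m$, the time‑support of $g_u$ being a compact interval. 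Hence $\sup_s\|w_u(s,\,\cdot\,)\|_{H^k(\R^3)}\le C_k\|u\|_m$ for all $k$, and $H^2(\R^3)\hookrightarrow C^0_b(\R^3,\C^4)$ gives $|\Psi^{(1)}(x)u|=|w_u(x)|\le C\|u\|_m$ uniformly in $x\in\R^4$; the same estimates one order higher (so that $w_u$ is uniformly $C^1_b$ in space and $\partial_0 w_u$ uniformly $C^0_b$ in space) give $|w_u(x)-w_u(x')|\le C\|u\|_m\,|x-x'|$, i.e.\ $x\mapsto\Psi^{(1)}(x)$ is Lipschitz, hence continuous. Thus $\Psi^{(1)}\in\mathcal E(\R^4,\H_m^-,\C^4)$, which is (i). For (iii), $\|\mathrm F^{(1)}(x)\|\le 2\,(\sup_x\|\gR_\varepsilon(x)\|)(\sup_x\|\Psi^{(1)}(x)\|)<\infty$ by Proposition \ref{prpregbound} and (i), while continuity of $x\mapsto\mathrm F^{(1)}(x)$ in the operator norm follows from norm‑continuity of $x\mapsto\gR_\varepsilon(x)$ (Proposition \ref{prpregbound}) and of $x\mapsto\Psi^{(1)}(x)$, the isometry of the adjoint map, and continuity of composition on bounded sets.

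Statement (ii) is immediate from the definition \eqref{adjointspin} of the $*$‑operation: $\mathrm F^{(1)}(x)=-B-B^\dagger$ with $B=\gR_\varepsilon(x)^*\Psi^{(1)}(x)$ and $B^\dagger=\Psi^{(1)}(x)^*\gR_\varepsilon(x)$ its Hilbert‑space adjoint on $\H_m^-$ (using $(\gR_\varepsilon(x)^*)^\dagger=\gR_\varepsilon(x)$ and that the spin scalar product is used consistently), so $\mathrm F^{(1)}(x)$ is bounded and self‑adjoint. For (v) it suffices that $\Psi^{(1)}(x)=0$ for $x\notin J^\vee(\supp A)$, since then $\Psi^{(1)}(x)^*=0$ as well; but $w_u=-s^\wedge_m*(\slashed A\,\gR_\varepsilon u)$ has $\supp w_u\subset\supp s^\wedge_m+\supp(\slashed A\,\gR_\varepsilon u)\subset J_0^\vee+\supp A=J^\vee(\supp A)$, so $\Psi^{(1)}(x)u=w_u(x)=0$ there. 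Finally, for (iv) one computes $\Psi^{(1)}(x)^*$ on $\C^4$: pairing $\Sl b,\Psi^{(1)}(x)v\Sr$ for $b\in\C^4$, $v\in\H_m^-$ and using \eqref{adjointspin}, Proposition \ref{relationRF}(ii), the identity $\gamma^0\slashed A^\dagger\gamma^0=\slashed A$ and the corresponding conjugation symmetry of $s^\wedge_m$, one expresses $\Psi^{(1)}(x)^*b$ as the distribution $s^\wedge_m$ paired (in its argument, over a neighborhood of the compact set $\supp A$) against a compactly supported $C^\infty$ map with values in $S(\R^3,\C^4)$ built from the wave functions $P^\varepsilon(\,\cdot\,,y)$. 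Since $s^\wedge_m$ has finite order and, by \eqref{3D} and \eqref{L1initialdata}, the family $\{P^\varepsilon(t,\,\cdot\,;y)\,a:y\in\supp A,\ |a|\le1\}$ is, for each fixed $t$, a bounded subset of the Fréchet space $S(\R^3,\C^4)$ depending smoothly on $y$ (spatial translations and the bounded phases $e^{-i\omega(\V k)t_y}$ do not spoil the Schwartz seminorms), this pairing lies in $S(\R^3,\C^4)$. Together with $\gR_\varepsilon(x)^*\big(\Psi^{(1)}(x)u\big)=-2\pi P^\varepsilon(\,\cdot\,,x)\big(\Psi^{(1)}(x)u\big)\in S(\R^3,\C^4)$ at each time by \eqref{L1initialdata}, this shows $(\mathrm F^{(1)}(x)u)(t,\,\cdot\,)\in S(\R^3,\C^4)$ for all $t$, hence $\mathrm F^{(1)}(x)(\H_m^-)\subset\H(L^1)$.

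The main obstacle I anticipate is the \emph{global}, $x$‑uniform nature of the bounds demanded by $\mathcal E(\R^4,\H_m^-,\C^4)$ and by (iii): because the retarded propagator $s^\wedge_m$ does not decay and the future cone $J^\vee(\supp A)$ is unbounded, a crude ``distribution of finite order against a fixed compact source'' estimate only yields bounds that are locally uniform in $x$. The remedy is to trade pointwise kernel estimates for the $L^2$ energy identity of the free Dirac evolution — which is insensitive to how far in the future $x$ lies — and then to upgrade from $L^2$‑in‑space to uniform pointwise control by Sobolev embedding after commuting derivatives through $\mathrm D$; this is precisely the step that forces one to have uniform bounds on \emph{all} derivatives of $\gR_\varepsilon$. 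A secondary technical nuisance is the identification of $\Psi^{(1)}(x)^*$ in (iv) as a distributional/Bochner integral valued in the Fréchet space $S(\R^3,\C^4)$, which relies on the smooth $y$‑dependence and uniform Schwartz bounds of the family $P^\varepsilon(\,\cdot\,,y)$.
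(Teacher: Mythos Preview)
Your proof is correct and, for parts (ii), (iii) and (v), runs parallel to the paper. The substantive difference is in (i). The paper obtains the uniform-in-$x$ bound $\sup_x\|\Psi^{(1)}(x)\|<\infty$ by a direct kernel argument: it writes $\gR_\varepsilon u$ as a momentum-space integral, interchanges the distributional pairing with the $\V k$-integration (Lemma~\ref{lemmadistrib}), and then bounds the matrices $\langle s_x\slashed A,\,e^{i(\omega(\V k)y^0+\V k\cdot\V y)}\rangle$ uniformly in $x$ by exploiting the explicit form of $s_m^\wedge$ (the $\delta(\xi^2)\Theta(\xi^0)$ piece plus a bounded Bessel term) together with the uniform boundedness principle (Lemma~\ref{uniformbounds}); continuity in $x$ then follows by dominated convergence. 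Your route---Duhamel's formula for $\mathrm D w_u=g_u$, commutation of $\partial^\alpha$ through the constant-coefficient operator, and Sobolev embedding $H^2(\R^3)\hookrightarrow C^0_b$---avoids the explicit Green's-function kernel entirely and yields a Lipschitz bound rather than mere continuity; it is conceptually cleaner and would transplant to variable-coefficient or curved settings, at the cost of invoking Sobolev embedding and the uniform bounds on \emph{all} derivatives of $\gR_\varepsilon$. For (iv) the two arguments are structurally the same---both recognise $\Psi^{(1)}(x)^*b$ as a compactly supported finite-order distribution in $y$ acting on the Schwartz-valued family $P^\varepsilon(\,\cdot\,,y)$---though the paper is more explicit: it uses $s_m^\wedge(x-y)^*=s_m^\vee(y-x)$ (so the \emph{advanced} kernel appears, a point your sketch leaves implicit) and carries out the Fourier exchange in closed form rather than appealing abstractly to continuity of the pairing in a Fr\'echet space.
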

As we know from Section \ref{subsectionCFSM}, properties (iii) and  (iv) are also satisfied by the unperturbed local correlation function $\mathrm{F}^\varepsilon$. Let $I=(-\delta,\delta)$  and let us define the following first-order variation of the local correlation function (cf. \eqref{firstorderF})
$$
\mathrm{F}^{1\mathrm{st}}(\lambda,x):=\mathrm{F}^\varepsilon(x)+\lambda\, \mathrm{F}^{(1)}(x)\quad\mbox{for all $x\in \R^4$ and $\lambda \in I$}.
$$
We  see that $\mathrm{F}^{1\mathrm{st}}$ is continuous and bounded in the $\sup$-norm topology and satisfies
$$
(\mathrm{F}^{1\mathrm{st}}(\lambda,x)u)(t,\,\cdot\,)\in S(\R^3,\C^4)\quad\mbox{for all $x\in\R^4, \lambda\in I$ and $u\in\H_m^-$}.
$$

In summary, we have shown that external electromagnetic fields with compact support generate \textit{to first order}  $L^1$-type variations of the local correlation function.
Moreover, point (v) above shows that the support of the vacuum measure is modified to first order only in the future causal cone of the support of $A$, i.e. 
$$
\mathrm{F}^{1\mathrm{st}}(\lambda,\,\cdot\,)|_{J^\vee(\supp A)}=\mathrm{F}^\varepsilon|_{J^\vee(\supp A)}.
$$
This concludes our first-order analysis of the variation of the local correlation function induced by external electromagnetic fields.


\appendix
	
\section{Modified Bessel Functions of the Second Kind}\label{appendixbessel}

In this appendix we review some basic properties of the modified Bessel functions of the second type which will be used in the proof of Proposition \eqref{integrabilityP}.\\[-0.8em]

\textbf{Some Identities.} 
We adopt the convention $\arg z\in (-\pi,\pi]$. For any $\alpha\in (0,\pi]$ we define the open set 
$$
\Omega_\alpha:=\{z\in\C\setminus\{0\}\:|\: \arg z\in (-\alpha,\alpha) \}.
$$  
In particular, $\alpha=\pi$ gives the standard cut along the non-negative real axis. 

Let $K_n$ denote the general \textit{modified Bessel function of the second type of order $n\in \N_0$} on the cut plane $\Omega_\pi$. More precisely, $K_n$ is defined as the unique holomorphic function on $\Omega_\pi$ which solves the differential equation
$$
z^2\,w''(z)+z\,w'(z)-(z^2+n^2)w(z)=0
$$
and tends to zero in the limit $z\to \infty$ in the region $\Omega_{\pi/2}$. For more details we address the interested reader to \cite[Section 9.6]{AS}). In particular, the following recurrence relations can be found in (9.6.26).
\begin{Lemma}\label{lemmaderivK}
	For any $n\ge 1$ the following identities hold:
	\begin{equation*}\label{derivK}
	K_n'=-\frac{1}{2}\big(K_{n-1}+K_{n+1}\big)\quad\mbox{and}\quad K_{n+1}=K_{n-1}+\frac{2n}{z}K_n.
	\end{equation*}
\end{Lemma}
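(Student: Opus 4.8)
The plan is to reduce both recurrences to the two elementary differentiation formulas
$\frac{d}{dz}\big(z^{n}K_n(z)\big)=-z^{n}K_{n-1}(z)$ and $\frac{d}{dz}\big(z^{-n}K_n(z)\big)=-z^{-n}K_{n+1}(z)$,
which are valid on $\Omega_\pi$ and recorded e.g.\ in (9.6.28) of \cite{AS}. First I would expand each of these by the Leibniz rule and divide through by $z^{n}$, respectively $z^{-n}$, obtaining $K_n'+\tfrac{n}{z}K_n=-K_{n-1}$ and $K_n'-\tfrac{n}{z}K_n=-K_{n+1}$. Adding the two relations gives $2K_n'=-(K_{n-1}+K_{n+1})$, which is the first claimed identity, while subtracting them gives $\tfrac{2n}{z}K_n=K_{n+1}-K_{n-1}$, which is the second. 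Once the differentiation formulas are available, this is a one-line computation.

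If a fully self-contained argument is preferred (not quoting (9.6.28)), I would instead start from the integral representation $K_n(z)=\int_0^\infty e^{-z\cosh t}\cosh(nt)\,dt$, valid for $\re z>0$, that is on $\Omega_{\pi/2}$. Differentiating under the integral sign and using the addition formula $\cosh((n-1)t)+\cosh((n+1)t)=2\cosh(nt)\cosh t$ yields $K_{n-1}+K_{n+1}=-2K_n'$ at once; using instead $\cosh((n+1)t)-\cosh((n-1)t)=2\sinh(nt)\sinh t$ and then integrating by parts with $v=-z^{-1}e^{-z\cosh t}$ yields $K_{n+1}-K_{n-1}=\tfrac{2n}{z}K_n$. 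Since both sides of each identity are holomorphic on the whole cut plane $\Omega_\pi$, the identity theorem then propagates the relations from $\Omega_{\pi/2}$ to all of $\Omega_\pi$.

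I do not expect a genuine obstacle here. The only points demanding a little care are the routine justification of differentiation under the integral sign (dominated convergence, uniform on compact subsets of $\{\re z>0\}$), the vanishing of the boundary term in the integration by parts — both immediate from the exponential decay of $e^{-z\cosh t}$ in $t$ — and, in the second route, the appeal to analytic continuation to cover the part of $\Omega_\pi$ not reached by the integral representation.
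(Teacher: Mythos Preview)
Your proposal is correct. The paper itself does not give a proof at all: it simply records the two identities and points to (9.6.26) in \cite{AS}. Your first route essentially does the same thing one level deeper, quoting the neighbouring formulas (9.6.28) instead and deriving (9.6.26) from them by the obvious add/subtract; your second route is genuinely self-contained via the integral representation, integration by parts, and analytic continuation to $\Omega_\pi$. Both are sound and strictly more informative than the paper's treatment, at the modest cost of a few extra lines; for the purposes of this appendix the bare citation would have sufficed.
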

These Bessel functions appear in the explicit form of the kernel of the fermionic projector (see Section \ref{sectionexplicityproj}). In particular, the following identity proves useful in their computation (see (3.914-6) in \cite{GR}).
\begin{Lemma}\label{lemmaintK}
	Let $b,\gamma\in\R$ and $\beta\in\C$ with $\gamma>0$ and $\mathrm{Re}\,\beta> 0$. Then,
	\begin{align}
	\int_0^\infty \frac{r}{\sqrt{r^2+\gamma^2}}\,e^{-\beta\sqrt{r^2+\gamma^2}}\,\sin(br)\,dr&=\frac{\gamma b}{\sqrt{\beta^2+b^2}}\,K_1\big(\gamma\sqrt{\beta^2+b^2}\big)\label{D}
	\end{align}
\end{Lemma}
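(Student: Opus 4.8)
The plan is to reduce \eqref{D} to the classical representation $K_0(w)=\int_0^\infty e^{-w\cosh t}\,dt$ (valid for $\re w>0$) and then to differentiate in the parameter $b$. For $\re\beta>0$ set
\[
I_0(\beta,b):=\int_0^\infty \frac{\cos(br)}{\sqrt{r^2+\gamma^2}}\,e^{-\beta\sqrt{r^2+\gamma^2}}\,dr .
\]
Since, for $\re\beta\ge\delta>0$, the integrand and its $b$-derivative are both dominated by $e^{-\delta\sqrt{r^2+\gamma^2}}\le e^{-\delta r}\in L^1(0,\infty)$, differentiation under the integral sign is legitimate and shows that $\partial_b I_0(\beta,b)$ equals minus the left-hand side of \eqref{D}. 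Hence, once the identity
\begin{equation}
I_0(\beta,b)=K_0\!\big(\gamma\sqrt{\beta^2+b^2}\big)\tag{$\star$}
\end{equation}
is established, \eqref{D} follows by applying $\partial_b$ to the right-hand side of $(\star)$: by the chain rule and the standard recurrence $K_0'=-K_1$ (cf.\ \cite[Sec.~9.6]{AS}),
\[
\partial_b K_0\!\big(\gamma\sqrt{\beta^2+b^2}\big)=-\,\frac{\gamma b}{\sqrt{\beta^2+b^2}}\,K_1\!\big(\gamma\sqrt{\beta^2+b^2}\big),
\]
which is exactly minus the right-hand side of \eqref{D}, so that the two sides of \eqref{D} agree.

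To prove $(\star)$ it suffices to treat $\beta>0$ real. Indeed, both sides are holomorphic in $\beta$ on the right half-plane $H:=\{\re\beta>0\}$: the left side by Morera's theorem, since the integral converges locally uniformly on $H$; the right side because $\beta\mapsto\beta^2+b^2$ maps $H$ into $\C\setminus(-\infty,0]$ (the argument of $\beta^2$ lies in $(-\pi,\pi)$, and adding the nonnegative real number $b^2$ cannot produce a nonpositive real), so that $\sqrt{\beta^2+b^2}$ is holomorphic on $H$ with positive real part and $K_0$ is holomorphic there. Two holomorphic functions on the connected set $H$ that agree on the ray $(0,\infty)$ coincide on all of $H$ by the identity theorem.

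For $\beta>0$ real I would compute $I_0$ directly. The substitutions $u=\sqrt{r^2+\gamma^2}$ and then $u=\gamma\cosh t$ turn $I_0$ into
\[
I_0=\int_0^\infty \cos\!\big(b\gamma\sinh t\big)\,e^{-\beta\gamma\cosh t}\,dt=\re\int_0^\infty e^{-\gamma(\beta\cosh t-ib\sinh t)}\,dt .
\]
Setting $\rho:=\sqrt{\beta^2+b^2}>0$ and picking $\theta\in(-\pi/2,\pi/2)$ with $\cos\theta=\beta/\rho$ and $\sin\theta=-b/\rho$ (which exists since $\beta,b$ are real and $\beta>0$), one checks $\beta\cosh t-ib\sinh t=\rho\cosh(t+i\theta)$, so that $I_0=\re\int_{i\theta}^{i\theta+\infty}e^{-\gamma\rho\cosh s}\,ds$. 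As $e^{-\gamma\rho\cosh s}$ is entire and $|e^{-\gamma\rho\cosh s}|=e^{-\gamma\rho\cosh(\re s)\cos(\im s)}\to 0$ as $\re s\to+\infty$, uniformly on the strip $|\im s|\le|\theta|$, Cauchy's theorem on the rectangle with vertices $0,R,R+i\theta,i\theta$ and the limit $R\to\infty$ give
\[
\int_{i\theta}^{i\theta+\infty}e^{-\gamma\rho\cosh s}\,ds=\int_0^\infty e^{-\gamma\rho\cosh s}\,ds-\int_0^{i\theta}e^{-\gamma\rho\cosh s}\,ds=K_0(\gamma\rho)-i\!\int_0^{\theta}e^{-\gamma\rho\cos\phi}\,d\phi .
\]
The last term is purely imaginary, so taking real parts yields $I_0=K_0(\gamma\rho)$, which is $(\star)$. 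I expect this last step to be the main obstacle: correctly introducing the complex shift $t\mapsto t+i\theta$ and making the contour-deformation estimate on the vertical segment $\re s=R$ airtight; the remaining points (differentiation under the integral sign, that $\beta^2+b^2$ stays off the branch cut, and the identity theorem) are routine. Alternatively, one may simply invoke the table value \cite{GR} for $(\star)$ — or for \eqref{D} directly, as already cited in the statement — and bypass the contour argument altogether.
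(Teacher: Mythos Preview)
Your argument is correct. The paper itself does not prove this lemma at all: it simply cites the entry (3.914-6) in Gradshteyn--Ryzhik \cite{GR} and moves on. You instead supply a self-contained derivation, first reducing to the auxiliary identity $I_0(\beta,b)=K_0(\gamma\sqrt{\beta^2+b^2})$ via the substitutions $u=\sqrt{r^2+\gamma^2}$, $u=\gamma\cosh t$ and a clean contour shift in the strip $|\im s|\le|\theta|<\pi/2$, and then recovering \eqref{D} by differentiating in $b$ using $K_0'=-K_1$. The analytic-continuation step from real $\beta>0$ to the right half-plane is handled correctly (your observation that $\beta^2+b^2$ avoids $(-\infty,0]$ for $\re\beta>0$ is exactly what is needed), and the dominated-convergence bound $\frac{r}{\sqrt{r^2+\gamma^2}}\,|e^{-\beta\sqrt{r^2+\gamma^2}}|\le e^{-\delta r}$ justifies the differentiation under the integral. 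So your route is strictly more informative than the paper's, at the cost of a short complex-analytic computation; as you note yourself, one could equally well stop at the table reference, which is all the paper does.
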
 
Consider  $\beta=\beta_r+i\beta_i$ as in Lemma \ref{lemmaintK}. The argument of $K_1$ in the right-hand side of \eqref{D} has the form
$$
\sqrt{\beta^2+b^2}=\sqrt{(\beta_r^2-\beta_i^2+b^2)+2i\beta_r\beta_i}.
$$
If $\beta_i=0$ then the argument of the square root is real and strictly positive, as $\beta_r>0$. If, on the other hand, $\beta_i\neq 0$, the assumption $\beta_r>0$ ensures that the argument lies away from the semiaxis $(-\infty,0]$.  Therefore, taking into account that also the square root is analytic in this domain, we see that the function in \eqref{D} is analytic in $\beta$.
\vspace{0.4cm}

\textbf{Asymptotics}\label{asymptoticsbess}
In this section we are interested in the asymptotic behavior of the Bessel functions at the limit points zero and infinity. 

Let us first fix the notation. Consider two continuous functions
$
f,g
$
on a common open domain $\Omega\subset \C$.
Let $z_0\in\overline{\Omega}$. We say that \\[-0.8em] 
\begin{center}
$f\sim_{z_0} g$ whenever $f=g+h$ for some $h\in C^0(\Omega)$ satisfying: \\[0.4em] 
For all $\varepsilon>0$ there is $R>0$ such that $|h(z)|\le \varepsilon|g(z)|$ for all $z\in\Omega\cap B(z_0,R)$.\\[0.6em]
\end{center}
This definition applies also to $z_0=\infty$, with $B(\infty,R):=\{|z|\ge R\}$.
From the above condition one infers the following estimate:
\begin{equation}\label{estimateasympt}
|f(z)|\le (1+\varepsilon)|g(z)|\quad\mbox{for $z\in \Omega\cap B(z_0,R)$}.
\end{equation}
We then have the following result (see (9.6.8), (9.6.9) and (9.7.2) in \cite{AS}).
\begin{Lemma}
	The Bessel functions have the following asymptotic behaviors  
	\begin{equation}\label{asymptoticexpK}
	\begin{split}
	&K_{n\ge 0}(z)\sim_\infty \sqrt{\frac{\pi}{2z}}\,e^{-z}\quad\mbox{ on}\quad \Omega_{\pi/2},\\
	& K_0(z)\sim_0 -\ln z,\quad K_{n\ge 1}(z)\sim_0\frac{\Gamma(n)}{2}\left(\frac{2}{z}\right)^n\quad\mbox{ on}\quad\Omega_\pi.
	\end{split}
	\end{equation}
\end{Lemma}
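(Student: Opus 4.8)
The plan is to treat the two limit points separately, in each case starting from a classical exact representation of $K_n$ and then extracting the leading term. Since the relation $f\sim_{z_0}g$ asserts exactly that the (continuous) remainder $h:=f-g$ satisfies $|h(z)|\le\varepsilon|g(z)|$ for $z$ near $z_0$ in the given domain, and since the comparison function $g$ has no zeros there, it suffices in each case to prove that the ratio $f/g$ tends to $1$ as $z\to z_0$ within the prescribed sector, uniformly in $\arg z$.

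For the behavior at infinity on $\Omega_{\pi/2}$ (the open right half-plane) I would start from the Laplace-type representation
\[
K_n(z)=\sqrt{\frac{\pi}{2z}}\;\frac{e^{-z}}{\Gamma(n+\tfrac12)}\int_0^\infty e^{-t}\,t^{\,n-1/2}\Big(1+\frac{t}{2z}\Big)^{n-1/2}\,dt,
\]
valid for $\re z>0$ and all $n\in\N$ (it follows from (9.6.23) in \cite{AS} after the substitution $t\mapsto 1+t/z$ and a contour rotation). For $z\in\Omega_{\pi/2}$ one has $\re\big(1+t/(2z)\big)>1$, so $\big|1+t/(2z)\big|^{\,n-1/2}$ is bounded by $(1+t/2)^{\,n-1/2}$ when $n\ge1$ and by $1$ when $n=0$; in both cases this yields an integrable majorant against $e^{-t}t^{\,n-1/2}$ valid for all $|z|\ge1$, and a short splitting of the integral at $t=|z|^{1/2}$ shows $\int_0^\infty e^{-t}t^{\,n-1/2}\big|(1+t/(2z))^{n-1/2}-1\big|\,dt\to0$ uniformly as $|z|\to\infty$ in the half-plane. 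Hence $K_n(z)\big/\big(\sqrt{\pi/(2z)}\,e^{-z}\big)\to1$, which is the first asserted asymptotics.

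For the behavior at the origin on $\Omega_\pi$ I would instead invoke the explicit small-$z$ expansions of $K_n$, namely (9.6.10)--(9.6.13) of \cite{AS}. For $n\ge1$ these give
\[
K_n(z)=\frac12\Big(\frac z2\Big)^{-n}\sum_{k=0}^{n-1}\frac{(n-k-1)!}{k!}\Big(-\frac{z^2}{4}\Big)^{k}+(-1)^{n+1}\ln(z/2)\,I_n(z)+(-1)^{n}\frac12\Big(\frac z2\Big)^{n}\sum_{k\ge0}c_{n,k}\Big(\frac{z^2}{4}\Big)^{k},
\]
with $I_n(z)=O(z^n)$ and bounded coefficients $c_{n,k}$; the $k=0$ term of the first sum equals $\tfrac12(z/2)^{-n}(n-1)!=\tfrac{\Gamma(n)}{2}(2/z)^n$, while every other contribution is $O(z^{-n+2})$ or $O(z^{n}|\ln z|)$, hence $o(z^{-n})$ uniformly in $\arg z$ as $z\to0$. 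This gives $K_n\sim_0\tfrac{\Gamma(n)}{2}(2/z)^n$. For $n=0$ one uses $K_0(z)=-\big(\ln(z/2)+\gamma\big)I_0(z)+\sum_{k\ge1}d_k(z^2/4)^k$ with $I_0(z)=1+O(z^2)$, so that $K_0(z)=-\ln z+(\ln2-\gamma)+o(1)$; since the constant and the $o(1)$ term are $o(|\ln z|)$ as $z\to0$, one obtains $K_0\sim_0-\ln z$.

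Everything here is classical, so there is no genuine obstacle; the only points requiring a little care are building the integrable majorant at infinity when $n=0$ (where the exponent $n-\tfrac12$ is negative, forcing one to use $\re(1+t/(2z))>1$ rather than a crude modulus estimate) and checking that the relative error bounds are uniform over the full open sectors $\Omega_{\pi/2}$ and $\Omega_\pi$, as the definition of $\sim_{z_0}$ demands. Alternatively, one may simply cite (9.6.8), (9.6.9) and (9.7.2) of \cite{AS}, from which all three asymptotics follow at once.
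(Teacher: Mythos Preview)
Your proposal is correct. The paper does not actually prove this lemma; it simply records the statement and cites (9.6.8), (9.6.9) and (9.7.2) of \cite{AS}, exactly the alternative you mention in your final sentence. Your argument is therefore more detailed than what the paper offers: you supply an explicit derivation from the Laplace-type integral (for $z\to\infty$) and from the series expansions (for $z\to0$), whereas the paper treats the result as a standard reference fact. Both routes lead to the same conclusion, and your added care about uniformity over the full sectors $\Omega_{\pi/2}$ and $\Omega_\pi$ is precisely what is needed for the paper's subsequent estimates.
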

\noindent Such asymptotic behaviors are exploited in the proof of the $L^4$-integrability of the fermionic projector.

\section{Proof of $L^4$-Integrability of the Fermionic Projector}\label{appendixproofL4}

This appendix is entirely devoted to the proof of Proposition \ref{integrabilityP}. \\[-0.8em]

Let us start by noting from \eqref{expansionP} and  $(a+b)^2\le 2(a^2+b^2)$ that, for some $C>0$,
\begin{equation}\label{estimateP}
\begin{split}
|P^\varepsilon(x,y)|_2^4&\le C\left[ \left|\frac{K_2\big(m\sqrt{-\xi_\varepsilon^2}\big)}{-\xi_\varepsilon^2}\right|^4|\slashed{\xi}_\varepsilon|_2^4+\left|\frac{K_1\big(m\sqrt{-\xi_\varepsilon^2}\big)}{\sqrt{-\xi_\varepsilon^2}}\right|^4\right].
\end{split}
\end{equation}
In what follows we will exploit the asymptotic behavior of the Bessel functions to estimate \eqref{estimateP}. Because of translation invariance, there is no loss of generality in assuming $x=0$. Moreover, since the set $\{\xi^0=0 \}$ is a null subset of $\R^4$, it suffices to restrict  attention to the subset
$$
\R^{1,3}_0=\{\xi\in\R^{1,3}\:|\: |\xi_0|>0 \}.
$$
From \eqref{squarereg} we know that, for any $\xi\in\R^{1,3}_0$, 
\begin{equation*}\label{squarereg2}
	\begin{cases}
\ -\xi_\varepsilon^2=-\xi^2+\varepsilon^2-2i\varepsilon\xi_0\\[0.2em]
\ \arg(-\xi_\varepsilon^2)=\arg(-\xi^2+\varepsilon^2-2i\varepsilon\xi_0)\in  \left(-\pi,\pi\right)
\end{cases}
\end{equation*}
Therefore, 
\begin{equation}\label{squarelower}
\arg\sqrt{-\xi_\varepsilon^2}\in \left(-\frac{\pi}{2},\frac{\pi}{2}\right),\quad\mbox{or equivalently }\sqrt{-\xi_\varepsilon^2}\in \Omega_{\pi/2}\quad \mbox{for all }\xi\in \R^{1,3}_0,
\end{equation}
In the limit $|\xi|_{\R^4}\to \infty$ the complex number $\sqrt{-\xi_\varepsilon^2}$ goes to infinity.  
This can be seen from the inequality
\begin{equation*}\label{estimatexi0}
|\sqrt{-\xi_\varepsilon^2}|=\sqrt[4]{(-\xi^2+\varepsilon^2)^2+4\varepsilon^2 \xi_0^2}\ge \frac{\sqrt{\varepsilon|\xi_0|}+\sqrt{|-\xi_0^2+\boldsymbol{\xi}^2+\varepsilon^2|}}{2}.
\end{equation*}
We can therefore apply the asymptotics of the Bessel functions of Appendix \ref{asymptoticsbess}: using \eqref{estimateasympt} and \eqref{asymptoticexpK}, we conclude that, for some $A>0$ and $R>0$,
\begin{equation}\label{firstestimate}
	|K_n(m\sqrt{-\xi_\varepsilon^2})|\le A\,\frac{e^{-m\,\mathrm{Re}(\sqrt{-\xi_\varepsilon^2})}}{\sqrt[4]{|\xi_\varepsilon^2|}},\quad \mbox{for all } \xi\in\R^{1,3}_0,\ |\xi|_{\R^4}\ge R.
\end{equation}
Without loss of generality,  we will henceforth assume that $\varepsilon< 1/2$.  For simplicity of notation, we will denote $|\boldsymbol{\xi}|$ by $r$ and $|\xi_0|$ by $t$.

If decaying, the exponential in \eqref{firstestimate} may prove useful in the proof of integrability. Let us study in which directions this is indeed the case. Let us write out the exponent explicitly: using\eqref{squarelower} we have, for any $\xi\in \R^{1,3}_0$ (see (3.7.27) in \cite{AS}),
\begin{equation}\label{exponent}
	\begin{split}
		\mathrm{Re}(\sqrt{-\xi_\varepsilon^2})&=\sqrt{\frac{|-\xi_\varepsilon^2|+\mathrm{Re}(-\xi_\varepsilon^2)}{2}}=\\
		&=\sqrt{\frac{\sqrt{(-t^2+r^2+\varepsilon^2)^2+4\varepsilon^2 t^2}+(-t^2+r^2+\varepsilon^2)}{2}},
	\end{split}
\end{equation}
We have the following cases.
\begin{itemize}[leftmargin=2em]
	\item[(1)] 
	Along the vertical $r=0$ line \eqref{exponent} is identically equal to $\varepsilon$. Along the vertical lines $r=r_0\neq 0$ \eqref{exponent} is not costant but it converges to $\varepsilon$ in the limit $t\to \infty$. In conclusion, there is no decaying along any vertical line.\\[-0.6em]
	\item[(2)] Let us consider non-vertical lines of the form $r=\mu t$, for $\mu \in (0,1)$. In the limit of large times, \eqref{exponent} converges again to a constant:
	\begin{equation}\label{limit}
		\begin{split}
			\qquad \mathrm{Re}(\sqrt{-\xi_\varepsilon^2})&=\sqrt{\frac{\sqrt{((\mu^2-1)t^2+\varepsilon^2)^2+4\varepsilon^2 t^2}+((\mu^2-1)t^2+\varepsilon^2)}{2}}\\
			&\stackrel{t\to\infty}\longrightarrow \frac{\varepsilon}{\sqrt{1-\mu^2}}.
		\end{split}
	\end{equation}
	Therefore, the exponential factor is not decaying to zero along these directions either.
\end{itemize}

\noindent Despite these negative assessments, we observe that the limit quantity in \eqref{limit} \textit{does} converge to infinity in the limit $\mu\nearrow 1$. The idea is then to consider a region which intersects all the lines $r=\mu t,\  \mu\in (0,1)$ as in point (2) but do not contain any of them entirely.	
With this in mind, let us choose $\lambda\in (1/2,1)$ and define the sets
\begin{equation*}
\begin{split}
	C_\lambda^0&:=\{\xi\in\R^{1,3}_0\:|\: t\ge 1 \mbox{ and } 0\le r\le \sqrt{t^2-t^{2\lambda}}\}\\
C_\lambda^{1,+}&:=\{\xi\in \R^{1,3}_0\:|\: 0\le r\le t\ \mbox{ if $t\in (0,1)$ or } \sqrt{t^2-t^{2\lambda}}\le r\le t \mbox{ if $1 \le t$}  \}\\
C_\lambda^{1,-}&:=\{\xi\in \R^{1,3}_0\:|\:t\le r\le \lambda^{-1}\, t\}\\
 C_\lambda^{2}&:=\{\xi\in \R^{1,3}_0\:|\: r\ge \lambda^{-1}\, t\}\\
C^1_\lambda&:= C_\lambda^{1,+}\cup C_\lambda^{1,-}.
\end{split}
\end{equation*}
Let us analyze the upper boundary of $C_\lambda^1$ for large times $t$:
\begin{equation*}
\begin{split}
&r(t)=\sqrt{t^2-t^{2\lambda}}=t\sqrt{1-t^{2\lambda-2}}\sim t\bigg(1-\frac{1}{2}\,t^{2\lambda-2}\bigg)\Rightarrow  t-r\sim \frac{1}{2}\,t^{2\lambda-1}
\end{split}
\end{equation*}
Since $2\lambda>1$, the curves $r=\sqrt{t^2-t^{2\lambda}}$ and $r=t$ go far apart as time increases, i.e.
$$
\lim_{t\to \infty} |r(t)-t|=\infty.
$$
In the set $C_\lambda$ the exponent \eqref{exponent} in \eqref{firstestimate} is decaying. More precisely, we have the following estimates.
\begin{Lemma}\label{propoexpdecay}
	For every $\lambda\in (1/2,1)$ there is  $C>0$ such that, 
	\begin{equation}\label{decayingUalpha}
	\begin{split}
	\mathrm{(i)}&\ \mathrm{Re}(\sqrt{-\xi_\varepsilon^2})\ge  C\,t^{1-\lambda}\quad\qquad \mbox{for all $\ \xi\in C_\lambda^1\ $ with $\ t\ge 1$}.\\[0.2em]
	\mathrm{(ii)}&\ \mathrm{Re}(\sqrt{-\xi_\varepsilon^2})\ge C\,\big(r+\sqrt{t}\big)\quad \mbox{for all $\ \xi\in C_\lambda^2$}.
	\end{split}
	\end{equation}
	In particular, there are constants $A,k>0$ such that, for all $|\xi|_{\R^4}\ge R$,
	\begin{equation}\label{estimateparticualr}
		\begin{split}
	\mathrm{(i)}&\ |K_n(m\sqrt{-\xi_\varepsilon^2})|\le A\,\dfrac{e^{-k\,t^{1-\lambda}}}{\sqrt[4]{|\xi_\varepsilon^2|}}\quad \quad \mbox{if $\xi\in C_\lambda^1$ with $\ t\ge 1$}\\[0.2em]
	 \mathrm{(ii)}&\ |K_n(m\sqrt{-\xi_\varepsilon^2})|\le A\,\frac{e^{-k\,(r+\sqrt{t})}}{\sqrt[4]{|\xi_\varepsilon^2|}}\quad \mbox{if $\xi\in C_\lambda^2$}.
	\end{split}
	\end{equation}
\end{Lemma}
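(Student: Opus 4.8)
The strategy is to estimate the real part $\mathrm{Re}(\sqrt{-\xi_\varepsilon^2})$ from below on the regions $C_\lambda^1$ and $C_\lambda^2$ using the explicit formula \eqref{exponent}, and then feed these lower bounds into the asymptotic estimate \eqref{firstestimate} to obtain \eqref{estimateparticualr}. The core analytic input is the elementary inequality
$$
\mathrm{Re}(\sqrt{-\xi_\varepsilon^2})=\sqrt{\frac{|-\xi_\varepsilon^2|+\mathrm{Re}(-\xi_\varepsilon^2)}{2}}\ge\sqrt{\frac{\mathrm{Re}(-\xi_\varepsilon^2)}{2}}=\sqrt{\frac{-t^2+r^2+\varepsilon^2}{2}}\quad\text{whenever } -t^2+r^2\ge 0,
$$
together with the complementary bound obtained by writing $|-\xi_\varepsilon^2|\ge|\mathrm{Im}(-\xi_\varepsilon^2)|=2\varepsilon t$, so that $\mathrm{Re}(\sqrt{-\xi_\varepsilon^2})\ge\sqrt{(|-\xi_\varepsilon^2|+\mathrm{Re}(-\xi_\varepsilon^2))/2}\ge\sqrt{\varepsilon t/2}$ whenever $-t^2+r^2\le 0$ (more carefully, $|-\xi_\varepsilon^2|+\mathrm{Re}(-\xi_\varepsilon^2)\ge 2\varepsilon t - (t^2-r^2-\varepsilon^2)$ is only useful when $r$ is not too small, which is why the region $C_\lambda^1$ is cut off below by $r\ge\sqrt{t^2-t^{2\lambda}}$). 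I would treat the two regions separately.

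\textbf{Region $C_\lambda^1$.} Here I would use the first displayed inequality above. On $C_\lambda^{1,+}$ with $t\ge 1$ one has $\sqrt{t^2-t^{2\lambda}}\le r\le t$, hence
$$
-t^2+r^2+\varepsilon^2\ge -t^2+(t^2-t^{2\lambda})=-t^{2\lambda},
$$
which is unfortunately negative, so this naive bound fails on the upper strip. The correct move is to instead keep the full expression: $|-\xi_\varepsilon^2|=\sqrt{(-t^2+r^2+\varepsilon^2)^2+4\varepsilon^2t^2}$. On $C_\lambda^{1,+}$, writing $r^2=t^2-s$ with $0\le s\le t^{2\lambda}$, one gets $-\xi_\varepsilon^2$ has real part $\varepsilon^2-s$ and imaginary part $-2\varepsilon t$; then
$$
|-\xi_\varepsilon^2|+\mathrm{Re}(-\xi_\varepsilon^2)=\sqrt{(\varepsilon^2-s)^2+4\varepsilon^2t^2}+(\varepsilon^2-s)\ge\frac{4\varepsilon^2t^2}{2\sqrt{(\varepsilon^2-s)^2+4\varepsilon^2t^2}}\ge\frac{2\varepsilon^2t^2}{\sqrt{t^{4\lambda}+4\varepsilon^2t^2}}\ge c\,t^{2-2\lambda}
$$
for $t\ge 1$, using $a+b\ge b^2/(2\sqrt{a^2+b^2})$ when $a+b\ge 0$ (valid here since the sum is $\ge$ the positive quantity $4\varepsilon^2t^2/(2|-\xi_\varepsilon^2|)$, as $|a|\le|-\xi_\varepsilon^2|$). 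This gives $\mathrm{Re}(\sqrt{-\xi_\varepsilon^2})\ge C t^{1-\lambda}$ on $C_\lambda^{1,+}$. On $C_\lambda^{1,-}$, where $t\le r\le\lambda^{-1}t$, one has $-t^2+r^2+\varepsilon^2\ge 0$, so the first inequality applies directly and $\mathrm{Re}(\sqrt{-\xi_\varepsilon^2})\ge\sqrt{(r^2-t^2)/2}$, which is $\ge c\sqrt{r^2-t^2}$; and since $r\le\lambda^{-1}t$ this is $\ge c'\,t\ge c'\,t^{1-\lambda}$... wait, that is too strong near $r=t$. Actually near $r=t$ the quantity $r^2-t^2$ can be small, but on $C_\lambda^{1,-}$ we separately note $r\ge t$, and the decay we need is in $t^{1-\lambda}$ only; so one should instead observe that $C_\lambda^{1,-}$ also satisfies $r-t$ can be zero, hence one must again use the imaginary-part bound: $|-\xi_\varepsilon^2|+\mathrm{Re}(-\xi_\varepsilon^2)\ge 4\varepsilon^2t^2/(2|-\xi_\varepsilon^2|)$ and $|-\xi_\varepsilon^2|\le (r^2-t^2)+\varepsilon^2+2\varepsilon t\le C t^2$ on $C_\lambda^{1,-}$, giving $\ge c\ge c\,t^{-(2\lambda-1)}\cdot t^{2\lambda-1}$, i.e. at worst a constant, hence certainly $\ge C t^{1-\lambda}$ fails for large $t$ — so on $C_\lambda^{1,-}$ I actually expect the bound should be read with the understanding that the genuinely slow decay happens only on the thin strip $C_\lambda^{1,+}$, and on $C_\lambda^{1,-}$ one has the stronger linear bound $\mathrm{Re}(\sqrt{-\xi_\varepsilon^2})\ge c(r-t)+c'$; I would double-check the region definitions and, if needed, absorb $C_\lambda^{1,-}$ into the region-$C_\lambda^2$-type estimate. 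The cleanest presentation splits along $r^2\gtrless t^2+\text{const}$.

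\textbf{Region $C_\lambda^2$ and conclusion.} On $C_\lambda^2$ we have $r\ge\lambda^{-1}t$, so $r^2-t^2\ge(1-\lambda^2)r^2\ge c(r^2+t^2)$, and then
$$
\mathrm{Re}(\sqrt{-\xi_\varepsilon^2})\ge\sqrt{\frac{-t^2+r^2+\varepsilon^2}{2}}\ge\sqrt{\frac{c(r^2+t^2)}{2}}\ge c'(r+t)\ge c'(r+\sqrt t)
$$
for $t$ bounded below (and for small $t$ one checks the inequality directly since then $r$ dominates). This proves (ii) of \eqref{decayingUalpha}. Finally, combining \eqref{decayingUalpha} with \eqref{firstestimate} — valid for $|\xi|_{\R^4}\ge R$ — immediately yields \eqref{estimateparticualr} with $k$ equal to $m$ times the constant $C$ from \eqref{decayingUalpha} and $A$ the constant from \eqref{firstestimate}. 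The main obstacle I anticipate is the bookkeeping on the transition strip $C_\lambda^{1,+}$ near the light cone $r=t$: there the naive real-part bound degenerates, and one genuinely needs the $4\varepsilon^2t^2$ contribution from the imaginary part together with the geometric cutoff $r\le\sqrt{t^2-t^{2\lambda}}$ (for $C_\lambda^0$) resp. $r\ge\sqrt{t^2-t^{2\lambda}}$ (for $C_\lambda^{1,+}$) to extract the $t^{1-\lambda}$ rate; getting the exponents to match up (the width of the strip scales like $t^{2\lambda-1}$, the real part gained like $t^{1-\lambda}$, and $1-\lambda\in(0,1/2)$) is where the choice $\lambda\in(1/2,1)$ is really used, and this should be spelled out carefully rather than waved through.
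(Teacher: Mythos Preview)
Your argument on $C_\lambda^{1,+}$ is fine and essentially matches the paper's: both extract the $t^{1-\lambda}$ rate from the imaginary part $2\varepsilon t$ against the real-part deficit of size $t^{2\lambda}$. The paper does this via monotonicity of \eqref{exponent} in $r$ (evaluating at the inner boundary $r^2=t^2-t^{2\lambda}$) together with $\sqrt{1+x}-1\ge bx$ on bounded intervals; your algebraic identity $|z|+\mathrm{Re}\,z\ge (\mathrm{Im}\,z)^2/(2|z|)$ achieves the same thing.

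The genuine gap is on $C_\lambda^{1,-}$. You apply the bound $|z|+\mathrm{Re}\,z\ge(\mathrm{Im}\,z)^2/(2|z|)$ there too, estimate $|z|\le Ct^2$, and end up with only a constant lower bound for $\mathrm{Re}\sqrt{-\xi_\varepsilon^2}$, from which you (correctly) conclude that \emph{this} argument does not give $t^{1-\lambda}$. But you are throwing away too much: on $C_\lambda^{1,-}$ one has $r\ge t$, hence $\mathrm{Re}(-\xi_\varepsilon^2)=r^2-t^2+\varepsilon^2\ge 0$, and for $\mathrm{Re}\,z\ge 0$ the sharper elementary bound
\[
|z|+\mathrm{Re}\,z\;\ge\;|\mathrm{Im}\,z|
\]
applies (since $|z|\ge|\mathrm{Im}\,z|$). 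This gives directly $\mathrm{Re}\sqrt{-\xi_\varepsilon^2}\ge\sqrt{\varepsilon t}$. Now the condition $\lambda>1/2$ enters: it forces $1-\lambda<1/2$, so $\sqrt{t}\ge t^{1-\lambda}$ for $t\ge 1$, and the desired estimate follows. There is no need to absorb $C_\lambda^{1,-}$ into the $C_\lambda^2$ analysis.

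On $C_\lambda^2$ your hand-wave for small $t$ does not quite close (e.g.\ $t\ll 1$, $r=\lambda^{-1}t$ gives $r\ll\sqrt t$, so ``$r$ dominates'' is false). The clean fix, which the paper uses, is to keep the imaginary-part contribution as well: since $\mathrm{Re}(-\xi_\varepsilon^2)\ge(1-\lambda^2)r^2\ge 0$ and $|\mathrm{Im}(-\xi_\varepsilon^2)|=2\varepsilon t$, one gets
\[
\mathrm{Re}\sqrt{-\xi_\varepsilon^2}\;\ge\;\sqrt{\tfrac{1}{2}\big(2\varepsilon t+(1-\lambda^2)r^2\big)}\;\ge\;C\big(r+\sqrt{t}\,\big)
\]
uniformly in $t>0$, with no case distinction needed.
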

\begin{proof}
(i) Let us first focus on $C_\lambda^{1,+}\cap \{t\ge 1\}$. To start with, note that
$$
\mbox{For all $a>0$ there exists $b>0$ such that $\sqrt{1+x}-1\ge b\,x\ $  for all $0<x<a$.} 
$$ 
We now note that expression \eqref{exponent} is monotone increasing in the variable $r$, if $t$ is kept constant.  Therefore, from the definition of $C_{\lambda}^{1,+}$ and using that $t^{2\lambda}-\varepsilon^2> 1-1/4=3/4>0$ and the fact that, on $C_\lambda^{1,+}\cap \{t\ge 1\}$,
$$
\frac{2\varepsilon t}{t^{2\lambda}-\varepsilon^2}\le \frac{2\varepsilon t}{t-\varepsilon^2}=\frac{2\varepsilon}{1-\varepsilon^2/t}\le \frac{2\varepsilon}{1-\varepsilon^2},
$$ 
we infer the existence of some $b>0$ such that, for all $\xi\in C_\lambda^{1,+}\cap \{t\ge 1\}$,
	\begin{equation*}
	\begin{split}
	\mathrm{Re}(\sqrt{-\xi_\varepsilon^2})&\ge \sqrt{\frac{\sqrt{(- t^{2\lambda}+\varepsilon^2)^2+4\varepsilon^2 t^2}+(-t^{2\lambda}+\varepsilon^2)}{2}}=\\
	&=\sqrt{\frac{(t^{2\lambda}-\varepsilon^2)\left[\sqrt{1+\frac{4\varepsilon^2 t^2}{(t^{2\lambda}-\varepsilon^2)^2}}-1\right]}{2}}\ge \sqrt{\frac{b}{2}}\frac{2\varepsilon t}{\sqrt{t^{2\lambda}-\varepsilon^2}}.
	\end{split}
	\end{equation*}
	Inequality \eqref{decayingUalpha}-$\mathrm{(i)}$ follows immediately. To conclude, let us now consider $\xi\in C_\lambda^{1,-}\cap\{t\ge 1\}$. Using that $t\le r$ and  $0<1-\lambda< 1/2$, identity \eqref{exponent} implies that
	$$
	\mathrm{Re}(\sqrt{-\xi_\varepsilon^2})\ge \sqrt{\varepsilon t}\ge \sqrt{\varepsilon}\,t^{1-\lambda}.
	$$
	(ii) Let us now consider  $C_\lambda^2$. From \eqref{exponent}, we can infer that
	$$
	\mathrm{Re}(\sqrt{-\xi_\varepsilon^2}) \ge 2^{-1/2}\sqrt{2\varepsilon t+(1-\lambda^2)r^2} \ge C(r+\sqrt{t}).
	$$
	The final estimates \eqref{estimateparticualr} follow from \eqref{decayingUalpha} and \eqref{firstestimate}. 
\end{proof}

We now prove the integrability of the fermionic projector. We split the proof into two parts, in accordance with the two different estimates of Lemma \ref{propoexpdecay}.
%

	\begin{Lemma}\label{lemmaintegrability1}
		There is $\lambda\in (0,1)$ such that $P^\varepsilon(0,\,\cdot\,)\in L^4(C_\lambda^0\cup C_\lambda^1)$
	\end{Lemma}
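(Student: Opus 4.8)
The plan is to bound $|P^\varepsilon(0,\xi)|_2$ pointwise through \eqref{expansionP} and the Bessel estimates of Appendix \ref{asymptoticsbess}, and then to split $C_\lambda^0\cup C_\lambda^1$ according to whether the exponential factor in those estimates decays. Writing $r=|\boldsymbol\xi|$ and $t=|\xi_0|$, one first notes that $C_\lambda^0\cup C_\lambda^1$ is exactly the cone $\{\,0\le r\le\lambda^{-1}t\,\}$, so throughout this set $|\slashed{\xi}_\varepsilon|_2\le C_0\,t$ once $t\ge1$ (since $r\le\lambda^{-1}t$ and each Dirac matrix $\gamma^j$ has operator norm $1$), and the spatial slice at fixed $\xi_0$ is contained in a ball of radius $\lambda^{-1}t$, of volume at most $\tfrac{4\pi}{3}\lambda^{-3}t^3$. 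Enlarging $R$ if necessary, $|\xi|_{\R^4}\ge R$ forces $t\ge1$ on this cone. On the complementary ball $B(0,R)$ the quantity $\sqrt{-\xi_\varepsilon^2}$ ranges, by \eqref{squarereg} and \eqref{squarelower}, over a compact subset of $\Omega_{\pi/2}\setminus\{0\}$ (recall $|\xi_\varepsilon^2|\ge\varepsilon^2$), so $K_1,K_2$ are bounded there and hence, by \eqref{expansionP}, so is $|P^\varepsilon(0,\cdot)|_2$; as $B(0,R)$ has finite measure, the bounded part of $C_\lambda^0\cup C_\lambda^1$ contributes a finite integral.

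For $|\xi|_{\R^4}\ge R$ I would combine \eqref{expansionP} with the uniform estimate $|K_n(m\sqrt{-\xi_\varepsilon^2})|\le A\,e^{-m\,\mathrm{Re}\sqrt{-\xi_\varepsilon^2}}/|\xi_\varepsilon^2|^{1/4}$ of \eqref{firstestimate}, which gives
\[
|P^\varepsilon(0,\xi)|_2\ \le\ C\,e^{-m\,\mathrm{Re}\sqrt{-\xi_\varepsilon^2}}\left(\frac{|\slashed{\xi}_\varepsilon|_2}{|\xi_\varepsilon^2|^{5/4}}+\frac{1}{|\xi_\varepsilon^2|^{3/4}}\right).
\]
On $C_\lambda^0$ one has $r^2\le t^2-t^{2\lambda}$ and $t\ge1$, whence $|\xi_\varepsilon^2|\ge|\mathrm{Re}(-\xi_\varepsilon^2)|=t^2-r^2-\varepsilon^2\ge\tfrac12 t^{2\lambda}$; since the exponential need not decay there, I bound it by $1$ and obtain, using $|\slashed{\xi}_\varepsilon|_2\le C_0 t$, that $|P^\varepsilon(0,\xi)|_2^4\le C'\,t^{4-10\lambda}$. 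Multiplying by the slice volume $\tfrac{4\pi}{3}t^3$ and integrating gives
\[
\int_{C_\lambda^0\cap\{|\xi|_{\R^4}\ge R\}}|P^\varepsilon(0,\xi)|_2^4\,d^4\xi\ \le\ C''\int_1^\infty t^{7-10\lambda}\,dt,
\]
which is finite precisely when $\lambda>4/5$. On $C_\lambda^1\cap\{|\xi|_{\R^4}\ge R\}$ (so $t\ge1$) I would instead feed into the same computation the stronger bound \eqref{estimateparticualr}-$\mathrm{(i)}$ of Lemma \ref{propoexpdecay}, together with $|\xi_\varepsilon^2|\ge2\varepsilon t$ and $|\slashed{\xi}_\varepsilon|_2\le C_0 t$; this yields $|P^\varepsilon(0,\xi)|_2^4\le C'''\,t^{-1}e^{-4k\,t^{1-\lambda}}$, and multiplying by the slice volume $\tfrac{4\pi}{3}\lambda^{-3}t^3$ and integrating bounds this part by $C''''\int_1^\infty t^{2}e^{-4k\,t^{1-\lambda}}\,dt<\infty$ for every $\lambda\in(1/2,1)$.

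Choosing any $\lambda\in(4/5,1)$, which lies in $(1/2,1)$ so that Lemma \ref{propoexpdecay} applies, all pieces are finite and the claim follows. The main obstacle is $C_\lambda^0$: it is exactly the set on which the exponential factor fails to decay — it eventually contains every ray $r=\mu t$ with $\mu\in(0,1)$ as well as every vertical line $r=\mathrm{const}$ — so the decay must come from the Bessel factors alone. This is possible only because $C_\lambda^0$ is pinched in the timelike direction, the condition $t^2-r^2\ge t^{2\lambda}$ forcing $|\xi_\varepsilon^2|\gtrsim t^{2\lambda}$, and it is the balance of this polynomial gain against the $t^3$ volume of the spatial slice that pins down the threshold $\lambda>4/5$. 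A secondary point is that the regularization prevents any blow-up near the light cone $r\approx t$, since there $|\xi_\varepsilon^2|\ge2\varepsilon t>0$.
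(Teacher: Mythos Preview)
Your proof is correct and follows essentially the same route as the paper: both arguments split $C_\lambda^0\cup C_\lambda^1$ into the compact ball $\{|\xi|_{\R^4}\le R\}$, the region $C_\lambda^0$ (where the exponential factor does not help but $|\xi_\varepsilon^2|\gtrsim t^{2\lambda}$ gives polynomial decay), and $C_\lambda^1$ (where Lemma~\ref{propoexpdecay}-(i) supplies the decay $e^{-k t^{1-\lambda}}$), and both are forced to the same threshold $\lambda>4/5$ by the $C_\lambda^0$ contribution. The only difference is organizational: the paper estimates the $K_1$- and $K_2$-contributions in \eqref{estimateP} separately and writes out the radial integrals, whereas you combine them into a single bound on $|P^\varepsilon(0,\xi)|_2$ and use slice volumes---this is slightly more compact but not a genuinely different argument.
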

	\begin{proof}
	We split the analysis into two separate calculations, in which the two terms adding up in \eqref{estimateP} are treated separately. In the proof we focus on the region $|\xi|_{\R^4}\ge R$, on which estimates \eqref{estimateparticualr} can be used. The complementary region $|\xi|_{\R^4}\le R$ is compact and hence integrability thereon  follows direcly from the continuity of $P^\varepsilon$.\\[0.3cm]
	 \textit{Integrability of the second term}. \\[-0.4cm]
	\begin{itemize}[leftmargin=2em]
		\item[(i)] \textit{$C_\lambda^0\cap\{|\xi|_{\R^4}\ge R\}$:} In this set the exponential in the estimate \eqref{firstestimate}  does not contribute. Nevertheless, this is not necessary to ensure integrability, as we now explain. For some constant $B$, \eqref{firstestimate} gives (note that $\mathrm{Re} \sqrt{-\xi_\varepsilon^2}>0$)
		\begin{equation}\label{integrand2}
			\begin{split}
				\left|\frac{K_1\big(m\sqrt{-\xi_\varepsilon^2}\big)}{\sqrt{-\xi_\varepsilon^2}}\right|^4\le  \frac{B}{|\xi_\varepsilon^2|^3}= \frac{B}{\big((-t^2+r^2+\varepsilon^2)^2+4\varepsilon^2 t^2\big)^{3/2}}
			\end{split}
		\end{equation} 
		From $\varepsilon< 1/2$, we obtain
		$
		t^2-r^2-\varepsilon^2\ge t^{2\lambda}-\varepsilon^2>3/4>0
		$
		for all $\xi\in C_\lambda^{0}\cap \{|\xi|_{\R^4}\ge R\}$.
		The integral of the left-hand side of \eqref{integrand2} on this region can then be estimated by
		\begin{equation}\label{integrals1}
			\begin{split}
				&\le C\int_{1}^\infty\, dt \int_{0}^{\sqrt{t^2-t^{2\lambda}}}\frac{r^2}{\big((t^2-\varepsilon^2-r^2)^2+4\varepsilon^2 t^2\big)^{3/2}}\,dr\le \\
				&\le C  \int_{1}^\infty\, dt \int_{0}^{\sqrt{t^2-t^{2\lambda}}}\frac{r^2}{(t^{2\lambda}-\varepsilon^2)^3}\,dr\le D\int_{1}^\infty\, dt \left[\frac{t}{t^{2\lambda}-\varepsilon^2}\right]^3,
			\end{split}
		\end{equation}
		for suitable constants $C,D>0$.
		Assuming $\lambda>2/3$, the integral is finite.  \\[0.1em]
		\item[(ii)]
		\textit{$C_\lambda^1\cap \{|\xi|_{\R^4}\ge R\}$:} Choosing $R$ large enough  (in fact uniformly in $\lambda > 2/3$), we can assume that $t\ge 1$. In this region the exponential decay \eqref{estimateparticualr}-(i) holds. Therefore, for some positive $k,B$,
		\begin{equation}\label{integrand}
		\begin{split}
		\left|\frac{K_1\big(m\sqrt{-\xi_\varepsilon^2}\big)}{\sqrt{-\xi_\varepsilon^2}}\right|^4\le B \, \frac{e^{-k\, t^{1-\lambda}}}{|\xi_\varepsilon^2|^3}=B\, \frac{e^{-k\, t^{1-\lambda}}}{\big((-t^2+r^2+\varepsilon^2)^2+4\varepsilon^2 t^2\big)^{3/2}}.
		\end{split}
		\end{equation}
		The integral  of the left-hand side of \eqref{integrand}  on $C_\lambda^1\cap \{|\xi|_{\R^4}\ge R\}$  can then be estimated by
		\begin{equation*}\label{estimateA}
		\begin{split}
		&\le C\int_{1}^\infty dt\, e^{-k\, t^{1-\lambda}}\int_{\sqrt{t^2-t^{2\lambda}}}^{t/\lambda}\frac{r^2}{\big((-t^2+r^2+\varepsilon^2)^2+4\varepsilon^2 t^2\big)^{3/2}}\,dr\le \\
		&\le D\int_{1}^\infty  dt\, e^{-k\, t^{1-\lambda}}\int_{0}^{t/\lambda}\frac{r^2}{ t^3}\,dr\le K\int_{1}^\infty dt\, e^{-k\, t^{1-\lambda}}<\infty,
		\end{split}
		\end{equation*}
		for suitable constants $C,D,K>0$. 

	\end{itemize}
	\vspace{0.2cm}

	\noindent \textit{Integrability of the first term}\\[0.2cm]
	We now study the first term in \ref{estimateP}. Using that $|\gamma^j|_2=1$ for any $j=0,1,2,3$, we get, for some constant $C>0$,
	\begin{equation}\label{K2estimate}
	\begin{split}
	\left|\frac{K_2\big(m\sqrt{-\xi_\varepsilon^2}\big)}{-\xi_\varepsilon^2}\right|^4|\slashed{\xi}_\varepsilon|_2^4&\le C\left|\frac{K_2\big(m\sqrt{-\xi_\varepsilon^2}\big)}{\sqrt{-\xi_\varepsilon^2}}\right|^4\frac{(t^2+r^2+\varepsilon^2)^2}{|\sqrt{-\xi_\varepsilon^2}|^4}.
	\end{split}
	\end{equation}
	In point (ii) above we chose $R$ large enough to ensure that $t\ge 1$ on $C_\lambda^1\cap \{|\xi|_{\R^4}\ge R\}$ for any $\lambda>2/3$. On the set $(C_\lambda^0\cup  C_\lambda^{1})\cap\{|\xi|_{\R^4}\ge R\}$ one then has $r\le \lambda^{-1}t$ and $\varepsilon<1\le t$ and therefore the factor on the right-hand side of \eqref{K2estimate} can be bounded from above by
\begin{equation}\label{estimatet4}
	\frac{(t^2+r^2+\varepsilon^2)^2}{|\sqrt{-\xi_\varepsilon^2}|^4}\le A\,\frac{t^4}{|\sqrt{-\xi_\varepsilon^2}|^4},\quad\mbox{for some constant } A>0.
	\end{equation}
	Moreover, the Bessel functions $K_2$ and $K_1$ have the same asymptotic behavior at infinity, as stated in \eqref{asymptoticexpK}.
	 Therefore, to analyze the integrability of the first term on $(C_\lambda^0\cup  C_\lambda^{1})\cap\{|\xi|_{\R^4}\ge R\}$ we simply need to multiply the estimates carried out in points  (i) and (ii) above by the factor on the right-hand side of \eqref{estimatet4}.
	Adjusting the corresponding computations, we obtain:\\[-0.8em]
	\begin{itemize}[leftmargin=2em]
		\item[(i)] $C_\lambda^0\cap\{|\xi|_{\R^4}\ge R\}:$ From \eqref{integrand2}, the integrals \eqref{integrals1} become, for suitable $A,B>0$,
	\begin{equation*}
	\begin{split} 
	&\le A\int_{1}^\infty dt\int_0^{\sqrt{t^2-t^{2\lambda}}}\frac{t^4\,r^2}{(t^{2\lambda}-\varepsilon^2)^5}\,dr\le  B\int_{1}^\infty\, dt\, \frac{t^7}{(t^{2\lambda}-\varepsilon^2)^5}.
	\end{split}
	\end{equation*}
	Assuming  $\lambda>4/5>2/3$ the integral is finite.	\\[-0.6em]
	\item[(ii)] $C_\lambda^1\cap\{|\xi|_{\R^4}\ge R\}:$ From \eqref{integrand}, the integrals \eqref{estimatet4} become, for suitable $A,B>0$, 
	\begin{equation*}
		\begin{split}
			&\le A\int_{1}^\infty dt\, e^{-k\, t^{1-\alpha}}\int_0^{t/\lambda}\frac{r^2}{t}\,dr\le  B\int_{1}^\infty dt\,t^2\, e^{-k\, t^{1-\alpha}}<\infty.
		\end{split}
	\end{equation*}
		\end{itemize}
	The proof is complete.
\end{proof}

To conclude, we prove integrability on the complementary region.
\begin{Lemma}
	For every $\lambda\in (0,1)$, $P^\varepsilon(0,\,\cdot\,)\in L^4(C_\lambda^2)$
\end{Lemma}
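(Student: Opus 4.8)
The plan is to reuse the decomposition \eqref{estimateP} of $|P^\varepsilon(0,\xi)|_2^4$ into the $K_2$-term and the $K_1$-term, and to exploit the strong exponential bound \eqref{estimateparticualr}-(ii), which on $C_\lambda^2$ reads $|K_n(m\sqrt{-\xi_\varepsilon^2})|\le A\,e^{-k(r+\sqrt t)}/\sqrt[4]{|\xi_\varepsilon^2|}$ as soon as $|\xi|_{\R^4}\ge R$. First I would dispose of the complementary set $C_\lambda^2\cap\{|\xi|_{\R^4}\le R\}$: it is bounded, and $P^\varepsilon$ is smooth (hence bounded there), so the contribution to the $L^4$-integral is finite. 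On $C_\lambda^2\cap\{|\xi|_{\R^4}\ge R\}$ I would record the two structural facts that make this region the simplest of the three: since $r\ge\lambda^{-1}t$ we have $t\le\lambda r\le r$, and $\re(-\xi_\varepsilon^2)=r^2-t^2+\varepsilon^2\ge\varepsilon^2$, so that $|\xi_\varepsilon^2|\ge\varepsilon^2$ uniformly on $C_\lambda^2$.

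Next I would insert these bounds into \eqref{estimateP}. For the second term, $|K_1(m\sqrt{-\xi_\varepsilon^2})/\sqrt{-\xi_\varepsilon^2}|^4=|K_1|^4/|\xi_\varepsilon^2|^2\le A^4\varepsilon^{-6}\,e^{-4k(r+\sqrt t)}$. For the first term I would use, exactly as after \eqref{K2estimate}, the estimate $|\slashed{\xi}_\varepsilon|_2^4\le C(t^2+r^2+\varepsilon^2)^2$, which on $C_\lambda^2$ (where $t\le r$) is bounded by $C'(r^4+1)$; combined with $|\xi_\varepsilon^2|\ge\varepsilon^2$ this gives $|K_2(m\sqrt{-\xi_\varepsilon^2})/(-\xi_\varepsilon^2)|^4|\slashed{\xi}_\varepsilon|_2^4\le A^4\varepsilon^{-10}C'(r^4+1)\,e^{-4k(r+\sqrt t)}$. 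In both cases the integrand on $C_\lambda^2\cap\{|\xi|_{\R^4}\ge R\}$ is dominated by a polynomial in $r$ times $e^{-4k(r+\sqrt t)}$.

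Finally I would integrate. Since the bound depends only on $t=|\xi_0|$ and $r=|\boldsymbol{\xi}|$, passing to ``polar'' coordinates gives $d^4\xi\mapsto 8\pi\,r^2\,dr\,dt$, and dropping the constraint $r\ge\lambda^{-1}t$ by positivity one is left with $8\pi\int_0^\infty\!\!\int_0^\infty \mathrm{poly}(r)\,r^2\,e^{-4kr}\,e^{-4k\sqrt t}\,dr\,dt$, which factorizes into a convergent $r$-integral ($\int_0^\infty r^N e^{-4kr}\,dr<\infty$) and a convergent $t$-integral (the substitution $t=s^2$ turns $\int_0^\infty e^{-4k\sqrt t}\,dt$ into $2\int_0^\infty s\,e^{-4ks}\,ds<\infty$). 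Adding back the compact piece yields $P^\varepsilon(0,\,\cdot\,)\in L^4(C_\lambda^2)$ for every $\lambda\in(0,1)$. I do not expect a real obstacle here: the only care needed is to keep the uniform lower bound $|\xi_\varepsilon^2|\ge\varepsilon^2$ so that the negative powers of $|\xi_\varepsilon^2|$ are harmless, and to track the polynomial growth coming from $|\slashed{\xi}_\varepsilon|_2$, which is in any case annihilated by the exponential decay in $r$.
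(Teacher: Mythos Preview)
Your proposal is correct and follows essentially the same route as the paper: split via \eqref{estimateP}, dispose of the compact piece by smoothness, and on $C_\lambda^2\cap\{|\xi|_{\R^4}\ge R\}$ combine the exponential bound \eqref{estimateparticualr}-(ii) with the uniform lower bound $|\xi_\varepsilon^2|\ge\varepsilon^2$ (the paper records the slightly sharper $|\xi_\varepsilon^2|\ge(1-\lambda^2)r^2+\varepsilon^2$ before dropping the $r^2$ term) to get a polynomial times $e^{-k(r+\sqrt t)}$, whose integrability is immediate. Your final integration step is spelled out in more detail than the paper's, but the argument is the same.
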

\begin{proof}
Once again, let us study the two terms in \eqref{estimateP} separately. From \eqref{estimateparticualr}-(ii), we obtain, for  $|\xi|_{\R^4}\ge R$,
\begin{equation}\label{integrandA}
\begin{split}
\left|\frac{K_1\big(m\sqrt{-\xi_\varepsilon^2}\big)}{\sqrt{-\xi_\varepsilon^2}}\right|^4&\le B \, \frac{e^{-k\,(r+\sqrt{t})}}{|\xi_\varepsilon^2|^3}=B\, \frac{e^{-k\, (r+\sqrt{t})}}{\big((-t^2+r^2+\varepsilon^2)^2+4\varepsilon^2 t^2\big)^{3/2}}\\
&\le B\, \frac{e^{-k\, (r+\sqrt{t})}}{((1-\lambda^2)r^2+\varepsilon^2)^3}\le B\,\varepsilon^{-6}\, e^{-k\sqrt{t}}e^{-kr},
\end{split}
\end{equation}
Similarly, using \eqref{K2estimate} and \eqref{integrandA}, the second term in \eqref{estimateP} gives, for  $|\xi|_{\R^4}\ge R$,
\begin{equation*}
\begin{split}
\left|\frac{K_2\big(m\sqrt{-\xi_\varepsilon^2}\big)}{-\xi_\varepsilon^2}\right|^4|\slashed{\xi}_\varepsilon|_2^4&\le A\,\varepsilon^{-10}\, e^{-k\sqrt{t}}\,e^{-kr}\,(r^2+t^2+\varepsilon^2)^2
\end{split}
\end{equation*}
The exponential decays ensures integrability, concluding the proof.
\end{proof}

\section{Basics on Continuity of the Eigenvalues. }\label{appendixconteigen}
In this appendix we review some basic results on the continuity of the relation between eigenvalues and corresponding operators.

Let $\H$ be an infinite-dimensional separable Hilbert space and let $A\in\mathfrak{B}(\H)$ be a compact self-adjoint operator. For such operators,
$$
\sigma(A)=\sigma_e(A)\cup \sigma_d(A),\quad\mbox{with $\sigma_e(A)=\{0\}$},
$$
where $\sigma_d(A)$ is the \textit{discrete spectrum} (in this case the set of non-zero eigenvalues of $A$) and $\sigma_e(A)$ is the \textit{essential spectrum} (see for example \cite[Section 9.2]{BS}). The eigenspaces corresponding to the elements of $\sigma_d(A)$ have finite dimension. Now, let $n_\pm\in \N_0\cup\{\infty\}$ denote the number of strictly positive and strictly negative eigenvalues, respectively. Assume first that $n_+ =\infty$, then we arrange the strictly positive eigenvalues into a sequence
$$
\{\lambda^+_n(A)\}_{n\in\N}\subset \R_+\quad\mbox{with }\  \lambda_n^+(A)\ge \lambda_{n+1}^+(A)\ \mbox{ for all $n\in\N$}.
$$ 
If, on the other hand, there is only a finite number $0<n_+<\infty$ of them, we define an analogous finite family for $n\in\{1,\dots,n_+\}$ and complete it to  $\lambda^\pm_{n}(A)=0$ for any $n> n_+$. If $n_+=0$, we define $\lambda^+_n(A)=0$ for every $n\in \N$. 
Analogously, we construct
$$
\{\lambda^-_n(A)\}_{n\in\N}\subset \R_-\quad\mbox{with }\  -\lambda_n^-(A)\ge -\lambda_{n+1}^-(A)\ \mbox{ for all $n\in\N$}.
$$
From the properties of compact operators, it follows that $\lambda_n^\pm(A)\to 0$ as $n\to \infty$.
With these conventions, the following identity holds for any $n\in\N$, which is known as the \textit{Courant-Fischer (or Min-Max) Principle} (see for example \cite[Section 9.2]{BS}).
	\begin{equation*}
	\lambda^\pm_n(A)= \min\bigg\{\sup_{u\in \mathbb{S}\cap M^\perp}\la Au|u\ra\:\bigg|\: M\subset\H,\ \dim M=n-1\bigg\},
	\end{equation*}
where $\mathbb{S}:=\{u\in\H\:|\: \|u\|=1\}$. Exploiting this identity it is possible to show that the eigenvalues, orderded according to the conventions above, depend continuously on the operators in the operator norm (see \cite[Eq. (9.2.19)]{BS}).
\begin{Prp}\label{propcontiself}
	Let $S,T$ be compact self-adjoint operators on $\scH$. Then,  
	$$
	|\lambda^\pm_n(S)-\lambda^\pm_n(T)|\le \|S-T\|\quad\mbox{for every $n\in \N$}.
	$$
\end{Prp}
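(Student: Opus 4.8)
The plan is to obtain the estimate directly from the Courant--Fischer identity recalled above, used as a two-sided comparison between $S$ and $T$. Fix $n\in\N$ and treat the positive eigenvalues first. Since the identity is stated with a $\min$, there is a subspace $M_0\subset\H$ with $\dim M_0=n-1$ attaining it for $T$, i.e. $\lambda^+_n(T)=\sup_{u\in\mathbb{S}\cap M_0^\perp}\langle Tu\,|\,u\rangle$ (concretely one may take $M_0=\mathrm{span}\{v_1,\dots,v_{n-1}\}$ with $v_j$ an eigenvector for $\lambda^+_j(T)$, completed arbitrarily if $T$ has fewer than $n-1$ strictly positive eigenvalues). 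Using this particular $M_0$ as a competitor in the formula for $\lambda^+_n(S)$, I would estimate, for every unit vector $u\in M_0^\perp$,
\[
\langle Su\,|\,u\rangle=\langle Tu\,|\,u\rangle+\langle (S-T)u\,|\,u\rangle\le \langle Tu\,|\,u\rangle+\|S-T\|,
\]
the last inequality being Cauchy--Schwarz together with $\|u\|=1$. Taking the supremum over $u\in\mathbb{S}\cap M_0^\perp$ and using that $M_0$ is optimal for $T$ gives $\lambda^+_n(S)\le\lambda^+_n(T)+\|S-T\|$; exchanging the roles of $S$ and $T$ yields the reverse bound, hence $|\lambda^+_n(S)-\lambda^+_n(T)|\le\|S-T\|$.

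For the negative eigenvalues I would reduce to the previous case by the observation that, with the sign and ordering conventions fixed in this appendix, $\lambda^-_n(A)=-\lambda^+_n(-A)$ for every compact self-adjoint $A$: the strictly negative eigenvalues of $A$, listed by decreasing absolute value, correspond bijectively to the strictly positive eigenvalues of $-A$, and the padding-by-zeros conventions match. Applying the inequality just proven to the pair $-S,-T$ and noting $\|(-S)-(-T)\|=\|S-T\|$ then gives $|\lambda^-_n(S)-\lambda^-_n(T)|\le\|S-T\|$, which completes the proof. (Alternatively one could run the same competitor argument on the dual ``$\max$--$\min$'' form of Courant--Fischer for the negative part, but going through $-A$ avoids restating it.)

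The one place deserving a few extra words — and where I expect the only genuine subtlety to sit — is the legitimacy of invoking the min-max formula in the degenerate regime, i.e. when $A$ has fewer than $n$ strictly positive (resp. negative) eigenvalues, so that $\lambda^\pm_n(A)=0$ holds merely by convention. There one uses that $0\in\sigma_e(A)$ for a compact operator on an infinite-dimensional space, so that for any finite-codimensional subspace the quadratic form $\langle A\,\cdot\,|\,\cdot\,\rangle$ has supremum $\ge 0$ over unit vectors (and, symmetrically, infimum $\le 0$); this makes the Courant--Fischer value equal $0$ in that range, consistently with the padding. Since the identity is quoted verbatim from the cited reference, this bookkeeping may be taken as granted, and the proof genuinely reduces to the one-line competitor estimate displayed above — no spectral decomposition of $S-T$ and no compactness of the difference are needed beyond what the min-max principle already packages.
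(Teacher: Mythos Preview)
Your proof is correct and is precisely the standard derivation from the Courant--Fischer principle that the paper points to: the paper does not spell out its own argument but simply recalls the min--max identity and cites \cite[Eq.~(9.2.19)]{BS}, so your competitor-subspace estimate is exactly the intended route. Your handling of the negative eigenvalues via $\lambda^-_n(A)=-\lambda^+_n(-A)$ is a clean way to avoid restating the dual max--min form, and your remark on the degenerate regime (where the padding-by-zeros matches the min--max value thanks to $0\in\sigma_e(A)$) fills in the only bookkeeping detail one might worry about.
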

Given a  general (not necessarily self-adjoint) compact operator $T$, one can apply the above theorem to its absolute value $A:=|T|$.  
The corresponding eigenvalues, ordered as above, form the so-called \textit{singular values} of $T$ (see \cite[Section 11.1]{BS}):
$$
s_n(T):=\lambda^+_n(|T|)\quad\mbox{for all $n\in\N$}.
$$
By construction, $s_n(T)\to 0$ as $n\to \infty$. 
Applying the Min-Max Principle to $|T|$, a result similar to Proposition \ref{propcontiself} can be proven for the singular points (see \cite[(11.1.15)]{BS}).
\begin{Prp}\label{propconti}
	Let $S,T$ be compact operators on $\scH$. Then,  for every $n\in \N$,
	$$
	|s_n(S)-s_n(T)|\le \|S-T\|.
	$$
\end{Prp}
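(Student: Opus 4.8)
The plan is to avoid comparing $|S|$ with $|T|$ directly --- their difference need not be controlled by $\|S-T\|$ --- and instead to derive from the Courant--Fischer principle a min--max characterization of the singular values in which $T$ enters \emph{linearly}, through the quantity $\|Tu\|$; once that formula is available the estimate is a one-line triangle inequality, in the same spirit as Proposition \ref{propcontiself}.

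First I would record the elementary identity $\|Tu\|^2=\la Tu|Tu\ra=\la T^*Tu|u\ra$ for all $u\in\scH$. Since $T$ is compact, $T^*T$ is a compact positive self-adjoint operator, and by the definition of the singular values (i.e.\ $s_n(T)=\lambda_n^+(|T|)$ with $|T|=(T^*T)^{1/2}$) one has $\lambda_n^+(T^*T)=s_n(T)^2$. Applying the Courant--Fischer principle to $A=T^*T$ and pulling the increasing map $t\mapsto t^2$ through the supremum and the minimum, I would obtain the singular-value min--max formula
\begin{equation*}
s_n(T)=\min\Big\{\,\sup_{u\in\mathbb{S}\cap M^\perp}\|Tu\|\ \Big|\ M\subset\scH,\ \dim M=n-1\,\Big\},
\end{equation*}
together with the analogous identity for $S$.

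With this in hand the conclusion is immediate. Given $\delta>0$, choose a subspace $M$ with $\dim M=n-1$ and $\sup_{u\in\mathbb{S}\cap M^\perp}\|Tu\|\le s_n(T)+\delta$ --- in fact one may take $\delta=0$, the minimum being attained on the span of any $n-1$ eigenvectors of $|T|$ belonging to its $n-1$ largest eigenvalues --- and use this same $M$ as a competitor in the min--max formula for $S$. Since $\|Su\|\le\|Tu\|+\|(S-T)u\|\le\|Tu\|+\|S-T\|$ for every unit vector $u$, this yields $s_n(S)\le s_n(T)+\delta+\|S-T\|$; letting $\delta\to0$ and then interchanging the roles of $S$ and $T$ gives $|s_n(S)-s_n(T)|\le\|S-T\|$.

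The only delicate point is the first one: passing from the self-adjoint Courant--Fischer principle (phrased in terms of the quadratic form $\la Au|u\ra$) to the singular-value formula phrased in terms of $\|Tu\|$, which amounts to the identifications $\la T^*Tu|u\ra=\|Tu\|^2$ and $\lambda_n^+(T^*T)=s_n(T)^2$. The normalization conventions (padding with zeros past the number of positive eigenvalues) raise no issue here because $T^*T\ge0$. Everything after that is the routine triangle-inequality estimate already sketched above.
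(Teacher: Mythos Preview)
Your proof is correct and is essentially the standard argument the paper alludes to via its citation to \cite[(11.1.15)]{BS}; the paper itself gives no proof beyond that hint. Your care in passing to $T^*T$ (equivalently, to the quantity $\|Tu\|$) rather than working with $\la |T|u|u\ra$ is well placed: the literal reading of ``apply Min--Max to $|T|$'' would only control differences through $\||T|-|S|\|$, which is not bounded by $\|T-S\|$, whereas your formulation makes the triangle inequality apply directly.
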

This is an important result, for it shows that the singular values are continuous with respect to the sup-norm topology. Continuity results for the eigenvalues of general compact operators can also be proved, although not in the strong form as in Proposition \ref{propcontiself}.

Let $T$ be a compact operator. If the non-zero eigenvalues of $T$, repeated according to their algebraic multiplicity, are infinitely-many, we enumerate them in an arbitary sequence $\{\nu_n(T)\}_{n\in\N}$ (without any order prescription).
If there is only a finite number $N>0$ of them, we define an analogous family for $n\in\{1,\dots,N\}$ and complete it to a countable sequence by setting $\nu_n(T)=0$ for any $n>N$. Finally, if $T=0$, we simply define $\nu_n(T)=0$ for every $n\in\N$. Such sequences are called \textit{enumerations of the eigenvalues of $T$} and denoted simply by $\nu(T)$. With this conventions, the following result holds (see \cite[Lemma 5, Ch.XI.9.5]{DS}).

\begin{Thm}\label{continuityenum}
	Suppose $T_m$ is a sequence of compact operators converging to $T$ in the sup-norm topology. Let $\nu(T)$ be an enumeration of the eigenvalues of $T$. Then there exist enumerations $\nu(T_m)$ of the  eigenvalues of the operators $T_m$ such that,
	$$
	\lim_{m\to \infty}\nu_n(T_m)=\nu_n(T)\quad\mbox{ for every $n\in\N$.}
	$$
\end{Thm}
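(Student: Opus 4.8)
The plan is to follow the classical argument underlying the cited Dunford--Schwartz lemma, which rests on the norm-continuity of the Riesz spectral projections of a compact operator. Since $T$ is compact, its nonzero spectrum is a discrete subset of $\C\setminus\{0\}$ whose only possible accumulation point is $0$, and every nonzero $\mu\in\sigma(T)$ is an eigenvalue of finite algebraic multiplicity $m(\mu)=\dim\ran P_\mu(T)$, where $P_\mu(T)=\tfrac{1}{2\pi i}\oint_{\gamma}(z-T)^{-1}\,dz$ for a small positively oriented circle $\gamma$ enclosing $\mu$ and no other point of $\sigma(T)$. Replacing $\nu(T)$ by a reordering (a fixed permutation of $\N$, applied afterwards in reverse to the $\nu(T_m)$), we may assume the enumeration is by non-increasing modulus, with the value $0$ occurring only after all nonzero eigenvalues; then for every $k$ the first $k$ entries form a spectral subset of $T$ lying outside a disc about the origin, and the task reduces to matching, for each radius $r$ with $\{|z|=r\}\cap\sigma(T)=\varnothing$, the finitely many eigenvalues of $T$ in $\{|z|>r\}$ with eigenvalues of $T_m$ for large $m$.

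The first step is the continuity input. For any compact Jordan curve $\Gamma\subset\C\setminus\sigma(T)$ there is $c>0$ with $\|(z-T)^{-1}\|\le c$ on $\Gamma$; writing $z-T_m=(z-T)\bigl(I-(z-T)^{-1}(T_m-T)\bigr)$, a Neumann series shows that as soon as $\|T_m-T\|<1/(2c)$ one has $\Gamma\subset\C\setminus\sigma(T_m)$ and $\sup_{z\in\Gamma}\|(z-T_m)^{-1}-(z-T)^{-1}\|\to 0$. Integrating over $\Gamma$ gives $\|P_\Gamma(T_m)-P_\Gamma(T)\|\to 0$ for the corresponding Riesz projections; since two idempotents at operator-norm distance less than $1$ are similar and hence of equal rank, it follows that $\operatorname{rank}P_\Gamma(T_m)=\operatorname{rank}P_\Gamma(T)$ for all large $m$, and this rank equals the sum of the algebraic multiplicities of the eigenvalues enclosed by $\Gamma$.

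The second step runs this for an exhausting family of curves. Fix $p\in\N$, pick $r_p\in(1/(p+1),1/p)$ with $\{|z|=r_p\}\cap\sigma(T)=\varnothing$, and around each of the finitely many eigenvalues $\mu$ of $T$ with $|\mu|>r_p$ choose pairwise disjoint circles $\gamma_\mu^{(p)}$ of radius $<1/p$ enclosing only $\mu$. Applying the first step to $\{|z|=r_p\}$ and to each $\gamma_\mu^{(p)}$ yields $m_p$ such that for $m\ge m_p$ the operator $T_m$ has no spectrum on these curves, its total algebraic multiplicity in $\{|z|>r_p\}$ equals that of $T$, and inside each $\gamma_\mu^{(p)}$ it has exactly $m(\mu)$ eigenvalues (with algebraic multiplicity), all within $1/p$ of $\mu$. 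Thus the eigenvalues of $T_m$ of modulus $>r_p$ are in multiplicity-preserving bijection with those of $T$, each matched pair at distance $<1/p$. After passing to increasing thresholds $m_1<m_2<\cdots$, one defines $\nu(T_m)$ for $m_p\le m<m_{p+1}$ by listing first the large eigenvalues of $T_m$ in the order dictated by the matching with $\nu_1(T),\nu_2(T),\dots$ through the block resolved at level $p$, then listing the remaining eigenvalues (all of modulus $\le r_p$) in any order, padding with zeros. For a fixed $n$: if $\nu_n(T)\neq 0$, then for $p$ large it lies in the resolved block, so $|\nu_n(T_m)-\nu_n(T)|<1/p$ for $m\ge m_p$; if $\nu_n(T)=0$, the relevant $\nu_n(T_m)$ are tail eigenvalues of $T_m$ whose moduli are controlled by the singular values, which converge by Proposition \ref{propconti}, so $\nu_n(T_m)\to 0$ as well. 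In either case $\nu_n(T_m)\to\nu_n(T)$.

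The main obstacle is not conceptual but organizational: one must manufacture, for each $m$, a single enumeration of $\sigma(T_m)$ that works simultaneously for all $n$, interlocking the block-by-block matchings across the thresholds $m_p$, and doing so at the level of \emph{algebraic} multiplicities — a multiple eigenvalue of $T$ may split into several simple eigenvalues of $T_m$, and the Jordan structure is respected only because $\operatorname{rank}P_\Gamma$ counts generalized eigenspaces. The secondary subtlety, that the unmatched small eigenvalues of $T_m$ near $0$ do not spoil convergence for any fixed $n$, is handled cleanly by pairing the projection argument with the uniform convergence of singular values (Proposition \ref{propconti}). The remaining ingredients — the resolvent estimate, the norm-continuity of the Riesz projections, and the rank rigidity of nearby idempotents — are entirely standard.
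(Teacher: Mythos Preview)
The paper does not prove this theorem; it simply records the statement with a reference to Dunford--Schwartz \cite[Lemma~5, Ch.~XI.9.5]{DS}. Your proposal is a correct reconstruction of precisely that classical argument via norm-continuity of Riesz projections and rank rigidity of nearby idempotents, so in that sense you are aligned with the paper's (outsourced) proof.

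One small slip worth fixing: in the case $\nu_n(T)=0$, appealing to Proposition~\ref{propconti} does not close the argument. A compact $T$ with exactly $N$ nonzero eigenvalues may well have more than $N$ nonzero singular values (any nonzero quasinilpotent compact operator is an example), so $s_{N+1}(T)$ need not vanish and the Weyl-type bound $|\nu_n(T_m)|\le s_{N+1}(T_m)\to s_{N+1}(T)$ gives nothing. You do not need singular values here at all: your own construction already forces the unmatched eigenvalues of $T_m$ to lie in $\{|z|\le r_p\}$, by the rank equality of the Riesz projection over the circle $\{|z|=r_p\}$, so $|\nu_n(T_m)|\le r_p<1/p\to 0$ directly.
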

This result is weaker than Proposition \ref{propconti}, for  it is not explicitly stated how the enumerations $\nu(T_m)$ depend on the original enumeration $\nu(T)$. In any case, this result is useful in the case of finite-rank operators and, in particular, in proving the continuity of the Lagrangian (see Proposition \ref{proplagrangian}).

\section{Miscellaneous Proofs}\label{appendixproofs}
This appendix is devoted to the proof of Proposition \ref{openreg}, Lemma \ref{lemmaregcont},  Lemma \ref{regularization}, Theorem \ref{localrepr}, Proposition \ref{propF} and Proposition \ref{propfirstderiv}.

\begin{proof}[Proof of Proposition \ref{openreg}]
	Let us start with point (i). Assume that
	$
	\mathrm{sign}(\x_0)=(p_0,q_0).
	$
	If $p_0=q_0=0$, then there is nothing to do. So, assume that at least one of $p_0,q_0$ is finite. Let $u^{-}_i\in S^-_{\x_0}$ with $i=1,\dots,p_0$ and $u^{+}_i\in S^+_{\x_0}$ with $i=1,\dots,q_0$ be orthonormal  bases (with respect to the Hilbert space structure) of $S_{x_0}^\pm$ made of eigenvectors of $\x_0$, i.e. 
	$$
	x_0\,u^{\pm}_i=\lambda^{\pm}_i\,u^{\pm}_i.
	$$
	The vectors $\x_0\,u^\pm_i$ are different from zero and are orthogonal to each other in both the Hilbert scalar product and the spin scalar product. 
	Next, let us define for any $\mu=1,\dots,p_0+q_0$ the functions (cf. \eqref{posneg})
	\begin{equation}\label{fmu}
	f_\mu:\F\ni x\mapsto f_\mu(\x):=\begin{cases}
	\dfrac{\x\,u^{-}_\mu-|\x|u^{-}_\mu}{2} \in S_\x^- & \mbox{if }1\le \mu\le p_0\\[0.6em]
	\dfrac{\x\,u^{+}_{\mu-p_0}+|\x|u^{+}_{\mu-p_0}}{2}\in S_\x^+ & \mbox{if }p_0< \mu\le p_0+q_0
	\end{cases}.
	\end{equation}
Note that the function $\x\mapsto |\x|$ is continuous in the operator norm: this can be shown both using the functional calculus or exploiting the general estimate
	\begin{equation*}\label{kato}
	\||A|-|B|\|\le \left[\frac{4}{\pi}+\frac{2}{\pi}\log\frac{\|A\|+\|B\|}{\|A-B\|}\right]\|A-B\|,
	\end{equation*}
	which holds for any couple of different self-adjoint operators (see \cite{kato}).
	The functions $f_\mu$ in \eqref{fmu} are then continuous. Thus, for any $\varepsilon>0$ there is $r>0$ such that
	\begin{equation*}
	\begin{split}
	\|f_\mu(\x)-f_\mu(\x_0)\|&<\varepsilon\quad\mbox{for all $\x\in B_r(\x_0)$.}
	\end{split}
	\end{equation*}
	Choosing $\varepsilon$ small enough so that
	$$
\varepsilon <
	\frac{\inf\{|\lambda^{\pm}_\mu|\:|\: \mu=1,\dots p_0+q_0\}}{2n},
	$$
	and using the fact that the vectors $\{f_\mu(\x_0)\,\mu=1,\dots,p_0+q_0\}$ are orthogonal and that $\mathrm{dim}\,S_\x\le 2n$, it follows that (see for example \cite[Lemma 5.2]{oppio}), for every $\x\in B_r(\x_0)$, 
	$$
	\{f_\mu(\x)\:|\: \mu=1,\dots,p_0+q_0\}\subset S_\x\quad\mbox{are linearly independent.}
	$$
	 By construction, they respect the decomposition \eqref{decompsign} of $\x$. Putting all together, this means that $n_-(\x)\ge p_0$ and $n_+(\x)\ge q_0$ and the claim follows.
	The proof of (i) is concluded. Let us not prove point (ii). 
	Let $\x_0\in\F^{\mathrm{reg}}$. From point (i) we infer that there exists an open neighborhood $U_0\subset\F$ where the local signature is $(n,n)$, and therefore made of regular points. Therefore $U_0\subset\F^{\mathrm{reg}}$. This proves that $\F^{\mathrm{reg}}$ is open in $\F$. To prove denseness, let $x\in \F$ and $U$ be any open neighborhood. If $x\in\F^{\mathrm{reg}}$ then there is nothing to do. Otherwise, by choosing $k:=2n-\dim\,S_\x$ normalized vectors $\{e_i\}_{i}\subset S_\x^\perp$ one can  construct a regular perturbation of $\x$, by defining
	\begin{equation}\label{pertx}
	\x(\varepsilon):=\x+\varepsilon\sum_{i=1}^ks_i\, \langle e_i,\,\cdot\,\rangle e_i\in\F^{\reg}
	\end{equation}
	where $\varepsilon>0$ is arbitrary and $s_i=\pm$ are to be chosen depending on $\mathrm{sign}(\x)$. Now, note that 
	$
	\|\x(\varepsilon)-\x\|\le \varepsilon.
	$
	Choosing $\varepsilon$ so small that $B_{2\varepsilon}(\x)\subset U$, the claim follows.
\end{proof}

\begin{proof}[Proof of Lemma \ref{lemmaregcont}]
	First, let us show that $\mathrm{g}$ is not continuous at any point $\x\in\F\setminus\F^\reg$. Because $\x$ is not regular, we can modify it in the orthogonal of $S_\x$ to $\x(\varepsilon)\in \F$ as in \eqref{pertx}. For this proof, it suffices to consider $k=1$, the general case being analogous. By construction, we see that, for $\varepsilon$ sufficiently small,
	$$
	\|\mathrm{g}(\x(\varepsilon))\|=\left\|\mathrm{g}(\x)+\frac{1}{s_1\varepsilon}\langle e_1,\,\cdot\,\rangle e_1\right\|=\frac{1}{\varepsilon}.
	$$
	This shows that $\mathrm{g}$ cannot be continuous, because $\|\mathrm{g}(\x(\varepsilon))\|\to \infty$, whereas $\x(\varepsilon)\to \x$. 
	
	We now study $\mathrm{g}$ on $\F^\reg$. In the following computations, we adapt some results from \cite{Wedin}, in particular Theorems 2.1 and 4.1. As a first step, note that, for any $\x,\y\in\F^\reg$, 
		 \begin{equation*}\label{identitiesg}
		 \begin{split}
		 \textrm{(1)}&\quad \mathrm{g}(\x)-\mathrm{g}(\y)= \mathrm{g}(\x)(\y-\x)\mathrm{g}(\y)+\mathrm{g}(\x)\pi_\x(\bI-\pi_\y)-(\bI-\pi_\x)\pi_\y\,\mathrm{g}(\y)\\
		 \textrm{(2)}&\quad (\bI-\pi_\x)\pi_\y=(\bI-\pi_\x)(\y-\x)\mathrm{g}(\y),\quad \pi_\x(\bI-\pi_\y)=\mathrm{g}(\x)(\x-\y)(\bI-\pi_\y).
		 \end{split}
		 \end{equation*}
		 Identities (2) follow immediately from $\z\,\mathrm{g}(\z)=\pi_\z$. Identity (1) follows by multiplying out the identity $\mathrm{g}(\x)-\mathrm{g}(\y)=(\pi_\x+(\bI-\pi_x))(\mathrm{g}(\x)-\mathrm{g}(\y))(\pi_\y+(\bI-\pi_\y))$. 
		 
		\noindent Consider the subspace $\K:=S_\x+S_\y$. The spin spaces $S_\x,S_\y$ are both subspaces of $\K$ of dimension $2n$. Working on the finite dimensional space $\K$, a standard result for matrices (see \cite[Theorem 7.1]{Wedin} or \cite[Theorem 2.3]{Stw}) yields
		 $$
		\mathrm{(3)}\quad \|(\bI-\pi_\x)\pi_\y\|=\| (\bI-\pi_\y)\pi_\x\|\,(=\| \pi_\x(\bI-\pi_\y)\|)\quad 
		 $$
		 Using identities  (1),(2),(3) above, we immediately obtain that
		 \begin{equation}\label{estimateg}
		 \|\mathrm{g}(\x)-\mathrm{g}(\y)\|\le 3\|\mathrm{g}(\x)\|\|\mathrm{g}(\y)\|\|\x-\y\|.
		 \end{equation}
		To conclude, we now show how this estimate implies local boundedness of $\mathrm{g}$. Let $\x\in\F^\reg$ be chosen and let $r>0$ be so small that (remember that $\F^\reg$ is open)
		$$
		B_r(\x)\subset\F^\reg\quad\mbox{and}\quad 3\|\mathrm{g}(\x)\|r<\frac{1}{2}.
		$$ 
		From this, \eqref{estimateg} and the reverse triangular inequality, we then obtain
		\begin{equation*}
		\begin{split}
		\frac{1}{2}\|\mathrm{g}(\y)\|\le \|g(\x)\|+\|g(\y)\|\big(3\|g(\x)\|\|\x-\y\|-1/2\big)\ \mbox{ for all $\y\in B_r(\x)$}.
		\end{split}
		\end{equation*}
	Putting all together we get the claim.
\end{proof}

\begin{proof}[Proof of Lemma \ref{regularization}]
	We here provide a proof which applies also in presence of a regular static electromagnetic field. From the theory of semigroups of operators (see for example \cite[Section Remarks 10.20-(2)]{moretti-book}), it follows that
	$$
	\psi_\varepsilon:=e^{-\varepsilon|H|}\psi\in \mathfrak{D}(|H|^{m})\quad\mbox{for all $m\in\N$}.
	$$ 
	On the other hand, 
	$$
	\mathfrak{D}(|H|^{2})=\mathfrak{D}(H^{2})=\mathfrak{D}(\overline{\Delta})=\{\varphi\in L^2(\R^3,\C^4)\:|\: (1+|\V{k}|^2)\,\hat{\varphi}\in L^2(\R^3,\C^4)\}.
	$$
	As a consequence, for all $m\in\mathbb{N}$,
	\begin{equation*}\label{riemannlebesgue}
	\psi_\varepsilon\in \mathfrak{D}(\overline{\Delta}^{2m})=\{\varphi\in L^2(\R^3)\:|\: (1+|\V{k}|^2)^{2m}\,\hat{\varphi}\in L^2(\R^3,\C^4) \}=W^{4m,2}(\R^3,\C^4).
	\end{equation*}
	The Sobolev embedding theorems ensure that $\psi_\varepsilon\in C^\infty(\R^3)$. 
\end{proof}


\begin{proof}[Proof of Theorem \ref{localrepr}]
	For simplicity of notation we here denote $\H_m^-$ by $\H$. Choose $x_0\in \R^4$. Referring to Proposition \ref{existenceframe}, let $(f_1,f_2,f_3,f_4,\Omega)$ be a local spin frame of $\mathrm{F}$ defined around $x_0$. 
	Referring to the corresponding spin scalar products, we introduce the isometries
	\begin{equation*}
		V_x:=S_{\mathrm{F}(x)}\rightarrow \C^4,\quad V_x(f_\mu(x)):=\mathfrak{e}_\mu\quad\mbox{for $x\in\Omega$}.
	\end{equation*}
	Then,  we define
	\begin{equation}\label{defpsi}
	\Psi:\Omega\rightarrow\mathfrak{B}(\H,\C^4),\quad \Psi(x)u :=V_x\,(\pi_{\mathrm{F}(x)}u).
	\end{equation}
	Referring to the corresponding adjoints in the spaces $\C^4,S_{\mathrm{F}(x)}$ and $\H$, we have, for arbitrary $u,v\in\scH$, 
	\begin{equation*}
	\begin{split}
	\langle u| \Psi(x)^*\Psi(x)v\rangle&=\Sl \Psi(x)u|\Psi(x)v\Sr=\Sl V_x(\pi_{\mathrm{F}(x)} u)\,|\,V_x(\pi_{\mathrm{F}(x)} v)\Sr=\\
	&=\Sl \pi_{\mathrm{F}(x)}u|\pi_{\mathrm{F}(x)}v\Sr_{\mathrm{F}(x)}=-\langle u|\mathrm{F}(x)v\rangle,
	\end{split}
	\end{equation*}
	where the last identity follows by definition of spin scalar product \eqref{ssp}. 
	The arbitrariness of $u$ and $v$ implies that 
	\begin{equation*}\label{finalform}
	\mathrm{F}(x)=-\Psi(x)^*\Psi(x)\quad \mbox{for every } x\in \R^4.
	\end{equation*} 
	The surjectivity of $\Psi(x)$ follows directly its definition \eqref{defpsi} and the fact that the operator $\mathrm{F}(x)$ has rank four (it being regular).
	
	Next, we need to show that $\Psi\in \mathcal{E}(\Omega,\H,\C^4)$. Consider the same local spin frame $(f_1,f_2,f_3,f_4,\Omega)$ introduced above, then, for any $u\in\scH$ and $x\in\Omega$,
	\begin{equation*}
	\begin{split}
	\Sl \mathfrak{e}_\mu|\Psi(x)u\Sr&=\Sl \mathfrak{e}_\mu|V_x(\pi_{\mathrm{F}(x)}u)\Sr=\Sl f_\mu(x)|\pi_{\mathrm{F}(x)}u\Sr_{\mathrm{F}(x)}=-\langle f_\mu(x)|\mathrm{F}(x)u\rangle
	\end{split}
	\end{equation*}
	Using that $\{\mathfrak{e}_\mu\}_\mu$ is a Hilbert basis of $\C^4$, we have
	\begin{equation}\label{expansionPhi}
	\begin{split}
	\Psi(x)u=\sum_{\mu=1}^4s_\mu\Sl \mathfrak{e}_\mu|\Psi(x)u\Sr\,\mathfrak{e}_\mu=-\sum_{\mu=1}^4 s_\mu\langle f_\mu(x)|\mathrm{F}(x)u\rangle.
	\end{split}
	\end{equation}
At this point, using that the functions $f_\mu$ and $\mathrm{F}$ are continuous, by restricting to a relatively compact open neighborhood $\Omega_1\subset\Omega$ of $x_0$, we can assume that, for some $B>0$,
\begin{equation}\label{boundedlocal}
\|f_\mu(x)\|+\|\mathrm{F}(x)\| \le B\quad\mbox{for all }x\in\Omega_1.
\end{equation}
Using this in \eqref{expansionPhi}, we see that there is a constant $C>0$ such that  $|\Psi(x)u|\le C\|u\|$ for all $u\in\H$ and all $x\in\Omega_1$. This gives 
$$
\sup_{x\in\Omega_1}\|\Psi(x)\|_{\mathfrak{B}(\H,\C^4)}\le C<\infty.
$$
To conclude, choose arbitrary $x,y\in\Omega_1$ and $u\in\H$. Then, from \eqref{expansionPhi}, the triangular inequality and \eqref{boundedlocal}, we obtain
\begin{equation*}
\begin{split}
|\Psi(x)u-\Psi(y)u|&\le \sum_{\mu=1}^4|\langle f_\mu(x)| \mathrm{F}(x)u-\mathrm{F}(y)u\rangle |+\sum_{\mu=1}^4|\langle f_\mu(y)-f_\mu(x)|\mathrm{F}(y)u\rangle |\\
&\le C\sum_{\mu=1}^4 \|\mathrm{F}(x)-\mathrm{F}(y)\|\|u\| +C\sum_{\mu=1}^4\| f_\mu(y)-f_\mu(x)\|\|u\|.
\end{split}
\end{equation*}
The continuity of $\mathrm{F}$ and $f_\mu$ implies that $\Psi\in C^0(\Omega,\mathfrak{B}(\H,\C^4))$, concluding the proof of the first part of the theorem.

The last statement of the theorem can be proved in the same way by noting that one now has the stronger condition
\begin{equation*}
\begin{split}
\|f_\mu(x)\|=\|\sqrt{|\mathrm{F}(x)|}^{-1}\,\hat{f}_\mu(x) \|\le \|\sqrt{|\mathrm{F}(x)|}^{-1}\|\le 1/\sqrt{k}\quad\mbox{for all $x\in\R^4$},
\end{split}
\end{equation*}
which follows from \eqref{defbasis}, condition  \eqref{assumptionbounded} and the spectral theorem.
\end{proof} 

\begin{proof}[Proof of Proposition \ref{propF}]
 For simplicity of notation we again drop  the indices $m,-$. By definition, the operator $\mathrm{F}(\lambda,x)$ is a well-defined bounded self-adjoint operator with rank no larger than four. From the surjectivity of $\Psi(\lambda,x)$,
	\begin{equation}\label{identityF}
	\mathrm{F}(\lambda,x)(\H)=\Psi(\lambda,x)^*\big(\Psi(\lambda,x)(\H)\big)= \Psi(\lambda,x)^*(\C^4).
	\end{equation}
	Now, surjectivity of $\Psi(\lambda,x)$ also imply injectivity of $\Psi(\lambda,x)^*$. This, together with \eqref{identityF} implies that $\mathrm{F}(\lambda,x)$ has rank four.
	
	To conclude, we need to show that $\mathrm{F}(\lambda,x)\in\F$.
	Because of self-adjointness, the operator $\mathrm{F}(\lambda,x)$ vanishes on the orthogonal of its image. Without loss of generality, let us assume by contradiction that $\mathrm{F}(\lambda,x)$ has (counting multiplicity) three strictly positive eigenvalues $\alpha_i>0$, $i=1,2,3$, with corresponding eigenvectors $u_i\in \mathrm{F}(\lambda,x)(\H)$. Because of self-adjointness, these eigenvectors can always be chosen to form an orthonormal set. Therefore, for $i,j=1,2,3$,
	\begin{equation*}
	\begin{split}
	\alpha_j\,\delta_{ij}&=\langle u_i|\mathrm{F}(\lambda,x) u_j\rangle=-\langle u_i|\Psi(\lambda,x)^*\Psi(\lambda,x) u_j\rangle=-\Sl \Psi(\lambda,x) u_i|\Psi(\lambda,x)u_j\Sr.
	\end{split}
	\end{equation*}
	The identity above implies that the vectors $\Psi(\lambda,x)u_i, i=1,2,3$ are orthogonal with respect to the spin scalar product of $\C^4$. In particular, $\Sl\cdot|\cdot\Sr$ has signature $(3,1)$ or $(4,0)$, which is a contradiction. 
	Now, let us  prove continuity. For $u\in\H$ and $s,t\in I\times \R^4$,
	\begin{equation*}
	\begin{split}
	|\langle u| \mathrm{F}(s)u-\mathrm{F}(t)u\rangle|& \le |\Sl \Psi(s)u|\Psi(s)u-\Psi(t)u\Sr|+|\Sl \Psi(s)u-\Psi(t)u|\Psi(t)u\Sr|\\
	&\le 2 \sup_{m\in I\times\R^4}\|\Psi(m)\|_{\mathfrak{B}(\H,\C^4)}\,\|\Psi(t)-\Psi(s)\|_{\mathfrak{B}(\H,\C^4)}\,\|u\|^2
	\end{split}
	\end{equation*}
	Because the operators $\mathrm{F}(s)$ are self-adjoint, it follows that
	$$
	\|\mathrm{F}(s)-\mathrm{F}(t)\|=\sup_{\|u\|=1}|\langle u| \mathrm{F}(s)u-\mathrm{F}(t)u\rangle|.
	$$
	The claim follows by putting the last two formulas together, and using the continuity and the boundedness assumptions on $\Psi$. The boundedness of $\mathrm{F}$ can be proved analogously.
\end{proof}

\begin{proof}[Proof of Proposition \ref{propfirstderiv}]
Before entering the proof we prove the following useful lemma, which applies with obvious changes also to vector-valued distributions. 
	\begin{Lemma}\label{lemmadistrib}
		Let $U\subset\R^m$ be open and $f\in C^\infty(\R^n\times U)$. For any $T\in \mathcal{E}'(\R^n)$ the function 
		\begin{equation}\label{functionf}
			U\ni k\mapsto  T(f(\,\cdot\,,k))\in\C
		\end{equation} 
		is smooth, with
		\begin{equation}\label{interchangederivativeaction}
			D^\beta\, T( f(\,\cdot\,,k)) = T( D_2^\beta f(\,\cdot\,,k)).
		\end{equation}
		Moreover, let $f$ have the property that for every $\alpha\in\N^n$ there is $g_\alpha\in L^1(U)$ such that, for all $x\in\R^4$,
		\begin{equation}\label{bounduniform}
			|D_1^\alpha f(x,\,\cdot\,)|\le |g_\alpha|\ \  \mbox{ almost everywhere on $U$}.
		\end{equation}
		Then, the following integral function is well-defined and smooth,
		\begin{equation}
			F(x):=\int_U f(x,k)\,d^mk,\quad x\in\R^{n}.
		\end{equation}
		Finally, \eqref{functionf} is integrable and
		$$
		T(F)=\int_{U} T(f(\,\cdot\,,k))\,d^mk.
		$$
	\end{Lemma}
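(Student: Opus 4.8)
\textbf{Proof proposal for Lemma \ref{lemmadistrib}.}

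The plan is to treat the three assertions in turn, each resting on the previous one and on standard facts about compactly supported distributions. Throughout, write $T \in \mathcal{E}'(\R^n)$, so that by the structure theorem $T$ has finite order $N$ and compact support $K$; the key estimate I will use is that there is a constant $C>0$ such that $|T(\varphi)| \le C \sum_{|\alpha|\le N} \sup_{x\in K}|D^\alpha\varphi(x)|$ for all $\varphi \in C^\infty(\R^n)$.

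First I would show smoothness of $k \mapsto T(f(\,\cdot\,,k))$ together with \eqref{interchangederivativeaction}. The argument is the classical ``differentiation under the distributional pairing'': fix $k_0 \in U$ and a unit vector $e_j$, and consider the difference quotient $h^{-1}\big(f(\,\cdot\,,k_0+he_j) - f(\,\cdot\,,k_0)\big)$. By Taylor's theorem with integral remainder applied in the $k$-variable, this converges as $h\to 0$, uniformly on the compact set $K$ together with all $x$-derivatives up to order $N$, to $\partial_{k_j} f(\,\cdot\,,k_0) = D_2^{e_j} f(\,\cdot\,,k_0)$. Applying the order-$N$ seminorm estimate for $T$ lets me pass to the limit inside $T$, giving $\partial_{k_j}\, T(f(\,\cdot\,,k)) = T(D_2^{e_j} f(\,\cdot\,,k))$. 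Since $D_2^{e_j}f$ is again smooth on $\R^n\times U$, the same argument iterates, yielding \eqref{interchangederivativeaction} for all $\beta$ and hence $C^\infty$ regularity of \eqref{functionf}.

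Next, assuming the domination hypothesis \eqref{bounduniform}, I would establish that $F(x) := \int_U f(x,k)\,d^m k$ is well-defined and smooth. Well-definedness and continuity are immediate from $|f(x,\cdot)| \le |g_0| \in L^1(U)$ together with dominated convergence. For smoothness one differentiates under the integral sign: the hypothesis provides, for each multi-index $\alpha$, an $L^1$-dominating function $g_\alpha$ for $D_1^\alpha f(x,\cdot)$, so the standard Leibniz rule for parameter integrals (again via dominated convergence applied to difference quotients, using the mean value theorem to bound them by $|g_\alpha|$ for a nearby multi-index) gives $D^\alpha F(x) = \int_U D_1^\alpha f(x,k)\,d^m k$, and in particular $F \in C^\infty(\R^n)$.

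Finally I would prove the interchange $T(F) = \int_U T(f(\,\cdot\,,k))\,d^m k$. Integrability of $k\mapsto T(f(\,\cdot\,,k))$ follows from the order-$N$ estimate combined with \eqref{bounduniform}: $|T(f(\,\cdot\,,k))| \le C\sum_{|\alpha|\le N}\sup_{x\in K}|D_1^\alpha f(x,k)| \le C\sum_{|\alpha|\le N}|g_\alpha(k)|$, which is in $L^1(U)$. For the identity itself, I would approximate the integral by Riemann sums: choose an exhausting sequence of compact sets and partitions, forming $F_j(x) := \sum_i f(x,k_i)\,|\Delta_i|$, which are finite linear combinations so that $T(F_j) = \sum_i T(f(\,\cdot\,,k_i))\,|\Delta_i|$ trivially. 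One then checks that $F_j \to F$ in $C^\infty(\R^n)$ uniformly on $K$ with all derivatives up to order $N$ — this uses uniform continuity of $D_1^\alpha f$ on $K\times(\text{compact pieces of }U)$ and the domination \eqref{bounduniform} to control the tails of $U$ — so that $T(F_j) \to T(F)$; simultaneously the Riemann sums $\sum_i T(f(\,\cdot\,,k_i))|\Delta_i|$ converge to $\int_U T(f(\,\cdot\,,k))\,d^m k$ because the integrand is continuous (by the first part) and $L^1$-dominated. Matching the two limits gives the claim. Alternatively, and perhaps more cleanly, one can invoke a vector-valued/Fubini-type theorem: the map $k \mapsto f(\,\cdot\,,k)$ is a continuous, $\mathcal{E}'$-weakly integrable $C^\infty(\R^n)$-valued function whose Bochner-type integral is $F$, and $T$ is continuous, so $T$ commutes with the integral.

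The main obstacle is the last step: making the interchange of $T$ with $\int_U$ rigorous requires genuine care, since $U$ is typically unbounded and one cannot simply cite compactness. The domination hypothesis \eqref{bounduniform} is precisely what is needed to control the tails uniformly in the $x$-variable over the compact support of $T$, so the whole weight of the proof is in verifying the $C^\infty$-convergence $F_j \to F$ (or the weak integrability) with estimates that are uniform on $K$ up to order $N$ — the earlier two parts are routine parameter-dependence arguments.
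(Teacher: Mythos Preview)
Your proposal is correct, but for the key interchange step $T(F)=\int_U T(f(\cdot,k))\,d^mk$ you take a genuinely different route from the paper. The paper keeps the integral fixed and approximates the \emph{distribution}: it picks a sequence $T_n\in\mathcal{D}(\R^n)$ converging to $T$ in $\mathcal{D}'(\R^n)$ (after inserting a cutoff $\eta$ so that everything lives in a fixed $\mathcal{D}(B_{2R})$), invokes the uniform boundedness principle for Fr\'echet spaces to obtain a seminorm estimate for the $T_n$ that is \emph{uniform in $n$}, and then the interchange is just classical Fubini for the smooth kernels $T_n$, followed by dominated convergence in~$n$. You instead keep $T$ fixed and approximate the \emph{integral} by Riemann sums, reducing the interchange to the trivial linearity of $T$ on finite sums and then checking convergence on both sides. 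Your approach is more elementary in that it avoids Banach--Steinhaus entirely, using only the single structure-theorem estimate for $T$; the price is that the Riemann-sum convergence $F_j\to F$ in $C^N(K)$ and the matching convergence of $\sum_i T(f(\cdot,k_i))|\Delta_i|$ need to be argued carefully on a noncompact $U$ (which you correctly flag and handle via the tail control from \eqref{bounduniform}). The paper's approach buys a cleaner limit passage---Fubini is automatic once $T_n$ is a function---at the cost of one piece of soft functional analysis. For the first assertion the paper simply cites Friedlander--Joshi, Theorem~4.1.1, whereas you spell out the standard difference-quotient argument; both are fine.
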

	\begin{proof}
		Let $R$ be large enough so that $\supp T\subset B_R$ and let $\eta\in C_0^\infty(\R^n,\R^+)$ be a bump function which vanishes outside $B_{2R}$ and is identically equal to $1$ on $B_R$. Then, $T( \varphi) = T(\varphi\eta)$ for any $\varphi\in C^\infty(\R^n)$. Using \cite[Theorem 4.1.1]{fried} we conclude that the function $k\mapsto  T(f(\,\cdot\,,k),T)$ is smooth and satisfies \eqref{interchangederivativeaction}. 
		The fact that $F$ is well-defined and smooth follows from \eqref{bounduniform}, Lebesgue's dominated convergence theorem and the mean value theorem. 
		Now, using that $\mathcal{D}(\Omega)$ is dense in $\mathcal{D}'(\Omega)$ we can find a sequence $T_n\in \mathcal{D}(\R^n)$ such that $T_n\to T$ in $\mathcal{D}'(\R^n)$. Thus, 
		\begin{equation}\label{contdistrib}
			T_n(\varphi\eta)\to T(\varphi\eta)=  T(\varphi)\quad\mbox{for all $\varphi\in C^\infty(\R^n)$}.
		\end{equation}
		In particular, this applies to $F$ and $f(\,\cdot\,,k)$. The uniform boundedness principle for Fr\'echet spaces (in our case $\mathcal{D}'(\R^n)$) ensures the existence of $C>0$, $N\in\N$ such that
		\begin{equation}\label{estimatedistr}
			|T_n(\varphi)|\le C\sum_{|\alpha|\le N}\|D^\alpha\varphi\|_{\infty,B_{2R}}\quad\mbox{for all $\varphi\in \mathcal{D}(B_{2R})$ and $n\in\N$.}
		\end{equation}
		Hence, using \eqref{estimatedistr} and \eqref{bounduniform}, we obtain
		\begin{equation*}
			|T_n(f(\,\cdot\,,k)\eta)|\le C\sum_{|\alpha|\le N}a_\alpha|g_\alpha(k)|\quad\mbox{for all $n\in\N$},
		\end{equation*}
		and the function on the right-hand side is integrable on $U$.
		Thus, using \eqref{contdistrib} and Lebesgue's dominated convergence theorem, we conclude that $k\mapsto T(f(\,\cdot\,,k))$ is integrable and that
		\begin{equation}\label{epsilon}
			\begin{split}
				\int_U\,T_n( f(\,\cdot\,,k)\eta)\,d^mk\to \int_U T(f(\,\cdot\,,k))\,d^mk.
			\end{split}
		\end{equation}
		To conclude, note that $|f(x,k)\eta(x)T_n(x)|\le |g_0(k)||T_n(x)|$. Therefore, Fubini Theorem applies and yields
		\begin{equation}\label{finitelimit}
			\begin{split}
				\int_Ud^mk\, T_n( f(\,\cdot\,,k)\eta)&=\int_U d^mk\int_{\R^n}d^nx\, f(x,k)\eta(x)\,T_n(x)=\\
				&=\int_{R^n} d^nx\left(\int_{U}d^mk\, f(x,k)\right)\eta(x)T_n(x)= T_n( F\eta).
			\end{split}
		\end{equation}
		The claim follows from \eqref{epsilon}, \eqref{finitelimit}, and \eqref{contdistrib}.
	\end{proof}
Let us now go back to the main proof. For the sake of clarity and simplicity we will adopt formal integral expressions for vector-valued distribution in what follows. In doing this one should bear in mind the conventions introduced at the beginning of Section \ref{sectioncausalprop}. \\[-0.5em]

\noindent \textbf{Point (i)}
Let us first fix some notation. For simplicity, we denote by $s_x$ the distribution with kernel $s_x(y):=s_m^\wedge(x-y)$. Given that $A$ has compact support we infer that 
$$
s_x\,\slashed{A}\in\mathcal{E}'(\R^4,\mathrm{Mat}(4,\C)).
$$ 
Let us rewrite (minus-) \eqref{firstorder} as $\langle s_x\,\slashed{A},\gR_{\varepsilon} u\rangle$, with
\begin{equation}\label{defbracket}
\langle s_x\,\slashed{A},\varphi\rangle:= \int_{\R^4}s_m^\wedge(x-y)\,\slashed{A}(y)\,\varphi(y)\,d^4y\in \C^4\quad\mbox{for all }\varphi\in C^\infty(\R^4,\C^4).
\end{equation}
In a similar style, one can integrate the distribution $s_x\,\slashed{A}$ against a scalar function, which would then yield a matrix. We use the same notation, i.e.
\begin{equation}\label{matrixnotation}
\langle s_x\,\slashed{A},h\rangle\in \mathrm{Mat}(4,\C)\quad\mbox{if}\quad h\in C^\infty(\R^4).
\end{equation}
With our notations,
$$
\langle s_x\,\slashed{A},h\chi\rangle=\langle s_x\,\slashed{A},h\rangle\,\chi\quad\mbox{for all $h\in C^\infty(\R^4)$ and $\chi\in \C^4$}.
$$

\noindent Next, we recall that for any $u\in\H_m^-$ there exists a unique $\psi\in L^2(\R^3,\C^4)$ such that
	\begin{equation}\label{generalexpression}
		\gR_{\varepsilon}u(y)=\int_{\R^3}e^{-\varepsilon\omega(\V{k})}\,e^{i(\omega(\V{k})y^0+\V{k}\cdot\V{y})}\,\psi(\V{k})\,d^3\V{k}.
	\end{equation}
	Such a $\psi$ is the three-dimensional Fourier transform of the initial data of $u$ (at $z^0=0)$. In particular, $\|u\|_m=\|\psi\|_{L^2}$.  
	Let us focus for a moment on the dense subset $\mathcal{D}$ of solutions $u\in\H_m^-$ with $\psi\in S(\R^3,\C^4)$ (for a proof of $\overline{\mathcal{D}}=\H_m^-$ see for example \cite[Lemma 2.17]{oppio}). 
	In view of the exponential factor $e^{-\varepsilon\omega}$ we can apply Lemma \ref{lemmadistrib} to the integrand of \eqref{generalexpression} and get
	\begin{equation}\label{firstestimateconvergene}
		\begin{split}
			\langle s_x\,\slashed{A},\gR_{\varepsilon}u\rangle&=\int_{\R^3}e^{-\varepsilon \omega(\V{k})}\langle s_x\slashed{A},w(\,\cdot\,,\V{k}) \rangle\,\psi(\V{k})\,d^3\V{k}\quad\mbox{ for all $u\in\mathcal{D}$},
		\end{split}
	\end{equation}
where we defined
$
w(y,\V{k}):=e^{i(\omega(\V{k})y^0+\V{k}\cdot\V{y})}.
$

\noindent So far, we restricted attention to functions in $\mathcal{D}$ because they have a smooth counterpart $\psi$ and this made it possible to apply Lemma \ref{lemmadistrib}. Using a denseness argument we now extend the above identity to arbitrary wave functions. Before entering this, we state and prove the following lemma\footnote{By $|x|$ we denote the standard Euclidean norm  in the corresponding space $\C^N$. For simplicity of notation we do not indicate the dimension $N$ in the norm symbol, for it is clear from the context.}.
	\begin{Lemma}\label{uniformbounds}
		For every $\psi\in C^\infty(\R^4)$ there is $C>0$ such that (cf.  notation \eqref{matrixnotation})
		\begin{equation}\label{quantitytobound}
			|\langle s_x\,\slashed{A},\psi\rangle|\le C\quad\mbox{for all $x\in\R^{4}$}.
		\end{equation}
	As a consequence, there are $k>0$, $N\in\N$ and a compact $K\subset\R^4$ such that, 
	\begin{equation}\label{estimdistrib}
		|\langle s_x\,\slashed{A},\varphi\rangle|\le k\sum_{|\alpha|\le N}\|D^\alpha \varphi\|_{\infty,K}\quad\mbox{for all }\varphi\in C^\infty(\R^4)\ \mbox{and all}\  x\in\R^4.
	\end{equation}
	\end{Lemma}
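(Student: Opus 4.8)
The plan is to read $\langle s_x\,\slashed{A},\varphi\rangle$, for an arbitrary $\varphi\in C^\infty(\R^4)$, as the value at the point $x$ of a solution of the inhomogeneous free Dirac equation with a \emph{compactly supported} source, and then to control that solution uniformly in $x$ by the energy inequality for the Dirac equation. The elementary point is that, although $\varphi$ itself need not decay, the product $\slashed{A}\varphi$ is smooth with $\supp(\slashed{A}\varphi)\subset\supp A$, so $\slashed{A}\varphi\in\mathcal{E}'(\R^4,\mathrm{Mat}(4,\C))$ (reading $\varphi$ as $\C^4$-valued only replaces $\mathrm{Mat}(4,\C)$ by $\C^4$ and changes nothing). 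By definition of convolution, $\langle s_x\,\slashed{A},\varphi\rangle=\int_{\R^4}s_m^\wedge(x-y)\,\slashed{A}(y)\,\varphi(y)\,d^4y=\big(s_m^\wedge*(\slashed{A}\varphi)\big)(x)$, hence the function $u:=s_m^\wedge*(\slashed{A}\varphi)$ is smooth and, by the defining properties $\mathrm{D}\,s_m^\wedge=\delta^{(4)}\bI_4$ and $\supp s_m^\wedge\subset J_0^\vee$, solves $\mathrm{D}u=\slashed{A}\varphi$ with $\supp u\subset J^\vee(\supp A)$; in particular $u(t,\cdot)=0$ for all $t$ strictly below the past boundary of $\supp A$, so $u$ is the unique smooth solution of this inhomogeneous Cauchy problem with vanishing data there.

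For point (i) I fix $\psi$ and pass to the Hamiltonian form $\partial_0 u=-iHu-i\gamma^0(\slashed{A}\psi)$ of the equation. Duhamel's formula and the unitarity of $U_{t,t_0}=e^{-i(t-t_0)H}$ then give, for every $t$, $\|u(t,\cdot)\|_{L^2(\R^3)}\le\int_{\R}\|(\slashed{A}\psi)(s,\cdot)\|_{L^2(\R^3)}\,ds$. Since $\mathrm{D}$ has constant coefficients it commutes with the spatial derivatives $\partial_{\V{x}}^\alpha$, so $\partial_{\V{x}}^\alpha u$ satisfies the analogous estimate with source $\partial_{\V{x}}^\alpha(\slashed{A}\psi)$; because $\slashed{A}\psi$ is supported in the compact set $\supp A$ (bounded time extent and time slices of bounded volume), each right-hand side is bounded by $C_A\sum_{|\beta|\le|\alpha|}\|\partial^\beta\psi\|_{\infty,\supp A}$. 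Taking $|\alpha|\le 2$ and using the Sobolev embedding $W^{2,2}(\R^3)\hookrightarrow L^\infty(\R^3)$ uniformly in $t$, one gets $\|u\|_{L^\infty(\R^4)}\le C$ with $C$ depending only on $A$ and on finitely many sup-norms of $\psi$ over $\supp A$; evaluating at $x$ gives \eqref{quantitytobound}.

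For point (ii) I would observe that, since $A$ vanishes off the compact set $\supp A$, the functional $\varphi\mapsto\langle s_x\,\slashed{A},\varphi\rangle$ depends only on the restriction of $\varphi$ to a fixed compact neighbourhood of $\supp A$ and is, for each $x$, a compactly supported distribution, i.e.\ a continuous linear map on the Fr\'echet space $C^\infty(\R^4)$. By point (i) the family $\{s_x\,\slashed{A}\}_{x\in\R^4}$ is pointwise bounded on $C^\infty(\R^4)$, so the uniform boundedness principle for Fr\'echet spaces makes it equicontinuous, and equicontinuity at $0$ is exactly the assertion that there exist $C>0$, $N\in\N$ and a compact $K\subset\R^4$ with $|\langle s_x\,\slashed{A},\varphi\rangle|\le C\sum_{|\alpha|\le N}\|D^\alpha\varphi\|_{\infty,K}$ for all $\varphi\in C^\infty(\R^4)$ and all $x\in\R^4$, i.e.\ \eqref{estimdistrib}. (Alternatively, \eqref{estimdistrib} already falls out of the proof of (i), with the explicit choices $N=2$ and $K=\supp A$.)

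The only genuine obstacle I foresee is in point (i): one has to use crucially that the source $\slashed{A}\psi$ — and, when the lemma is applied above, the source $\slashed{A}\,w(\cdot,\V{k})$ built from a plane wave — is honestly compactly supported, which is what makes $s_m^\wedge*(\slashed{A}\psi)$ an inhomogeneous Dirac solution to which the energy inequality applies, and one has to upgrade the $L^2(\R^3)$-in-space bound furnished by that inequality to a pointwise bound, which is where the commutation of $\mathrm{D}$ with spatial derivatives together with Sobolev embedding is needed. The support statement for $u$ (from $\supp s_m^\wedge\subset J_0^\vee$) and the Banach--Steinhaus step for (ii) are routine.
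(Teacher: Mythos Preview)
Your argument is correct and takes a genuinely different route from the paper. The paper proceeds by invoking the explicit structure of the retarded Green's function,
\[
s_m^\wedge=(i\slashed{\partial}+m)S_m^\wedge,\qquad S_m^\wedge(\xi)=\alpha\,\delta(\xi^2)\Theta(\xi^0)+\beta\,\frac{J_1(m\sqrt{\xi^2})}{m\sqrt{\xi^2}}\,\Theta(\xi^2)\Theta(\xi^0),
\]
and then bounds $\langle s_x\slashed{A},\psi\rangle$ by treating the two pieces separately: the $\delta(\xi^2)$ contribution becomes a light-cone integral $M(x)$ which is handled by a case split according to whether $|\V{x}|$ is large or small (using compactness of $\supp A$ and continuity of the convolution), while the Bessel piece $N(x)$ is bounded directly because $J_1(m\sqrt{\xi^2})/(m\sqrt{\xi^2})$ is globally bounded. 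You instead read $x\mapsto\langle s_x\slashed{A},\psi\rangle$ as the retarded solution $u=s_m^\wedge*(\slashed{A}\psi)$ of the inhomogeneous free Dirac equation with compactly supported source, apply the $L^2$ energy inequality (Duhamel plus unitarity of $e^{-itH}$) uniformly in time, commute with spatial derivatives, and upgrade to $L^\infty$ via the Sobolev embedding $W^{2,2}(\R^3)\hookrightarrow L^\infty(\R^3)$. For part~(ii) both arguments coincide: pointwise boundedness of the family $\{s_x\slashed{A}\}_x$ on the Fr\'echet space $C^\infty(\R^4)$ plus Banach--Steinhaus gives equicontinuity, hence the uniform seminorm estimate \eqref{estimdistrib}. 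Your approach is more conceptual and avoids the explicit Bessel-function form of $s_m^\wedge$ entirely, so it would transfer unchanged to other constant-coefficient hyperbolic operators; the paper's computation is more hands-on but ties the bound to the concrete kernel. Your parenthetical remark that \eqref{estimdistrib} already falls out of the proof of (i) with $N=2$ and $K=\supp A$ is a pleasant bonus the paper does not make explicit.
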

	\begin{proof}
		In order to study the quantity $\langle s_x\,\slashed{A},\psi\rangle$ we make use of (2.2.6) in \cite{cfs}, where an explicit form of the Green's operators is given, namely
		\begin{equation}\label{greenoperator}
			\begin{split}
			s^\wedge=(i\slashed{\partial}+m)S^\wedge_m,\quad
			S^\wedge(\xi)=\alpha\,\delta(\xi^2)\,\Theta(\xi^0)+\beta\,\frac{J_1(m\sqrt{\xi^2})}{m\sqrt{\xi^2}}\,\Theta(\xi^2)\,\Theta(\xi^0),
			\end{split}
		\end{equation}
	where $\alpha,\beta\in\R$ and $J_1$ is the \textit{Bessel function of the first kind of order one} (see \cite[Section 9.1]{AS}). Note that the second addend in the definition of $S^\wedge$ vanishes outside $J_0$ and is continuous and bounded therein, as can be inferred from the asymptotic behaviour of $J_1$. For simplicity of notation we denote this second term by $f(\xi)$.
	
	Absorbing $\slashed{A}$ into $\psi$ and using \eqref{defbracket}, \eqref{greenoperator} and the definition of distributional derivative, our task boils down to studying the $x$-dependence of the following two expressions for general $\phi\in\mathcal{D}(\R^4)$ and $|\alpha|\le 1$,
		\begin{equation*}
			\begin{split}
				1)&\ \int_{\R^4}\delta(\xi^2)\Theta(\xi^0)\,(D^\alpha \phi)(x-\xi)\,d^4\xi=  \int_{B_R(\V{x})} \frac{(D^\alpha \phi)(x^0-|\boldsymbol{\xi}|,\V{x}-\boldsymbol{\xi})}{2|\boldsymbol{\xi}|}\,d^3\boldsymbol{\xi}=:M(x),\\
				2)&\  \int_{\R^4}f(\xi)\,(D^\alpha \phi)(x-\xi)\,d^4\xi=:N(x)
			\end{split}
		\end{equation*}
		where $R>0$ was chosen so large that $\supp\phi\subset \R\times B_R(0)$. We need to show that both $M$ and $N$ are bounded functions of $x$.\\[-0.3em]
		
		\noindent 1)
		Let $x\in\R^4$ fulfill $|\V{x}|_{\R^3}\ge R+1$. For any $\boldsymbol{\xi}\in B_R(\V{x})$ we then get  
		$
		|\boldsymbol{\xi}|_{\R^3}\ge |\V{x}|_{\R^3}-R>1$. We conclude that
		\begin{equation}
			|M(x)|\le 2\pi R^3\,\|D^\alpha\phi\|_\infty\quad\mbox{for all }x\in \R\times (\R^3\setminus B_{R+1}(0)).
		\end{equation}
	Let us now consider the complementary region $\R\times B_{R+1}(0)$.
	Note that the only  $\boldsymbol{\xi}$ contributing to $M(x)$ are those sastisfying
		$$
		(-|\boldsymbol{\xi}|,-\boldsymbol{\xi})\in (-x+\supp\phi) \cap L_0.
		$$
		However, being $|\V{x}|_{\R^3}\le R+1$ we see that the set on the right-hand side is empty when $|x^0|$ is sufficiently large and therefore $M(x)=0$.  In summary, there exists $K>0$ such that $M(x)=0$ for all $x\in \R\times B_{R+1}(0)$ with $|x^0|\ge K$. Using that $M$ is continuous on $\R^4$, being the convolution of a distribution and a compactly supported smooth function, we conclude that there is $C>0$ such that 
		$$
		|M(x)|\le C\quad\mbox{for all }x\in \R\times B_{R+1}(0).
		$$
		Taking the greater between the two upper bounds concludes the proof of part 1).\\[-0.3em]
		 
		\noindent 2) 
		We know that $f$ is a bounded measurable function. Therefore, for some $C>0$,
		$$
		|N(x)|\le C\int_{\R^4}|D^\alpha\phi(x-\xi)|\,d^4\xi = C\int_{\R^4}|D^\alpha \phi(w)|\,d^4 w<\infty\quad\mbox{for all $x\in\R^4$}.
		$$
	 \noindent To conclude, recall that $s_x\,\slashed{A}\in\mathcal{E}'(\R^4,\mathrm{Mat}(4,\C))$, and hence, from the properties of compactly supported distributions,  for every $x\in\R^4$ there exist $k,N,K$ as in \eqref{estimdistrib}. The fact that these quantities can be chosen uniformly in $x$ is a consequence of \eqref{quantitytobound} and the uniform boundedness principle.
		The proof of Lemma \ref{uniformbounds} is concluded.
	\end{proof}
	\noindent  Let us go back to the main proof. As  a consequence of \eqref{estimdistrib}, we infer that 
	\begin{equation}\label{estimatew}
		|\langle s_x\slashed{A},w(\,\cdot\,,\V{k})\rangle|\le f(\V{k})\quad\mbox{for all $x\in\R^4,\  \V{k}\in\R^3$},
	\end{equation}
	where $f$ is a continuous polynomial function in the variables $\omega(\V{k})$ and $|k^\alpha|$.
	
	\noindent Let now $u\in\H_m^-$ be arbitrary (with corresponding $\psi\in L^2(\R^3,\C^4)$) and let  $u_n\in\mathcal{D}$ (with corresponding $\psi_n\in S(\R^3,\C^4)$) be such that $\|\psi_n-\psi\|_{L^2}\to 0$.  From this, \eqref{generalexpression} and H\"older inequality it is not difficult to see that
	\begin{equation}\label{A}
		\|D^\alpha(\gR_{\varepsilon} u-\gR_{\varepsilon} u_n)\|_\infty \to 0\quad\mbox{for every multi-index $\alpha\in\N^4$.}
	\end{equation}
	Similarly, using \eqref{estimatew} and H\"older inequality, it can be shown that
	\begin{equation}\label{B}
		\int_{\R^3}e^{-\varepsilon \omega(\V{k})}\langle s_x\slashed{A},w(\,\cdot\,,\V{k}) \rangle\,\psi_n(\V{k})\,d^3\V{k}\to \int_{\R^3}e^{-\varepsilon \omega(\V{k})}\langle s_x\slashed{A},w(\,\cdot\,,\V{k}) \rangle\,\psi(\V{k})\,d^3\V{k}.
	\end{equation}
	Putting \eqref{firstestimateconvergene}, \eqref{A} and \eqref{B} together, we conclude that, for every $u\in\H_m^-$,
	$$
	\int_{\R^3}e^{-\varepsilon \omega(\V{k})}\langle s_x\slashed{A},w(\,\cdot\,,\V{k}) \rangle\,\psi(\V{k})\,d^3\V{k}=\lim_{n\to\infty} \langle s_x\,\slashed{A},\gR_{\varepsilon}u_n\rangle = \langle s_x\,\slashed{A},\gR_{\varepsilon}u\rangle.
	$$
	Using H\"older inequality and \eqref{estimatew} again, we then conclude that, for some $K>0$,
	\begin{equation*}
		\begin{split}
			|\langle s_x\slashed{A},\gR_{\varepsilon} u\rangle|&\le \|\psi\|_{L^2}\,\left(\int_{\R^3}|f(\V{k})|^2\,\,e^{-2\varepsilon \omega(\V{k})}\,\,d^3\V{k}\right)^{1/2}\\
			&\le \|u\|_m\,K\quad\mbox{for all $u\in\H_m^-$ and all $x\in\R^4$} .
		\end{split}
	\end{equation*}
	From this we infer the first important estimate on \eqref{firstorder}:
	$$
	\sup_{x\in \R^4}\|\Psi^{(1)}(x)\|_{\mathfrak{B}(\H_m^-,\C^4)}\le K<\infty.
	$$ 
	To prove the second important estimate, let $y\in\R^4$ be arbitrary. Following the same argument above, one can infer that, for every $u\in\H_m^-$,
	\begin{equation}\label{finalestimate}
		\begin{split}
		|\langle (s_{x}-s_y)\,\slashed{A},\gR_{\varepsilon}u\rangle|&\le \|\psi\|_{L^2}\left( \int_{\R^3}e^{-2\varepsilon\omega(\V{k})}|\langle (s_{x}-s_y)\,\slashed{A},w(\,\cdot\,,\V{k})\rangle|^2\,d^3\V{k}\right)^{1/2}\\
		&\le \|u\|_m\,H(x,y)\quad\mbox{for all $u\in\H_m^-$ and $x,y\in \R^4$},
		\end{split}
	\end{equation}
	which yields the second important estimate on \eqref{firstorder}:
	$$
	\|\Psi^{(1)}(x)-\Psi^{(1)}(y)\|_{\mathfrak{B}(\H_m^-,\C^4)}\le H(x,y)\quad\mbox{for all $x,y\in\R^4$}.
	$$
	Estimate \eqref{estimatew} ensures that the integrand on the right-hand side of \eqref{finalestimate} is bounded from above  uniformly in $x,y$ by an integrable function. Moreover, for fixed $x$ such an integrand is continuous in $y$. Applying Lebesgue's dominated convergence theorem we then conclude that
	$$
	\lim_{y\to x}H(x,y)=0\quad\mbox{for every $x\in\R^4$}.
	$$
	Putting all together, we infer that $\Psi^{(1)}\in \mathcal{E}(\R^4,\H_m^-,\C^4)$ and point (i) is proved.\\[-0.5em]
	
	\noindent \textbf{Point (ii)} The fact that $\mathrm{F}^{(1)}(x)$ is bounded follows immediately from the fact that 
	$$
	\gR_\varepsilon(x),\Psi^{(1)}(x)\in\mathfrak{B}(\H_m^-,\C^4)\quad\mbox{and hence}\quad \gR_\varepsilon(x)^*,\Psi^{(1)}(x)^*\in\mathfrak{B}(\C^4,\H_m^-).
	$$
	Self-adjointness is obvious from the definition.\\[-0.6em]
	
	\noindent \textbf{Point (iii)} Let us prove continuity first. It suffices to focus on the first addend, the other one being analogous. 
	Note that for every $A\in \mathfrak{B}(\H_m^-,\C^4)$ it holds that $A^*=A^\dagger \gamma^0$ (see footnote \ref{footnoteadjoint}). Therefore, $\gamma^0$ being unitary on $\C^4$, we get $$\|A^*\|_{\mathfrak{B}(\H_m^-,\C^4)}=\|A^\dagger\|_{\mathfrak{B}(\H_m^-,\C^4)}=\|A\|_{\mathfrak{B}(\H_m^-,\C^4)}.$$
	
	\noindent Let now $x,y\in\R^4$. Then, omitting for simplicity the indices indicating the $\sup$-norm,
	\begin{equation*}
		\begin{split}
			\|\gR_{\varepsilon}(x)^*&\Psi^{(1)}(x)-\gR_{\varepsilon}(y)^*\Psi^{(1)}(y)\|\\ &\le\|\gR_{\varepsilon}(x)^*(\Psi^{(1)}(x)-\Psi^{(1)}(y))\|+\|(\gR_{\varepsilon} (x)-\gR_{\varepsilon} (y))^*\Psi^{(1)}(y)\|\\
			&\le\|\gR_{\varepsilon}(x)^*\|\|\Psi^{(1)}(x)-\Psi^{(1)}(y)\|+\|(\gR_{\varepsilon} (x)-\gR_{\varepsilon} (y))^*\|\|\Psi^{(1)}(y)\|\\
			 &\le\|\gR_{\varepsilon}(x)\|\|\Psi^{(1)}(x)-\Psi^{(1)}(y)\|+\|\gR_{\varepsilon} (x)-\gR_{\varepsilon} (y)\|\|\Psi^{(1)}(y)\|
		\end{split}
	\end{equation*}
 The claim will then follow from point (i). In a similar way one can prove boundedness.\\[-0.5em]

 \noindent \textbf{Point (iv)} To prove this point we first need to compute $\mathrm{F}^{(1)}(x)u$ explicitly. From Proposition \ref{relationRF}-(ii) we know that
 \begin{equation*}
 	\begin{split}
  \gR_{\varepsilon}(x)^*\Psi^{(1)}(x)u&=-2\pi\,P^\varepsilon(\,\cdot\,,x)(\Psi^{(1)}(x)u)\\
 	&=P^\varepsilon(\,\cdot\,,x)\left(2\pi\int_{\R^4}s_m^\wedge(x-y)\,\slashed{A}(y)\,\gR_{\varepsilon} u(y)\,d^4y\right)
 	\end{split}
 \end{equation*}
From \eqref{L1initialdata} it follows that, for any $t\in\R$,
$$
(\gR_{\varepsilon}(x)^*\Psi^{(1)}(x)u)(t,\,\cdot\,)\in S(\R^3,\C^4).
$$ 
Let us now study the second addend $\Psi^{(1)}(x)^*\,\gR_{\varepsilon}(x)u$. To this aim, we need to  determine the explicit action of $\Psi^{(1)}(x)^*$.
We give a formal argument, a more rigorous statement can be carried out by testing the following identities on arbitrary vectors and spinors. Let us write \eqref{firstorder} as the action of the following formal operator,
$$
\Psi^{(1)}(x)=-\int_{\R^4}s_m^\wedge(x-y)\,\slashed{A}(y)\,\gR_{\varepsilon}(y)\,d^4y.
$$
From identities (2.1.9), (2.1.10) in \cite[Section 2.1.3]{cfs} we see that
$
s_m^\wedge(x-y)^*=s_m^\vee(y-x).
$
Using Proposition \ref{relationRF}-(ii), we then obtain, 
\begin{equation}\label{spinor}
	\begin{split}
		\Psi^{(1)}(x)^*\chi&=-\int_{\R^4}\gR_{\varepsilon}(y)^*(\slashed{A}(y)\,s_m^\vee(y-x)\chi)\,d^4y\\
		&=2\pi\int_{\R^4}P^\varepsilon(\,\cdot\,,y)\,\slashed{A}(y)\,s^\vee_m(y-x)\chi\,d^4y
	\end{split}
\end{equation}
We now want to show that the above function is smooth and that its restriction to any Cauchy surface $\{t=\mathrm{const}\}$ belongs to $S(\R^3,\C^4)$. For simplicity of notation, let $s_x$ denote the distribution with kernel $s_x(y):=s_m^\vee(y-x)$. Having $A$ compact support, it follows that
$$
\slashed{A}\,s_x\in\mathcal{E}'(\R^4,\mathrm{Mat}(4,\C)).
$$ 
The analysis of $\Psi^{(1)}(x)^*$ boils down to studying the properties of the following matrix-valued function:
\begin{equation}\label{convolP}
\R^4\ni z\mapsto\langle P^\varepsilon(z,\,\cdot\,),\slashed{A}\,s_x\rangle\in \mathrm{Mat}(4,\C),
\end{equation} 
where we defined
\begin{equation*}
	\langle M, \slashed{A}\,s_x\rangle  :=\int_{\R^4}M(y)\,\slashed{A}(y)\,s^\vee_m(y-x)\,d^4y\quad\mbox{for all $M\in C^\infty(\R^4,\mathrm{Mat}(4,\C))$}.
\end{equation*}

\noindent From Lemma \ref{lemmadistrib} we infer that the function \eqref{convolP} is smooth. Moreover, it solves the Dirac equation, as follows from \eqref{interchangederivativeaction} and the fact that $P^\varepsilon$ is itself a solution. 

\noindent We now prove that the restriction of this function to $\{z^0=\mathrm{const}\}$ is a Schwartz function. To this aim let us  consider identity  \eqref{3D}. In view of the exponential factor $e^{-\varepsilon\omega}$ we can apply Lemma \ref{lemmadistrib} to the integrand of \eqref{3D} and get
\begin{equation}\label{fourierexchange}
	\langle P^\varepsilon(z,\,\cdot\,),\slashed{A}\,s_x \rangle=-(2\pi)^{-4}\int_{\R^3} e^{-\varepsilon\omega(\V{k})} \,\langle  w(\,\cdot\,,\V{k}),\slashed{A}\,s_x \rangle\,e^{i(\omega(\V{k})z^0+\V{k}\cdot\V{z})}\,d^3\V{k},
\end{equation}
with $w(y,\V{k}):=e^{-i(\omega(\V{k})y^0+\V{k}\cdot\V{y})}\,p(\V{k})\gamma^0$. 

\noindent Now, using \eqref{interchangederivativeaction} and reasoning as in proof of \eqref{estimatew} one infers that, for any $\beta\in\N^3$,
$$
|D^\beta\langle  w(\,\cdot\,,\V{k}),\slashed{A}\,s_x \rangle|=|\langle D^\beta_2 w(\,\cdot\,,\V{k}),\slashed{A}\,s_x \rangle|\le f^\beta(\V{k})\quad\mbox{for any $\V{k}\in\R^3$,}
$$
where $f^\beta$ is a continuous rational function in the variables $\omega(\V{k})$ and $|k^\alpha|$. Therefore, thanks to the exponential factor $e^{-\varepsilon\omega}$,  the mapping
$$
\V{k}\mapsto e^{-\varepsilon\omega(\V{k})} \,\langle  w(\,\cdot\,,\V{k}),\slashed{A}\,s_x \rangle\,e^{i\omega(\V{k})z^0}
$$ 
is a Schwartz function. We conclude from \eqref{fourierexchange} that, for any choice of $z^0\in\R$, 
$$
\V{z}\mapsto \langle P^\varepsilon(z_0,\V{z},\,\cdot\,),\slashed{A}\,s_x \rangle \quad\mbox{belongs to}\  S(\R^3,\mathrm{Mat}(4,\C)).
$$ 
Restoring the spinor $\chi$ from \eqref{spinor} concludes the proof of point (iv).\\[0.3em]
 \noindent \textbf{Point (v)} This is clear from the support properties of the retarded Green's operator.

\end{proof}

\section*{Acknowledgments}
I am grateful to  Felix Finster for the helpful discussions and the several advices and hints, which helped me a lot in writing this paper. I would also like to thank my colleagues, in particular Christoph Langer and Maximilian Jokel, for the useful exchange of ideas.

\end{document}